\titleformat{\subparagraph}[runin]
{\normalfont\bfseries}
{}{}{}[.]
\titlespacing*{\subparagraph}{\parindent}{0.3\topsep}{1em}
\newtheoremstyle{mystyle}
  {0.5\topsep} 
  {0.5\topsep} 
  {} 
  {} 
  {\bfseries} 
  {.} 
  {.5em} 
  {} 
\theoremstyle{mystyle}
\newtheorem{theorem}{Theorem}
\newtheorem{lemma}[theorem]{Lemma}
\newtheorem{corollary}[theorem]{Corollary}
\theoremstyle{definition}
\newtheorem{definition}[theorem]{Definition}
\newtheorem{example}{Example}
\newtheorem*{remark}{Remark}
\newcommand{\R}{\mathbb{R}}
\newcommand{\N}{\mathbb{N}}
\renewcommand{\vec}[1]{\boldsymbol{\mathbf{#1}}}
\newcommand{\sgn}{\mathop{\mathrm{sgn}}}			
\newcommand{\diag}{\mathop{\mathrm{diag}}}			
\newcommand{\argmin}{\mathop{\mathrm{argmin}}}		
\newcommand{\argmax}{\mathop{\mathrm{argmax}}}		
\newcommand{\vspan}{\mathop{\mathrm{span}}}			
\newcommand{\lexle}{\prec_{\textsc{lex}}}			
\newcommand{\lexleq}{\preceq_{\textsc{lex}}}		
\newcommand{\sCurveP}{\lambda}
\newcommand{\sCurveI}{i}
\newcommand{\sCurveB}{\lambda}
\newcommand{\incidenceMatrix}{\vec \Gamma}
\newcommand{\incidenceM}{\incidenceMatrix}
\newcommand{\incidenceV}{\vec{\gamma}}
\newcommand{\incidenceVs}{\hat{\vec{\gamma}}}
\newcommand{\commodity}{j}
\newcommand{\numCommodities}{k}
\newcommand{\setCommodities}{K}
\newcommand{\source}{s}
\newcommand{\sink}{t}
\newcommand{\vertex}{v}
\newcommand{\numVertices}{n}
\newcommand{\edge}{e}
\newcommand{\altEdge}{{e'}}
\newcommand{\pathVar}{P}
\newcommand{\altPath}{Q}
\newcommand{\setPaths}{\mathcal{P}}
\newcommand{\mcFlowVec}{\vec X}
\newcommand{\mcFlow}[1][\edge, \commodity]{x_{#1}}
\newcommand{\cFlowVec}[1][\commodity]{\vec x^{#1}}
\newcommand{\totalFlow}{z}
\newcommand{\totalFlowVec}{\vec z}
\newcommand{\idUnused}{\text{un}}
\newcommand{\idInactive}{\text{in}}
\newcommand{\indicUnused}{\vec{I}_{\commodity, \idUnused}}
\newcommand{\indicInactive}{\vec{I}_{\commodity, \idInactive}}
\newcommand{\regionVec}{\vec{t}}
\newcommand{\regionV}{\regionVec}
\newcommand{\regionI}{t}
\newcommand{\neighboringR}{\mathfrak{N}}
\newcommand{\aM}{\vec{A}}
\newcommand{\cM}{\vec{C}}
\newcommand{\potentialV}{\vec{\pi}}
\newcommand{\flowV}{\vec{x}}
\newcommand{\flow}{x}
\newcommand{\excessV}{\vec{y}}
\newcommand{\excessVs}{\hat{\vec{y}}}
\newcommand{\laplaceM}{\vec{L}}
\newcommand{\laplaceMsI}{\hat{\vec{L}}^{-1}}
\newcommand{\unitVec}{\vec u}
\newcommand{\identityM}{\vec{I}}
\newcommand{\shermanOp}{\boxtimes}
\newcommand{\perturbed}[1]{\breve{#1}}
\newcommand{\potentialBPd}{\perturbed{\sigma}}
\newcommand{\flowBPd}{\perturbed{\tau}}
\newcommand{\potentialVp}[1][]{
	\ifthenelse { \equal{#1}{} } 
	{	\perturbed{\potentialV}		}
	{	\perturbed{\potentialV}^{#1} (\delta)	}
}
\newcommand{\potentialDp}[1][]{
	\ifthenelse { \equal{#1}{} } 
	{	\Delta \perturbed{\potentialV}		}
	{	\Delta \perturbed{\potentialV}^{#1}	}
}
\newcommand{\flowVp}[1][]{
	\ifthenelse { \equal{#1}{} } 
	{	\perturbed{\flowV}		}
	{	\perturbed{\flowV}^{#1} (\delta)	}
}
\newcommand{\flowDp}[1][]{
	\ifthenelse { \equal{#1}{} } 
	{	\Delta \perturbed{\flowV}		}
	{	\Delta \perturbed{\flowV}^{#1}	}
}
\newcommand{\flowdp}[1][]{
	\ifthenelse { \equal{#1}{} } 
	{	\Delta \perturbed{\flow}		}
	{	\Delta \perturbed{\flow}^{#1}_{\edge}	}
}
\newcommand{\flowp}[1][]{
	\ifthenelse { \equal{#1}{} } 
	{	\perturbed{\flow}		}
	{	\perturbed{\flow}^{#1}_{\edge} (\delta)	}
}
\newcommand{\regionp}[1][]{
	\ifthenelse { \equal{#1}{} } 
	{	\perturbed{R}	}
	{	\perturbed{R}^{#1} (\delta)	}
}
\newcommand{\epsilonp}[1][]{
	\ifthenelse { \equal{#1}{} } 
	{	\perturbed{\epsilon}	}
	{	\perturbed{\epsilon}^{#1} (\edge, \delta)	}
}
\tikzset{
arrow/.style={-latex},
>=stealth'}
\tikzstyle{vertex} = [circle, draw=black, thick, minimum height=1.5em, text centered]
\tikzstyle{solid} = =[circle, fill,inner sep=1.5pt,outer sep=0pt]
\newcommand*{\tikzmk}[1]{\tikz[remember picture,overlay,] \node (#1) {};\ignorespaces}
\newcommand{\nameboxit}[2]{\tikz[remember picture,overlay]{\node[fill=#1,fill opacity=0.25,text opacity=1,fit={($(A)+(0,0.8\baselineskip)$)($(B)+(.95\hsize,0.8\baselineskip)$)},align=right,text height=0.8\baselineskip,inner sep=0pt] {\colorbox{#1}{\bf\sc #2}};}\ignorespaces}
\colorlet{pink}{red!50}
\colorlet{blue}{cyan!60}
\colorlet{green}{green!60}
\colorlet{yellow}{yellow!60}
\definecolor{color1}{RGB}{141,211,199}
\definecolor{color2}{RGB}{255,255,179}
\definecolor{color3}{RGB}{190,186,218}
\definecolor{color4}{RGB}{251,128,114}
\definecolor{color5}{RGB}{128,177,211}
\definecolor{color6}{RGB}{253,180,98}
\definecolor{color7}{RGB}{179,222,105}
\definecolor{color8}{RGB}{252,205,229}
\definecolor{color9}{RGB}{217,217,217}
\definecolor{darkgreen}{RGB}{0,150,42}
\title{Computing all Wardrop Equilibria parametrized by the Flow Demand\footnote{This research was carried out in the framework of {\sc Matheon} supported by Einstein Foundation Berlin.}}
\author{
	Max Klimm\footnote{Humboldt-Universit\"at zu Berlin, Germany, \texttt{max.klimm@hu-berlin.de}}
	\and
	Philipp Warode\footnote{Humboldt-Universit\"at zu Berlin, Germany, \texttt{philipp.warode@hu-berlin.de}}
}
\date{}
\begin{document}
\thispagestyle{plain}
\maketitle

\begin{abstract}
We develop an algorithm that computes for a given undirected or directed network with flow-dependent piece-wise linear edge cost functions \emph{all} Wardrop equilibria as a function of the flow demand. Our algorithm is based on Katzenelson's homotopy method for electrical networks. The algorithm uses a bijection between vertex potentials and flow excess vectors that is piecewise linear in the potential space and where each linear segment can be interpreted as an augmenting flow in a residual network. The algorithm iteratively increases the excess of one or more vertex pairs until the bijection reaches a point of non-differentiability. Then, the next linear region is chosen in a simplex-like pivot step and the algorithm proceeds. We first show that this algorithm correctly computes all Wardrop equilibria in undirected single-commodity networks along the chosen path of excess vectors. We then adapt our algorithm to also work for discontinuous cost functions which allows to model directed edges and/or edge capacities. Our algorithm is output-polynomial in non-degenerate instances where the solution curve never hits a point where the cost function of more than one edge becomes non-differentiable. For degenerate instances we still obtain an output-polynomial algorithm computing the linear segments of the bijection by a convex program. The latter technique also allows to handle multiple commodities.
\end{abstract}

\newpage
\setcounter{page}{1}

\section{Introduction}

In Wardrop's equilibrium model \cite{wardrop1952}, we are given a network with flow-dependent cost functions and a set of commodities $j=1,\dots,k$ with fixed travel demand $q_j > 0$  and source-sink pairs $s_j, t_j$. A multi-commodity flow is at equilibrium when for each commodity all used paths have the same cost and no unused path has smaller cost. For fixed travel demands, a Wardrop equilibrium can be computed efficiently via convex optimization techniques under mild conditions on the cost function~\cite{beckmann1956,dafermos1969}.
In this paper we are interested in computing \emph{all} Wardrop equilibria in a network as a function of the travel demand, i.e., we want to compute a  function $f : \R_{\geq 0} \to \R_{\geq 0}^{m \times k}$ such that the multi-commodity flow $\vec f(\lambda) = (f_{e,k} (\lambda) )_{e \in E, k=1,\dots,k}$ is a Wardop equilibrium for the demand vector $\vec q = \lambda\vec{w} \in \R^k$ for all $\lambda \geq 0$ where $\vec w \in \R^k_{\geq 0}$ is arbitrary. This is useful, e.g., when analyzing traffic networks with unknown demands as the knowledge of the functions $f_{e,k}(\lambda)$ allows to answer relevant questions to a network designer such as the maximal congestion of an edge over all possibly load factors $\max_{\lambda \geq 0} l_e(\sum_{k=1}^j f_{e,k}(\lambda))$. The paradoxes given by Breass~\cite{breass1986} and Fisk~\cite{fisk1979} show the flow of an edge may be non-monotonic in the demands of one of the commodities, so that the maximum congestion on an edge needs not be attained for the maximum demand.
Since already for networks of parallel edges, the evolution of the Wardrop equilibrium as a function of the demand is governed by a rather intricate system of differential equations~\cite{harks2015}), we cannot expect to represent the functions $f_{e,k}(\lambda)$ succinctly for general edge cost function $l_e$.
To circumvent this representation issue, in this paper we restrict ourselves to piece-wise linear cost functions $l_e$. Since the set of piece-wise linear functions is dense in the set of continuous functions, continuous linear cost functions can be approximated with arbitrarily small error. Moreover, piece-wise linear functions allow to be fitted easily to given measurements of travel times in a network. However, the complexity of the problem naturally increases with the number of breakpoints of the piece-wise linear functions.

\subparagraph{Our results} 
We develop an algorithm that computes all Wardrop equilibria for a given undirected or directed network.
Our algorithm builds upon and extends a homotopy methods for computing electrical flows in electrical networks with piece-wise linear resistances developed by Katzenelson~\cite{katzenelson1965}. To establish the connection between Wardrop equilibria and electrical flows, we use characterization of Wardrop flows by vertex potentials. For undirected networks (such as electrical networks) this characterization allows to construct a bijection between vertex potentials and the excess flows of the corresponding Wardrop equilibrium. For piece-wise linear cost functions, this bijection is piece-wise linear and can be expressed by the Laplacian of the network where the conductances of an edge correspond to the reciprocal of the slope of the linear segment.
The different piece-wise linear parts of the bijection impose a division of the space of vertex potentials into different regions such that within each region the bijection is linear. The algorithm starts at an arbitrary excess flow vector, say the all-zero vector, and determines the corresponding bijection. The bijection can be interpreted as a flow in a residual network of the graph.
The flow is then linearly increased by the flow in the residual network until the region corresponding to that bijection is left. In the non-degenerate case, the bijection for the next region can be obtained by a simplex-like pivot step. The degenerate case where multiple edge cost functions at the same time enter a point of non-differentiability is harder to handle. For the degenerate case, Katzenelsen~\cite{katzenelson1965} proposed to jump to a different starting excess vector and recurse. This, however, is not suitable for our purposes since this would not allow to capture the evolution of the Wardrop equilibria for increasing demand. To circumvent this issue, we develop a lexicographic rule similar to the lexicographic rule for the Simplex algorithm that allows to navigate to the next region, even in the degenerate case.

We then generalize the algorithm so that it can also handle discontinuities in the cost functions as long as the cost functions stay non-decreasing. Discontinuous cost functions are important for four different reasons: ({\it i}) while Katzenelson's algorithm inherently only works for undirected networks (as electrical flow can go in any direction), discontinuous cost functions allow to model directed edges by setting the cost of negative flow to minus infinity; ({\it ii}) by letting the cost function jump to infinity, hard edge capacities (see, e.g., \cite{correa2004}) can be modeled; ({\it iii}) discontinuities in cost functions may be used to model the transition between different regimes such as free flow and congested operation; ({\it iv}) for convex piece-wise edge cost functions $c_e(x)$, the marginal cost function $(xc_e(x))'$ is piece-wise linear but may be discontinuous. As system optimal flows are Wardop equilibria w.r.t.\ the marginal cost, being able to handle discontinuities allows to compute system optimal flows as well.
From a technical point of view, discontinuities are challenging as they render the Laplace matrix singular so that there is not a bijection between vertex potentials and vertex excess flows anymore. We show that this issue can only arise when source and sink are separated by a cut of edges at their discontinuity and that the singularity of the Laplace matrix can be repaired by increasing the vertex potentials on one side of the partition induced by this cut. This procedure can be naturally described as a jump procedure that avoids regions in potential space all-together and directly jumps to another potential vector with the same Wardrop flow. Using this method, we show that our algorithm still works for discontinuous cost functions with or without capacities and on both directed and undirected networks.

For the case that no degenerate point is hit, our algorithm is output polynomial, i.e., its complexity is polynomial in the output size. More precisely, the algorithm runs in $\mathcal{O}(n^{2.4} + n^2 K)$ time where $K$ is the number of breakpoints of the output flow functions. We show, however, that there are networks where the output is exponential in the input size so that an exponential output cannot be avoided. For our lexicographic method, a proof of a polynomial running time seems out of reach (similar to a proof of the (non-)existence of a polynomial pivot rule for the simplex algorithm). However, we can show that we can recompute the bijection after a degenerate point by a convex quadratic program which can be solved efficiently so that we obtain an output-polynomial algorithm. While the pivoting procedure is limited to the single-commodity case, we show that by computing the bijections with convex programs, also the multi-commodity case can be solved in output-polynomial time.

\subparagraph{Further related work}
Potential-based flows in networks have a long history in traffic, electric, water, and gas networks \cite{birkhoff1956}. In an electrical network, edges correspond to resistors connected at junctions. The network is undirected and there is a single commodity with possible different sources and sinks. The flow on an edge is uniquely determined as a function---the characteristic function of the resistor---of the difference of the potentials of its vertices. For monotonic characteristic functions, the potentials uniquely define the flow \cite{birkhoff1956,duffin1947,kirchhoff1847}. 
Katzenelson~\cite{katzenelson1965} proposed an algorithm for (approximately) computing the current for linear and non-linear resistors based on piece-wise linear approximation of non-linear characteristics. Minty~\cite{minty1961} gave a similar algorithm based on approximations by step functions.
The equilibrium principle for traffic flows was postulated by Wardrop~\cite{wardrop1952}.
Beckmann et al.~\cite{beckmann1956} characterized equilibria in terms of the maximum of a convex program. Wardrop equilibria can be computed by the Frank-Wolfe method~\cite{frank1956} which solves linearized versions of the convex program and eventually converges to a solution. For further developments, see, e.g.,~\cite{arezki1990,bar-gera2002,correa2010,leblanc1985}.

In mathematical terms, our algorithms are homotopy methods. The most prominent homotopy method in game theory is the Lemke-Howson algorithm for computing equilibria in 2-player games \cite{lemke1964}. For $n$-player games, the problem becomes non-linear. Herings and von den Elzen~\cite{herings2002} propose an algorithm that approximates an equilibrium by solving piece-wise linear approximations of the underlying complementarity problems. See \cite{herings2010} for a survey of further homotopy methods in game theory and \cite{goldberg2002} for a discussion of their complexity. A Wardrop equilibrium for a fixed flow demand in a networks with affine cost functions can also be computed with Lemke's algorithm by transforming the problem into a linear complementarity problem \cite{asmuth1979computing}, see also Kojima et al.~\cite{kojima1976extension} for a variant of Lemke's algorithm for piece-wise linear complementarity problems. This transformation adds an artificial variable that is reduced to zero during the course of the algorithm. This artificial variable turns any intermediate solution infeasible for the Wardrop equilibrium problem. On the other hand, our approach does not require an artificial variable and instead uses the flow demand for the path following which allows to obtain Wardrop equilibria with lower flow demand as intermediate solutions. 

Parametric optimization problems have been studied in related contexts, most prominently linear programs \cite{murty1980}, paths \cite{carstensen1983,karp1981,nikolova2006} and flows and cuts in capacitated networks~\cite{aissi2015,carstensen1983,gallo1989,granot2012}. 
Madry \cite{madry2016computing} presents an algorithm that solve the maximum flow problem by finding augmenting flows that are obtained by computing electrical flows in the residual network.
For flows in networks with integer capacities and linear costs, the sucessive shortest path algorithm computes a minimum cost flow for all (integer) flow values. Zadeh~\cite{zadeh1973bad} showed that this algorithm has exponential running time in the worst case. Disser and Skutella~\cite{disser2015simplex} proved that the algorithm is in fact NP-mighty, meaning that it can be used to solve NP-hard problems while being executed. Our results for discontinuous and constant cost functions allow to treat the minimum cost flow problem with linear costs and capacities as a special case of our the parametrized Wardrop equilibrium problem studied in this paper. Specifically, for this special case, our algorithm is similar to the sucessive shortest path algorithm with the difference that all successive shortest paths are augmented at the same time. Since the result of Disser and Skutella holds for arbitrary tie-breaking rules, their result implies that our algorithm is NP-mighty as well.

\subparagraph{Organization of the paper} In \S~\ref{sec:prelim}, we introduce the basic notions and useful properties of graph Laplacians, in \S~\ref{sec:basic}, we present the basic version of our algorithm for undirected graphs, and in \S~\ref{sec:discontinuous} we extend our analysis to discountinuous cost functions and directed graphs. 
\shortversion{Due to space constraints, all proofs, figures and examples are deferred to the appendix. The reader is invited to also consult the full version of this paper \cite{klimm2018computing}. Besides all results presented here, we there additionally cover the case of constant cost functions.}{\S~\ref{sec:constantCost} covers constant costs and the extensions to the basic algorithm that are necessary in this case.}
\section{Preliminaries}
\label{sec:prelim}

We consider both directed and undirected graphs $G = (V,E)$ with vertex set $V := \{1,\dots,n\}$ and edge set $E := \{e_1,\dots,e_m\} \subseteq V \times V$. This does not allow for parallel edges but such networks can easily be handled by introducing a dummy vertex in the middle of one of the edges. Each edge $e = (v,w) \in E$ has a unique orientation from $v$ to $w$. We assume that $G$ is (strongly) connected and encode $G$ by its incidence matrix $\vec \Gamma = (\gamma_{v,e}) \in \R^{n \times m}$ defined as $\gamma_{v,e} = 1$, if edge $e$ enters vertex $v$, $\gamma_{v,e} = -1$, if edge $e$ leaves vertex $v$, and $\gamma_{v,e} = 0$, otherwise. For an edge $e = e_i$ for some $i \in \{1,\dots,m\}$, we denote by $\vec \gamma_e$ the $i$-th column of $\vec \Gamma$. Throughout this work, all vectors are column vectors and will be denoted in bold face.

Let $K := \{1,\dots,k\}$ be a set of $k$ commodities. Each commodity~$j$ is specified by a start vertex $s_j$, a target vertex $t_j$ and a flow value $q_j$.
We conveniently express the commodity information in a matrix $\vec Y = (y_{v,j}) \in \R^{n \times k}$ where $y_{v,j} = -q_j$ if $v = s_j$, $y_{v,j} = q_j$ if $v= t_j$, and $y_{v,j} = 0$ otherwise.
A multi-commodity flow is a matrix $\vec X  = (x_{e,j}) \in \R^{m \times k}$ where $x_{e,j}$ denotes the flow of commodity $j$ on edge $e$.
For undirected networks, we allow both positive and negative flows $x_{e,j} \in \R$ where we use the convention that positive flow goes in the direction of the orientation of the edge while negative flow takes the opposite direction. For directed networks, we require $x_{e,j} \geq 0$, i.e., we only allow flow to go in the direction of the orientation of the edge.  A flow $\vec X$ is feasible if $\vec \Gamma \vec X = \vec Y$. 
When only a single commodity is considered $\vec X$ is a $m \times 1$ matrix and $\vec Y$ is a $n \times 1$ matrix and we write $\vec x$ and $\vec y$ instead. Moreover, we then write $s$ and $t$ instead of $s_1$ and $t_1$.

The cost of the flow on an edge is determined by an edge-specific cost function $l_e$ depending on the total flow $z_e$ on the edge defined as $z_e := \sum_{j = 1}^{k} x_{e, j}$.
For directed graphs, we have $l_e : \R_{\geq 0} \to \R_{\geq 0}$ and for undirected networks, we have $l_e : \R \to \R$. In the latter case, we assume that $l_e(z_e) \geq 0$ for $z_e > 0$ and $l_e(z_e) \leq 0$ for $z_e < 0$ so that $l_e(z_e)z_e$ is always non-negative. If the cost function $l_e$ of edge~$e$ of an undirected graph is anti-symmetric, i.e., $l_e(z_e) = -l_e(-z_e)$ for all $z_e$, then the cost of a flow does not depend on the direction. We allow both for anti-symmetric and general cost functions.
 We assume that the cost functions are piece-wise linear and continuous. Formally, for all edges~$e$ there are $\bar{t}_e \in \N$ breakpoints $\tau_{e,t} \in \R \cup \{-\infty, \infty\}$, $t \in T_e := \{1,\dots,\bar{t}_e\}$ and $\tau_{e,1} < \tau_{e,2} < \dots < \tau_{e,\bar{t}_e} = \infty$ where for undirected graphs we have $\tau_{e,1} = -\infty$ and for directed graphs we have $\tau_{e,1} = 0$. Further, there are constants $a_{e,t}, b_{e,t} \in \R$ such that $l_e(z_e) = a_{e,t} \,z_e + b_{e,t}$ for all $x \in (\tau_{e,t}, \tau_{e,t+1})$.

\subparagraph{Laplacians}
It will be convenient to consider the inverse cost functions $l_e^{-1}$. Since $l_e$ is piecewise linear, so is $l_e^{-1}$ and for $c_{e,t} := 1/a_{e,t}$ and $d_{e,t} := b_{e,t}\,/\, a_{e,t}$ we obtain the equality $l_e^{-1}(v_e) = c_{e,t}\, v_e - d_{e,t}$ for all $v_e \in (\sigma_{e,t}, \sigma_{e,t+1})$ 
where $\sigma_{e,t} = l_e(\tau_{e,t})$ for all $e \in E$ and $t \in T_e$. We call $c_{e,t}$ the \emph{conductivity} of edge~$e$ in segment $t$.
Let $\bar{T} = T_{e_1} \times \dots \times T_{e_m}$ and let $\vec t = (t_{e_1},\dots,t_{e_m})$. Intuitively, a vector $\vec t$ specifies a linear segment $(\tau_{e,t}, \tau_{e,t+1})$ for all edges~$e$ such that for $z_e \in (\tau_{e,t}, \tau_{e,t+1})$, we have $l_e(z_e) = a_{e,t}\, z_e + b_{e,t}$. We  set
$\smash{\vec C_{\vec t} = \diag(c_{e_1,t_{e_1}}, \dots, c_{e_m,t_{e_m}})}$.
%
For $\vec t \in \bar{T}$, the matrix $\vec L_{\vec t} := \vec \Gamma \vec C_{\vec t} \vec \Gamma^{\top}$ corresponds to the Laplace matrix of a weighted graph with edge weights $c_{e, t_e}$. \longversiononly{A short introduction to Laplacian matrices can be found in Appendix~\ref{app:laplacian}.}
The following lemma states well-known properties of Laplacians of weighted graphs, see \cite{grone1991geometry, merris1994laplacian, mohar1991laplacian} for a reference.

\begin{lemma} \label{lem:LaplacianMatrix}
For all $\vec t \in \bar{T}$, the Laplacian $\vec L_{\vec t} = \vec \Gamma  \vec C_{\vec t} \vec \Gamma^{\top}$ has the following properties:
\begin{compactenum}[(i)]
\item \label{lem:LaplacianMatrix:rank}
	The rank of $\vec L_{\vec t}$ is $n- n_C$ where $n_C$ is the number of connected components and two vertices are considered connected if they are connected by an edge with $c_{e,t_e} \neq 0$.
\item \label{lem:LaplacianMatrix:semiDefinite}
	The matrix $\vec L_{\vec t}$ is positive semi-definite.
\item \label{lem:LaplacianMatrix:rowSum}
	The row sum and column sum of $\vec L_{\vec t}$ is zero for every row or column.
\item \label{lem:LaplacianMatrix:subMatrix}
	If $G$ is connected then the reduced Laplacian matrix $\hat{\vec L}_{\vec t}$ obtained from the matrix $\vec L_{\vec t}$ by deleting the first row and column is positive definite and inverse-positive, i.e. $\vec{L}_{\vec t}^{-1} \geq 0$.
\end{compactenum}
\end{lemma}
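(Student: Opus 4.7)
The plan is to derive all four claims from the factorization $\vec L_{\vec t} = \vec\Gamma \vec C_{\vec t} \vec\Gamma^\top$ together with the observation that each column of the incidence matrix $\vec\Gamma$ contains exactly one $+1$, one $-1$, and zeros elsewhere. In particular, $\vec\Gamma^\top \vec 1 = \vec 0$, so $\vec L_{\vec t} \vec 1 = \vec 0$ and, by symmetry, $\vec 1^\top \vec L_{\vec t} = \vec 0^\top$; this is (iii). Since the cost functions are non-decreasing, the slopes $a_{e,t_e}$ are non-negative and hence the diagonal entries $c_{e,t_e} = 1/a_{e,t_e}$ of $\vec C_{\vec t}$ are non-negative. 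Therefore, for any $\vec x \in \R^n$,
\[ \vec x^\top \vec L_{\vec t} \vec x = (\vec\Gamma^\top \vec x)^\top \vec C_{\vec t} (\vec\Gamma^\top \vec x) = \sum_{e = (v,w) \in E} c_{e,t_e}\,(x_w - x_v)^2 \geq 0, \]
which proves (ii).

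The identity above also yields (i): since $\vec L_{\vec t}$ is positive semi-definite, $\vec x \in \ker \vec L_{\vec t}$ if and only if $\vec x^\top \vec L_{\vec t} \vec x = 0$, which is equivalent to $x_v = x_w$ for every edge $e = (v,w)$ with $c_{e,t_e} \neq 0$. Such vectors are exactly those constant on each connected component of the subgraph induced by the positive-conductivity edges; this space has dimension $n_C$, giving $\rank \vec L_{\vec t} = n - n_C$.

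For (iv), the connectedness hypothesis places us in the case $n_C = 1$, so $\ker \vec L_{\vec t} = \vspan\{\vec 1\}$. If $\hat{\vec x} \in \ker \hat{\vec L}_{\vec t}$, then the vector $\vec x \in \R^n$ obtained by prepending a zero satisfies $\vec x^\top \vec L_{\vec t} \vec x = \hat{\vec x}^\top \hat{\vec L}_{\vec t} \hat{\vec x} = 0$, hence $\vec x \in \vspan\{\vec 1\}$; since $x_1 = 0$ this forces $\vec x = \vec 0$ and thus $\hat{\vec x} = \vec 0$. Combined with the positive semi-definiteness inherited from $\vec L_{\vec t}$, this makes $\hat{\vec L}_{\vec t}$ positive definite. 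For inverse-positivity, I would observe that $\hat{\vec L}_{\vec t}$ is a symmetric positive-definite M-matrix---its off-diagonal entries $-c_{e,t_e}$ are non-positive and its diagonal entries are row sums of conductivities---and invoke the classical theorem that such matrices have entrywise non-negative inverse. A self-contained alternative is the Matrix-Tree Theorem, which expresses each entry of $\hat{\vec L}_{\vec t}^{-1}$ as a ratio of non-negative weighted spanning-tree counts. I expect inverse-positivity to be the only step that relies on nontrivial external machinery; the remaining claims follow mechanically from the incidence-matrix factorization.
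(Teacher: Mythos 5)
Your proof is correct and follows essentially the same route as the paper's: the quadratic form $\vec x^\top \vec L_{\vec t}\vec x = \|\vec C^{1/2}\vec\Gamma^\top\vec x\|^2$ for (ii), the incidence structure for (iii), kernel triviality plus semi-definiteness for positive definiteness of $\hat{\vec L}_{\vec t}$, and the M-matrix theorem for inverse-positivity. The only difference is that you prove (i) self-containedly via the kernel characterization where the paper simply cites Anderson--Morley, which is a welcome elaboration rather than a different approach.
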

\longversiononly{
\begin{proof}
See the proof of Lemma~\ref{lem:appendixLaplacianMatrix} in Appendix~\ref{app:laplacian}.
\end{proof}
}

\subparagraph{Wardrop equilibria}

Feasible edge flows $\mcFlowVec$ have a natural representation in terms of simple (i.e., cycle-free) path flows and cycle flows, see, e.g. \cite[Thm.~3.4]{ahuja1993} for a reference.
For a commodity $j \in \setCommodities$ let $\setPaths_j$ denote all simple $(\source_j$,$\sink_j)$-paths in $G$ (undirected paths if we consider the undirected case and directed paths if we consider the directed case) and let $\setPaths = \bigcup_{j = 1}^{k} \mathcal{P}_{j}$. For every $P \in \mathcal{P}$ we define the indicator vector of this path $\vec{\chi}_P$ by setting $\chi_{P,e} = 1$ if $e \in P$ in forward direction, $\chi_{P,e} = -1$ if $e \in P$ in backward direction, and $\chi_{P,e} = 1$ otherwise.
The length of a path $\pathVar \in \setPaths$ with respect to the flow $\mcFlowVec$ is defined as 
$l_\pathVar (\mcFlowVec) := \sum_{e \in \pathVar} l_e( z_e ) \,\chi_{P,e}$.

We say a path $\pathVar \in \setPaths_j$ is a flow-carrying for commodity $j \in \setCommodities$ if $\mcFlow[e, j] > 0$ for all forward-edges $e$ and $\mcFlow[e, j] < 0$ for all backward edges in the path, i.e. $x_{e,j} \, \chi_{P,e} > 0$ for all edges. The following notion of an equilibrium flow is due to Wardrop~\cite{wardrop1952}. Intuitively, it requires that no (infinitesimal small) flow particle of any commodity traveling along a flow-carrying path can decrease the length of its path by deviating to another path.
\longversiononly{
\begin{definition}[Wardrop equilibrium]
A feasible flow $\mcFlowVec$ is a \emph{Wardrop equilibrium} if we have for all commodities $j \in \setCommodities$ that no cycle is flow-carrying and
$l_{\pathVar} (\mcFlowVec) \leq l_{\altPath} (\mcFlowVec)$ for all $\pathVar, \altPath \in \setPaths_j$ where $\pathVar$ is flow-carrying.
\end{definition}
}
\shortversiononly{%
Formally, a feasible flow $\mcFlowVec$ is a \emph{Wardrop equilibrium} if we have for all commodities $j \in \setCommodities$ that no cycle is flow-carrying and
$l_{\pathVar} (\mcFlowVec) \leq l_{\altPath} (\mcFlowVec)$ for all $\pathVar, \altPath \in \setPaths_j$ where $\pathVar$ is flow-carrying.}%
To ensure the existence of Wardrop equilibria, we only admit continuous cost functions for the first part of the paper. Note, that in the undirected case the general assumption $l_e(z_e) \geq 0$ for $z_e > 0$ and $l_e (z_e) \leq 0$ for $z_e < 0$ implies $l_e(0) = 0$ in the continuous case. In section \ref{sec:discontinuous} we will discuss how we can extend our results to discontinuous cost functions.
Since we are interested in edge flows rather than in paths flows, we give the following characterization of Wardrop equilibrium flows via edge flows and potential functions. A similar characterization has been given by Roughgarden~\cite[Proposition 2.7.7]{roughgarden2005book} for directed single-commodity networks. 
\longversion{
\begin{lemma} \label{lem:characterisationWEPotentials}
Let $\mcFlowVec$ be a feasible flow in a directed graph $G$ with continuous cost functions. Then the following statements are equivalent.
\begin{compactenum}[(i)]
\item \label{it:lem:charactierizationEWPotentials1}
	The flow $\mcFlowVec$ is a Wardrop equilibrium flow.
\item \label{it:lem:charactierizationEWPotentials2}
	There exist potential vectors $\vec \pi_j \in \R^{\numVertices}$ for every commodity $j \in \setCommodities$ such that
	\begin{align} 
	\vec \gamma_e^{\top} \vec \pi_j = \pi_{j,w} - \pi_{j,v} &\leq l_e(z_e)  
	\qquad \text{for all } e = (v,w) \in E , \label{eq:lem:characterisationWEPotentials1}\\
	\vec \gamma_e^{\top} \vec \pi_j = \pi_{j,w} - \pi_{j,v} &= l_e(z_e) 
	\qquad \text{for all } e = (v,w) \in E \text{ with } \mcFlow > 0 .\label{eq:lem:characterisationWEPotentials2}
	\end{align}
\end{compactenum}
	For undirected graphs $G$ the equivalence holds when \eqref{eq:lem:characterisationWEPotentials2} is satisfied for all $e \in E$.
\end{lemma}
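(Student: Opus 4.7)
The plan is to prove both implications by exploiting the interpretation of the potentials $\vec\pi_j$ as shortest-path distances from $s_j$ in the graph where edge $e = (v,w)$ carries the length $l_e(z_e)$ (in the undirected case, $-l_e(z_e)$ when traversed backwards). The sign convention $l_e(z_e)z_e \geq 0$ stated in the preliminaries is what links ``flow-carrying'' (meaning $x_{e,j}\chi_{P,e} > 0$) to ``contributes positive length'' ($l_e(z_e)\chi_{P,e} \geq 0$), and this link will drive both directions.

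For $(ii) \Rightarrow (i)$, I telescope the edge (in)equalities along an arbitrary simple $s_j$-$t_j$ path $P$. Using~\eqref{eq:lem:characterisationWEPotentials1} together with the identity $\sum_{e\in P}\vec\gamma_e^\top\vec\pi_j\cdot\chi_{P,e} = \pi_{j,t_j} - \pi_{j,s_j}$, I obtain $l_P(\vec X) \geq \pi_{j,t_j} - \pi_{j,s_j}$ for every simple $s_j$-$t_j$ path. If $P$ is flow-carrying, then for every $e \in P$ the sign convention together with $x_{e,j}\chi_{P,e} > 0$ lets me apply~\eqref{eq:lem:characterisationWEPotentials2} to conclude $l_e(z_e)\chi_{P,e} = \vec\gamma_e^\top\vec\pi_j\cdot\chi_{P,e}$, so the telescoping gives $l_P(\vec X) = \pi_{j,t_j} - \pi_{j,s_j}$, which is Wardrop's condition. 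The absence of flow-carrying cycles follows by the same telescoping: the right-hand side vanishes around a cycle, while the left-hand side is a strict sum of non-negative terms, strictly positive on any flow-carrying edge, a contradiction.

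For $(i) \Rightarrow (ii)$, I define $\pi_{j,v}$ as the length of a shortest path from $s_j$ to $v$ with respect to the (signed) edge weights $l_e(z_e)$, fixing $\pi_{j,s_j} = 0$. The triangle inequality immediately yields~\eqref{eq:lem:characterisationWEPotentials1}. For equality on any edge $e = (v,w)$ with $x_{e,j} > 0$, I use the path-flow decomposition to extract a flow-carrying simple $s_j$-$t_j$ path $P^\star$ through $e$; the Wardrop property forces $P^\star$ to be a shortest $s_j$-$t_j$ path, which in turn forces its $s_j$-$v$ and $w$-$t_j$ prefixes/suffixes to be shortest paths to $v$ and from $w$ respectively, giving $\pi_{j,w} - \pi_{j,v} = l_e(z_e)$. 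The undirected case additionally asks for equality on \emph{all} edges; for a zero-flow edge this requires a separate argument showing that both orientations of $e$ are available as tie-breaking shortcuts, which combined with $l_e(0) = 0$ pins the potential difference to $0$.

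The main obstacle I anticipate is precisely the undirected part, namely showing that in an undirected Wardrop equilibrium the equality must hold even for edges that carry no flow of commodity $j$. The key observation is that in undirected networks one can always append an unused edge (in either direction) to an existing flow-carrying path, so the equilibrium inequality $l_P \leq l_Q$ must hold for these extended paths in both orientations; this two-sided constraint, together with continuity and $l_e(0)=0$, pins the potential difference to $l_e(z_e)$ on every edge. I would also need to verify that the shortest-path construction is well-defined (path-independence of $\pi_{j,v}$), which reduces to checking that cycle sums of signed edge lengths vanish, a direct consequence of the equal-cost property of flow-carrying paths.
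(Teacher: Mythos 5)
Your overall route coincides with the paper's: telescoping the edge (in)equalities along paths for $(ii)\Rightarrow(i)$, and shortest-path potentials with respect to the lengths $l_e(z_e)$ for $(i)\Rightarrow(ii)$, using the path-flow decomposition to place any edge with $x_{e,j}>0$ on a flow-carrying path. Those parts are fine and match the paper's proof almost step for step.

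The place where your sketch does not go through as stated is the undirected addendum, i.e.\ proving $\vec \gamma_e^{\top}\vec \pi_j = l_e(z_e)$ for edges with $x_{e,j}=0$. You propose to ``append the unused edge (in either direction) to an existing flow-carrying path'' and invoke $l_P\le l_Q$; but appending $e=(v,w)$ to an $s_j$-$t_j$ path does not yield another simple $s_j$-$t_j$ path, so there is no admissible pair $(P,Q)\in\mathcal{P}_j\times\mathcal{P}_j$ to compare, and your appeal to $l_e(0)=0$ tacitly assumes $z_e=0$, which need not hold when only commodity $j$'s flow on $e$ vanishes. The paper's argument here is shorter and bypasses the flow entirely: the shortest-path potential in the undirected graph satisfies the Bellman inequality for traversing $e$ in \emph{both} directions, giving $\pi_{j,w}-\pi_{j,v}\le l_e(z_e)$ and $\pi_{j,v}-\pi_{j,w}\le -l_e(z_e)$ simultaneously, hence equality on every edge. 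Two smaller points: your exclusion of flow-carrying cycles via ``strictly positive'' terms needs $l_e(z_e)>0$ for $z_e>0$, whereas the standing assumption only gives $l_e(z_e)\ge 0$ (the paper does not address this condition either); and ``path-independence'' of $\pi_{j,v}$ is not the issue --- the values are defined as minima --- what one actually needs is the absence of negative cycles so that the minima are finite, which in the undirected case (backward traversal contributing $-l_e(z_e)$) is a point worth a sentence rather than a consequence of the equal-cost property of flow-carrying paths.
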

}{
\begin{lemma} \label{lem:characterisationWEPotentials}
A feasible flow $\mcFlowVec$ in a directed graph $G$ with continuous cost functions is a Wardrop equilibrium if and only if there exist potential vectors $\vec \pi_j \in \R^{\numVertices}$, $j \in \setCommodities$ such that
	\begin{align} 
	\vec \gamma_e^{\top} \vec \pi_j = \pi_{j,w} - \pi_{j,v} &\leq l_e(z_e)  
	\qquad \text{for all } e = (v,w) \in E , \label{eq:lem:characterisationWEPotentials1}\\
	\vec \gamma_e^{\top} \vec \pi_j = \pi_{j,w} - \pi_{j,v} &= l_e(z_e) 
	\qquad \text{for all } e = (v,w) \in E \text{ with } \mcFlow > 0 .\label{eq:lem:characterisationWEPotentials2}
	\end{align}
	For undirected graphs $G$ the equivalence holds when \eqref{eq:lem:characterisationWEPotentials2} is satisfied for all $e \in E$.
\end{lemma}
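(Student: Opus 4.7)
The plan is to prove both implications by combining telescoping path sums (for sufficiency) with shortest-path potentials (for necessity). Throughout, the signed incidence $\chi_{P,e}$ is exploited so that for any $P \in \setPaths_j$ from $s_j$ to $t_j$ one has the telescoping identity $\sum_{e = (v,w) \in P}(\pi_{j,w} - \pi_{j,v})\chi_{P,e} = \pi_{j,t_j} - \pi_{j,s_j}$, irrespective of edge orientations, since each intermediate vertex appears once with $+1$ and once with $-1$.

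For the direction $(ii) \Rightarrow (i)$, I would apply this identity first to a flow-carrying path $P$ and then to an arbitrary competitor $Q \in \setPaths_j$. On $P$, equation \eqref{eq:lem:characterisationWEPotentials2} holds edgewise: in the directed case because $\mcFlow > 0$ on every edge of $P$, and in the undirected case because (2) is assumed on every edge. This yields $l_P(\mcFlowVec) = \pi_{j,t_j} - \pi_{j,s_j}$. Applying \eqref{eq:lem:characterisationWEPotentials1} edge-by-edge to $Q$ produces $l_Q(\mcFlowVec) \geq \pi_{j,t_j} - \pi_{j,s_j} = l_P(\mcFlowVec)$, which is the Wardrop inequality. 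Flow-carrying cycles are excluded using the sign assumption $l_e(z_e) z_e \geq 0$: on such a cycle the telescoped sum $\sum_{e \in C}l_e(z_e)\chi_{C,e}$ would have to vanish while all summands are nonnegative, contradicting that every edge carries strictly signed flow.

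For $(i) \Rightarrow (ii)$ in the directed case, I define $\pi_{j,v}$ as the length of a shortest $s_j$-$v$ path under edge weights $l_e(z_e) \geq 0$, which is well-defined since $G$ is strongly connected. Inequality \eqref{eq:lem:characterisationWEPotentials1} is the triangle inequality for shortest distances. For \eqref{eq:lem:characterisationWEPotentials2}, any edge $e = (v,w)$ with $\mcFlow > 0$ lies (by flow decomposition) on some flow-carrying $s_j$-$t_j$ path $P$, which is a shortest $s_j$-$t_j$ path by the Wardrop property. Since every subpath of a shortest path is itself shortest, the prefixes of $P$ ending at $v$ and $w$ realize $\pi_{j,v}$ and $\pi_{j,w}$ respectively, forcing $\pi_{j,w} - \pi_{j,v} = l_e(z_e)$.

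The undirected case is the step I expect to be the main obstacle, since \eqref{eq:lem:characterisationWEPotentials2} must hold as an equality on \emph{every} edge, including those that carry no flow. The plan is to define $\pi_{j,v}$ by the signed sum $\sum_{e \in P}l_e(z_e)\chi_{P,e}$ along any $s_j$-$v$ path $P$, and to establish path-independence by showing that the circulation $\sum_{e \in C}l_e(z_e)\chi_{C,e}$ vanishes on every cycle $C$. To obtain this, I would cut $C$ at two vertices into two arcs $P_1, P_2$ realizing the same pair of endpoints; adding a small amount of flow along one arc and subtracting it along the other yields an alternative Wardrop flow, and applying the Wardrop inequality in both cycle orientations forces $l_{P_1}(\mcFlowVec) = l_{P_2}(\mcFlowVec)$, i.e.\ zero circulation around $C$. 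Once the potential is consistent, \eqref{eq:lem:characterisationWEPotentials2} on every edge is immediate from the defining path sum by taking $P$ to traverse the edge.
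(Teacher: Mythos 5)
Your treatment of the directed case and of the direction $(ii)\Rightarrow(i)$ matches the paper's proof: the same telescoping identity, the same shortest-path potential, and the same tightness argument along flow-carrying paths (the paper upgrades \eqref{eq:lem:characterisationWEPotentials1} to \eqref{eq:lem:characterisationWEPotentials2} by noting that the edgewise inequalities sum to an equality, whereas you invoke subpath optimality of shortest paths; both are fine). One small caveat in your $(ii)\Rightarrow(i)$ step: your exclusion of flow-carrying cycles concludes that every term $l_e(z_e)\chi_{C,e}$ vanishes and calls this a contradiction, but under the standing assumptions a non-decreasing cost may satisfy $l_e(z_e)=0$ for $z_e\neq 0$, so the contradiction only materializes for strictly increasing costs. (The paper's proof silently ignores the no-flow-carrying-cycle clause altogether, so you are at least ahead of it here.)

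The genuine gap is in your undirected $(i)\Rightarrow(ii)$ argument. You define $\pi_{j,v}$ by signed path sums and want path-independence, i.e.\ zero circulation of $(l_e(z_e))_e$ around \emph{every} cycle $C$. Your proposed mechanism --- perturb the flow by $\pm\epsilon$ along the two arcs of $C$ and ``apply the Wardrop inequality in both cycle orientations'' --- does not go through as stated: the equilibrium condition only compares an $s_j$-$t_j$ path $Q\in\setPaths_j$ against a \emph{flow-carrying} path $P\in\setPaths_j$, so it says nothing directly about two arcs of an arbitrary cycle whose endpoints are not $s_j,t_j$; moreover, a perturbed flow is not a Wardrop equilibrium, so no inequality is available for it. To make a cycle argument work you would have to splice each arc of $C$ into a flow-carrying $s_j$-$t_j$ path through its endpoints and compare the two resulting paths, which requires such paths to exist and to traverse the arcs with the right signs. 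The paper avoids this entirely: it keeps the shortest-path potential and observes that in the undirected case an edge $e=(v,w)$ may be traversed in either direction, contributing $+l_e(z_e)$ forward and $-l_e(z_e)$ backward to a path length, so the shortest-path inequality \eqref{eq:lem:characterisationWEPotentials1} holds in both directions and forces $\vec\gamma_e^{\top}\vec\pi_j=l_e(z_e)$ on every edge, including those with $x_{e,j}=0$. (Both routes tacitly presuppose that the signed costs admit no negative cycle --- equivalently, zero circulation --- so that the potential is well defined; the paper glosses over this point as well, but it does not attempt the cycle-perturbation argument you propose.) I would recommend replacing your circulation step by the paper's two-direction shortest-path argument.
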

}
\begin{appendixproof}{Lemma~\ref{lem:characterisationWEPotentials}}
We first show the result for directed networks. Note that for potential vectors $\vec \pi \in \R^n$ satisfying \eqref{eq:lem:characterisationWEPotentials1} we have
\begin{align} \label{eq:lem:characterisationWEPotentials:2}
l_\pathVar(\mcFlowVec) &= \sum_{e =(v,w) \in \pathVar} l_e( \totalFlow_e ) \geq \sum_{e=(v,w) \in \pathVar} \pi_{\commodity,w} - \pi_{\commodity,v} 
	= \pi_{\commodity, \sink_\commodity} - \pi_{\commodity, \source_\commodity} 
\end{align}
for every path $\pathVar \in \setPaths_\commodity$ and every commodity $\commodity
 \in \setCommodities$.

Let $\vec \pi \in \R^n$ be a potential satisfying \eqref{eq:lem:characterisationWEPotentials1} and \eqref{eq:lem:characterisationWEPotentials2}. Then, $\vec \pi$ satisfies \eqref{eq:lem:characterisationWEPotentials:2} and the inequality in \eqref{eq:lem:characterisationWEPotentials:2} is satisfied with equality if $P$ is flow-carrying for commodity~$\commodity$. Thus,
 \begin{align*}
 l_P(\vec X) = \pi_{\commodity,\sink_\commodity} - \pi_{\commodity,\source_\commodity} \leq l_Q(\vec X)	
 \end{align*}
for every path $Q \in \setPaths_\commodity$ and every flow-carrying path $P \in \setPaths_\commodity$ and $\vec{X}$ is a Wardrop equilibrium.
 
Conversely, let $\vec X$ be a Wardrop equilibrium and consider the vector $\vec \pi = (\pi_v)_{v \in V}$ defined as
\begin{align*}
\pi_{\commodity,\vertex} = \text{length of a shortest $(\source_\commodity,\vertex)$-path w.r.t.\ $l_e(z_e)$}.
\end{align*}
The potentials $\vec \pi_j$ satisfy \eqref{eq:lem:characterisationWEPotentials1} by the Bellmann equations for shortest paths and, hence, \eqref{eq:lem:characterisationWEPotentials:2} is satisfied, i.e., $l_P(\vec X) \geq \pi_{\commodity,\sink_\commodity} - \pi_{\commodity,\source_\commodity}$ for all commodities~$\commodity \in \setCommodities$ and $\pathVar \in \setPaths_\commodity$. As $\vec X$ is a Wardrop equilibrium, we further have $l_P(\vec X) = \pi_{\commodity,\source_\commodity} - \pi_{\commodity,\sink_\commodity}$ if $P$ is flow-carrying.  This implies $\vec \gamma_e^{\top} \vec \pi_\commodity = l_e(\totalFlow_e) $ for all edges contained in a flow-carrying path. Since $\mcFlow[e, \commodity] > 0$ implies that $e$ is contained in a flow-carrying path the potentials $\vec{\pi}_j$ satisfy \eqref{eq:lem:characterisationWEPotentials1} and \eqref{eq:lem:characterisationWEPotentials2}.

For undirected networks, the same arguments as before apply and the only thing left to show is that also $\vec\gamma^\top_e \vec \pi_\commodity = l_e(z_e)$ for edges with $x_{e,j} = 0$. Since we obtain the inequality \eqref{eq:lem:characterisationWEPotentials1} in both directions of the edge flow, the claimed result follows.
\end{appendixproof}

\subparagraph{Problem Statement and Variants}
We consider the following problem.\\[-1.5\baselineskip]
\begin{framed}
\noindent\textsc{Parametrized Wardrop Equilibria}\\
\begin{tabular}{ll}
\textsc{Given: } & graph $G = (V,E)$, source $\source_j$, sink $\sink_j \in V$, $j = 1,\dots,k$;\\
\textsc{Find:} & map $\vec f : \R \to \vec \R^{m \times k}$\\
&s.t.\ $\vec f(\lambda) = (f_{e,k})_{e \in E, j=1,\dots,k}$ is a WE for demands $\vec q = \lambda\vec 1$ for all $\lambda\geq 0$.
\end{tabular}
\end{framed}
~\\[-1.5\baselineskip]
Some comments on the problem statement are in order. First, one may wonder about the representation of the function $f$. As we will see, under the given assumptions (i.e., piece-wise linear cost functions), the function $f$ is piece-wise linear in $\lambda$, i.e., there are breakpoints $0 = \lambda^1 < \lambda^2 < \dots$ and $\vec M^1, \vec M^2,\dots, \vec N^1, \vec N^2,\dots \in \R^{m \times k}$ such that $f(\lambda) = \lambda \vec M^i + \vec N^i$ for all $\lambda \in [\lambda^i, \lambda^{i+1}]$, $i = 1,2,\dots$. Second, one may wonder about natural generalizations of the problem. For instance, each commodity $j$ may have weight $w_j$ and the task is to specify $f$ such that $f(\lambda)$ is a Wardrop equilibrium for the demands $q_j = \lambda w_j$. More generally, the target demand may be given by an arbitrary piece-wise linear function $h_j(\lambda)$. Our algorithms can be easily adapted to solve both generalizations. For ease of exposition, we specify our algorithms only for the parametrized Wardrop equilibrium above. Since socially optimal flows are Wardrop equilibria with respect to the marginal edge costs, our algorithms may be applied to compute socially optimal flows for all demands as well.

\section{The Basic Algorithm}
\label{sec:basic}

In this section, we give an algorithm that computes all Wardrop equilibria parametrized by the flow demand in an undirected single-commodity network with continuous and strictly increasing cost functions $l_e$ with $l_e(0) = 0$.  Let $s = 1$ and $t = n$. 
The main algorithmic idea is due to Katzenelson~\cite{katzenelson1965} who devised such algorithm for electrical networks. However, in Katzenelson's work the focus lies on finding the flow vector for a given potential vector. Katzenelson's algorithm starts in an arbitrary given solution for a different vector of potentials and traces the evolution of the flow when changing the potential vector towards the goal vector. Along this path, the trace may hit a degenerate point which are subtle to handle. When such point is reached, Katzenelson suggests to jump to another non-degenerated point and proceed with the algorithm. This approach is incompatible with our objective of finding all Wardrop equilibria and we will significantly change the algorithm at these points. Specifically, Katzenelson's algorithm is unable to start with the zero flow as this point is usually highly degenerated. To overcome the issue of degenerated points, we devise a new pivoting rule for the degenerated solutions similar to existing rules for the Lemke-Howson-Algorithm and the simplex algorithm.

Since we only deal with the single-commodity case in this section, we will use the commodity flow $\vec x$ instead of the total flow $\vec z$ whenever the total flow is needed. Lemma~\ref{lem:characterisationWEPotentials} states that in undirected networks a flow $\vec{x}$ is a Wardrop equilibrium if and only if $l_e(x_e) = \vec \gamma_e^{\top} \vec \pi = \pi_w - \pi_v$ for all edges $e = (v,w) \in E$ and some potential vector $\vec \pi$.
A given vector $\vec \pi = (\pi_v)_{v \in V} \in \R^n$ of vertex potentials induces a flow vector $\vec x = (x_e)_{e \in E} \in \R^m$ that satisfies the conditions of Lemma~\ref{lem:characterisationWEPotentials} according to the equality 
\begin{flexequation}
\label{eq:x_e}
x_e = l_e^{-1}(\pi_w - \pi_v) \longversiononly{\quad} \text{ for edge $e = (v,w)$ }.
\end{flexequation}
Since $l_e$ is piecewise linear so is $l_e^{-1}$. To obtain a closed form expression for $l_e^{-1}$, we proceed to subdivide the space of potential vectors $\vec \pi \in \R^n$ into finitely many convex regions where the cost functions are affine. Let $\vec t \in \bar T := T_1 \times T_2 \times \dots \times T_m$. We have that 
\begin{equation} \label{eq:inverseCostFunction}
l_e^{-1}(\pi_w - \pi_v) = c_{e,t_e} ( \pi_w - \pi_v ) - d_{e,t_e}
\end{equation}
for all $\pi_w - \pi_v \in [\sigma_{e,t_e},\sigma_{e,t_e+1}]$ and $e = (v,w) \in E$. The  region $R_{\vec t}$ is described  by the inequalities
\begin{align} \label{eq:regionInequalities}
\pi_w - \pi_v &\leq \sigma_{e,t_e+1} & &\text{ and } & \pi_w - \pi_v &\geq \sigma_{e,t_e}
\end{align}
for all $e = (v,w) \in E$. 
So all regions are bounded by hyperplanes that are described by inequalities as in \eqref{eq:regionInequalities}. We call these hyperplanes boundaries and say a boundary is induced by some edge $e$ if the hyperplane is described by a inequality depending on $e$. Note that some regions $R_{\vec t}$, $\vec t \in \bar K$ may be empty.
\shortversiononly{For an example consider Example~\ref{ex:regionsExample} in Appendix~\ref{appendixexample\thesection subsection} and Figure~\ref{fig:regions} in Appendix~\ref{appendixfigure\thesection subsection}.}
\begin{appendixfigure}{t}
\centering
\begin{subfigure}[b]{.45\linewidth}
\begin{center}
\begin{tikzpicture}
\draw (0,0) node[solid] (s) {} node [below left] {$s$};
\draw (2,1) node[solid] (v) {} node [above] {$v$};
\draw (4,0) node[solid] (t) {} node [below right] {$t$};

\draw[->]	(s) edge node[midway, above left] {$e_1$} (v) 
			(v) edge node[midway, above right] {$e_3$} (t) 
			(s) edge node[midway, below] {$e_2$} (t);
\end{tikzpicture}
\end{center}
\caption{Graph.}
\end{subfigure}
\hspace{0.02\linewidth}
\begin{subfigure}[b]{.45\linewidth}
\begin{center}
\begin{align*}
l_{e_1} (x) &= 
\begin{cases}
x	& \text{if } x < 2, \\
4x - 6 & \text{if } x \geq 2,
\end{cases} \\
l_{e_2} (x) &= 
\begin{cases}
x	& \text{if } x < 1, \\
\nicefrac{1}{4} x + \nicefrac{3}{4} & \text{if } x \geq 1,
\end{cases} \\
l_{e_3} (x) &=
\begin{cases}
x & \text{if } x < 1, \\
\frac{1}{2} \, x + \frac{1}{2} & \text{if } x \geq 1. 
\end{cases} 
\end{align*}
\end{center}
\caption{Cost functions.}
\end{subfigure}
\\
\begin{subfigure}[t]{.45\linewidth}
\begin{center}
\begin{tikzpicture}[
declare function={
    regionId(\x, \y)=0 + 
    	and(and(\x < 2, \y<1), \y<\x + 1) * 1   +
        and(and(\x > 2, \y<1), \y<\x + 1) * 2   +
        and(and(\x < 2, \y<1), \y>\x + 1) * 3   +
        and(and(\x < 2, \y>1), \y>\x + 1) * 4   +
        and(and(\x < 2, \y>1), \y<\x + 1) * 5   +
        and(and(\x > 2, \y>1), \y<\x + 1) * 6   +
        and(and(\x > 2, \y>1), \y>\x + 1) * 7
   ;
   vfx(\r) = 0 + (\r==1)*1/3/2  + (\r==2)*2/3/2 + (\r==3)*2/5 + (\r==4)*1/7 + (\r==5)*1/9 + (\r==6)*4/21 + (\r==7)*4/19;
   vfy(\r) = 0 + (\r==1)*2/3/2  + (\r==2)*5/6/2 + (\r==3)*3/5 + (\r==4)*3/14 + (\r==5)*2/9 + (\r==6)*5/21 + (\r==7)*9/38;
   vfcolor(\r) =  + (\r==1)*1  + (\r==2)*2 + (\r==3)*3 + (\r==4)*4 + (\r==5)*6 + (\r==6)*7 + (\r==7)*8;
}
]
\newcommand{\xMin}{-0.5} \newcommand{\xMax}{4.5}
\newcommand{\yMin}{-0.5} \newcommand{\yMax}{4.5}
\path[fill=color1,fill opacity=0.25] ({max(\xMin,\yMin-1)},{max(\xMin+1,\yMin)}) -- ({max(\xMin,\yMin-1)},\yMin)--(2,\yMin)--(2,1)--(0,1);
\path[fill=color2,fill opacity=0.25] (2,\yMin)--(2,1)--(\xMax,1) -- (\xMax, \yMin);
\path[fill=color3,fill opacity=0.25] ({max(\xMin,\yMin-1)},{max(\xMin+1,\yMin)}) -- (\xMin,{max(\xMin+1,\yMin)}) -- ({max(\xMin,\yMin-1)},1) -- (0,1);
\path[fill=color4,fill opacity=0.25] (\xMin, \yMax) -- (\xMin, 1) -- (0,1) -- (2,3) -- (2, \yMax);
\path[fill=color6,fill opacity=0.25] (0,1) -- (2,3) -- (2,1);
\path[fill=color7,fill opacity=0.25] (2,1) -- (2,3) -- ( { min(\xMax , \yMax - 1 ) } , { min(\xMax + 1, \yMax) } ) -- (\xMax, \yMax) -- (\xMax, 1);
\path[fill=color8,fill opacity=0.25] (2,3) -- ( { min(\xMax , \yMax - 1 ) } , { min(\xMax + 1, \yMax) } ) -- ( { min(\xMax , \yMax - 1 ) }, \yMax) -- (2, \yMax);

\pgfmathsetmacro{\vectorscale}{1.5}
\foreach \x in {-0.25,0.25,...,4.25} {
	\foreach \y in {-0.25,0.25,...,4.25} {
		\pgfmathparse{int(regionId(\x,\y))}
    	\ifnum\pgfmathresult>0 
    	\pgfmathparse{int(vfcolor(regionId(\x,\y)))}
		\draw[->, opacity=0.5, color\pgfmathresult] 
			({\x - \vectorscale * vfx(regionId(\x,\y))/2}, {\y - \vectorscale * vfy(regionId(\x,\y))/2}) -- 
			({\x + \vectorscale * vfx(regionId(\x,\y))/2}, {\y + \vectorscale * vfy(regionId(\x,\y))/2});
		\fi
	}
}

\draw[thick, ->] (\xMin,0) -- (\xMax,0) node[anchor=west] {$\pi_v$};
\foreach \i in {1,...,4}
{
	\draw[thick] (\i, 0.1) -- (\i,-0.1) node[anchor=north] {\i};
}
\foreach \i in {1,...,4}
{
	\draw[thick] (0.1, \i) -- (-0.1, \i) node[anchor=east] {\i};
}
\draw[thick, ->] (0,\yMin) -- (0,\yMax) node[anchor=south] {$\pi_t$};

\draw[dashed] (\xMin, 1) -- (\xMax, 1);
\draw[dashed] (2, \yMin) -- (2, \yMax);
\draw[dashed] ({max(\xMin,\yMin)},{max(\xMin+1,\yMin)}) -- ({min(\xMax,\yMax-1)},{min(\xMax+1,\yMax)});

\draw[thin, ->] (2,2.5) -- (3, 2.5) node[above] {$\vec{\gamma}_{e_1}$};
\draw[thin, ->] (3.25,1) -- (3.25, 2) node[right] {$\vec{\gamma}_{e_2}$};
\draw[thin, ->] (1,2) -- (0.3, 2.7) node[right] {$\vec{\gamma}_{e_3}$};

\draw[thick, color4!70!black] (0,0) -- (1/2,1) -- (1, 2) -- (2, 7/2) -- ( {min( \xMax, \yMax*8/9 - 5/4 )} , { min(9/8*\xMax + 5/4 , \yMax } );
\draw[thick, color5!70!black] (0,0) -- (2,1) -- ( {min( \xMax,  5*\yMax - 3)} , { min(1/5*\xMax + 3/5 , \yMax } );

\fill[color4!70!black] (1/2,1) circle(2pt);
\fill[color4!70!black] (1,2) circle(2pt);
\fill[color4!70!black] (2,7/2) circle(2pt);
\fill[color5!70!black] (2,1) circle(2pt);

\fill[color7!70!black] (4,3) circle(2pt);
\fill[color6!70!black] (4,1/2) circle(2pt);
\end{tikzpicture}
\end{center}
\caption{The potential space with $\pi_s := 0$. }
\end{subfigure}
\hspace{0.02\linewidth}
\begin{subfigure}[t]{.45\linewidth}
\begin{center}
\begin{tikzpicture}
\newcommand{\xMin}{-2.5} \newcommand{\xMax}{2.5}
\newcommand{\yMin}{-1.5} \newcommand{\yMax}{3.5}

\newcommand{\xScale}{1/4}
\newcommand{\yScale}{1/4}

\draw[thick, ->] (\xMin,0) -- (\xMax,0) node[anchor=west] {$y_v$};
\foreach \i in {-9, -8, -7, -6, -4, -3, -2,-1,1,2,3,4,6,7,8,9}
{
	\draw (\i *\xScale, 0.05) -- (\i*\xScale,-0.05);
}
\foreach \i in {-5,5}
{
	\draw[thick] (\i*\xScale, 0.1) -- (\i*\xScale,-0.1) node[anchor=north] {\i};
}
\foreach \i in {-4,-3,-2,-1,1,2,3,4,6,7,8,9,11,12,13}
{
	\draw (0.05, \i * \yScale) -- (-0.05, \i * \yScale);
}
\foreach \i in {-5,5,10}
{
	\draw[thick] (0.1, \i * \yScale) -- (-0.1, \i * \yScale) node[anchor=east] {\i};
}
\draw[thick, ->] (0,\yMin) -- (0,\yMax) node[anchor=south] {$y_t$};


\draw[very thick, color4!70!black] (0,0) -- (0, \yMax);
\draw[very thick, color5!70!black] (0,0) -- ( \xMax, 0 );

\fill[color4!70!black] (0,3/2*\yScale) circle(2pt);
\fill[color4!70!black] (0,6*\yScale) circle(2pt);
\fill[color4!70!black] (0,13*\yScale) circle(2pt);
\fill[color5!70!black] (3*\xScale, 0) circle(2pt);

\fill[color7!70!black] (3.5*\xScale,8*\yScale) circle(2pt);
\fill[color6!70!black] (6 * \xScale, -3 * \yScale) circle(2pt);
\end{tikzpicture}
\end{center}
\caption{The excess space with $y_s := -y_v - y_t$}
\end{subfigure}
\caption{An example graph (a) with piecewise linear cost functions (b). The breakpoints of the cost functions induce boundaries (dashed lines with normal vectors $\vec{\gamma}_{e_i}$) in the potential space (c). The colored points in the potential space induce the corresponding points in the excess space (d) and vice versa. The red (blue) line mark the potential and excess vectors that correspond to $s$-$t$-flows ($s$-$v$-flows) for demands $\lambda \geq 0$. \\
For every point $\vec{\pi}$ in the potential space there is exactly one point $\vec{y}$ in the excess space, such that $\vec{y} = \laplaceM_{\vec{t}} - \tilde{\vec{d}}_{\vec{t}}$ where $R_{\vec{t}}$ is the region such that $\pi \in R_{\vec{t}}$. 
The red/blue lines correspond to potential and excess vectors that belong to $s$-$t$-/$s$-$v$-flows. The arrows in (c) indicate the potential directions $\Delta \vec{\pi}$ in the regions corresponding to the $s$-$t$-flow excess direction $\Delta \vec{y} = (-1,0,1)$ in the excess space.}
\label{fig:regions}
\end{appendixfigure}

\begin{appendixexample} \label{ex:regionsExample}
See Figure~\ref{fig:regions} for an example graph (a). The three breakpoints of the cost functions induce three boundaries (dashed lines) in the potential space (c) which subdivide the space into seven non-empty regions. The normal vectors of these boundaries are the corresponding columns $\vec{\gamma}_e$ of the incidence matrix $\vec{\Gamma}$. \\
Any arbitrary potential vector $\vec{\pi} \in \R^m$ lies in some region $R_{\vec{t}}$. Given the potential vector, we can compute a corresponding Wardrop flow $\vec{x}$\longversiononly{ by equation~\eqref{eq:x_e}}. Note, that we are considering the undirected case -- this means while the edges have some direction (as in the example graph), we admit negative flows which we interpret as flows in the opposite direction. The flow $\vec{x}$ induced by the potential $\vec{\pi}$ does not satisfy any flow conservation in general but induces some excess or demand (negative excess) $y_v = \sum_{e \in \delta^-(v)} x_e - \sum_{e \in \delta^+(v)} x_e$ at every vertex $v$. So every potential vector $\vec{\pi}$ induces a unique excess vector $\vec{y} \in \R^n$. In Figure~\ref{fig:regions}, every colored point in the potential space (c) corresponds to the points in the excess space (d). In particular, the red line segments in the potential space correspond to the red line in the excess space. These potentials are exactly the potentials that induce excess vectors with $y_s = - \lambda, y_t = \lambda$ and $y_v = 0$ for all vertices $v \neq s,t$, i.e. these potentials induce $s$-$t$-flows for the demand $\lambda$. In the same way, the blue lines correspond to potential/excess vectors that belong to $s$-$v$-flows. 
\end{appendixexample}

We may conveniently express \eqref{eq:inverseCostFunction} as
$ \vec x = \vec C_{\vec t} \, \vec{\Gamma}^{\top} \vec{\pi} - \vec d_{\vec t}, $
with the diagonal matrix $\vec C_{\vec t} = \diag(c_{1,t_1},c_{2,t_2},\dots,c_{m,t_m}) = \diag(1/a_{1,t_1},\dots,1/a_{m,t_m})$ and the vector $\vec d_{\vec t} = (d_{1,t_1},\dots,d_{m,t_m}) = (b_{1,t_1}/a_{1,t_1},\dots,b_{m,t_m}/a_{m,t_m})$. 
Finally, we express the excess flow $y_v = \sum_{e \in \delta^-(v)} x_e - \sum_{e \in \delta^+(v)} x_e$ at vertex $v$ as
\begin{equation} \label{eq:potentialLaplacianExcess}
\vec y
= \vec \Gamma \vec x 
= \vec \Gamma \bigl( \vec C_{\vec t} \vec \Gamma^{\top} \vec \pi - \vec d_{\vec t} \bigr) 
= \vec L_{\vec t} \vec \pi - \tilde{\vec d}_{\vec t}
,
\end{equation}
where $\vec L_{\vec t} = \vec \Gamma  \vec C_{\vec t} \vec \Gamma^{\top}$ is the Laplacian matrix and $\tilde{\vec d}_{\vec t} = \vec \Gamma \vec d_{\vec t}$. 
With equation~\eqref{eq:inverseCostFunction} we can compute a Wardrop equilibrium flow (since it satisfies the conditions of Lemma~\ref{lem:characterisationWEPotentials} by definition). However, this flow does not need to fulfill flow-conservation and is not a $s$-$t$-flow in general but induces the excess vector $\vec{y}$. Our aim is to find potentials that induce excess vectors of the form $\vec{y} = \lambda \, \Delta \vec{y}$ where $\Delta \vec{y}$ is defined by $\Delta y_s = -1, \Delta y_t = 1$, and $\Delta y_v = 0$ for all $v \neq s,t$, i.e. excess vectors that correspond to $s$-$t$-flows with demand $\lambda$. Flows induced by these potentials then are $s$-$t$-flows that satisfy the Wardrop equilibrium conditions from Lemma~\ref{lem:characterisationWEPotentials}.

As can be seen in equations~\eqref{eq:inverseCostFunction} and~\eqref{eq:potentialLaplacianExcess},  the flow $\vec{x}$ (and thus the excess $\vec{y}$) depends solely on the potential differences $\pi_w - \pi_v$ on the edges. It is, thus, no restriction to assume that $\pi_s = 0$. Further, note that every vector $\vec{y}$ of induced excess has zero sum.
We introduce a new notation. For any vector $\vec{v} \in \R^n$, we define $\hat{\vec{v}} := (v_2, \dotsc, v_n)^{\top}$ to be the vector $\vec{v}$ without the first component and, for any matrix $\vec{A} \in \R^{n \times n}$, we define $\hat{\vec{A}}$ to be the submatrix obtained by deleting the first row and column.
Hence, we can write $\vec{\pi} = ( 0 , \hat{\vec{\pi}} )^{\top}$, and 
$\vec{y} = ( - \sum_{i=2}^n y_i, \hat{\vec{y}} )^{\top}$. This means that it is sufficient to deal with the vectors $\hat{\vec{\pi}}$ and $\hat{\vec{y}}$ and we obtain from \eqref{eq:potentialLaplacianExcess} that
\begin{flexequation*}
\hat{\vec{y}} = \hat{\vec{L}}_{\vec{t}} \hat{\vec{\pi}} - \hat{\vec{d}}_{\vec{t}}
\end{flexequation*}
By Lemma~\ref{lem:LaplacianMatrix}(\ref{lem:LaplacianMatrix:subMatrix}), the matrix $\hat{\vec{L}}_{\vec{t}}$ is non-singular and thus there is a linear, one-to-one correspondence between potential vectors (with $\pi_s = 0$) and excess vectors (and also between the potential vectors and the demand) in every region.%
\longversiononly{The matrix $\smash{\hat{\vec{L}}_{\vec{t}}^{-1}}$ acts as a pseudo inverse of $\vec{L}$ on the subspace $\{\vec{\pi} \in \R^n : \pi_s = 0\}$, see Appendix~\ref{app:laplacian:inverse}. \par}
The set of all excess vectors of $s$-$t$-flows with demand $\lambda \geq 0$ is a line in the excess space (see Figure~\ref{fig:regions}(d)\shortversiononly{ in Appendix~\ref{appendixfigure\thesection subsection}}). Since there is a linear dependence between $\vec{y}$ and $\vec{\pi}$ in every region, the potentials inducing $s$-$t$-flows are line segments in the potential space (see Figure~\ref{fig:regions}(c)\shortversiononly{ in Appendix~\ref{appendixfigure\thesection subsection}}). We denote by $\vec{\pi} (\lambda )$ the unique potential that induces the excess $\lambda \, \Delta \vec{y}$ and call the function $\vec{\pi} ( \cdot )$ the \textit{solution curve} in the potential space.

Our algorithm computes the function $\vec \pi(\sCurveP)$ and the function $f(\sCurveP) = \vec x (\sCurveP) = \vec C_{\vec t} \, \vec{\Gamma}^{\top} \vec{\pi} (\sCurveP) - \vec d_{\vec t}$ of Wardrop equilibrium edge flows by starting with a solution $\vec{\pi} ( \lambda_0)$ and then following the line segments in the potential space by computing the direction $\smash{\Delta \hat{\vec{\pi}}_{\vec{t}} = \hat{\vec{L}}_{\vec{t}}^{-1} \, \Delta \hat{\vec{y}}}$ of these line segments in every region. Schematically, the algorithm proceeds as follows:\shortversiononly{\\}
\begin{compactenum}
\item Start with some potential $\vec \pi$ in some region $R_{\vec t}$;
\item \label{it:compute} Compute the direction vector $\Delta \vec{\pi}$ of the line segment in this region;
\item Find the first boundary that the solution curve hits;
\item Find the adjacent region $R_{\tilde {\vec t}}$ and repeat with \ref{it:compute}.  
\end{compactenum}
We will describe these steps in more detail below. 
\subparagraph{Initialization}
We start with the initial potential $\vec \pi^1$ such that
$\vec{y}^1 = \vec{L}_{\vec t^1} \vec \pi^1 - \tilde{\vec d}_{\vec t^1} = \vec 0$
where $\vec t^1$ is chosen such that $\vec \pi^1 \in R_{\vec t^1}$. The initial potential is chosen such that there is no excess at any vertex. If all cost functions satisfy $l_e(0) = 0$ this means in particular that the initial flow
$f(\sCurveB^1) := \vec N^1 := \vec C_{\vec t^1} \vec \Gamma^{\top} \vec \pi^1 - \vec d_{\vec t^1}$
is the zero flow. The breakpoint $\sCurveB^1 = 0$ is the first breakpoint of the solution function $f$ and $\vec N^1$ the flow at this breakpoint.


\subparagraph{Main Loop}
At the start of every iteration $\sCurveI$, the solution curve is in some point $\vec \pi^\sCurveI$ in some region $R_{\vec t^\sCurveI}$.
To find a direction vector $\Delta \vec \pi^\sCurveI$ such that 
$\Delta \vec{y} = \vec L_{\vec t^{\sCurveI}} \Delta \vec{\pi}^i$
we compute $\smash{\Delta \hat{\vec \pi}^i = \hat{\vec L}_{\vec t^{\sCurveI}}^{-1} \, \Delta \hat{\vec{y}}}$ which is well-defined since $\hat{\vec L}_{\vec t^{\sCurveI}}$ is invertible by Lemma~\ref{lem:LaplacianMatrix}(\ref{lem:LaplacianMatrix:subMatrix}). We obtain $\Delta \vec \pi^i = (0, \Delta \pi_2^i, \dotsc, \Delta \pi_n^i)^{\top}$. The slope of the solution flow function $f(\lambda)$ is then computed as
$\vec M^i := \vec C_{\vec t^i} \incidenceMatrix^{\top} \Delta \vec \pi^i$.
%
For every edge $e=(v,w)$ we have $\pi_w - \pi_v = \vec \gamma_e^{\top} \vec \pi$. We then compute 
\begin{align*}
\epsilon(e) :=
	\begin{cases}
		\frac{\sigma_{e, t^i_e+1} - \vec \gamma_e^{\top} \vec \pi^i}{\vec \gamma_e^{\top} \Delta \vec \pi^i}  & \text { if } \vec \gamma_e^{\top} \Delta \vec \pi^i > 0,\\
		\frac{\sigma_{e, t^i_e\phantom{+1}} - \vec \gamma_e^{\top} \vec \pi^i}{\vec \gamma_e^{\top} \Delta \vec \pi^i} & \text{ if } \vec \gamma_e^{\top} \Delta \vec \pi^i < 0,\\
		\infty & \text{ else.}
	\end{cases}
\end{align*}
Determining 
$\epsilon := \min_{e \in E} \epsilon (e)$
we can find the closest boundary. This definition ensures that $\vec \pi^i + \lambda \, \Delta \vec \pi^i$ is in the region $R_{\vec t^i}$ for all $0 \leq \lambda \leq \epsilon$. If there is exactly one edge 
$e^* \in E^* := \argmax_{e \in E} \epsilon (e)$
we proceed with the boundary crossing. If there are more than one edge for that $\epsilon(e)$ is minimal, we call the point $\vec \pi^i + \epsilon \Delta \vec \pi^i$ \emph{degenerate}. In a degenerate point, multiple boundaries induced by different edges intersect (e.g. the point $(2,1)$ in Figure~\ref{fig:regions}~(c)\shortversiononly{ in Appendix~\ref{appendixfigure\thesection subsection}}). We will discuss these points below and assume for now that there is an unique boundary to be crossed (e.g. the point $(1,2)$ in Figure~\ref{fig:regions}~(c)\shortversiononly{ in Appendix~\ref{appendixfigure\thesection subsection}}).
At $\lambda^{i+1} := \lambda^{i} + \epsilon$ is the next breakpoint of the solution. The potential at this point is
$\vec \pi^{i+1} := \vec \pi^i + \epsilon \Delta \vec \pi^i$
and value of the flow function at the breakpoint is
$\smash{f(\lambda^{i+1}) := \vec N^{i+1} := \vec C_{\vec t^i} \vec \Gamma^{\top} \Delta \vec \pi^i - \vec d_{\vec t^i}}$.

\subparagraph{Non-degenerate Points}

We call any two regions $R_{\vec{t}^1}, R_{\vec{t}^2}$ \emph{neighboring} if
\begin{flexequation} \label{eq:neighborRegionVector}
\vec{t}^2 = \vec{t}^1 \pm \vec{u}_e
\end{flexequation}
where $\vec{u}_e$ is the unit vector with the non-zero entry in the component corresponding to the edge $e$. The edge $e$ is called \emph{boundary edge} in this case. Therefore, a neighboring region is a region with a region vector that differs by one in exactly one component.

When the solution curve hits a unique boundary induced by some edge $e$ it moves on to the region $R_{\vec{t}^{i+1}}$ behind the boundary which is identified by the vector $\vec{t}^{i+1} = \vec{t}^i + \sgn\big( \vec{\gamma}_e^{\top} \, \Delta \vec{\pi}^i \big) \vec{u}_e$. 

The following theorem gives us a connection between the inverses of the reduced Laplacian matrices of two neighbouring regions and the direction vectors $\Delta \vec \pi$ in these region and was proven in similar form by Fujisawa and Kuh~\cite{fujisawa1972piecewise}. Intuitively, it states that the new inverse of the Laplacian can be computed by a rank 1 transformation of the old Laplacian.
\longversion{
\begin{theorem} \label{thm:boundaryCrossing}
Let $R_{\vec t^i}$ and $R_{\vec t^{i+1}}$ be two neighbouring regions with the boundary edge $e$ and $\Delta c_{e} := c_{e, t^{i+1}_{e}} - c_{e, t^i_{e}}$. Then
\begin{compactenum}[(i)]
\item \label{thm:boundaryCrossing:Laplacians}
	The inverse of the reduced Laplacian matrix in region $R_{\vec t^{i+1}}$ is
	\begin{align*}
	\hat{\vec L}_{\vec t^{\sCurveI+1}}^{-1} = \Big(
	\vec{I}_{n-1} - \frac{\Delta c_{e}}{1 + \Delta c_{e} \hat{\vec{\gamma}}_{e}^{\top} \hat{\vec L}_{\vec t^{\sCurveI}}^{-1} \hat{\vec{\gamma}}_{e} } \, \hat{\vec L}_{\vec t^{\sCurveI}}^{-1} \, \hat{\vec{\gamma}}_{e} \, \hat{\vec{\gamma}}_{e}^{\top}
	\Big) \hat{\vec L}_{\vec t^{\sCurveI}}^{-1}
	\end{align*}
	where $\vec{I}_{n-1}$ is the identity matrix in $\R^{n-1 \times n-1}$ and $\hat{\vec \gamma}_{e}$ denotes the vector $\vec \gamma_{e}$ without the first component.
\item \label{thm:boundaryCrossing:Directions}
	Let $\Delta \vec \pi^i$ and $\Delta \vec \pi^{i+1}$ be the direction vectors in the regions $R_{\vec t^i}$ and $R_{\vec t^{i+1}}$. Then
	$\sgn \big( \vec \gamma_{e}^{\top} \Delta \vec \pi^{i} \big) = \sgn \big( \vec \gamma_{e}^{\top} \Delta \vec \pi^{i+1} \big)$.
\end{compactenum}
\end{theorem}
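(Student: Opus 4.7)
The plan is to recognize the transition between the two regions as a rank-one perturbation of the conductance matrix and then invoke the Sherman-Morrison identity. Since $R_{\vec t^i}$ and $R_{\vec t^{i+1}}$ are neighboring with boundary edge $e$, the definition \eqref{eq:neighborRegionVector} implies that the diagonal matrices satisfy $\vec C_{\vec t^{i+1}} = \vec C_{\vec t^i} + \Delta c_e\, \vec u_e \vec u_e^\top$, where $\vec u_e$ is the $e$-th standard basis vector in $\R^m$. Conjugating by $\vec\Gamma$ and then deleting the first row and column yields
\begin{equation*}
\hat{\vec L}_{\vec t^{i+1}} \;=\; \hat{\vec L}_{\vec t^i} + \Delta c_e\, \hat{\vec\gamma}_e \hat{\vec\gamma}_e^\top.
\end{equation*}
This is the exact setting of Sherman-Morrison, and substituting $\vec u = \Delta c_e\, \hat{\vec\gamma}_e$ and $\vec v = \hat{\vec\gamma}_e$ in the standard formula yields directly the expression claimed in part~(\ref{thm:boundaryCrossing:Laplacians}).

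To justify the application of Sherman-Morrison one needs $1 + \Delta c_e\, \alpha \neq 0$, where I abbreviate $\alpha := \hat{\vec\gamma}_e^\top \hat{\vec L}_{\vec t^i}^{-1} \hat{\vec\gamma}_e$. Here I would invoke Lemma~\ref{lem:LaplacianMatrix}(\ref{lem:LaplacianMatrix:subMatrix}): both $\hat{\vec L}_{\vec t^i}$ and $\hat{\vec L}_{\vec t^{i+1}}$ are positive definite, hence have positive determinants. The matrix determinant lemma applied to the rank-one update gives
\begin{equation*}
\det(\hat{\vec L}_{\vec t^{i+1}}) \;=\; \det(\hat{\vec L}_{\vec t^i}) \cdot (1 + \Delta c_e\, \alpha),
\end{equation*}
so $1 + \Delta c_e\, \alpha > 0$. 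This is the main obstacle: the validity of the Sherman-Morrison update rests critically on this positivity, which only works because both neighboring regions yield positive-definite reduced Laplacians (equivalently, positive conductances in each piece of the strictly increasing cost functions).

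For part~(\ref{thm:boundaryCrossing:Directions}), observe first that $\vec\gamma_e^\top \Delta\vec\pi = \hat{\vec\gamma}_e^\top \Delta\hat{\vec\pi}$, because the first component of $\Delta\vec\pi$ vanishes by the normalization $\pi_s = 0$. Now apply the formula from part~(\ref{thm:boundaryCrossing:Laplacians}) to $\Delta\hat{\vec\pi}^{i+1} = \hat{\vec L}_{\vec t^{i+1}}^{-1} \Delta\hat{\vec y}$ and left-multiply by $\hat{\vec\gamma}_e^\top$. Writing $\beta := \hat{\vec\gamma}_e^\top \hat{\vec L}_{\vec t^i}^{-1} \Delta\hat{\vec y} = \vec\gamma_e^\top \Delta\vec\pi^i$, the computation collapses to
\begin{equation*}
\vec\gamma_e^\top \Delta\vec\pi^{i+1} \;=\; \beta \;-\; \frac{\Delta c_e\, \alpha\, \beta}{1 + \Delta c_e\, \alpha} \;=\; \frac{\beta}{1 + \Delta c_e\, \alpha} \;=\; \frac{\vec\gamma_e^\top \Delta\vec\pi^i}{1 + \Delta c_e\, \alpha}.
\end{equation*}
Since the denominator is strictly positive by the argument above, $\sgn(\vec\gamma_e^\top \Delta\vec\pi^{i+1}) = \sgn(\vec\gamma_e^\top \Delta\vec\pi^i)$, as claimed.
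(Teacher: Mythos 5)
Your proposal is correct and follows essentially the same route as the paper: the rank-one update $\hat{\vec L}_{\vec t^{i+1}} = \hat{\vec L}_{\vec t^i} + \Delta c_e\,\hat{\vec\gamma}_e\hat{\vec\gamma}_e^{\top}$ combined with Sherman--Morrison for part (i), and the matrix determinant lemma together with positive definiteness of both reduced Laplacians to show $1+\Delta c_e\,\hat{\vec\gamma}_e^{\top}\hat{\vec L}_{\vec t^i}^{-1}\hat{\vec\gamma}_e>0$ for part (ii). The only cosmetic difference is that the paper phrases this positive factor as the ratio $\det(\hat{\vec L}_{\vec t^{i+1}})/\det(\hat{\vec L}_{\vec t^{i}})$ while you argue on the denominator directly; the underlying argument is identical.
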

}{
\begin{theorem} \label{thm:boundaryCrossing}
For two neighbouring regions $R_{\vec t^i}$ and $R_{\vec t^{i+1}}$ with direction vectors $\Delta \vec \pi^i$ and $\Delta \vec \pi^{i+1}$ and boundary edge $e$ with $\Delta c_{e} := c_{e, t^{i+1}_{e}} - c_{e, t^i_{e}}$:
\begin{compactenum}[(i)]
\item \label{thm:boundaryCrossing:Laplacians}
	$
	\hat{\vec L}_{\vec t^{\sCurveI+1}}^{-1} = \Big(
	\vec{I}_{n-1} - \frac{\Delta c_{e}}{1 + \Delta c_{e} \hat{\vec{\gamma}}_{e}^{\top} \hat{\vec L}_{\vec t^{\sCurveI}}^{-1} \hat{\vec{\gamma}}_{e} } \, \hat{\vec L}_{\vec t^{\sCurveI}}^{-1} \, \hat{\vec{\gamma}}_{e} \, \hat{\vec{\gamma}}_{e}^{\top}
	\Big) \hat{\vec L}_{\vec t^{\sCurveI}}^{-1}$
	where $\vec{I}_{n-1}$ is the identity matrix in $\R^{n-1 \times n-1}$.
\item \label{thm:boundaryCrossing:Directions}
	$\sgn \big( \vec \gamma_{e}^{\top} \Delta \vec \pi^{i} \big) = \sgn \big( \vec \gamma_{e}^{\top} \Delta \vec \pi^{i+1} \big)$.
\end{compactenum}
\end{theorem}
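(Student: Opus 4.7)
The key observation is that moving from $R_{\vec t^i}$ to the neighbouring region $R_{\vec t^{i+1}}$ changes exactly one diagonal entry of $\vec C_{\vec t}$, namely $c_{e,t_e}$ by the increment $\Delta c_e$. Expanding $\vec L_{\vec t} = \vec\Gamma \vec C_{\vec t}\vec\Gamma^\top = \sum_{e' \in E} c_{e',t_{e'}}\,\vec\gamma_{e'}\vec\gamma_{e'}^\top$, this yields the rank-one perturbation $\vec L_{\vec t^{i+1}} = \vec L_{\vec t^i} + \Delta c_e\,\vec\gamma_e\vec\gamma_e^\top$, and after deletion of the first row and column (which does not affect the outer product because its first row/column is also just deleted) the same identity holds for the reduced matrices, $\hat{\vec L}_{\vec t^{i+1}} = \hat{\vec L}_{\vec t^i} + \Delta c_e\,\hat{\vec\gamma}_e\hat{\vec\gamma}_e^\top$. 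Part~(\ref{thm:boundaryCrossing:Laplacians}) is then an immediate application of the Sherman--Morrison formula, provided the denominator $D := 1 + \Delta c_e\,\hat{\vec\gamma}_e^\top\hat{\vec L}_{\vec t^i}^{-1}\hat{\vec\gamma}_e$ does not vanish.

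For part~(\ref{thm:boundaryCrossing:Directions}), I would multiply the formula from (\ref{thm:boundaryCrossing:Laplacians}) by $\Delta\hat{\vec y}$ on the right to obtain $\Delta\hat{\vec\pi}^{i+1}$ in terms of $\Delta\hat{\vec\pi}^i$, and then take the inner product with $\hat{\vec\gamma}_e$. Writing $\alpha := \hat{\vec\gamma}_e^\top\Delta\hat{\vec\pi}^i$ and $\beta := \hat{\vec\gamma}_e^\top\hat{\vec L}_{\vec t^i}^{-1}\hat{\vec\gamma}_e$, the algebra collapses to
\begin{equation*}
\hat{\vec\gamma}_e^\top\Delta\hat{\vec\pi}^{i+1} \;=\; \alpha - \frac{\Delta c_e\,\beta\,\alpha}{1 + \Delta c_e\,\beta} \;=\; \frac{\alpha}{D}.
\end{equation*}
Since $\pi_s = 0$, the quantity $\vec\gamma_e^\top\Delta\vec\pi$ coincides with $\hat{\vec\gamma}_e^\top\Delta\hat{\vec\pi}$ regardless of whether $e$ is incident to $s$, so the sign identity in (\ref{thm:boundaryCrossing:Directions}) reduces to the assertion $D > 0$.

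\textbf{The main obstacle} is establishing $D > 0$, because in our setting $\Delta c_e$ can be negative (the conductivity is the reciprocal of the slope of the cost function and may drop when crossing a breakpoint). Here I would exploit positive definiteness of both reduced Laplacians, granted by Lemma~\ref{lem:LaplacianMatrix}(\ref{lem:LaplacianMatrix:subMatrix}) in combination with the assumption of strictly increasing costs (so all conductivities stay strictly positive and the graph remains connected in both regions). Plugging the test vector $\vec v := \hat{\vec L}_{\vec t^i}^{-1}\hat{\vec\gamma}_e$ into the quadratic form for $\hat{\vec L}_{\vec t^{i+1}}$ gives
\begin{equation*}
0 \;<\; \vec v^\top \hat{\vec L}_{\vec t^{i+1}}\,\vec v \;=\; \beta + \Delta c_e\,\beta^2 \;=\; \beta\,(1 + \Delta c_e\,\beta) \;=\; \beta\,D,
\end{equation*}
and since $\beta > 0$ (again by positive definiteness of $\hat{\vec L}_{\vec t^i}^{-1}$ and $\hat{\vec\gamma}_e \neq 0$), we conclude $D > 0$. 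An equivalent route is the matrix determinant lemma, which gives $\det\hat{\vec L}_{\vec t^{i+1}} = D\cdot\det\hat{\vec L}_{\vec t^i}$; positivity of both determinants then forces $D > 0$. Either argument simultaneously justifies the applicability of Sherman--Morrison in part~(\ref{thm:boundaryCrossing:Laplacians}) and delivers the sign preservation in part~(\ref{thm:boundaryCrossing:Directions}).
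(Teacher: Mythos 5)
Your proposal is correct and follows essentially the same route as the paper: the rank-one update $\hat{\vec L}_{\vec t^{i+1}} = \hat{\vec L}_{\vec t^i} + \Delta c_e\,\hat{\vec\gamma}_e\hat{\vec\gamma}_e^{\top}$ plus Sherman--Morrison for part~(i), and reducing part~(ii) to positivity of the denominator $D$ via positive definiteness of both reduced Laplacians. The paper establishes $D>0$ through the determinant lemma ($\det\hat{\vec L}_{\vec t^{i+1}} = D\det\hat{\vec L}_{\vec t^i}$), which is the alternative you already mention; your quadratic-form test-vector argument is an equally valid, equivalent way to extract the same fact.
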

}

\begin{appendixproof}{Theorem~\ref{thm:boundaryCrossing}}
By the definition of $\Delta c_{e}$ we have
$
\vec{C}_{\vec t^{\sCurveI+1}} = \vec{C}_{\vec t^{\sCurveI}} + \diag(0, \dotsc, 0, \Delta c_{e}, 0, \dotsc, 0)
$.
Therefore we have for the Laplacian matricies that
$
\vec L_{\vec t^{\sCurveI+1}} =\vec \Gamma \vec{C}_{\vec t^{\sCurveI+1}} \vec \Gamma^{\top} = \vec L_{\vec t^\sCurveI} + \Delta c_{e^*} \vec \gamma_{e} \vec \gamma_{e}^{\top}
$
which directly implies
\begin{align} \label{eq:laplaceConnection}
\hat{\vec L}_{\vec t^{\sCurveI+1}} = \hat{\vec L}_{\vec t^{\sCurveI}} + \Delta c_{e} \hat{\vec{\gamma}}_{e} \hat{\vec{\gamma}}_{e}^{\top}
\end{align}
for the reduced Laplacian matrices. Using the Sherman-Morrison formula \cite{hager1989updating, sherman1950adjustment}, we directly obtain statement $(i)$.
For any two vectors $\vec{v}, \vec{w} \in \R^{n}$, the determinant identity $\det(\vec{I}_n + \vec{v} \vec{w}^{\top}) = 1+\vec{w}^{\top} \vec{v}$ holds true (see \cite[Corollary 18.1.3]{harville1997matrix}). Hence, we get
\begin{align*}
\det ( \hat{\vec L}_{\vec t^{\sCurveI+1}} ) &= \det \big(\vec I_{n-1} + \Delta c_{e} \hat{\vec{\gamma}}_{e} \hat{\vec{\gamma} }_{e}^{\top} \hat{\vec L}_{\vec t^\sCurveI}^{-1}\big)  \det (\hat{\vec L}_{\vec t^\sCurveI})
= (1 + \Delta c_{e} \hat{\vec{\gamma} }_{e}^{\top} \hat{\vec L}_{\vec t^\sCurveI}^{-1}\hat{\vec{\gamma} }_{e})
\det (\hat{\vec L}_{\vec t^\sCurveI}) 
\end{align*}
and thus
\begin{align*}
\hat{\vec \gamma}_{e}^{\top} \Delta \vec \pi^{i+1} 
&= \hat{\vec \gamma}_{e}^{\top} 
 \Big(
	\vec{I}_{n-1} - \frac{\Delta c_{e}}{1 + \Delta c_{e} \hat{\vec{\gamma}}_{e}^{\top} \hat{\vec L}_{\vec t^{\sCurveI}}^{-1} \hat{\vec{\gamma}}_{e} } \, \hat{\vec L}_{\vec t^{\sCurveI}}^{-1} \, \hat{\vec{\gamma}}_{e} \, \hat{\vec{\gamma}}_{e}^{\top}
	\Big) \hat{\vec L}_{\vec t^{\sCurveI}}^{-1}
 \Delta \hat{\vec{y}}  \\
&=
\frac{1}{1+ \Delta c_{e} \hat{\vec \gamma}_{e}^{\top} \hat{\vec L}_{\vec t^\sCurveI}^{-1} \hat{\vec \gamma}_{e}} \, \hat{\vec \gamma}_{e}^{\top} \hat{\vec L}_{\vec t^\sCurveI}^{-1} \Delta \hat{\vec{y}} 
= \frac{\det ( \hat{\vec L}_{\vec t^{\sCurveI+1}} )}{\det ( \hat{\vec L}_{\vec t^{\sCurveI}} )} \hat{\vec \gamma}_{e}^{\top} \Delta \vec \pi^{i+1}
.
\end{align*}
The matrices $\hat{\vec L}_{\vec t^\sCurveI}$ and $\hat{\vec L}_{\vec t^\sCurveI + 1}$ are positive definite by Lemma~\ref{lem:LaplacianMatrix}(\ref{lem:LaplacianMatrix:subMatrix}) which implies $\frac{\det (\hat{\vec L}_{\vec t^\sCurveI}) }{\det ( \hat{\vec L}_{\vec t^{\sCurveI+1}} )} > 0$ and finally proves \eqref{thm:boundaryCrossing:Directions}.
\end{appendixproof}

The second part of the theorem in particular implies that if the solution curve crosses a boundary its direction vector is oriented away from the boundary hyperplane after crossing the boundary. To see this, note that $\vec \gamma_{e}$ is the normal vector of the hyperplane that is crossed (see Figure~\ref{fig:regions}~(c)\shortversiononly{ in Appendix~\ref{appendixfigure\thesection subsection}}). In order to reach the boundary the direction vector $\Delta \vec \pi^i$ must not be orthogonal to $\vec \gamma_{e}$. By Theorem~\ref{thm:boundaryCrossing}~(\ref{thm:boundaryCrossing:Directions}) the vector $\Delta \vec \pi^{i+1}$ has the same orientation as $\Delta \vec \pi^i$ with respect to the normal vector $\vec \gamma_{e}$ and since $\Delta \vec \pi^i$ is directed towards the boundary from region $R_{\vec t^i}$ the vector $\Delta \vec \pi^{i+1}$ is directed away from the boundary in region $R_{\vec t^{i+1}}$. A pseudocode summarizing the steps of the algorithm (in the non-degenerate case) is given in Appendix~\ref{app:pseudo} and a concrete example in Appendix~\ref{app:examples}.

\subparagraph{Degenerate Points}

Assume the solution curve is in a point $\vec \pi^i$ and the next point $\vec \pi^{i+1} = \vec \pi^i + \epsilon \Delta \vec \pi^i$ is degenerate, i.e.,  the set
$
E^* := \argmin_{e \in E} \epsilon (e)
$
contains more than one edge and we thus have more than one choice to cross a boundary in this point. This means that the boundaries induced by the edges in $E^*$ intersect in the point $\vec{\pi}^{i+1}$. Intuitively, when adding a small perturbation to the point $\vec \pi^i$, the solution curve will not hit the degenerate point (at least as long the perturbation is not directed along $\Delta \vec{\pi}^i$) and traverse the intersecting regions until it moves away from the point (see Figure~\ref{fig:perturbedSolution}\shortversiononly{ in Appendix~\ref{appendixfigure\thesection subsection}}). We will now use this fact to derive a lexicographic rule for dealing with degenerate points without actually perturbing the solution curve.

\begin{appendixfigure}{t}
\centering
\begin{center}
\begin{tikzpicture}
\newcommand{\xMin}{-3} \newcommand{\xMax}{3}
\newcommand{\yMin}{-2} \newcommand{\yMax}{2}
\path[fill=color1,fill opacity=0.25] (\xMin,0) -- (0,0) -- ({max(\xMin,\yMin)}, {max(\xMin,\yMin)}) -- (\xMin,\yMin);
\path[fill=color2,fill opacity=0.25] ({max(\xMin,\yMin)},{max(\xMin,\yMin)}) -- (0,0) -- (0,\yMin);
\path[fill=color3,fill opacity=0.25] (0,\yMin) -- (0,0) -- (\xMax,0) -- (\xMax,\yMin);
\path[fill=color4,fill opacity=0.25] (\xMax,0) -- (0,0) -- ({min(\xMax,\yMax)},{min(\xMax,\yMax)}) -- (\xMax,\yMax);
\path[fill=color5,fill opacity=0.25] ({min(\xMax,\yMax)},{min(\xMax,\yMax)}) -- (0,0) -- (0,\yMax);
\path[fill=color6,fill opacity=0.25] (0,\yMax) -- (0,0) -- (\xMin,0) -- (\xMin,\yMax);
\draw[thin, dashed] (0,\yMin) -- (0,\yMax);
\draw[thin, dashed] ({max(\xMin,\yMin)}, {max(\xMin,\yMin)}) -- ({min(\xMax,\yMax)},{min(\xMax,\yMax)});
\draw[thin, dashed] (\xMin,0) -- (\xMax,0);

\node[anchor=north west, color1!80!black] at (\xMin, 0) {$\regionp[0]$};
\node[anchor=south, color2!80!black] at ({max(\xMin,\yMin)/2}, \yMin) {$\regionp[1]$};
\node[anchor=south, color3!80!black] at ({\xMax/2}, \yMin) {$\regionp[2]$};
\node[anchor=east, color4!80!black] at (\xMax, {\yMax/2}) {$\regionp[3]$};
\node[anchor=north east, color5!80!black] at ({min(\xMax,\yMax)-.3},{min(\xMax,\yMax)}) {$\regionp[4]$};

\draw (-1.5,-0.75) node[solid] {} node[above] {$\vec \pi^i$};
\draw (0,0) node[solid] {} node[above left] {$\vec \pi^{i+1}$};
\draw[thick] (-1.5,-0.75)--(0,0)--(0.5,2);

\draw[thick, darkgreen] (-1.7,-1.2)--(-0.7,-0.7)--(0,-0.5)--(0.25,0)--(0.5,0.5)--(0.875,2);
\draw (-1.7,-1.2) node[solid, darkgreen] {} node[left, darkgreen ] {$\potentialVp[0]$};
\draw (-0.7,-0.7) node[solid, darkgreen] {} node[below=.2, darkgreen ] {$\potentialVp[1]$};
\draw (0,-0.5) node[solid, darkgreen] {} node[below right, darkgreen ] {$\potentialVp[2]$};
\draw (0.25,0) node[solid, darkgreen] {} node[below right, darkgreen ] {$\potentialVp[3]$};
\draw (0.5,0.5) node[solid, darkgreen] {} node[right, darkgreen ] {$\potentialVp[4]$};

\end{tikzpicture}
\end{center}
\caption{The solution curve (black) hits the degenerate point $\vec \pi^{i+1}$. By perturbing $\vec \pi^i$, the perturbed solution curve (green) traverses the intersecting regions avoiding the degenerate point.}
\label{fig:perturbedSolution}
\end{appendixfigure}

We call a vector $\vec v \in \R^n$ \textit{lexicographically smaller} than a vector $\vec w \in \R^n$ (denoted by $\vec v \lexle \vec w$) if there is some $j \in \{1, \dotsc, n\}$ with $v_{j} < w_{j}$ and $v_{n-j'} = w_{n-j'}$ for all $1 \leq j' < j$. For some $\delta > 0$, we define the \textit{perturbation vector}
$\smash{\vec \delta := (0, \delta^{n-1}, \delta^{n-2}, \dotsc, \delta^2, \delta )}$.
For further reference, we note that for any vector $\vec{v} \in \R^n$ the value $\vec{v}^{\top} \vec{\delta}$ is a polynomial in $\delta$ and that for two vectors $\vec{v}, \vec{w} \in \R^n$ with $v_1 = w_1$ there is a $\delta^* > 0$ such that 
$\vec v^{\top} \vec \delta < \vec w^{\top} \vec \delta$
for all $0 < \delta < \delta^*$ if and only if
$\vec v \lexle \vec w$\footnote{Note that we are going to add the perturbation vector to potentials. Thus, we do not perturb the first component since $\pi_1 = 0$ is constant.}.

%

To bypass the degenerate point $\vec{\pi}^{i+1}$, we perturb the potential $\vec{\pi}^i$ with the vector $\vec{\delta}$ and track the course of the solution curve from this point on. We refer to this curve as the \textit{perturbed solution curve}. We denote the regions traversed by this perturbed solution curve by $\regionp[l]$ and the points in the potential space where the perturbed solution curve enters these regions as $\potentialVp[l]$. We index these values by $0 \leq l \leq \alpha$ where $\alpha$ is the number of steps the perturbed solution curve takes.

As we are solely interested in the behavior of the solution curve in the neighborhood of the degenerate point $\vec{\pi}^{i+1}$ we only consider the boundaries intersecting in this point. (If $\delta$ is small enough the continuity ensures that the perturbed solution curve will only cross these boundaries before moving away from the degenerate point.) These boundaries are induced by the edges in $E^*$ and some specific breakpoints of the inverse cost function that we will denote by $\potentialBPd_e$.

By definition, the perturbed solution curve starts in the point $\potentialVp[0] := \vec{\pi}^i + \vec{\delta}$ in the region $\regionp[0] = R_{\vec{t}^i}$. In every step, we compute the distance to the boundary induced by $e \in E^*$ as $\smash{\epsilonp[l] = \frac{\potentialBPd_e  - \incidenceV_e^{\top} \potentialVp[l]}{\incidenceV_e^{\top} \potentialDp[l]}}$ where $\potentialDp[l]$ is the direction vector in the region $\regionp[l]$. We use the convention that $\epsilonp[l] = \infty$ if the denominator is zero. Note, that some or all values $\epsilonp[l]$ can be negative if the solution curve moves away from some or all boundaries intersecting in the degenerate point. 
We define by $E^*_l (\delta) := \{ e \in E^* : 0 < \epsilonp[l] < \infty \}$ the set of edges which lie in front of the  perturbed solution curve in step $l$. If $E^*_l = \emptyset$ the perturbed solution curve moves away from all boundaries, meaning the the real solution curve can proceed in exactly this region $\regionp[l]$. Otherwise, we denote by $e^*_l$ the edge minimizing $\epsilonp[l]$ for all $e \in E^*_l (\delta)$. This edge $e^*_l$ induces the boundary the solution curve crosses next in the point $\potentialVp[l+1]$.
Below, we will show that for $\delta$ small enough, there is always an unique choice for $e^*_l$. First, we observe that the potential points $\potentialVp[l]$ and the distances to the boundary $\epsilonp[p]$ are a affine linear function of the perturbation $\vec{\delta}$.

\begin{lemma} \label{lem:perturbedValues}
There are matrices $\vec{M}^l \in \R^{n \times n}$ such that $\potentialVp[0] = \vec{\pi}^i + \vec{M}^0 \vec{\delta}$ and $\potentialVp[l] = \vec{\pi}^{i+1} + \vec{M}^{l} \vec{\delta}$ for every $1 \leq l \leq \alpha$. Further, there are vectors $\vec{m}_{l, e} \in \bar{\R}^{n}$ such that $\epsilonp[0] = \epsilon(e) + \vec{m}_{0, e}^{\top} \vec{\delta}$ and $\epsilonp[l] = \vec{m}_{l, e}^{\top} \vec{\delta}$ for all $1 \leq l \leq \alpha$ and $e \in E^*$.
\end{lemma}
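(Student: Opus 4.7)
The plan is to prove both claims simultaneously by induction on the step index $l$, relying on one structural observation that makes the induction go through almost automatically: the direction vector $\Delta \breve{\vec{\pi}}^l$ in region $\breve{R}^l$ equals $\hat{\vec{L}}_{\vec{t}^l}^{-1} \Delta \hat{\vec{y}}$ and therefore depends only on the region, not on the perturbation $\vec{\delta}$. In particular, the denominator $\vec{\gamma}_e^\top \Delta \breve{\vec{\pi}}^l$ of $\breve{\epsilon}^l(e,\delta)$ is a constant in $\vec{\delta}$. Since the update rule reads $\breve{\vec{\pi}}^{l+1} = \breve{\vec{\pi}}^l + \breve{\epsilon}^l(e^*_l,\delta)\,\Delta \breve{\vec{\pi}}^l$, affine linearity in $\vec{\delta}$ will propagate from one step to the next.

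For the base case $l = 0$, I will use the definition $\breve{\vec{\pi}}^0 = \vec{\pi}^i + \vec{\delta}$, which immediately gives $\vec{M}^0 = \vec{I}_n$. Substituting this into the definition of $\breve{\epsilon}^0(e,\delta)$ and factoring out $\epsilon(e) = (\breve{\sigma}_e - \vec{\gamma}_e^\top \vec{\pi}^i)/(\vec{\gamma}_e^\top \Delta \vec{\pi}^i)$ yields $\breve{\epsilon}^0(e,\delta) = \epsilon(e) + \vec{m}_{0,e}^\top \vec{\delta}$ with $\vec{m}_{0,e} = -\vec{\gamma}_e/(\vec{\gamma}_e^\top \Delta \vec{\pi}^i)$. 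When $\vec{\gamma}_e^\top \Delta \vec{\pi}^i = 0$ the entries of $\vec{m}_{0,e}$ should be interpreted as $\pm\infty$, which explains the $\bar{\R}^n$-valued statement in the lemma.

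The one step requiring care is the transition $l = 0 \to l = 1$, where the anchor point of the formula shifts from $\vec{\pi}^i$ to $\vec{\pi}^{i+1}$; this is the place I expect to be the main obstacle. Expanding the update rule with the base-case expressions, the constant part of $\breve{\vec{\pi}}^1$ comes out to $\vec{\pi}^i + \epsilon(e^*_0)\,\Delta \vec{\pi}^i$. Because $e^*_0 \in E^*$ satisfies $\epsilon(e^*_0) = \epsilon$, this constant equals $\vec{\pi}^{i+1} = \vec{\pi}^i + \epsilon\,\Delta \vec{\pi}^i$, and the $\vec{\delta}$-linear remainder collects into $\vec{M}^1 = \vec{I}_n + \Delta \vec{\pi}^i\, \vec{m}_{0,e^*_0}^\top$. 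Once this one-off anchor shift has been absorbed, the induction runs uniformly for all $l \geq 1$.

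For the inductive step $l \to l+1$ with $l \geq 1$, I will use a second structural fact that was established earlier in the section: every boundary induced by an edge $e \in E^*$ passes through the degenerate point $\vec{\pi}^{i+1}$, so $\vec{\gamma}_e^\top \vec{\pi}^{i+1} = \breve{\sigma}_e$ for all $e \in E^*$. Under the inductive hypothesis $\breve{\vec{\pi}}^l = \vec{\pi}^{i+1} + \vec{M}^l \vec{\delta}$, the constant part of the numerator of $\breve{\epsilon}^l(e,\delta)$ therefore cancels, leaving $\breve{\epsilon}^l(e,\delta) = \vec{m}_{l,e}^\top \vec{\delta}$ with $\vec{m}_{l,e} = -(\vec{M}^l)^\top \vec{\gamma}_e/(\vec{\gamma}_e^\top \Delta \breve{\vec{\pi}}^l)$ (again with $\pm\infty$ entries in the degenerate case). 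Substituting into the update rule will then give $\breve{\vec{\pi}}^{l+1} = \vec{\pi}^{i+1} + \vec{M}^{l+1} \vec{\delta}$ with $\vec{M}^{l+1} = \vec{M}^l + \Delta \breve{\vec{\pi}}^l\, \vec{m}_{l,e^*_l}^\top$, closing the induction. All the heavy lifting is done by the two structural facts, namely the $\vec{\delta}$-independence of the direction vectors and the common intersection property of the boundaries in $E^*$ at $\vec{\pi}^{i+1}$.
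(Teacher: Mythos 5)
Your proposal is correct and follows essentially the same route as the paper's proof: an induction on $l$ with $\vec{M}^0 = \vec{I}_n$, the anchor shift to $\vec{\pi}^{i+1}$ absorbed at the step $l=0 \to l=1$ via $\epsilon(e^*_0) = \epsilon$, and the identity $\vec{\gamma}_e^{\top}\vec{\pi}^{i+1} = \potentialBPd_e$ cancelling the constant term for $l \geq 1$. Your additive recursion $\vec{M}^{l+1} = \vec{M}^l + \potentialDp[l]\,\vec{m}_{l,e^*_l}^{\top}$ is algebraically identical to the paper's multiplicative form $\vec{M}^{l+1} = \tilde{\vec{M}}^{l+1}\vec{M}^l$.
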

\begin{appendixproof}{Lemma~\ref{lem:perturbedValues}}
We define the matrices 
\[
\tilde{\vec{M}}^l := \vec{I}_n - \frac{1}{\incidenceV_{ e^*_{l-1} }^{\top} \potentialDp[l-1]} \potentialDp[l-1] \, \incidenceV_{ e^*_{l-1} }^{\top}
\quad \text{for } 1 \leq l \leq \alpha
\]
and $\vec{M}^l := \prod_{j=0}^{l-1} \tilde{\vec{M}}^{l-j}$ for $0 \leq l \leq \alpha$. (For $l=0$ the empty product is defined as $\vec{M}^0 = \vec{I}_n$.) Further, we define the vectors
\[
\vec{m}_{l,e}^{\top} := - \frac{1}{\incidenceV_{ e }^{\top} \potentialDp[l]} \, \incidenceV_{ e }^{\top} \, \vec{M}^l
\quad \text{for } 0 \leq l \leq \alpha
\]
where the vector $\vec{m}_{l,e}$ is set to infinity in every component if $\incidenceV_{ e }^{\top} \potentialDp[l] = 0$. \\
For $l=0$, we get $\potentialVp[0] = \vec{\pi}^{i} + \vec{M}^0 \vec{\delta}$ by definition and obtain
\[
\epsilonp[0] =
\frac{\potentialBPd_e - \incidenceV_e^{\top} \potentialVp[0]}{\incidenceV_e^{\top} \potentialDp[0]}
= \frac{\potentialBPd_e - \incidenceV_e^{\top} \vec{\pi}^{i} }{\incidenceV_e^{\top} \Delta \potentialV^i} - \frac{1}{\incidenceV_{ e }^{\top} \potentialDp[0]} \, \incidenceV_{ e }^{\top} \, \vec{M}^0 \vec{\delta}
= \epsilon(e) + \vec{m}_{0,e}^{\top} \vec{\delta}
\]
where we used that $\potentialDp[0] = \Delta \potentialV^i$. \\ 
For $l=1$, we obtain
\begin{align*}
\potentialVp[1] &= \potentialVp[0] + \epsilonp^0 (e^*_0, \delta) \, \potentialDp[0] \\
&= \vec{\pi}^{i} + \epsilon(e) \, \Delta \potentialV^i + \vec{M}^0 \, \vec{\delta} + \vec{m}_{0, e_0^*}^{\top} \, \vec{\delta} \, \potentialDp[0] \\
&= \vec{\pi}^{i+1} + \vec{M}^0 \, \vec{\delta} + \potentialDp[0] \, \vec{m}_{0, e_0^*}^{\top} \, \vec{\delta} \\
&= \vec{\pi}^{i+1} + \tilde{\vec{M}}^1 \, \vec{M}^0 \, \vec{\delta} = \vec{\pi}^{i+1} + \vec{M}^1 \, \vec{\delta}
\end{align*}
and with $\potentialBPd_e = \incidenceV_e^{\top} \vec{\pi}^{i+1}$ also 
$
\epsilonp[1] = \frac{\potentialBPd_e  - \incidenceV_e^{\top} \potentialVp[1]}{\incidenceV_e^{\top} \potentialDp[1]}
= \vec{m}_{1,e}^{\top} \, \vec{\delta}
$. \\
By induction, we finally get for $l \geq 2$ that
\begin{align*}
\potentialVp[l] &= \potentialVp[l-1] + \epsilonp^{l-1} (e^*_{l-1}, \delta) \, \potentialDp[l-1] \\
&= \vec{\pi}^{i+1} + \vec{M}^{l-1} \, \vec{\delta} + \vec{m}_{l-1, e^*_{l-1}}^{\top} \, \vec{\delta} \, \potentialDp[l-1] \\
&= \vec{\pi}^{i+1} + \vec{M}^{l-1} \, \vec{\delta} + \potentialDp[l-1] \, \vec{m}_{l-1, e^*_{l-1}}^{\top} \, \vec{\delta} \\
&= \vec{\pi}^{i+1} + \tilde{\vec{M}}^l \, \vec{M}^{l-1} \, \vec{\delta} = \vec{\pi}^{i+1} + \vec{M}^l \, \vec{\delta}
\end{align*}
and again also 
$
\epsilonp[l] = \frac{\potentialBPd_e  - \incidenceV_e^{\top} \potentialVp[l]}{\incidenceV_e^{\top} \potentialDp[l]}
= \vec{m}_{l,e}^{\top} \, \vec{\delta}
$.
\end{appendixproof}

Lemma~\ref{lem:perturbedValues} shows that the distances to the boundaries $\epsilonp[l]$ are polynomials in $\delta$. Note, that we in particular admit $\infty$-values in the vectors $\vec{m}_{l,e}$ since some $\epsilonp[l]$ values might be infinite. In this case $\vec{m}_{l,e} = \vec{\infty}$ is the vector containing only $\infty$ values.
For $\delta > 0$ small enough, we can now express all sets and values that are necessary for determining the edge $e^*_l$ with the vectors $\vec{m}_{l,e}$. We have, $E^*_0 = E^*$ (since $\epsilon(e) > 0$ for all $e \in E^*$) and $E^*_l = \{ e \in E^* : \vec{0} \lexle \vec{m}_{l,e} \lexle \vec{\infty} \}$. Note that for all $\delta > 0$ small enough the set $E^*_l$ is independent of $\delta$. The perturbed solution curve thus moves away from all boundaries if all vectors $\vec{m}_{l,e}$ are lexicographic negative (or infinity).
The edge $e^*_l$ is the edge in $E^*_l$ with the lexicogrpahically smallest vector $\vec{m}_{l, e^*}$. Thus, determining the boundaries that the perturbed solution curve crosses reduces to iteratively computing the vectors $\vec{m}_{l,e}$ and finding the lexicographically smallest, positive one.
The next Theorem proves, that there is indeed always a unique boundary to cross.

\begin{theorem} \label{thm:perturbationUniqueMinimum}
For every $0 \leq l \leq \alpha - 1$ there is a unique lexicographic minimum in $\{ \vec{m}_{l,e} : e \in E^*_l \}$.
\end{theorem}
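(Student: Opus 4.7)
The plan is to translate the lexicographic comparison of the vectors $\vec{m}_{l,e}$ into a statement about polynomials in $\delta$, and then to argue by induction on $l$ that no two of these polynomials coincide.

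\textbf{Translation step.} By Lemma~\ref{lem:perturbedValues}, each $\epsilonp[l]$ equals $\vec{m}_{l,e}^\top \vec{\delta}$ (plus the constant $\epsilon(e)$ when $l=0$), which is a polynomial in $\delta$ whose coefficients in ascending powers of $\delta$ are $(\vec{m}_{l,e})_n, (\vec{m}_{l,e})_{n-1}, \ldots, (\vec{m}_{l,e})_2$. Since the first component of $\vec{\delta}$ is zero, two vectors have the same polynomial if and only if they agree in components $2,\ldots,n$, and the strict inequality $\vec{v}^\top\vec{\delta} < \vec{w}^\top\vec{\delta}$ holds for all sufficiently small $\delta>0$ iff $\vec{v}\lexle\vec{w}$. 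Hence uniqueness of the lex minimum in $\{\vec{m}_{l,e}:e\in E^*_l\}$ is equivalent to the polynomials $\{\epsilonp[l]:e\in E^*_l\}$ being pairwise distinct.

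\textbf{Base case $l=0$.} Here $\vec{m}_{0,e}^\top = -\incidenceV_e^\top/(\incidenceV_e^\top\Delta\vec{\pi}^i)$. If two edges $e_1\neq e_2$ in $E^*$ (so both denominators are finite and nonzero) gave identical polynomials $\epsilonp[0]$, then in particular $\hat{\incidenceV}_{e_1}$ and $\hat{\incidenceV}_{e_2}$ would be scalar multiples of each other. A short case analysis on whether each of $e_1,e_2$ is incident to $s=1$ (the vertex whose potential coordinate is dropped) shows that two distinct incidence columns can have proportional truncations only if $e_1$ and $e_2$ connect the same pair of vertices, contradicting the standing assumption that $G$ has no parallel edges.

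\textbf{Inductive step.} For $l\geq 1$ I would use the recursion $\vec{M}^l=\tilde{\vec{M}}^l\vec{M}^{l-1}$ together with the identity
$$\incidenceV_e^\top\tilde{\vec{M}}^l = \incidenceV_e^\top - \frac{\incidenceV_e^\top\potentialDp[l-1]}{\incidenceV_{e^*_{l-1}}^\top\potentialDp[l-1]}\,\incidenceV_{e^*_{l-1}}^\top,$$
which follows directly from the rank-one form of $\tilde{\vec{M}}^l$. Substituting into the definition of $\vec{m}_{l,e}$, an equality $\vec{m}_{l,e_1}=\vec{m}_{l,e_2}$ for distinct $e_1,e_2\in E^*_l$ translates into an explicit affine dependence among the three earlier vectors $\vec{m}_{l-1,e_1}$, $\vec{m}_{l-1,e_2}$, and $\vec{m}_{l-1,e^*_{l-1}}$, with coefficients that are nonzero because both denominators $\incidenceV_{e_j}^\top\potentialDp[l]$ are nonzero by definition of $E^*_l$. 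To derive a contradiction, I would strengthen the inductive invariant from "pairwise distinct $\vec{m}_{l-1,e}$" to the linear-algebraic statement that for no pivot edge $e^*_{l-1}$ do three vectors $\vec{m}_{l-1,\cdot}$ coming from distinct edges satisfy this particular affine dependence, and verify that this invariant is preserved when passing from $\vec{M}^{l-1}$ to $\vec{M}^l$.

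\textbf{Main obstacle.} The difficulty lies in the inductive step: pairwise distinctness at level $l-1$ is not in itself enough to preclude the three-term affine dependence produced by the Sherman–Morrison-style update. Making the argument go through requires tracking, along the chain of pivots $e^*_0,\ldots,e^*_{l-1}$, how the matrix $\vec{M}^l$ transforms the rows $\incidenceV_e^\top$ and to rule out that two remaining candidate rows get identified after being reduced modulo the images of the already pivoted incidence vectors. I expect this reduction can be handled by exploiting that each $\tilde{\vec{M}}^l$ has rank $n-1$ with kernels spanned respectively by $\potentialDp[l-1]$ and $\incidenceV_{e^*_{l-1}}$, so that a collision at level $l$ pulls back to a collision or a coincidence with an earlier pivot at level $l-1$, eventually reducing to the base case in which proportionality of incidence columns is ruled out by the absence of parallel edges.
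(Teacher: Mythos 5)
Your setup, your base case, and your structural observations about $\tilde{\vec{M}}^l$ (rank at least $n-1$, left kernel spanned by $\incidenceV_{e^*_{l-1}}$, right kernel spanned by $\potentialDp[l-1]$) all match the paper's proof. But the inductive step, which you yourself flag as the main obstacle, is a genuine gap. You correctly derive that a collision $\vec{m}_{l,e_1}=\vec{m}_{l,e_2}$ yields an affine relation among $\vec{m}_{l-1,e_1}$, $\vec{m}_{l-1,e_2}$ and $\vec{m}_{l-1,e^*_{l-1}}$, and you correctly observe that pairwise distinctness at level $l-1$ does not exclude such a relation; but you then only gesture at an unspecified ``strengthened invariant'' and a reduction ``to the base case.'' Since the pullback through $\tilde{\vec{M}}^l$ is not injective, a collision at level $l$ does not reduce to a collision at level $l-1$: iterating the pullback only accumulates ever longer affine relations, and nothing in your outline shows that this process terminates in something the no-parallel-edges assumption controls.

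The paper closes the argument by a different mechanism that your plan is missing. First it proves, by induction on $l$, that the \emph{left nullspace of $\vec{M}^l$ is exactly} $\vspan(\incidenceV_{e^*_{l-1}})$: if $\vec v^\top\vec M^l=\vec 0$ then $\vec v^\top\tilde{\vec M}^l$ lies in the left nullspace of $\vec M^{l-1}$, i.e. equals $\beta\,\incidenceV_{e^*_{l-2}}^\top$; the case $\beta\neq 0$ is excluded because $\tilde{\vec M}^l\potentialDp[l-1]=\vec 0$ would force $\incidenceV_{e^*_{l-2}}^\top\potentialDp[l-1]=0$, hence by Theorem~\ref{thm:boundaryCrossing}(\ref{thm:boundaryCrossing:Directions}) also $\incidenceV_{e^*_{l-2}}^\top\potentialDp[l-2]=0$, i.e. $\epsilonp^{\,l-2}(e^*_{l-2},\delta)=\infty$, contradicting that $e^*_{l-2}$ was a finite minimizer. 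Second, a collision $\vec m_{l,e_1}=\vec m_{l,e_2}$ places the vector $\vec u:=\frac{1}{\incidenceV_{e_1}^\top\potentialDp[l]}\incidenceV_{e_1}-\frac{1}{\incidenceV_{e_2}^\top\potentialDp[l]}\incidenceV_{e_2}$ in that nullspace, so $\incidenceV_{e^*_{l-1}}=\beta\,\vec u$; but $\vec u^\top\potentialDp[l]=1-1=0$ by construction, whence $\incidenceV_{e^*_{l-1}}^\top\potentialDp[l]=0$ and, again by sign preservation, $\incidenceV_{e^*_{l-1}}^\top\potentialDp[l-1]=0$, contradicting the choice of $e^*_{l-1}$. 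This orthogonality of the difference vector to the \emph{current} direction $\potentialDp[l]$, played against the non-orthogonality of pivot incidence vectors that the minimizer property guarantees at every level, is the closing step your induction lacks; without it the proof does not go through.
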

\begin{appendixproof}{Theorem~\ref{thm:perturbationUniqueMinimum}}
If there are not at least two edges in $E^*_l$ then there is nothing to show. Otherwise, we need to show that for any two edges $e_1, e_2 \in E^*_l$ we have $\vec{m}_{l,e_1} \neq \vec{m}_{l, e_2}$. For $l=0$, the vectors $\vec{m}_{0,e}$ are multiples of the vectors $\vec{\gamma}_e$ which are pairwise linear independent since we assumed that there are no parallel edges. \\
For $l \geq 1$, we claim that 
\begin{equation} \label{eq:thm:perturbationUniqueMinimum:1}
\vec{v}^{\top} \, \vec{M}^{l} = \vec{0} \text{ if and only if } \vec{v} = \beta \, \vec{\gamma}_{e^*_{l-1}}
\end{equation}
 for some $\beta \in \R$. Observe that the matrix $\tilde{\vec{M}}^l$ has a rank of at least $n-1$ (as it is the difference of the full-rank identity matrix and a rank one matrix) and thus the (left) nullspace of $\tilde{\vec{M}}^l$ has a dimension of at most $1$. It can be easily verified that $\smash{\vec{\gamma}^{\top}_{e^*_{l-1}} \, \tilde{\vec{M}}^l = \vec{0}}$, so we established the claim for the matrices $\tilde{\vec{M}}^l$. Since $\vec{M}^1 = \tilde{\vec{M}}^1$ the claim also holds true for $\vec{M}^1$. 
Using induction and the recursion formula $\vec{M}^l = \tilde{\vec{M}}^l \, \vec{M}^{l-1}$ we get $\smash{\vec{v} = \beta \, \vec{\gamma}_{e^*_{l-1}}} \Rightarrow \vec{v}^{\top} \, \vec{M}^{l} = \vec{0}$ and $\vec{v}^{\top} \, \vec{M}^{l} = \vec{0} \Rightarrow \smash{\vec{v} = \beta \, \vec{\gamma}_{e^*_{l-1}}}$ or $\smash{\vec{v}^{\top} \, \tilde{\vec{M}}^{l} = \beta \, \vec{\gamma}_{e^*_{l-2}}^{\top}}$. 
Again, it is easy to verify by definition that $\tilde{\vec{M}}^l \, \potentialDp[l-1] = \vec{0}$, thus $\smash{\vec{v}^{\top} \, \tilde{\vec{M}}^{l} = \beta \, \vec{\gamma}_{e^*_{l-2}}^{\top}}$ would imply $\smash{\vec{\gamma}_{e^*_{l-2}}^{\top} \, \potentialDp[l-1] = 0}$. With Theorem~\ref{thm:boundaryCrossing}(\ref{thm:boundaryCrossing:Directions}) we obtain $\smash{\vec{\gamma}_{e^*_{l-2}}^{\top} \, \potentialDp[l-2] = 0}$ which means that $\epsilonp^{l-2} (e^*_{l-2}, \delta) = \infty$ and, thus, $e^*_{l-2}$ could be chosen as the edge minimizing $\epsilonp[l-2]$ which is a contradiction. Hence, the claim follows. \\
Finally, assume that $\vec{m}_{l,e_1} = \vec{m}_{l,e_2}$. Then
\[
\vec{0} = \vec{m}_{l,e_1} - \vec{m}_{l,e_2} = 
-
\Big( 
\frac{1}{\vec{\gamma}_{e_1}^{\top} \potentialDp[l]} \vec{\gamma}_{e_1}^{\top}
-
\frac{1}{\vec{\gamma}_{e_2}^{\top} \potentialDp[l]} \vec{\gamma}_{e_2}^{\top}
\Big)
\vec{M}^l
\]
which together with \eqref{eq:thm:perturbationUniqueMinimum:1} implies that $\vec{\gamma}_{e^*_{l-1}} = \beta \Big(  \frac{1}{\vec{\gamma}_{e_1}^{\top} \potentialDp[l]} \vec{\gamma}_{e_1} - \frac{1}{\vec{\gamma}_{e_2}^{\top} \potentialDp[l]} \vec{\gamma}_{e_2} \Big)$. Thus,
\[
\vec{\gamma}_{e^*_{l-1}}^{\top} \potentialDp[l-1] = \vec{\gamma}_{e^*_{l-1}}^{\top} \potentialDp[l] = \beta \Big(  \frac{1}{\vec{\gamma}_{e_1}^{\top} \potentialDp[l]} \vec{\gamma}_{e_1}^{\top} - \frac{1}{\vec{\gamma}_{e_2}^{\top} \potentialDp[l]} \vec{\gamma}_{e_2}^{\top} \Big) \potentialDp[l] = 0
\]
where we again used Theorem~\ref{thm:boundaryCrossing}(\ref{thm:boundaryCrossing:Directions}) for the first equality. But this is, as in the proof of the subclaim above, a contradiction to the fact $e^*_{l-1}$ was chosen to be the minimizer of $\epsilonp[l-1]$.

\end{appendixproof}

Overall, we can retrace the perturbed solution curve by iteratively finding the lexicographically smallest vector $\vec m_{l,e}$ and computing the matrices $\vec M^l$. If, finally, $E^*_l = \emptyset$ (which is the case if all vectors $\vec m_{l, e}$ are lexicographically non-positive or infinite) the perturbed solution curve is in a region where the direction vector is directed away from all boundaries intersecting in the degenerate point. This is the region where the unperturbed solution curve can proceed. If there are finitely many boundaries intersecting in the degenerate point then the perturbed solution curve performs also only finitely many boundary crossings (see Theorem~\ref{thm:termination} in the next subsection) and thus there are only finitely many steps in the lexicographic rule until $E^*_l = \emptyset$. 
For a concrete example of the lexicographic rule and the computation of the matrices $\vec{M}$ and the vectors $\vec{m}$ consider Example~\ref{ex:lexicographicRule} in Appendix~\ref{app:examples}. A pseudocode of the lexicographic rule is given in Appendix~\ref{app:pseudo}.

\subparagraph{Termination and Time Complexity}
When a cost function has an infinite number of breakpoints the algorithm may not terminate in finite time even for a single edge. In contrast, the algorithm terminates in finite time when the number of breakpoints is finite. 

\begin{theorem} \label{thm:termination}
The algorithm considers every region $R_{\vec{t}}$ at most once.
\end{theorem}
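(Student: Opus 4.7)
The approach is to show that the set of $\lambda$-values at which the solution curve lies in any fixed region $R_{\vec{t}}$ forms a single (possibly empty) interval, and then to combine this with the strict monotonicity of $\lambda$ along the algorithm's trajectory.

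Within any region $R_{\vec{t}}$ the reduced Laplacian $\hat{\vec{L}}_{\vec{t}}$ is invertible by Lemma~\ref{lem:LaplacianMatrix}(\ref{lem:LaplacianMatrix:subMatrix}), so the relation $\hat{\vec{y}} = \hat{\vec{L}}_{\vec{t}} \hat{\vec{\pi}} - \hat{\vec{d}}_{\vec{t}}$ is an affine bijection between the slice of $R_{\vec{t}}$ with $\pi_s = 0$ and its image in excess space. Inverting this bijection expresses $\hat{\vec{\pi}}(\lambda)$ as an affine function of $\lambda$ as long as the curve remains in $R_{\vec{t}}$. Since $R_{\vec{t}}$ is a convex polyhedron cut out by the inequalities~\eqref{eq:regionInequalities}, the set $\Lambda_{\vec{t}} := \{\lambda \geq 0 : \vec{\pi}(\lambda) \in R_{\vec{t}}\}$ is the preimage of a convex polyhedron under an affine map, and hence a (possibly empty) interval in $[0,\infty)$. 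Combined with the fact that every non-degenerate iteration strictly increases $\lambda$ by $\epsilon > 0$, this proves the claim in the non-degenerate case: each $R_{\vec{t}}$ serves as the current region $R_{\vec{t}^i}$ for at most one iteration $i$, because once $\lambda$ exceeds $\sup \Lambda_{\vec{t}}$ the curve cannot return to $R_{\vec{t}}$.

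The main obstacle is the degenerate case, where the lex rule traces a sequence of regions $\breve{R}^0, \breve{R}^1, \dotsc, \breve{R}^\alpha$ all meeting at the degenerate point $\vec{\pi}^{i+1}$. I would argue that the same interval property applies to the perturbed solution curve of Lemma~\ref{lem:perturbedValues}: the lex rule only advances along edges $e \in E^*_l(\delta)$ with $0 < \breve{\epsilon}^l(e,\delta)$, so the perturbed $\lambda$-parameter is strictly increasing along the walk; within each $\breve{R}^l$ the perturbed curve is again the affine preimage of a ray in excess space; and hence $\breve{R}^0, \dotsc, \breve{R}^\alpha$ must be pairwise distinct. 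Once the algorithm exits the degenerate point in region $\breve{R}^\alpha$, it continues with strictly increasing unperturbed $\lambda$, so no region touched during the perturbation analysis can be revisited, and the non-degenerate argument applies from there onward.
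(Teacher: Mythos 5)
Your argument is correct and is essentially the paper's own proof in different clothing: the observation that $\Lambda_{\vec t}$ is an interval (affine dependence of $\hat{\vec\pi}$ on $\lambda$ inside a region, plus convexity of $R_{\vec t}$) is exactly the convexity contradiction the paper derives, combined with the same strict monotonicity of $\lambda$ across iterations. Your explicit extension to the perturbed curve in the degenerate case is also consistent with how the paper uses the theorem (it invokes the same statement for the perturbed solution curve to bound the number of lexicographic steps), so nothing further is needed.
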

\begin{appendixproof}{Theorem~\ref{thm:termination}}
Let $\vec t^{i_1}$ and $\vec t^{i_2}$ be two region vectors at two different iterations $i_1$ and $i_2$. We will prove that $\vec t^{i_1} \neq \vec t^{i_2}$, i.e. no region $R_{\vec t}$ is visited twice during the algorithm.
Note that every two successive iterations $i_1, i_2$ in the algorithm satisfy $\vec{t}_{i_1} \neq  \vec{t}_{i_2}$ by definition, i.e. the region changes in every iteration.
If we assume that there are two different iterations $i_1 < i_2$ with $\vec t^{i_1} = \vec t^{i_2} = \vec{t}$ we thus in particular know that $i_2 > i_1 + 1$. So the potential
$
\vec{\pi}^{i_1 + 1} = \vec{\pi}^{i_1} + \epsilon^{i_1} \, \Delta \vec{\pi}^{i_1}
$
lies on the boundary of the region $R_{\vec{t}}$. Since the regions are convex we have
\begin{equation} \label{eq:thm:termination:1}
\vec{\pi}^{i_1} + \lambda \, \Delta \vec{\pi}^{i_1} \notin R_{\vec{t}}
\quad \text{for all} \quad
\lambda > \epsilon^{i_1} = \lambda_{i_1 + 1} - \lambda_{i_1}
.
\end{equation}
By definition of the Laplace matrix $\vec{L}_{\vec{t}}$ and the breakpoints $\lambda_i$ we have
\begin{align*}
\lambda_{i_j} \Delta \, \vec{y} = \vec{L}_{\vec{t}} \, \vec{\pi}^{i_j} - \tilde{\vec{d}}_{\vec{t}}
\end{align*}
for $j=1,2$. This implies
\begin{alignat*}{2}
&&(\lambda_{i_2} - \lambda_{i_1}) \, \Delta \hat{\vec{y}} &= \hat{\vec{L}}_{\vec{t}} \, ( \hat{\vec{\pi}}^{i_2} - \hat{\vec{\pi}}^{i_1} ) \\
&\Leftrightarrow \quad &
(\lambda_{i_2} - \lambda_{i_1}) \, \hat{\vec{L}}_{\vec{t}}^{-1} \Delta \hat{\vec{y}} &=  \hat{\vec{\pi}}^{i_2} - \hat{\vec{\pi}}^{i_1} \\
&\Leftrightarrow &
 (\lambda_{i_2} - \lambda_{i_1}) \, \Delta \hat{\vec{\pi}}^{\vec{t}} &=  \hat{\vec{\pi}}^{i_2} - \hat{\vec{\pi}}^{i_1} 
\end{alignat*}
where $\Delta \hat{\vec{\pi}}^{\vec{t}} := \Delta \hat{\vec{\pi}}^{i_1} = \Delta \hat{\vec{\pi}}^{i_2}$ is the direction vector in region $R_{\vec{t}}$. Thus
\begin{align*}
\vec{\pi}^{i_2} = \vec{\pi}^{i_1} + (\lambda_{i_2} - \lambda_{i_1}) \Delta \vec{\pi}^{\vec{t}}
\end{align*}
and with $\vec{\pi}^{i_1}, \vec{\pi}^{i_2} \in R_{\vec{t}}$ and the convexity of the region $R_{\vec{t}}$ we get that
\begin{align*}
\vec{\pi}^{i_1} + \lambda \, \Delta \vec{\pi}^{\vec{t}} \in R_{\vec{t}}
\end{align*}
for all $\lambda \in [0, \lambda_{i_2} - \lambda_{i_1}]$. This is a contradiction to \eqref{eq:thm:termination:1} since $\lambda_{i_1} < \lambda_{i_1 +1} < \lambda_{i_2}$.
\end{appendixproof}

The proof of Theorem~\ref{thm:termination} uses the fact that if there are two different points $\vec \pi^1$ and $\vec \pi^2$ for two different demands $\lambda^1$ and $\lambda^2$ that lie in the same region $R_{\vec{t}}$ then by convexity of the regions all potentials on the line segment between $\vec \pi^1$ and $\vec \pi^2$ are in the region $R_{\vec t}$. Furthermore, by linearity, all the potentials on this line segments are potentials for demands between $\lambda^1$ and $\lambda^2$. In particular, the difference $\vec \pi^2 - \vec \pi^1$ must be a multiple of the direction vector $\Delta \vec \pi$ in the region $R_{\vec t}$. Thus, if there are two potentials on the solution curve that are in the same region, the solution cannot have left the region in between.

\begin{corollary}
The number of iterations of the algorithm is bounded by the number of regions. In particular,
if every cost function has a finite number of breakpoints, the algorithm terminates in finite time.
\end{corollary}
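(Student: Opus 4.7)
The plan is to derive the corollary as a more or less direct consequence of Theorem~\ref{thm:termination}, together with a counting argument on the size of the region partition.

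First I would observe that Theorem~\ref{thm:termination} states that every region $R_{\vec t}$ with $\vec t \in \bar T$ is visited by the main loop of the algorithm at most once. Since every iteration of the main loop (in the non-degenerate case) ends by moving from the current region to an adjacent region via a single boundary crossing, and in the degenerate case the lexicographic subroutine of Theorem~\ref{thm:perturbationUniqueMinimum} selects a unique region to continue in, each iteration corresponds to a distinct element of $\bar T$. Hence the number of iterations is at most $|\bar T|$, which is exactly the count of regions in the subdivision of potential space induced by the piece-wise linear structure of the cost functions.

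For the second assertion, I would note that if every edge cost function $l_e$ has a finite number of breakpoints then each set $T_e$ is finite, so the product $\bar T = T_{e_1} \times \dots \times T_{e_m}$ is finite. Combined with the first part, this bounds the number of iterations of the outer loop by the finite number $|\bar T|$. For the lexicographic subroutine invoked at degenerate points, termination in finite time also follows from the same observation: the subroutine traverses a subset of the regions meeting at the degenerate point, and by Theorem~\ref{thm:perturbationUniqueMinimum} each step moves to a strictly new region, so at most $|\bar T|$ steps are performed. Each individual iteration only involves finitely many elementary linear algebra operations (computing $\hat{\vec L}_{\vec t}^{-1}\Delta \hat{\vec y}$, checking the $\epsilon(e)$ values, and applying the Sherman--Morrison update from Theorem~\ref{thm:boundaryCrossing}), so the entire procedure terminates in finite time.

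There is essentially no hard step here; the corollary is a bookkeeping consequence of the previously proved theorems. The only subtlety worth flagging is that one must verify that the lexicographic rule in the degenerate case does not add extra iterations outside the region count; this is handled by noting that the perturbed regions $\perturbed R^l(\delta)$ traversed in the subroutine are themselves among the regions $R_{\vec t}$ (they correspond to specific $\vec t \in \bar T$ near the degenerate point), and each is visited at most once by the same argument as in the proof of Theorem~\ref{thm:termination}.
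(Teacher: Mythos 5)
Your proposal is correct and follows the same route the paper intends: the corollary is an immediate consequence of Theorem~\ref{thm:termination} (each region $R_{\vec t}$, $\vec t \in \bar T$, is visited at most once) combined with the observation that $|\bar T| = \prod_e |T_e|$ is finite when every cost function has finitely many breakpoints. Your extra remark that the regions traversed by the lexicographic subroutine are themselves elements of $\bar T$ and hence covered by the same bound matches the paper's own treatment of the degenerate case.
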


When there is no degenerate point visited in the course of the algorithm then every iteration takes only polynomial time in the size of the graph. The time critical step is the rank one update of the inverse $\hat{\vec{L}_{\vec{t}}}^{-1}$ which can be done with the Sherman-Morrison formula as shown in Theorem~\ref{thm:boundaryCrossing}. Hence, every iterations has a time complexity of $\mathcal{O}(n^2)$. Further, in the non-degenerate case, every iteration computes exactly one function part of the output Wardrop equilibrium flow functions so that the algorithm is output-polynomial time. The first inverse $\hat{\vec{L}}_{\vec{t}_1}^{-1}$ can be computed in $\mathcal{O}(n^{2.4})$ time by using fast matrix multiplication like the Coppersmith–Winograd algorithm \cite{coppersmith1987matrix}. Thus the overall time complexity is $\mathcal{O}(n^{2.4} + n^2 K)$ where $K$ is the number of breakpoints of the output function.

There are networks that yield Wardrop equilibrium functions with exponentially many different function parts (see Appendix~\ref{app:braess} for a family of such graphs). Thus, the algorithm is not polynomial in the input.
If there are degenerate points that are visited in the course of the algorithm, the algorithm may not have polynomial runtime in the output size since the perturbed solution curve may traverse exponentially many regions around the degenerate point leading to exponentially many steps while applying the lexicographic rule in one degenerate point. In this case, the computation of the direction can done by solving an associated quadratic convex program (see Appendix~\ref{app:convex-program}) which can be done in polynomial time with the ellipsoid method as shown by Kozlov et al.~\cite{kozlov1980polynomial}.

\section{Discontinuous Costs and Directed Networks} \label{sec:discontinuous}
In this section, we consider single commodity networks with discontinuous functions. In particular, we consider strictly increasing cost functions $l_e(x_e)$ that are non-negative and lower semi-continuous for $x_e > 0$, and non-positive and upper semi-continuous for $x_e < 0$. As it turns out, this will also enable us to treat directed networks as a special case of this setting.
Since Wardrop equilibria may fail to exist when cost functions are discontinuous, we adopt
\longversiononly{%
the User Equilibrium concept of Bernstein and Smith \cite{bernstein1994} which is a generalization of the equilibrium concept introduced by Dafermos in \cite{dafermos1971}. We say a%
}{%
the more general concept if a User equilibrium \cite{bernstein1994,dafermos1971}. A%
}
flow vector $\vec x$ is an \emph{User equilibrium} if $l_P(\vec x) \leq \liminf_{\epsilon \downarrow 0} l_Q(\vec x + \epsilon \, \vec \chi_Q - \epsilon \, \vec \chi_P)$
for all paths $P,Q \in \mathcal{P}$ with $P$ flow carrying.\longversiononly{The vector $\vec \chi_P$ is the indicator vector of the path $P \in \mathcal{P}$ as defined in \S~\ref{sec:prelim}.} User equilibria and Wardrop equilibria coincide in the case of continuous cost functions.
We proceed to give a potential based characterization of User equilibria similar to Lemma~\ref{lem:characterisationWEPotentials}.
\longversion{
\begin{lemma} \label{lem:UECharacterization}
Let $\vec x$ be an undirected, single commodity flow in the graph $G$ with semi-continuous cost functions. Then the following two statements are equivalent:
\begin{compactenum}[(i)]
\item
	The flow $\vec x$ is an User equilibrium flow.
\item
	There exists a potential vector $\vec \pi \in \R^n$ such that
	\begin{align}
	\smash{\lim_{x \uparrow x_e} l_e (x) \leq \vec{\gamma}_e^{\top} \, \vec{\pi} \leq \lim_{x \downarrow x_e} l_e (x) \text{ for all edges $e = (v,w) \in E$}.}
	\label{eq:lem:UECharacterization:1}
	\end{align}
\end{compactenum}
\end{lemma}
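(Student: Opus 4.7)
The plan is to adapt the proof of Lemma~\ref{lem:characterisationWEPotentials} using two preliminary observations. First, strict monotonicity combined with lower semi-continuity on $\{x_e > 0\}$ forces $l_e(x_e) = \lim_{x \uparrow x_e} l_e(x)$ whenever $x_e > 0$, and analogously $l_e(x_e) = \lim_{x \downarrow x_e} l_e(x)$ whenever $x_e < 0$; in either case the nominal cost of a flow-carrying edge equals the one-sided limit taken from the side of smaller absolute flow. Second, for any two paths $P$ and $Q$, the piecewise linearity of each $l_e$ makes $l_Q\bigl(\vec x + \epsilon(\vec \chi_Q - \vec \chi_P)\bigr)$ a continuous function of $\epsilon$ on some interval $(0, \epsilon_0)$, so that the $\liminf$ in the User equilibrium definition is an honest one-sided limit and decomposes edge by edge into terms $\chi_{Q,e}\,\tilde l_e$, where $\tilde l_e$ equals $\lim_{x \downarrow x_e} l_e(x)$, $\lim_{x \uparrow x_e} l_e(x)$, or $l_e(x_e)$ according to the sign of $\chi_{Q,e} - \chi_{P,e}$.

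For the direction $(ii) \Rightarrow (i)$, the first observation combined with \eqref{eq:lem:UECharacterization:1} yields $\chi_{P,e}\, l_e(x_e) \leq \chi_{P,e}\, \vec \gamma_e^\top \vec \pi$ for every edge $e$ of a flow-carrying path $P$ (checking $\chi_{P,e} = \pm 1$ separately), which telescopes to $l_P(\vec x) \leq \pi_t - \pi_s$. For the perturbed path $Q$, the sign analysis above together with \eqref{eq:lem:UECharacterization:1} gives $\chi_{Q,e}\, \tilde l_e \geq \chi_{Q,e}\, \vec \gamma_e^\top \vec \pi$ on every edge where $\chi_{Q,e} \neq \chi_{P,e}$. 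Edges where $\chi_{Q,e} = \chi_{P,e}$ contribute the identical summand $\chi_{Q,e}\, l_e(x_e)$ to $l_Q$ of the perturbation and to $l_P(\vec x)$ and therefore cancel in the comparison. Telescoping the surviving terms along $Q$ yields $\liminf_{\epsilon \downarrow 0} l_Q\bigl(\vec x + \epsilon(\vec \chi_Q - \vec \chi_P)\bigr) \geq \pi_t - \pi_s \geq l_P(\vec x)$, which is the User equilibrium condition.

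For the converse $(i) \Rightarrow (ii)$, I would set $\pi_s := 0$ and define for every other vertex
\[
\pi_v := \inf_R \Big( \sum_{e \in R,\, \chi_{R,e} = 1} \lim_{x \downarrow x_e} l_e(x) \;-\; \sum_{e \in R,\, \chi_{R,e} = -1} \lim_{x \uparrow x_e} l_e(x) \Big),
\]
the infimum ranging over simple $s$-$v$-paths $R$. Bellman-type optimality then delivers $\pi_w - \pi_v \leq \lim_{x \downarrow x_e} l_e(x)$ by extending a shortest $s$-$v$-path by $e = (v,w)$ in forward direction, and $\pi_v - \pi_w \leq -\lim_{x \uparrow x_e} l_e(x)$ by extending a shortest $s$-$w$-path by $e$ in backward direction, which together are precisely \eqref{eq:lem:UECharacterization:1}.

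The main obstacle is to prove that this infimum is finite, that is, that no closed walk has strictly negative one-sided cost. I expect this to follow from the User equilibrium assumption by an augmenting argument: an improving cycle could be spliced into any flow-carrying $s$-$t$-path $P$ to produce a deviation witness $Q$ whose perturbation cost strictly beats $l_P(\vec x)$, contradicting $(i)$; strict monotonicity of the $l_e$ is what allows the weak equilibrium inequality to be sharpened into the required strict contradiction. An alternative route that bypasses the explicit construction would be to invoke that the User equilibrium minimizes the convex potential $\sum_e \int_0^{x_e} l_e(s)\, ds$ over feasible flows; then $\vec \pi$ arises as a Lagrange multiplier for the constraint $\vec \Gamma \vec x = \vec y$, and the subdifferential of the integrand on edge $e$ is exactly the interval $\bigl[\lim_{x \uparrow x_e} l_e(x),\, \lim_{x \downarrow x_e} l_e(x)\bigr]$ prescribed by \eqref{eq:lem:UECharacterization:1}.
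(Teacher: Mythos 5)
Your direction $(ii)\Rightarrow(i)$ is sound and essentially the paper's argument: expand the $\liminf$ edge by edge into the appropriate one-sided limits, cancel the edges common to $P$ and $Q$, and telescope the remaining potential differences (the paper does this cancellation via an explicit decomposition into common and disjoint subpaths, which you should spell out, since the "surviving terms along $Q$" do not by themselves telescope to $\pi_t-\pi_s$).

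The direction $(i)\Rightarrow(ii)$ has a genuine gap. Your shortest-path potential ranges over \emph{all} undirected simple $s$-$v$-paths, pricing every edge at $\lim_{x\downarrow x_e}l_e$ when traversed forward and at $-\lim_{x\uparrow x_e}l_e$ when traversed backward. Traversing a positive-flow edge backward then has weight $-l_e(x_e)\le 0$, so negative weights are unavoidable, and the Bellman inequalities you invoke are valid only in the absence of negative cycles. That no-negative-cycle claim is exactly where the entire equilibrium hypothesis must enter (with it, both inequalities of \eqref{eq:lem:UECharacterization:1} drop out of Bellman alone, so nothing else uses $(i)$), yet you only assert that you "expect" it to follow by splicing an improving cycle into a flow-carrying path. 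That splice does not go through as stated: the negative cycle need not meet any flow-carrying $s$-$t$-path (so there is no admissible deviation pair $P,Q\in\mathcal P$ to feed into the User-equilibrium inequality), and even when it does, the spliced walk is generally not a simple path, while the definition quantifies over simple paths only. The fallback via the Beckmann potential is circular here: that a Bernstein--Smith User equilibrium minimizes $\sum_e\int_0^{x_e}l_e(s)\,ds$ for discontinuous costs is itself a statement of the same depth as the lemma.

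The paper avoids this obstacle entirely by a different construction: it builds an auxiliary graph $\tilde G$ in which an edge with $x_e>0$ may only be traversed forward (weight $\lim_{x\downarrow x_e}l_e\ge 0$), an edge with $x_e<0$ only backward (weight $-\lim_{x\uparrow x_e}l_e\ge 0$), and only zero-flow edges both ways. All weights are then nonnegative, so shortest-path potentials exist with no cycle argument needed; Bellman yields one of the two inequalities in \eqref{eq:lem:UECharacterization:1} on each edge, and the missing inequality on edges with $x_e\neq 0$ is obtained \emph{locally} from the equilibrium condition, by replacing the single edge $e$ on a flow-carrying path with a shortest alternative $v$-$w$-route in $\tilde G$. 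If you want to keep your bidirected construction, you must supply a full proof that $(i)$ excludes negative cycles; otherwise, restrict the traversal directions as the paper does.
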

}{
\begin{lemma} \label{lem:UECharacterization}
An undirected, single commodity flow $\vec x$ in a graph $G$ with semi-continuous cost functions is a User equilibrium flow if and only if there exists a potential vector $\vec \pi \in \R^n$ such that
	\begin{align}
	\smash{\lim_{x \uparrow x_e} l_e (x) \leq \vec{\gamma}_e^{\top} \, \vec{\pi} \leq \lim_{x \downarrow x_e} l_e (x) \text{ for all edges $e = (v,w) \in E$}.}
	\label{eq:lem:UECharacterization:1}
	\end{align}
\end{lemma}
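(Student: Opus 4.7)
The proof follows the structure of Lemma~\ref{lem:characterisationWEPotentials}, replacing the direct use of $l_e(x_e)$ by its one-sided limits wherever $l_e$ may jump. For the forward direction, suppose a potential $\vec{\pi}$ satisfying \eqref{eq:lem:UECharacterization:1} exists. Given a flow-carrying $(s,t)$-path $P$, the monotonicity of $l_e$ combined with the assumed lower (resp.\ upper) semi-continuity of $l_e$ at positive (resp.\ negative) arguments forces $l_e(x_e)=\lim_{x\uparrow x_e}l_e(x)$ when $\chi_{P,e}=+1$ and $l_e(x_e)=\lim_{x\downarrow x_e}l_e(x)$ when $\chi_{P,e}=-1$. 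Plugging into \eqref{eq:lem:UECharacterization:1} yields $\chi_{P,e}\,l_e(x_e)\le \chi_{P,e}\,\vec{\gamma}_e^{\top}\vec{\pi}$ for every $e\in P$, whence $l_P(\vec{x})\le \pi_t-\pi_s$ by telescoping. For an arbitrary $(s,t)$-path $Q$, I would expand $\liminf_{\epsilon\downarrow 0} l_Q(\vec{x}+\epsilon\vec{\chi}_Q-\epsilon\vec{\chi}_P)$ edge-wise: since $\chi_{Q,e}\in\{\pm 1\}$ for $e\in Q$, the quantity $\chi_{Q,e}(\chi_{Q,e}-\chi_{P,e})$ is nonnegative, and whenever it is strictly positive the perturbation pushes $x_e$ strictly upward (if $\chi_{Q,e}=+1$) or strictly downward (if $\chi_{Q,e}=-1$), so the corresponding term is bounded below by $\chi_{Q,e}\,\vec{\gamma}_e^{\top}\vec{\pi}$ via \eqref{eq:lem:UECharacterization:1}. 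On the remaining edges (those $e\in P\cap Q$ with $\chi_{P,e}=\chi_{Q,e}$) the perturbation vanishes and the term equals $\chi_{Q,e}\,l_e(x_e)$. A short bookkeeping step that uses the two telescoping identities $\sum_{e\in P}\chi_{P,e}\,\vec{\gamma}_e^{\top}\vec{\pi}=\pi_t-\pi_s=\sum_{e\in Q}\chi_{Q,e}\,\vec{\gamma}_e^{\top}\vec{\pi}$ then shows $\liminf l_Q(\cdot)\ge l_P(\vec{x})$, establishing the User equilibrium condition.

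For the backward direction, I would build an auxiliary directed graph $\tilde G$ on the vertex set $V$ whose arc set contains, for every $e=(v,w)\in E$, a forward arc $v\to w$ of cost $d_e^+:=\lim_{x\downarrow x_e}l_e(x)$ and a reverse arc $w\to v$ of cost $-d_e^-:=-\lim_{x\uparrow x_e}l_e(x)$. These costs are the right and left marginal costs of adding, respectively removing, infinitesimal flow on $e$. Once one shows that $\tilde G$ contains no directed cycle of strictly negative total cost, the quantity
\[
\pi_v:=\inf\{\text{total cost of a directed }(s,v)\text{-path in }\tilde G\}
\]
is finite for every $v\in V$, and the Bellman-type inequalities for shortest distances yield $\pi_w-\pi_v\le d_e^+$ (from the forward arc of $e$) and $\pi_v-\pi_w\le -d_e^-$ (from the reverse arc), which together are precisely \eqref{eq:lem:UECharacterization:1}.

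The crux is therefore the absence of negative cycles in $\tilde G$, which must be deduced from the User equilibrium property of $\vec{x}$, and this is the step I expect to be the main obstacle. Two workable strategies suggest themselves. The more elementary one is a flow-decomposition contradiction: a hypothetical simple directed cycle $C\subseteq\tilde G$ of strictly negative cost can be combined with any flow-carrying simple $s$-$t$ path $P$ that meets $C$ to produce an alternative simple $s$-$t$ path $Q$ with $\liminf_{\epsilon\downarrow 0}l_Q(\vec{x}+\epsilon\vec{\chi}_Q-\epsilon\vec{\chi}_P)<l_P(\vec{x})$, contradicting the equilibrium; the care needed is that the symmetric difference of $P$ and $C$ need not itself be simple, so $C$ may have to be routed through simple sub-pieces before the argument applies, and one may need to use the (strong) connectivity of $G$ to guarantee that such a $P$ meeting $C$ exists. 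A cleaner alternative is to invoke Beckmann's potential $\Phi(\vec{x})=\sum_e\int_0^{x_e}l_e(s)\,ds$, which is convex because each $l_e$ is non-decreasing and whose subdifferential satisfies $\partial_{x_e}\Phi(\vec{x})=[d_e^-,d_e^+]$; User equilibria are precisely the minimizers of $\Phi$ over the flow-conservation polyhedron, so the Lagrange multiplier vector $\vec{\pi}$ for the conservation constraints automatically satisfies \eqref{eq:lem:UECharacterization:1} at optimality. Establishing the equivalence ``User equilibrium $\Leftrightarrow$ minimizer of $\Phi$'' in the discontinuous case is the main technical step along this convex-analytic route.
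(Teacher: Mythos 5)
Your first direction (potential $\Rightarrow$ User equilibrium) is essentially the paper's argument: identify $l_e(x_e)$ with the appropriate one-sided limit via semi-continuity and monotonicity, then telescope over the common and disjoint portions of $P$ and $Q$. That part is fine.

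The backward direction has a genuine gap, and it stems from how you build the auxiliary graph. You insert, for \emph{every} edge $e$, both a forward arc of cost $d_e^+=\lim_{x\downarrow x_e}l_e(x)$ and a reverse arc of cost $-d_e^-=-\lim_{x\uparrow x_e}l_e(x)$. For an edge with $x_e>0$ the reverse arc then has cost $-l_e(x_e)<0$, so your $\tilde G$ genuinely contains negative arcs, and the finiteness of your $\pi_v$ hinges on the absence of negative directed cycles --- precisely the step you flag as the main obstacle and do not complete. Neither of your two sketches closes it: the path-surgery route must handle cycles whose symmetric difference with a flow-carrying path is not simple, cycles disjoint from every flow-carrying path, and the zero-flow case where no flow-carrying path exists; the Beckmann route requires first proving that User equilibria coincide with minimizers of $\Phi$ for discontinuous costs, which is itself the substance of the claim. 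The paper avoids the issue entirely by a one-sided construction: it puts in only the forward arc when $x_e\ge 0$ and only the backward arc when $x_e\le 0$ (both when $x_e=0$). Under the standing sign assumption ($l_e\ge 0$ for positive flow, $l_e\le 0$ for negative flow) every arc weight is then non-negative, so shortest-path potentials exist with no cycle analysis at all. Bellman's inequalities deliver only \emph{half} of \eqref{eq:lem:UECharacterization:1} on each used edge (e.g.\ the upper bound $\vec{\gamma}_e^{\top}\vec{\pi}\le\lim_{x\downarrow x_e}l_e(x)$ when $x_e>0$); the missing half, $l_e(x_e)\le\vec{\gamma}_e^{\top}\vec{\pi}$, is then extracted from the User equilibrium condition by splicing a shortest $v$--$w$ path $\tilde Q$ of $\tilde G$ into a flow-carrying path through $e$ and comparing, exactly the kind of exchange you wanted to do globally on a cycle, but localized to a single edge where the flow-carrying path is guaranteed to exist. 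To repair your proof you would either have to switch to this one-sided construction or supply a complete no-negative-cycle argument; as written, the backward direction is not established.
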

}
\begin{appendixproof}{Lemma~\ref{lem:UECharacterization}}
First note that since all functions $l_e(x)$ are increasing the upper/lower semi-continuity for negative/positive $x$ implies that
\begin{align*}
l_e (x_e) = \lim_{x \uparrow x_e} 
	\quad \text{for } x_e > 0 \quad \text{and} \quad
l_e (x_e) = \lim_{x \downarrow x_e}
	\quad \text{for } x_e < 0 .
\end{align*}

For the direction $(i) \Rightarrow (ii)$ we need to construct a potential vector $\vec \pi$. To this end, construct a graph $\tilde{G} = (V, \tilde{E})$ where the edge set $\tilde{E} := \tilde{E}^F \cup \tilde{E}^B$ consists of forward edges $\tilde{E}^F$ and backward edges $\tilde{E}^B$ which are defined as follows.
\begin{align*}
\tilde{E}^F &:= \big\{ (v,w) \; | \; e = (v,w) \in E \text{ and } x_e \geq 0 \big\} \\
\tilde{E}^B &:= \big\{ (w,v) \; | \; e = (v,w) \in E \text{ and } x_e \leq 0 \big\} 
\end{align*}
Hence, every edge with non-negative flow $x_e$ is represented by a forward edge in $\tilde{G}$ and every edge with non-positive flow $x_e$ is represented by a backward edge. Note that edges with $x_e = 0$ are represented by both a forward and a backward edge.
We then assign the weights
\[
w_e :=
\begin{cases}
\displaystyle \lim_{x \downarrow x_e} l_e (x) & \text{if } e \in \tilde{E}^F, \\
\displaystyle - \lim_{x \uparrow x_e} l_e (x) & \text{if } e \in \tilde{E}^B
\end{cases}
\]
to the edges in $\tilde{E}$. The construction ensures that all edge weights in $\tilde{G}$ are non-negative. Thus, there exists a shortest path potential $\vec \pi$ with $\pi_v :=$ length of a shortest path from $s$ to $v$ in $\tilde{G}$. This potential satisfies
\begin{align*}
\vec{\gamma}_e^{\top} \vec{\pi} &\leq \lim_{x \downarrow x_e} l_e (x_e)
	&\text{for all } e \in E \text{ with } x_e > 0, \\
\lim_{x \uparrow x_e} l_e (x_e) \leq \vec{\gamma}_e^{\top} \vec{\pi} &
	&\text{for all } e \in E \text{ with } x_e < 0, \\
\lim_{x \uparrow x_e} l_e (x_e) \leq \vec{\gamma}_e^{\top} \vec{\pi} &\leq \lim_{x \downarrow x_e} l_e (x_e)
	&\text{for all } e \in E \text{ with } x_e = 0
\end{align*}
by construction.
It remains to be shown that $\vec{\gamma}_e^{\top} \vec \pi \geq \lim_{x \uparrow x_e} l_e (x) = l_e (x_e)$ for edges with $x_e > 0$ and $\vec{\gamma}^{\top} \vec \pi \leq \lim_{x \downarrow x_e} l_e (x) = l_e( x_e )$ for edges with $x_e < 0$. We prove this for $x_e > 0$, the other statement follows in a similar way. \\
Let $e \in E$ be an edge with $x_e > 0$. If $\vec{\gamma}_e^{\top} \vec \pi = \lim_{x \downarrow x_e} l_e (x_e)$ then $\vec{\gamma}_e^{\top} \vec \pi \geq \lim_{x \uparrow x_e} l_e (x)$ follows immediately. Otherwise, there is another shortest path $\tilde{Q}$ in $\tilde{G}$ from $v$ to $w$ such that
\begin{align*}
\pi_w - \pi_v &= \sum_{e \in \tilde{Q} \cap \tilde{E}^F} \lim_{x \downarrow x_e} l_e (x) - \sum_{e \in \tilde{Q} \cap \tilde{E}^B} \lim_{x \uparrow x_e} l_e (x) \\
&= \sum_{e \in \tilde{Q}} \lim_{\epsilon \downarrow 0} \chi_{\tilde{Q}, e} l_e (x_e + \epsilon \, \chi_{\tilde{Q}, e})
= \liminf_{\epsilon \downarrow 0} \sum_{e \in \tilde{Q}} \chi_{\tilde{Q}, e} l_e (x_e + \epsilon \, \chi_{\tilde{Q}, e})
.
\end{align*}
Since $x_e > 0$ the edge $e \in E$ is contained in some flow-carrying path $P = P_1 + e + P_2$. We know since $\vec x$ is an user equilibrium that $l_P(\vec x) \leq \liminf_{\epsilon \downarrow 0} l_Q(\vec x + \epsilon \, \vec \chi_Q - \epsilon \, \vec \chi_P)$ any path $Q$. If we consider $Q := P_1 + \tilde{Q} + P_2$ and subtract $\sum_{e \in P \cap Q} l_e (x_e)$ on both sides, we obtain
\begin{align*}
l_e (x_e) \leq \liminf_{\epsilon \downarrow 0} \sum_{e \in \tilde{Q}} \chi_{\tilde{Q}, e} l_e (x_e + \epsilon \, \chi_{\tilde{Q}, e}) = \pi_w - \pi_v.
\end{align*}
Thus the claim follows for edges with $x_e > 0$. For edges with $x_e < 0$ this can be proven similarly.

For the direction $(ii) \Rightarrow (i)$ let $P$ be some flow carrying path. By definition of a flow carrying path all backward edges carry negative flow and all forward edges carry positive flow. Since every cost function $l_e$ is upper/lower semi-continuous for negative/positive flow, the inequalities \eqref{eq:lem:UECharacterization:1} yield
\begin{align*}
l_e(x_e) = \liminf_{x \to x_e} l_e(x_e) &\leq \vec{\gamma}_e^{\top} \vec \pi
\quad \text{for } e \in P \text{ with } x_e > 0 \text{ and } \\
l_e(x_e) = \limsup_{x \to x_e} l_e(x_e) &\geq \vec{\gamma}_e^{\top} \vec \pi
\quad \text{for } e \in P \text{ with } x_e < 0.
\end{align*}
Let $Q \in \mathcal{P}$ be an arbitrary path. Then $P$ can be decomposed in subpath $P_1, \dotsc, P_{k}$ that connect the vertices $v_i, w_i$ such that either $P_i \subset P \cap Q$ or $P_i \subset P \setminus Q$. Likewise, the path $Q$ can be decomposed in subpaths $Q_1, \dotsc, Q_{k}$ with the same vertices $v_i, w_i$ such that either $Q_i \subset P \cap Q$ or $Q_i \subset P \setminus Q$. We than compute
\begin{align*}
l_P(\vec{x}) &= \sum_{e \in P} l_e (x_e) \, \chi_{P,e} \\
	&= \sum_{e \in P \cap Q}  l_e (x_e) \, \chi_{P,e} + \sum_{\substack{e \in P \setminus Q:\\ \chi_{P,e}>0}} l_e(x_e) - \sum_{\substack{e \in P \setminus Q:\\ \chi_{P,e}<0}} l_e(x_e) \\
	&\leq \sum_{e \in P \cap Q}  l_e (x_e) \, \chi_{P,e} + \sum_{e = (v,w) \in P \setminus Q} \pi_w - \pi_v \\
	&= \sum_{e \in P \cap Q}  l_e (x_e) \, \chi_{P,e} + \sum_{i = 1}^k \pi_{w_i} - \pi_{v_i} \\
	&= \sum_{e \in P \cap Q}  l_e (x_e) \, \chi_{P,e} + \sum_{e = (v,w) \in Q \setminus P} \pi_w - \pi_v \\
	&\leq \sum_{e \in P \cap Q}  l_e (x_e) \, \chi_{P,e} + \sum_{\substack{e \in Q \setminus P:\\ \chi_{Q,e} > 0}} \liminf_{x \to x_e} l_e(x) - \sum_{\substack{e \in Q \setminus P:\\ \chi_{Q,e}<0}} \limsup_{x \to x_e} l_e(x) \\
	&= \sum_{e \in P \cap Q}  l_e (x_e) \, \chi_{P,e} + \sum_{\substack{e \in Q \setminus P:\\ \chi_{Q,e} > 0}} \liminf_{\epsilon \downarrow 0} l_e(x_e + \epsilon) - \sum_{\substack{e \in Q \setminus P:\\ \chi_{Q,e}<0}} \limsup_{\epsilon \downarrow 0} l_e(x_e - \epsilon) \\
	&= \liminf_{\epsilon \downarrow 0} l_Q(\vec x + \epsilon \, \vec \chi_Q - \epsilon \, \vec \chi_P)
\end{align*}
for every path $Q \in \mathcal{P}$.
\end{appendixproof}
This characterization enables us to formulate equilibrium flows in directed networks as a special case of the undirected case.
\begin{corollary} \label{cor:directedWardrop}
Let $l_e: \R_{\geq 0} \to \R_{\geq 0}$ be continuous, non-decreasing cost functions on the edges. Then a directed flow $\vec x$ is a Wardrop equilibrium for the cost functions $l_e (x)$ if and only if the
undirected flow $\vec x$ is a User equilibrium in a graph with cost functions $\hat{l}_e(x) = l_e(x)$ if $x \geq 0$ and $\hat{l}_e(x) = -\infty$ otherwise.
\end{corollary}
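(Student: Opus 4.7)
The plan is to reduce the equivalence to a direct comparison of the two potential-based characterizations available in the paper: Lemma~\ref{lem:characterisationWEPotentials} for directed Wardrop equilibria and Lemma~\ref{lem:UECharacterization} for undirected User equilibria. Both assert the existence of a single potential vector $\vec \pi$ satisfying a list of ``support'' inequalities, and the key observation is that after substituting the modified cost $\hat l_e$ these two lists of inequalities collapse to each other. I will prove both directions using the same $\vec \pi$.

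For the forward direction, I would take a directed Wardrop equilibrium $\vec x$ (so $x_e \geq 0$ on every edge) together with a potential $\vec \pi$ from Lemma~\ref{lem:characterisationWEPotentials}, view $\vec x$ as an undirected flow, and verify \eqref{eq:lem:UECharacterization:1} by cases. On edges with $x_e > 0$, continuity of $l_e$ on $\R_{\geq 0}$ yields $\lim_{x \uparrow x_e} \hat l_e(x) = \lim_{x \downarrow x_e} \hat l_e(x) = l_e(x_e) = \vec \gamma_e^{\top} \vec \pi$ by \eqref{eq:lem:characterisationWEPotentials2}; on edges with $x_e = 0$, the left inequality $-\infty \leq \vec \gamma_e^{\top} \vec \pi$ is vacuous, while the right inequality $\vec \gamma_e^{\top} \vec \pi \leq l_e(0)$ is exactly \eqref{eq:lem:characterisationWEPotentials1}. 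Hence $\vec x$ together with the same potential $\vec \pi$ satisfies the premise of Lemma~\ref{lem:UECharacterization} and is a User equilibrium for $\hat l_e$.

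For the reverse direction, I would start with an undirected User equilibrium $\vec x$ for $\hat l_e$ and a supporting potential $\vec \pi$ from Lemma~\ref{lem:UECharacterization}. The first step is to rule out negative edge flows: if $x_e < 0$ on some edge, then $\lim_{x \downarrow x_e} \hat l_e(x) = -\infty$, so the right inequality in \eqref{eq:lem:UECharacterization:1} would force $\vec \gamma_e^{\top} \vec \pi \leq -\infty$, a contradiction. Hence $x_e \geq 0$ for every $e$, so $\vec x$ is feasible in the directed network, and the same case split as above now converts \eqref{eq:lem:UECharacterization:1} back into the conditions \eqref{eq:lem:characterisationWEPotentials1}--\eqref{eq:lem:characterisationWEPotentials2}. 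By Lemma~\ref{lem:characterisationWEPotentials}, $\vec x$ is therefore a directed Wardrop equilibrium for $l_e$.

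The main delicacy is the bookkeeping around the extended-real value $-\infty$: one has to be explicit that the limits in \eqref{eq:lem:UECharacterization:1} are interpreted in $\R \cup \{-\infty\}$ and that Lemma~\ref{lem:UECharacterization} still applies to $\hat l_e$ despite $\hat l_e$ not being a finite-valued function on $(-\infty,0)$. Since the only edges where the extended value actually enters the inequalities are exactly those with $x_e \leq 0$---precisely the regime forbidden in the directed model---the case analysis isolates the singularity cleanly, so the whole argument reduces to careful case checking rather than a technically deep step.
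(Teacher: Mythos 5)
Your proposal is correct and follows essentially the same route as the paper's proof: the paper likewise rewrites the User-equilibrium characterization \eqref{eq:lem:UECharacterization:1} for the modified costs $\hat l_e$ into the three cases $x_e>0$, $x_e=0$, $x_e<0$, observes that the first two reduce to \eqref{eq:lem:characterisationWEPotentials1}--\eqref{eq:lem:characterisationWEPotentials2}, and rules out $x_e<0$ because a real-valued potential cannot satisfy $\vec\gamma_e^{\top}\vec\pi = -\infty$. Your version just spells out the limit bookkeeping more explicitly.
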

\begin{appendixproof}{Corollary~\ref{cor:directedWardrop}}
For the special cost functions $\hat{l}$, the characterization \eqref{eq:lem:UECharacterization:1} simplifies to
\begin{align*}
\vec \gamma_e^{\top} \vec \pi &= l_e (x_e) &\text{for } e \in E \text{ with } x_e > 0, \\
-\infty \leq \vec \gamma_e^{\top} \vec \pi &\leq l_e(0) &\text{for } e \in E \text{ with } x_e = 0, \\
\vec \gamma_e^{\top} \vec \pi &= - \infty &\text{for } e \in E \text{ with } x_e < 0.
\end{align*}
Every directed Wardrop equilibrium flow obviously satisfies these conditions since the first two conditions are implied by Lemma~\ref{lem:characterisationWEPotentials} and there is no edge with negative flow. Conversely, for every user equilibrium there has to be a real-valued potential $\vec \pi$ such that these conditions are fulfilled. This means there can not be any edge $e \in E$ with negative flow $x_e < 0$. Thus this flow is also a Wardrop equilibrium.
\end{appendixproof}

We proceed to adjust the base algorithm developed in \S~\ref{sec:basic} so that it can handle discontinuous cost functions. To this end, we define for every breakpoint $\tau_{e,t_e}$ of the piecewise linear cost functions $l_e$ the values
$\sigma_{e,t}^+ := \lim_{x \downarrow \tau_{e,t_e}} l_e (x)$  and 
$\sigma_{e,t}^- := \lim_{x \uparrow \tau_{e,t_e}} l_e (x)$ (see Figure~\ref{fig:discontinuousCost} \shortversiononly{ in Appendix~\ref{appendixfigure\thesection subsection}}).
With these values, we define the inverse function of $l_e$ as the function
\begin{align*}
l_e^{-1} (v_e) = 
\begin{cases}
c_{e,t}\, v_e - d_{e,t}
	& \text{ for all } v_e \in (\sigma^+_{e,t}, \sigma^-_{e,t+1}) \\
\tau_{e,t} 
	& \text{ for all } v_e \in [\sigma^-_{e,t}, \sigma^+_{e,t}].
\end{cases}
\end{align*}

\begin{appendixfigure}{t}
\centering
\begin{subfigure}[b]{.35\linewidth}
\centering
\begin{tikzpicture}
\newcommand{\xMin}{-1} \newcommand{\xMax}{3}
\newcommand{\yMin}{-0.5} \newcommand{\yMax}{3}

\draw[thick, ->] (-0.5, \yMin) -- (-0.5, \yMax) node[anchor=south] {$l_e(x_e)$};
\draw[thick, ->] (\xMin, 0) -- (\xMax, 0) node[anchor=west] {$x_e$};

\draw[thick] (0,.1) -- (0, -.1) node[below] {$\tau_{e,1}$};
\draw[thick] (1,.1) -- (1, -.1) node[below] {$\tau_{e,2}$};

\draw[thick] (-0.5+.1,0.5) -- (-0.5-.1,0.5) node[left] {$\sigma^+_{e,1}$};
\draw[thick] (-0.5+.1,1.25) -- (-0.5-.1,1.25) node[left] {$\sigma^-_{e,2}$};
\draw[thick] (-0.5+.1,2) -- (-0.5-.1,2) node[left] {$\sigma^+_{e,2}$};

\draw[color4!70!black, thick]			(0, 0.5) -- (1,1.25)
										(1,2) -- (2.75,3);
\draw[color4!70!black, thick, dotted]	(0, \yMin) -- (0, 0.5) 
										(1,1.25) -- (1,2);

\end{tikzpicture}
\caption{Cost function $l_e(x_e)$.}
\end{subfigure}
\hspace{0.05\linewidth}
\begin{subfigure}[b]{.35\linewidth}
\centering
\begin{tikzpicture}
\newcommand{\xMin}{-0.5} \newcommand{\xMax}{3}
\newcommand{\yMin}{-1} \newcommand{\yMax}{2.5}

\draw[thick, ->] (0, \yMin) -- (0, \yMax) node[anchor=south] {$l^{-1}_e(v_e)$};
\draw[thick, ->] (\xMin, -0.5) -- (\xMax, -0.5) node[anchor=west] {$v_e$};

\draw[color4!70!black, thick]	(\xMin, 0) -- (0.5,0) -- (1.25,1) -- (2, 1) -- (3, 2.75);

\draw[thick] (0.5,.1 - 0.5) -- (0.5, -.1 - 0.5) node[below] {$\sigma^+_{e,1}$};
\draw[thick] (1.25,.1 - 0.5) -- (1.25, -.1 - 0.5) node[below] {$\sigma^-_{e,2}$};
\draw[thick] (2,.1 - 0.5) -- (2, -.1 - 0.5) node[below] {$\sigma^+_{e,2}$};

\draw[thick] (.1,0) -- (-.1, 0) node[left] {$\tau_{e,1}$};
\draw[thick] (.1,1) -- (-.1,1) node[left] {$\tau_{e,2}$};

\end{tikzpicture}
\caption{Inverse cost function $l^{-1}_e(v_e)$.}
\end{subfigure}
\caption{A discontinuous cost function $l_e$ with breakpoints $\tau_{e,1}$ and $\tau_{e,2}$ (a) and its inverse function (b). The limit $\sigma_{e,1}^- := \lim_{x \uparrow \tau_{e,1}} l_e (x)$ is assumed to be $- \infty$, modeling a directed edge. The inverse cost function is constant on $(\sigma_{e,1}^-, \sigma_{e,1}^+)$ and $(\sigma_{e,2}^-, \sigma_{e,2}^+)$.}
\label{fig:discontinuousCost}
\end{appendixfigure}

This definition ensures that for every potential difference $\vec{\gamma}_e^{\top} \vec \pi$ the condition \eqref{eq:lem:UECharacterization:1} is satisfied for the flow $x_e := l_e^{-1} ( \vec{\gamma}_e^{\top} \vec \pi )$.
For convenience, we rename all the breakpoints to $\sigma_{e,t}$ such that the inverse cost functions have again the form
$l_e^{-1} (v_e) = 
c_{e,t}\, v_e - d_{e,t}$ for all $v_e \in (\sigma_{e,t}, \sigma_{e,t+1})$
where the $\sigma_{e,t}$ are the $\sigma^+_{e,t}$ and $\sigma^-_{e,t}$ from before, renamed and reordered. The major difference to the continuous case from before is that some of the slopes $c_{e,t}$ may now be equal to zero if they correspond to a slope in a $(\sigma^-_{e,t}, \sigma^+_{e,t+1})$ interval.

Even in the presence of discontinuities, given some potential $\vec{\pi}$, we still can compute the flow $\vec x = \vec C_{\vec t} \vec \incidenceMatrix^{\top} \vec \pi - \vec d_{\vec t}$ satisfying \eqref{eq:lem:UECharacterization:1} 
in a region $R_{\vec t}$. Further, the excess induced by the potential is 
$
\smash{\vec y = \vec L_{\vec t} \vec \pi - \tilde{\vec d}_{\vec t}}
\, .
$
If we are in a region $R_{\vec t}$ such that all vertices are connected by edges with $c_{e,t_e} > 0$, the reduced Laplacian matrix $\smash{\hat{\vec L}_{\vec t}}$ is still invertible by Lemma~\ref{lem:LaplacianMatrix}(\ref{lem:LaplacianMatrix:subMatrix}) and the direction vector can be computed as
$
\smash{\Delta \hat{\vec \pi} = \hat{\vec L}_{\vec t}^{-1} \Delta \hat{\vec y}}
\, .
$
In contrast to the last section, we now need to deal with regions $R_{\vec t}$ where the reduced Laplacian $\smash{\hat{ \vec L}^{-1}_{\vec t}}$ is not invertible. We call such regions \emph{ambiguous} since the direction vector $\Delta \vec \pi$ can not be determined uniquely in these regions.

To formalize this, we call an edge $e$ active in a region $R_{\vec t}$ if $c_{e, t_e} > 0$ and inactive in $R_{\vec t}$ if $c_{e, t_e} = 0$. Two vertices $v$ and $w$ are called actively connected in $R_{\vec t}$ if there is an undirected path between $v$ and $w$ containing only active edges in $R_{\vec t}$. An actively connected component $\mathcal{C}$ is a maximal subset of vertices such that all vertices $v,w \in \mathcal{C}$ with $v \neq w$ are actively connected.

By Lemma~\ref{lem:LaplacianMatrix}, $\smash{\hat{\vec L}_{\vec t}^{-1}}$ is invertible if and only if there is exactly one actively connected component. Conversely, a region $R_{\vec t}$ is ambiguous if and only if there are more then one actively connected components in $R_{\vec t}$. In fact, we can use these connected components to find a direction vector $\Delta \vec \pi$ in the ambiguous region that can be added to the current potential without changing the flow. Following this direction to the next boundary eventually leads to an non-ambiguous region\longversiononly{ (see Figure~\ref{fig:ambiguousRegion})}. %
%
\shortversiononly{%
\begin{collect}{proofcollection\thesection}{}{}
}%
\longversion{%
To derive this direction formally, we first of all observe the following property of subsets $U \subseteq V$ of vertices.%
}{
The proofs of Lemma~\ref{lem:enteringAmbiguousRegions} and Theorem~\ref{thm:ambiguousRegion} require the following technical lemma.
}
\begin{lemma} \label{lem:potentialSumOfComponent}
For some non-ambiguous region $R_{\vec t}$ let $\Delta \vec \pi$ solve $\Delta \vec y = \vec L_{\vec t} \Delta \vec \pi$. Then for any $U \subseteq V$
\begin{align*}
\sum_{v \in U} \Delta y_v 
= \sum_{e \in \delta^+(U)} c_{e,t_e} \vec \gamma_e^{\top} \Delta \vec \pi - \sum_{e \in \delta^-(U)} c_{e,t_e} \vec \gamma_e^{\top} \Delta \vec \pi
.
\end{align*}
In particular, we have
$
\Delta \pi_w - \Delta \pi_v = \pm \frac{\sum_{v \in U} \Delta y_v}{c_{e,t_e}}
$
if there is exactly one active edge $e = (v,w)$ entering/leaving $U$ with positive/negative sign if $e$ is entering/leaving $U$.
\end{lemma}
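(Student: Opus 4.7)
The plan is to expand $\Delta \vec{y} = \vec{L}_{\vec{t}}\, \Delta \vec{\pi} = \vec{\Gamma}\,\vec{C}_{\vec{t}}\, \vec{\Gamma}^{\top} \Delta \vec{\pi}$ row-by-row and interchange the order of summation. For each vertex $v$, reading off the $v$-th component of this matrix identity gives
\begin{equation*}
\Delta y_v = \sum_{e \in E} \gamma_{v,e}\, c_{e,t_e}\, \vec{\gamma}_e^{\top} \Delta \vec{\pi},
\end{equation*}
with $\gamma_{v,e}$ equal to $+1$ if $e$ enters $v$, $-1$ if $e$ leaves $v$, and $0$ otherwise. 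Summing over $v \in U$ and swapping sums yields
\begin{equation*}
\sum_{v \in U} \Delta y_v \;=\; \sum_{e \in E} \Big(\sum_{v \in U} \gamma_{v,e}\Big)\, c_{e,t_e}\, \vec{\gamma}_e^{\top} \Delta \vec{\pi}.
\end{equation*}

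The key combinatorial observation is that the inner coefficient $\sum_{v \in U} \gamma_{v,e}$ is a pure boundary indicator. If both endpoints of $e$ lie in $U$, the $+1$ and $-1$ contributions cancel; if neither endpoint lies in $U$ the coefficient is trivially $0$; and if exactly one endpoint lies in $U$ then $e$ crosses the cut between $U$ and $V \setminus U$ and the coefficient is $\pm 1$, where the sign records whether the endpoint in $U$ is the head or the tail of $e$, i.e.\ whether $e$ belongs to $\delta^-(U)$ or to $\delta^+(U)$. Substituting collapses all interior terms and leaves only cut edges, which is exactly the claimed identity (the $\pm$ assignment between $\delta^+(U)$ and $\delta^-(U)$ just bookkeeps the orientation).

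For the "in particular" case, note that an inactive edge has $c_{e,t_e} = 0$ and so contributes nothing to either cut sum. If the cut $\delta^+(U) \cup \delta^-(U)$ contains exactly one active edge $e = (v,w)$, the right-hand side reduces to the single term $\pm\, c_{e,t_e}\, \vec{\gamma}_e^{\top} \Delta \vec{\pi} = \pm\, c_{e,t_e}\,(\Delta \pi_w - \Delta \pi_v)$, and dividing through by $c_{e,t_e} > 0$ gives the stated expression for $\Delta \pi_w - \Delta \pi_v$. There is no real obstacle: the statement is essentially a discrete divergence theorem for the weighted Laplacian, and the non-ambiguity hypothesis is not used in the identity itself but only to ensure, in the intended applications, that the solution $\Delta \vec{\pi}$ of $\Delta \vec{y} = \vec{L}_{\vec{t}} \Delta \vec{\pi}$ actually exists.
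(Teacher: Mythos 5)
Your proof is correct and follows essentially the same route as the paper: expand $\Delta\vec{y} = \vec{\Gamma}\vec{C}_{\vec{t}}\vec{\Gamma}^{\top}\Delta\vec{\pi}$ row by row, sum over $v\in U$, and observe that the contributions of edges with both endpoints in $U$ cancel, leaving only the cut edges. The paper's own argument is just a terser version of this, so there is nothing to add.
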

\begin{proof}
For every $v \in V$ we have 
\[
\Delta y_v = (\vec L_{\vec t} \Delta \vec \pi)_v = \sum_{e \in \delta^+ (v)} c_{e,t_e} \vec \gamma_e^{\top} \Delta \vec \pi - \sum_{e \in \delta^- (v)} c_{e,t_e} \vec \gamma_e^{\top} \Delta \vec \pi
\]
and thus the claim follows by taking the sum over all vertices $v \in U$.
\end{proof}
\shortversiononly{
\end{collect}
}

The next lemma will show that if the solution curve leaves a non ambiguous region then either the adjacent region is also non ambiguous or there is a direction $\Delta \vec{\pi}^i$ the solution curve can follow without changing the induced flow.

\begin{lemma} \label{lem:enteringAmbiguousRegions}
Assume the solution curve is in a region $R_{\vec t^{i-1}}$ where $\hat{\vec L}_{\vec t^{i-1}}$ is invertible. If the solution curve crosses a unique boundary to some region $R_{\vec t^{i+1}}$ in the point $\vec{\pi}^i$, then exactly one of the following statements is true.
\begin{compactenum}[(i)]
\item
	The matrix $\hat{\vec L}_{\vec t^{i}}$ is invertible, i.e. the region $R_{\vec{t}^i}$ is non-ambiguous.
\item
	There are exactly two actively connected components $\mathcal{C}_1, \mathcal{C}_2$ in the region $R_{\vec t^{i}}$ with $\mathcal{C}_1 \cup \mathcal{C}_2 = V$, $s \in \mathcal{C}_1$, and $t \in \mathcal{C}_2$, such that
	\begin{flexequation*}
	\vec{x} = \vec{C}_{\vec{t}^i} \vec{\Gamma}^{\top} ( \vec{\pi}^i + \lambda \, \Delta \vec{\pi}^i ) - \vec{d}_{\vec{t}^i}
	\text{ for all } \lambda \in \R
	\end{flexequation*}
	where $\Delta \vec{\pi}^i$ is the direction with $\Delta \pi^i_v = 1$ if $v \in \mathcal{C}_2$ and $\Delta \pi^i_v = 0$ if $v \in \mathcal{C}_1$.
\end{compactenum}
\end{lemma}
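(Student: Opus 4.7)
The plan is to split on whether $\hat{\vec{L}}_{\vec{t}^i}$ is invertible. If it is, then by Lemma~\ref{lem:LaplacianMatrix}(\ref{lem:LaplacianMatrix:rank}) the region $R_{\vec{t}^i}$ has a single actively connected component, so (i) holds and there is nothing else to prove. Throughout what follows I therefore assume that $R_{\vec{t}^i}$ is ambiguous and establish (ii) in three steps.

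The first step is to pin down the component structure. Since the solution curve crosses a unique boundary at $\vec{\pi}^i$, exactly one edge $e^*$ changes its linear segment in passing from $R_{\vec{t}^{i-1}}$ to $R_{\vec{t}^i}$, so the diagonal matrices $\vec{C}_{\vec{t}^{i-1}}$ and $\vec{C}_{\vec{t}^i}$ agree on every entry but possibly $e^*$. Consequently the two active subgraphs differ only in the status of $e^*$. Since $R_{\vec{t}^{i-1}}$ is actively connected but $R_{\vec{t}^i}$ is not, $e^*$ must have been active in $R_{\vec{t}^{i-1}}$, inactive in $R_{\vec{t}^i}$, and in fact a bridge of the active subgraph of $R_{\vec{t}^{i-1}}$. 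Removing a single bridge from a connected graph yields exactly two connected components $\mathcal{C}_1, \mathcal{C}_2$ that partition $V$, which are therefore the active components of $R_{\vec{t}^i}$.

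The second step is to show that $s$ and $t$ lie in different components. Assume $s \in \mathcal{C}_1$. By the previous step, $e^*$ is the only active edge joining $\mathcal{C}_1$ and $\mathcal{C}_2$ in $R_{\vec{t}^{i-1}}$: any other inter-component edge must already be inactive there, because its conductivity is unchanged in $R_{\vec{t}^i}$, where no active inter-component edge exists. Applying Lemma~\ref{lem:potentialSumOfComponent} with $U = \mathcal{C}_1$ in region $R_{\vec{t}^{i-1}}$ then yields
\begin{equation*}
\sum_{v \in \mathcal{C}_1} \Delta y_v \;=\; \pm\, c_{e^*, t^{i-1}_{e^*}}\, \vec{\gamma}_{e^*}^{\top} \Delta \vec{\pi}^{i-1}.
\end{equation*}
Because the curve actually reaches and transversally crosses the boundary induced by $e^*$, we have $\vec{\gamma}_{e^*}^{\top} \Delta \vec{\pi}^{i-1} \neq 0$, so the right-hand side is non-zero. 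Combined with $\Delta y_v = 0$ for every $v \notin \{s, t\}$, this forces $t \in \mathcal{C}_2$.

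The third step verifies that the proposed direction $\Delta \vec{\pi}^i$ leaves the flow unchanged. For each edge $e$ either both endpoints lie in the same component, so $\vec{\gamma}_e^{\top} \Delta \vec{\pi}^i = 0$, or $e$ joins $\mathcal{C}_1$ and $\mathcal{C}_2$, in which case $e$ is inactive in $R_{\vec{t}^i}$ by Step one and hence $c_{e, t^i_e} = 0$. Either way the $e$-entry of $\vec{C}_{\vec{t}^i} \vec{\Gamma}^{\top} \Delta \vec{\pi}^i$ vanishes, so $\vec{x} = \vec{C}_{\vec{t}^i} \vec{\Gamma}^{\top} (\vec{\pi}^i + \lambda \Delta \vec{\pi}^i) - \vec{d}_{\vec{t}^i}$ holds for every $\lambda \in \R$. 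The main obstacle I expect is the transversality claim inside Step two: one has to justify that the unique boundary being crossed is precisely the one associated with $e^*$ and that $\vec{\gamma}_{e^*}^{\top} \Delta \vec{\pi}^{i-1} \neq 0$, which requires unpacking the algorithm's notion of boundary crossing. Once this is in hand, Steps one and three are essentially graph-theoretic bookkeeping about which edges are active in which region.
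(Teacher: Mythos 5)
Your proof is correct and follows essentially the same route as the paper's: the unique boundary crossing means only one conductivity changes, so at most one active edge becomes inactive and the active subgraph splits into exactly two components; the separation of $s$ and $t$ follows from Lemma~\ref{lem:potentialSumOfComponent} together with the fact that the direction vector in $R_{\vec t^{i-1}}$ is not orthogonal to $\vec\gamma_{e^*}$ (the paper applies the lemma to $\mathcal{C}_2$ and derives a contradiction, while you apply it to $\mathcal{C}_1$, a cosmetic difference); and the invariance of the flow along $\Delta\vec\pi^i$ is verified edge by edge exactly as in the paper. The transversality point you flag is indeed the one place the paper spells out, and it is justified precisely as you anticipate: the boundary crossed must be the one induced by $e^*$ (otherwise $c_{e^*}$ could not change), and reaching that boundary with finite $\epsilon(e^*)$ forces $\vec\gamma_{e^*}^{\top}\Delta\vec\pi^{i-1}\neq 0$.
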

\begin{appendixproof}{Lemma~\ref{lem:enteringAmbiguousRegions}}
If $(i)$ is true, then $(ii)$ is obviously not true since $\hat{\vec L}_{\vec t^{i}}$ is singular if there is more than one actively connected component.

So assume $\hat{\vec L}_{\vec t^{i}}$ is not invertible. This implies that there are at least two actively connected components in $R_{\vec t^{i}}$. Since there is only one boundary between $R_{\vec t^{i-1}}$ and $R_{\vec t^{i}}$ at most one $c_e$ can change and thus there can be at most one more inactive edge in $R_{\vec t^{i}}$ than in $R_{\vec t^{i-1}}$. Making one $e$ inactive can only cause one actively connected component to split into two components. So there are also at most two and thus exactly two actively connected components $\mathcal{C}_1, \mathcal{C}_2$ in $R_{\vec t^{i}}$.

Now we show that $s$ and $t$ are in two different connected components. Assume towards contradiction that $s,t \in \mathcal{C}_1$. By assumption, a unique bridging edge $\hat{e}=(\hat{v}, \hat{w})$ between $\mathcal{C}_1$ and $\mathcal{C}_2$ was active in $R_{\vec t^{i-1}}$ but is now inactive in $R_{\vec t^{i}}$. Thus, $c_{\hat{e}}$ changed which is only the case if the boundary between $R_{\vec t^{i-1}}$ and $R_{\vec t^{i}}$ was induced by the edge $\hat{e}$. Hence, $\Delta \vec{\pi}^i$ must have pointed towards the boundary meaning that 
\begin{equation*}
0 \neq \vec \gamma_{\hat{e}}^{\top} \Delta \vec{\pi}^{i-1} = \Delta \pi_{w}^{i-1} - \Delta \pi_v^{i-1}
.
\end{equation*}
Since $s,t \notin \mathcal{C}_2$  we have $\Delta y_v = 0$ for all $v \in \mathcal{C}_2$ and thus Lemma~\ref{lem:potentialSumOfComponent} yields
\begin{align*}
\Delta \pi_w^{i-1} - \Delta \pi_v^{i-1} = \pm \frac{\sum_{v \in \mathcal{C}_2} \Delta y_v}{c_{e,t_e}} = 0
\end{align*}
which is a contradiction.

Finally, we prove that the flow is indeed constant along the above defined direction $\Delta \vec{\pi}^i$.
For all edges $e$ in $\mathcal{C}_1$ and $\mathcal{C}_2$ we have $\vec \gamma^{\top}_e \Delta \vec \pi = 0$ by the definition of $\Delta \vec \pi$. For all edges between the components we have $c_{e, t_e} = 0$ because the components are not actively connected. Thus $c_{e, t_e} \vec \gamma_e^{\top} \Delta \vec \pi = 0$ for all edges $e \in E$ which implies
$
\vec C_{\vec t} \vec \Gamma^{\top} \Delta \vec \pi = \vec 0
$
and thus proves the statement.
\end{appendixproof}

\begin{appendixfigure}{t}
\centering
\begin{tikzpicture}
\newcommand{\xMin}{-0.5} \newcommand{\xMax}{4}
\newcommand{\yMin}{0} \newcommand{\yMax}{3}
\path[fill=color1,fill opacity=0.25] (\xMin, \yMin) -- (0.5, \yMin) -- (0.5,0.5) -- (\xMin, 0.5);
\path[fill=color2,fill opacity=0.25] (0.5, \yMin) -- (0.5,0.5) -- (3, 0.5) -- (3, \yMin);
\path[fill=color3,fill opacity=0.25] (3, 0.5) -- (3, \yMin) -- (\xMax, \yMin) -- (\xMax, 0.5);
\path[fill=color5,fill opacity=0.25] (\xMin, 0.5) -- (0.5, 0.5) -- (0.5,2.5) -- (\xMin, 2.5);
\path[fill=color4,fill opacity=0.25] (0.5, 0.5) -- (0.5,2.5) -- (3, 2.5) -- (3,0.5);
\path[fill=color6,fill opacity=0.25] (3, 2.5) -- (3,0.5) -- (\xMax, 0.5) -- (\xMax, 2.5);
\path[fill=color7,fill opacity=0.25] (\xMin, 2.5) -- (0.5, 2.5) -- (0.5, \yMax) -- (\xMin, \yMax);
\path[fill=color8,fill opacity=0.25] (0.5, 2.5) -- (0.5, \yMax) -- (3, \yMax) -- (3,2.5);
\path[fill=color9,fill opacity=0.25] (3, \yMax) -- (3,2.5) -- (\xMax, 2.5) -- (\xMax, \yMax);

\foreach \i in {1,2,3,5,6,7} {
	\draw[decoration={zigzag,segment length=3,amplitude=.75}, decorate, color4!85!black, thin] (.5,0.5 + 1/4*\i) -- (3,0.5 + 1/4*\i);
}

\node[color=color4!30!black, anchor=north east] at (3,2.5) {$R_{\vec{t}^i}$};

\draw[thin, dashed] (\xMin, .5) -- (\xMax, .5);
\draw[thin, dashed] (\xMin, 2.5) -- (\xMax, 2.5);

\draw[thin, dashed] (.5, \yMin) -- (.5, \yMax);
\draw[thin, dashed] (3, \yMin) -- (3, \yMax);
\draw[thick] (\xMin, 1) -- (.5,1.5);
\draw[thick] (3,1.5) -- ({min(\xMax, 4*\yMax-3)}, {min(1/4*\xMax+3/4, \yMax)});
\draw[thick, decoration={zigzag,segment length=4,amplitude=.9}, decorate, color4!35!black] (.5,1.5) -- (3,1.5);

\fill[black] (0.5,1.5) circle(2pt) node[anchor=south east] {$\vec{\pi}^i$};
\fill[black] (3,1.5) circle(2pt) node[anchor=north west] {$\vec{\pi}^{i+1}$};

\draw[thick, color4!40!black] (1.25, 1.25) edge[->] node[midway, below] {$\Delta \vec{\pi}^i$} (2.25, 1.25);
\end{tikzpicture}
\caption{A degenerate region $R_{\vec{t}^i}$ in the potential space. Along the direction $\Delta \vec{\pi}^i$ the flow is unchanged, the zigzag lines indicate potentials that induce the same flow. In particular, $\vec{\pi}^i$ and $\vec{\pi}^{i+1}$ induce the same flow -- the solution curve can jump from $\vec{\pi}^i$ to $\vec{\pi}^{i+1}$ without changing the flow and skip the ambiguous region.}
\label{fig:ambiguousRegion}
\end{appendixfigure}

Finally, the concluding theorem in this section will show that following the direction established in Lemma~\ref{lem:enteringAmbiguousRegions} will lead to a neighboring, non-ambiguous region. The solution curve can move instantaneously to the neighboring region since moving along this direction does not change the induced flow.

\begin{theorem} \label{thm:ambiguousRegion}
Assume the solution curve enters an ambiguous region $R_{\vec t^i}$ in the point $\vec \pi^i \in R_{\vec t^i}$. Then there is a neighboring, non-ambiguous region $R_{\vec t^{i+1}}$ and a potential vector $\vec \pi^{i+1} \in R_{\vec t^{i+1}}$ that induces the same flow, i.e.
$\vec L_{\vec t^i} \vec \pi^i - \vec d_{\vec t^i} = \vec L_{\vec t^{i+1}} \vec \pi^{i+1} - \vec d_{\vec t^{i+1}}$.
Furthermore, the direction vector $\Delta \vec \pi^{i+1}$ in region $R_{\vec t^{i+1}}$ is directed away from $R_{\vec t^i}$, i.e. the solution curve will not enter $R_{\vec t^i}$ again when starting in $\vec \pi^{i+1}$.
\end{theorem}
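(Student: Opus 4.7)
The plan is to construct $\vec \pi^{i+1}$ explicitly by moving from $\vec \pi^i$ along the flow-preserving direction provided by Lemma~\ref{lem:enteringAmbiguousRegions} until the first boundary of $R_{\vec t^i}$ is crossed, and then to verify that the resulting adjacent region is non-ambiguous and that its direction vector points away from $R_{\vec t^i}$.

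By Lemma~\ref{lem:enteringAmbiguousRegions}, there are exactly two actively connected components $\mathcal{C}_1 \ni s$ and $\mathcal{C}_2 \ni t$ in $R_{\vec t^i}$, and the associated direction $\Delta \vec \pi^i$ leaves the induced flow invariant. Let $F \subseteq E$ denote the cut between $\mathcal{C}_1$ and $\mathcal{C}_2$. Since the two components are not actively connected, every $e \in F$ is inactive in $R_{\vec t^i}$, so $\vec \gamma_e^{\top} \vec \pi^i$ lies in a flat piece $[\sigma_{e, t^i_e}, \sigma_{e, t^i_e+1}]$ of $l_e^{-1}$ with $c_{e, t^i_e} = 0$. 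Moreover $\vec \gamma_e^{\top} \Delta \vec \pi^i = \pm 1$ for $e \in F$ (depending on orientation), while $\vec \gamma_e^{\top} \Delta \vec \pi^i = 0$ for all $e \notin F$, so moving along $\pm \Delta \vec \pi^i$ only changes the potential differences across the cut. For each $e \in F$ I compute the signed distance to the nearer breakpoint of $l_e^{-1}$; selecting the sign of $\pm \Delta \vec \pi^i$ that actually reaches a boundary, I let $\epsilon^* > 0$ be the minimum such distance and $e^*$ the minimising edge (degenerate ties are broken by the lexicographic rule of the preceding subsection). Define $\vec \pi^{i+1} := \vec \pi^i + \epsilon^* (\pm \Delta \vec \pi^i)$ and let $\vec t^{i+1}$ differ from $\vec t^i$ only in the $e^*$-coordinate, chosen so that $\vec \pi^{i+1} \in R_{\vec t^{i+1}}$.

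To see that $R_{\vec t^{i+1}}$ is non-ambiguous, recall that the pieces of $l_e^{-1}$ alternate between flat segments arising from jumps of $l_e$ and strictly increasing segments between two consecutive jumps, so the piece of $l_{e^*}^{-1}$ neighbouring the one containing $\vec \gamma_{e^*}^{\top} \vec \pi^i$ is strictly increasing, giving $c_{e^*, t^{i+1}_{e^*}} > 0$. Because $\vec t^{i+1}$ differs from $\vec t^i$ only in the status of $e^*$, activating $e^*$ merges $\mathcal{C}_1$ and $\mathcal{C}_2$ into a single actively connected component, so $\hat{\vec L}_{\vec t^{i+1}}$ is invertible by Lemma~\ref{lem:LaplacianMatrix}(\ref{lem:LaplacianMatrix:subMatrix}). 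Equality of the induced flows then follows from Lemma~\ref{lem:enteringAmbiguousRegions}, which gives $\vec x(\vec \pi^{i+1}) = \vec x(\vec \pi^i)$ when evaluated with the $R_{\vec t^i}$-parameters, together with the continuity of $l_{e^*}^{-1}$ at the crossed breakpoint (and the fact that $\vec t^i$ and $\vec t^{i+1}$ agree on all other edges), which guarantees the same flow value when evaluated with the $R_{\vec t^{i+1}}$-parameters; multiplying by $\vec \Gamma$ yields the identity $\vec L_{\vec t^i} \vec \pi^i - \vec d_{\vec t^i} = \vec L_{\vec t^{i+1}} \vec \pi^{i+1} - \vec d_{\vec t^{i+1}}$.

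The main obstacle is the final claim that $\Delta \vec \pi^{i+1}$ points away from $R_{\vec t^i}$, because the Sherman--Morrison update of Theorem~\ref{thm:boundaryCrossing} is not directly applicable (the previous reduced Laplacian $\hat{\vec L}_{\vec t^i}$ is singular). Instead I apply Lemma~\ref{lem:potentialSumOfComponent} with $U = \mathcal{C}_2$: since $t \in \mathcal{C}_2$ but $s \notin \mathcal{C}_2$, $\sum_{v \in \mathcal{C}_2} \Delta y_v = 1$, and in $R_{\vec t^{i+1}}$ the only active cut edge is $e^*$, so the ``in particular'' clause of the lemma yields $\vec \gamma_{e^*}^{\top} \Delta \vec \pi^{i+1} = \pm 1 / c_{e^*, t^{i+1}_{e^*}}$, with sign $+$ if $e^*$ enters $\mathcal{C}_2$ and $-$ otherwise. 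A direct check using the definition $\Delta \pi_v^i = \mathbf{1}[v \in \mathcal{C}_2]$ from Lemma~\ref{lem:enteringAmbiguousRegions} shows that this sign coincides with $\operatorname{sgn}(\vec \gamma_{e^*}^{\top} \Delta \vec \pi^i)$, which by construction is the sign in which $\vec \gamma_{e^*}^{\top} \vec \pi$ was changing as the curve crossed from $R_{\vec t^i}$ into $R_{\vec t^{i+1}}$. Hence $\Delta \vec \pi^{i+1}$ continues to point away from $R_{\vec t^i}$, completing the argument.
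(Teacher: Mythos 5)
Your proof follows essentially the same route as the paper's: walk along the flow-preserving direction of Lemma~\ref{lem:enteringAmbiguousRegions} until the first boundary, observe that the boundary edge is a bridging edge and becomes the unique active cut edge of the new (hence non-ambiguous) region, and apply Lemma~\ref{lem:potentialSumOfComponent} with $U=\mathcal{C}_2$ to read off the sign of $\vec\gamma_{e^*}^{\top}\Delta\vec\pi^{i+1}$. The one caveat is your ``select whichever of $\pm\Delta\vec\pi^i$ reaches a boundary'': the closing sign comparison is only valid when the curve approaches the boundary along $+\Delta\vec\pi^i$ (raising the potential of the component containing $t$), since approaching along $-\Delta\vec\pi^i$ would make the new direction point back into $R_{\vec t^i}$; the paper implicitly fixes this orientation, and you should too.
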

\begin{appendixproof}{Theorem~\ref{thm:ambiguousRegion}}
Lemma~\ref{lem:enteringAmbiguousRegions} implies that there is the direction $\Delta \vec{\pi}$ along which the induced flow is constant. Let $\vec{\pi}^{i+1} := \vec{\pi}^i + \epsilon \, \Delta \vec{\pi}^i$ where $\epsilon$ is chosen such that $\vec{\pi}^{i+1}$ lies on the boundary to the next region $R_{\vec{t}^{i+1}}$. Let $\hat{e} = (\hat{v}, \hat{w})$ be the edge inducing that boundary.
Then $\hat e$ is in particular a bridging edge between $\mathcal{C}_1$ and $\mathcal{C}_2$ because $\vec \gamma^{\top} \Delta \vec \pi^i = 0$ for all non bridging edges. Assume without loss of generality that $\hat w \in \mathcal{C}_2$ (otherwise change the direction of $\hat e$). Then $\vec \gamma_{\hat e}^{\top} \Delta \vec \pi^i = \Delta \pi^i_w - \Delta  \pi^i_v = 1 - 0 > 0$. Since $\hat{e}$ is the unique, active bridging edge between $\mathcal{C}_1$ and $\mathcal{C}_2$ (recall that there was no active bridging edge in the region $R_{\vec{t}^i}$) we have by Lemma~\ref{lem:potentialSumOfComponent} that 
\begin{align*}
\vec \gamma_{\hat e}^{\top} \Delta \vec \pi^{i+1} = \Delta \pi^{i+1}_w - \Delta  \pi^{i+1}_v = \frac{\sum_{v \in \mathcal{C}_2} \Delta y_v}{c_{\hat{e}, t^{i+1}_{\hat{e}}}} = \frac{1}{c_{\hat{e}, t^{i+1}_{\hat{e}}}} > 0
.
\end{align*}
So $\Delta \vec \pi^{i}$ and $\Delta \vec \pi^{i+1}$ have the same direction with respect to the normal vector $\vec \gamma_{\hat{e}}$ and, thus, $\Delta \vec \pi^{i+1}$ is directed away from the boundary. So overall, the vector $\vec \pi^{i+1}$ lies in the neighboring region $R_{\vec t^{i+1}}$ and induces the same flow as $\vec \pi^i$ since it lies also in $R_{\vec t^i}$ and on the line $\vec \pi^i + \lambda \Delta \vec \pi^i$. Furthermore, the direction vector $\Delta \vec \pi^{i+1}$ is directed away from the boundary between $R_{\vec t^i}$ and $R_{\vec t^{i+1}}$.
\end{appendixproof}

So with the last theorem we have a way to circumvent ambiguous regions. If the solution curve enters an ambiguous region in $\vec \pi^i$ then Theorem~\ref{thm:ambiguousRegion} gives us a point $\vec \pi^{i+1}$ that induces the same flow from where the algorithm can proceed (see Figure~\ref{fig:ambiguousRegion}). The handling of discontinuous costs is illustrated in detail in Example~\ref{ex:ambiguousRegion} in Appendix~\ref{app:examples}.
\longversiononly{
	\section{Constant Costs} \label{sec:constantCost}
In this section we want to deal with cost function that are (piecewise) constant. This means, there are regions $R_{\vec{t}}$ where the slope of some cost function $a_{e, t_e}$ is zero. Regions with constant costs are hard to handle since the inverse cost functions $l_e^{-1}$ are not well-defined in this case. In particular, the Laplacian matrix $\vec{L}_{\vec{t}}$ is not well defined. Below, we will prove that the inverse of the Laplacian $\vec{L}_{\vec{t}}^{-1}$ still exists even in the presence of constant costs. To this end, we need to study the behavior of the inverse of the Laplacian as some slope $a_{e, t_e}$ tends to zero. 

\subparagraph{Degenerate regions}
In a region $R_{\vec{t}}$ where some cost function $l_e (x)$ of some edge $e = (v,w)$ is constant, a Wardrop equilibrium flow has to satisfy
\[
\pi_w - \pi_v = l_e(x_e) = \underbrace{a_{e,t_e}}_{=0} \, x_e + b_{e, t_e} = b_{e, t_e}
,
\]
i.e. the potential difference $\pi_w - \pi_v$  is constant in the whole region. This means that such regions are lower dimensional hyperplanes in the potential space. In the example described in Figure~\ref{fig:degenerateRegion} the region $R_2$ is a line in the two dimensional potential space (see Figure~\ref{fig:degenerateRegion}(d)) in the case of a constant cost function ($\alpha = \infty$). If the solution curve is in such a region it is permanently moving along the boundaries induced by the edge with constant cost. Therefore, we call these regions \emph{degenerate} (or more precisely $e$-\emph{degenerate} where $e$ is the edge with constant cost).
\begin{figure}[t]
\centering
\begin{subfigure}[b]{.45\linewidth}
\begin{center}
\begin{tikzpicture}
\draw (0,0) node[solid] (s) {} node [below left] {$s$};
\draw (2,1) node[solid] (v) {} node [above] {$v$};
\draw (4,0) node[solid] (t) {} node [below right] {$t$};

\draw[->]	(s) edge node[midway, above left] {$e_1$} (v) 
			(v) edge node[midway, above right] {$e_3$} (t) 
			(s) edge node[midway, below] {$e_2$} (t);
\end{tikzpicture}
\end{center}
\caption{Graph.}
\end{subfigure}
\hspace{0.02\linewidth}
\begin{subfigure}[b]{.45\linewidth}
\begin{center}
\begin{align*}
l_{e_1} (x) &= l_{e_2} (x) =  x , \\
l_{e_3} (x) &=
\begin{cases}
x & \text{if } x \leq 1, \\
\frac{1}{\alpha} \, x + \frac{\alpha - 1}{\alpha} & \text{if } 1 < x \leq 3, \\
x + \frac{2 - 2 \alpha}{\alpha} & \text{if } x > 3.
\end{cases} 
\end{align*}
\end{center}
\caption{Cost functions.}
\end{subfigure}
\\
\begin{subfigure}[t]{.45\linewidth}
\begin{center}
\begin{tikzpicture}
\newcommand{\xMin}{-0.5} \newcommand{\xMax}{3.5}
\newcommand{\yMin}{-0.5} \newcommand{\yMax}{3.5}
\newcommand{\xScale}{0.5} \newcommand{\yScale}{0.5}

\path[fill=color1,fill opacity=0.25] ({max(\xMin,\yMin)},{max(\xMin+1/2,\yMin)}) -- ({max(\xMin,\yMin)},\yMin)--(\xMax,\yMin)--(\xMax,{min(\xMax+1/2,\yMax)})--({min(\xMax,\yMax-1/2)},{min(\xMax+1/2,\yMax)});
\path[fill=color2,fill opacity=0.25] ({max(\xMin,\yMin)},{max(\xMin+1/2,\yMin)}) -- ({max(\xMin,\yMin)},{max(\xMin+1,\yMin)}) -- ({min(\xMax,\yMax-1)},{min(\xMax+1,\yMax)}) -- ({min(\xMax,\yMax-1/2)},{min(\xMax+1/2,\yMax)}); ;
\path[fill=color3,fill opacity=0.25] ({max(\xMin,\yMin)},{max(\xMin+1,\yMin)}) -- ({max(\xMin,\yMin)},\yMax)--({min(\xMax,\yMax-1)},\yMax)--({min(\xMax,\yMax-1)},{min(\xMax+1,\yMax)});
\node[color=color1!30!black, anchor=south east] at (\xMax-0.05,0.05) {$R_1$};
\node[color=color2!30!black, anchor=north] at ({min(\xMax,\yMax-1)},{min(\xMax+1,\yMax)}) {$R_2$};
\node[color=color3!30!black, anchor=north west] at (0.05,\yMax-0.05) {$R_3$};

\draw[thick, ->] (\xMin,0) -- (\xMax,0) node[anchor=west] {$\pi_v$};
\foreach \i in {2,4,6} 
{
	\draw[thick] (\i*\xScale, 0.1) -- (\i*\xScale,-0.1) node[anchor=north] {\i};
	\draw[thick] ({(\i-1)*\xScale}, 0.05) -- ({(\i-1)*\xScale},-0.05);
}
\draw[thick, ->] (0,\yMin) -- (0,\yMax) node[anchor=south] {$\pi_t$};
\foreach \i in {2,4,6}
{
	\draw[thick] (0.1, \i*\yScale) -- (-0.1, \i*\yScale) node[anchor=east] {\i};
	\draw[thick] (0.05, {(\i-1)*\yScale}) -- (-0.05, {(\i-1)*\yScale});
}

\draw[thin, dashed] ({max(\xMin,\yMin)},{max(\xMin+1/2,\yMin)}) -- ({min(\xMax,\yMax-1/2)},{min(\xMax+1/2,\yMax)});
\draw[thin, dashed] ({max(\xMin,\yMin)},{max(\xMin+1,\yMin)}) -- ({min(\xMax,\yMax-1)},{min(\xMax+1,\yMax)});

\draw[thick] (0,0) -- (1/2,1) -- (3/2, 5/2) -- ( {min( \xMax, \yMax/2 + 1/4 )} , { min(2*\xMax - 1/2 , \yMax } );
\end{tikzpicture}
\end{center}
\caption{The potential space with $\alpha = 2$.}
\label{fig:degenerateRegion:relaxed}
\end{subfigure}
\hspace{0.02\linewidth}
\begin{subfigure}[t]{.45\linewidth}
\begin{center}
\begin{tikzpicture}
\newcommand{\xMin}{-0.5} \newcommand{\xMax}{3.5}
\newcommand{\yMin}{-0.5} \newcommand{\yMax}{3.5}
\newcommand{\xScale}{0.5} \newcommand{\yScale}{0.5}

\path[fill=color1,fill opacity=0.25] ({max(\xMin,\yMin)},{max(\xMin+1/2,\yMin)}) -- ({max(\xMin,\yMin)},\yMin)--(\xMax,\yMin)--(\xMax,{min(\xMax+1/2,\yMax)})--({min(\xMax,\yMax-1/2)},{min(\xMax+1/2,\yMax)});
\path[fill=color3,fill opacity=0.25] ({max(\xMin,\yMin)},{max(\xMin+1/2,\yMin)}) -- ({max(\xMin,\yMin)},\yMax)--({min(\xMax,\yMax-1/2)},\yMax)--({min(\xMax,\yMax-1/2)},{min(\xMax+1/2,\yMax)});
\node[color=color1!30!black, anchor=south east] at (\xMax-0.05,0.05) {$R_1$};
\node[color=color2!30!black, anchor=north] at ({min(\xMax,\yMax-1/2)},{min(\xMax+1/2,\yMax)}) {$R_2$};
\node[color=color3!30!black, anchor=north west] at (0.05,\yMax-0.05) {$R_3$};

\draw[thick, ->] (\xMin,0) -- (\xMax,0) node[anchor=west] {$\pi_v$};
\foreach \i in {2,4,6} 
{
	\draw[thick] (\i*\xScale, 0.1) -- (\i*\xScale,-0.1) node[anchor=north] {\i};
	\draw[thick] ({(\i-1)*\xScale}, 0.05) -- ({(\i-1)*\xScale},-0.05);
}
\draw[thick, ->] (0,\yMin) -- (0,\yMax) node[anchor=south] {$\pi_t$};
\foreach \i in {2,4,6}
{
	\draw[thick] (0.1, \i*\yScale) -- (-0.1, \i*\yScale) node[anchor=east] {\i};
	\draw[thick] (0.05, {(\i-1)*\yScale}) -- (-0.05, {(\i-1)*\yScale});
}

\draw[color2] ({max(\xMin,\yMin)},{max(\xMin+1/2,\yMin)}) -- ({min(\xMax,\yMax-1/2)},{min(\xMax+1/2,\yMax)});
\draw[dashed] ({max(\xMin,\yMin)},{max(\xMin+1/2,\yMin)}) -- ({min(\xMax,\yMax-1/2)},{min(\xMax+1/2,\yMax)});

\draw[thick] (0,0) -- (1/2,1) -- (3/2, 2) -- ( {min( \xMax, \yMax/2 + 1/4 )} , { min(2*\xMax - 1/2 , \yMax } );
\end{tikzpicture}
\end{center}
\caption{Potential space with constant cost ($\alpha = \infty$) on $e_3$ in $R_2$. The region $R_2$ is $e_3$-degenerate.}
\label{fig:degenerateRegion:degenerate}
\end{subfigure}
\caption{An example where the slope of one cost function depends on the parameter $\alpha$. For $\alpha = \infty$ the region is $e_3$-degenerate.}
\label{fig:degenerateRegion}
\end{figure}
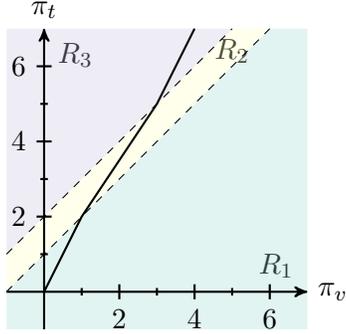
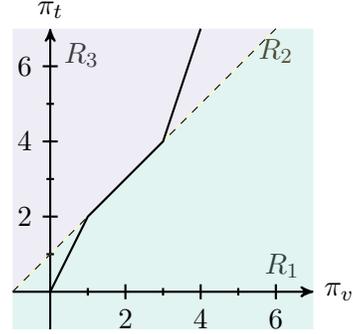

\subparagraph{The Sherman-Morrison Operator}

In order to derive the inverse Laplacian matrix in a degenerate region, we need to investigate further the transformation of the inverse Laplacian when crossing a boundary. We know from Theorem~\ref{thm:boundaryCrossing} that we can compute the new inverse behind some boundary with the Sherman-Morrison formula.  To ease notation and further analysis of this formula we define the following operator.

\begin{definition} 
For two vectors $\vec{v}, \vec{w} \in \R^n$ and a constant $\alpha \in \R \setminus \{0\}$ we define the Sherman-Morrison operator by
\begin{align*}
\shermanOp^{\alpha}_{\vec{v}, \vec{w}} &: \mathcal{D}_{\shermanOp^{\alpha}_{\vec{v}, \vec{w}}} \to \R^{n \times n} \\
\vec{A} &\mapsto
\bigg( \identityM - \frac{\alpha}{1 + \alpha \, \vec{v}^\top \vec{A} \vec{w}} \, \vec{A} \, \vec{w} \, \vec{v}^\top \bigg) \vec{A}
\end{align*}
where
\[
\mathcal{D}_{\shermanOp^{\alpha}_{\vec{v}, \vec{w}}} := \Big\{ \vec{A} \in \R^{n \times n} \; | \;  \vec{v}^\top \vec{A} \vec{w} \neq - \frac{1}{\alpha} \Big\}
\]
is the domain of the operator.
\end{definition}

The definition holds both for regular and singular matrices. In the case of regular matrices $\vec{A}$, the operator $\shermanOp^{\alpha}_{\vec{v}, \vec{w}}$ can be simplified by the use of the Sherman-Morrison formula.

\begin{lemma} \label{lem:shermanOpRegular}
Let $\vec{A} \in \R^{n \times n}$ be a regular matrix. Then $\vec{A} \in \mathcal{D}_{\shermanOp^{\alpha}_{\vec{v}, \vec{w}}}$ if and only if the matrix $\big( \vec{A}^{-1} + \alpha \, \vec{w} \, \vec{v}^{\top} \big)$ is regular. In this case, the Sherman-Morrison operator can be written as
\[
\shermanOp^{\alpha}_{\vec{v}, \vec{w}} ( \vec{A} ) = \big( \vec{A}^{-1} + \alpha \, \vec{w} \, \vec{v}^{\top} \big)^{-1}
.
\]
In particular, the matrix $\shermanOp^{\alpha}_{\vec{v}, \vec{w}} ( \vec{A} )$ is regular as well.
\end{lemma}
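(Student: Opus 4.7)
The plan is to reduce the statement to the classical Sherman--Morrison formula together with the matrix determinant lemma: both assertions in the lemma are encoded in the single scalar quantity $1 + \alpha \, \vec{v}^{\top} \vec{A} \vec{w}$.

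First I would establish the equivalence of the two regularity conditions. Applying the matrix determinant lemma to the rank-one update $\vec{A}^{-1} + \alpha\,\vec{w}\,\vec{v}^{\top}$, one obtains
\[
\det\!\bigl(\vec{A}^{-1} + \alpha\,\vec{w}\,\vec{v}^{\top}\bigr) \;=\; \bigl(1 + \alpha\,\vec{v}^{\top} \vec{A} \vec{w}\bigr)\det(\vec{A}^{-1}).
\]
Since $\vec{A}$ is regular, $\det(\vec{A}^{-1})\neq 0$, so the left-hand side vanishes iff $\vec{v}^{\top}\vec{A}\vec{w} = -1/\alpha$. Negating this yields exactly the defining condition $\vec{A} \in \mathcal{D}_{\shermanOp^{\alpha}_{\vec{v},\vec{w}}}$.

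Next, assuming both regularity conditions hold, I would apply the standard Sherman--Morrison formula with base matrix $\vec{B} := \vec{A}^{-1}$ and rank-one term $(\alpha\vec{w})\,\vec{v}^{\top}$:
\[
\bigl(\vec{B} + \alpha\,\vec{w}\,\vec{v}^{\top}\bigr)^{-1}
 \;=\; \vec{B}^{-1} - \frac{\alpha\,\vec{B}^{-1}\,\vec{w}\,\vec{v}^{\top}\,\vec{B}^{-1}}{1 + \alpha\,\vec{v}^{\top}\vec{B}^{-1}\vec{w}}
 \;=\; \vec{A} - \frac{\alpha}{1 + \alpha\,\vec{v}^{\top}\vec{A}\vec{w}}\,\vec{A}\,\vec{w}\,\vec{v}^{\top}\,\vec{A}.
\]
Factoring $\vec{A}$ on the right reproduces the definition of $\shermanOp^{\alpha}_{\vec{v},\vec{w}}(\vec{A})$, which proves the identity. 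The final claim that $\shermanOp^{\alpha}_{\vec{v},\vec{w}}(\vec{A})$ is regular is immediate, since it has been identified as the inverse of a regular matrix.

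I do not see any real obstacle here: the only thing that needs care is keeping track of which object plays the role of the base matrix when invoking Sherman--Morrison (namely $\vec{A}^{-1}$, not $\vec{A}$), so that the scalar denominator $1+\alpha\,\vec{v}^{\top}\vec{A}\vec{w}$ in the operator matches the non-degeneracy condition $\vec{A}\in\mathcal{D}_{\shermanOp^{\alpha}_{\vec{v},\vec{w}}}$ used in the equivalence.
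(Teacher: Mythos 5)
Your proof is correct and takes essentially the same route as the paper, which simply cites the Sherman--Morrison formula (together with the usual determinant identity implicit in it); you merely spell out the instantiation with base matrix $\vec{A}^{-1}$ and rank-one term $\alpha\,\vec{w}\,\vec{v}^{\top}$, which is exactly the intended argument. The computations check out, including the identification of the denominator $1+\alpha\,\vec{v}^{\top}\vec{A}\vec{w}$ with the domain condition defining $\mathcal{D}_{\shermanOp^{\alpha}_{\vec{v},\vec{w}}}$.
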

\begin{proof}
This is the Sherman-Morrison formula, see \cite{hager1989updating, sherman1950adjustment}.
\end{proof}

With the Sherman-Morrison operator, we can express the result of Theorem~\ref{thm:boundaryCrossing} as follows. If the solution curve crosses a boundary between the region $R_{\vec{t}^i}$ and $R_{\vec{t}^{i+1}}$ induced by some edge $\edge^*$ and the capacity of $\edge^*$ changes by $\Delta c_{e^*}$ then
\[
\laplaceMsI_{\vec{t}^{i+1}} = \shermanOp^{\Delta c_{e^*}}_{\incidenceVs_{\edge^*}, \incidenceVs_{\edge^*}} \big( \laplaceMsI_{\vec{t}^i} \big)
.
\]

By Lemma~\ref{lem:shermanOpRegular} we know that applying the Sherman-Morrison operator to some regular matrix $\vec{A}$ is the same as adding some rank $1$ matrix to the inverse $\vec{A}^{-1}$. This implies that the result of applying the Sherman-Morrison operator multiple times to some matrix $\vec{A}$ does not depend on the order. The next lemma shows that this is also true in the case if the matrix $\vec{A}$ is singular.

Since we need to deal with convergence of matrices, we define componentwise convergence as follows.
Let $\vec{A} \in \R^{n \times n}$ be some matrix with entries $a_{ij}$ and $( \vec{A}^k )_{k \in \N}$ be some sequence of matrices in $\R^{n \times n}$ with entries $a_{ij}^k$. If $\lim_{k \to \infty} a_{ij}^k = a_{ij}$ for every $i,j = 1, \dotsc, n$ then we say the sequence $\vec{A}^k$ converges to $\vec{A}$ componentwise and write $\lim_{k \to \infty} \vec{A}^k = \vec{A}$.

\begin{lemma} \label{lem:shermanOpCommutating}
For all vectors $\vec{v}^1, \vec{v}^2, \vec{w}^1, \vec{w}^2$ and constants $\alpha_1, \alpha_2 \neq 0$ we have
\begin{enumerate}[(i)]
\item 
	$
	\shermanOp^{\alpha_1}_{\vec{v}^1, \vec{w}^1} \big( \shermanOp^{\alpha_2}_{\vec{v}^2, \vec{w}^2} (\vec{A}) \big)
	=
	\shermanOp^{\alpha_2}_{\vec{v}^2, \vec{w}^2} \big( \shermanOp^{\alpha_1}_{\vec{v}^1, \vec{w}^1} (\vec{A}) \big)
	$
	for all matrices $\vec{A}$ with $\vec{A} \in \mathcal{D}_{\shermanOp^{\alpha_1}_{\vec{v}^1, \vec{w}^1}} \cap \mathcal{D}_{\shermanOp^{\alpha_2}_{\vec{v}^2, \vec{w}^2}}$ and $\shermanOp^{\alpha_2}_{\vec{v}^2, \vec{w}^2} (\vec{A}) \in \mathcal{D}_{\shermanOp^{\alpha_1}_{\vec{v}^1, \vec{w}^1}}$.
\item 
	$
	\shermanOp^{\alpha_1}_{\vec{v}^1, \vec{w}^1} \big( \shermanOp^{\alpha_2}_{\vec{v}^1, \vec{w}^1} ( \vec{A} ) \big)
	=
	\shermanOp^{\alpha_1 + \alpha_2}_{\vec{v}^1, \vec{w}^1} ( \vec{A} )
	$
	for all matrices $\vec{A}$ with $\vec{A} \in \mathcal{D}_{\shermanOp^{\alpha_2}_{\vec{v}^1, \vec{w}^1}} \cap \mathcal{D}_{\shermanOp^{\alpha_1 + \alpha_2}_{\vec{v}^1, \vec{w}^1}}$.
\end{enumerate}

\end{lemma}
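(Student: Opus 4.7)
My plan is to first dispatch the case where $\vec{A}$ is regular using Lemma~\ref{lem:shermanOpRegular}, and then extend to singular $\vec{A}$ by a perturbation/continuity argument.

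In the regular case, Lemma~\ref{lem:shermanOpRegular} lets me rewrite $\shermanOp^{\alpha}_{\vec{v},\vec{w}}(\vec{A}) = (\vec{A}^{-1} + \alpha \vec{w}\vec{v}^\top)^{-1}$, so that applying $\shermanOp$ amounts to adding a rank-one term to $\vec{A}^{-1}$. Taking inverses of both sides of (i) and iterating,
\begin{align*}
\bigl(\shermanOp^{\alpha_1}_{\vec{v}^1,\vec{w}^1}\bigl(\shermanOp^{\alpha_2}_{\vec{v}^2,\vec{w}^2}(\vec{A})\bigr)\bigr)^{-1}
&= \bigl(\shermanOp^{\alpha_2}_{\vec{v}^2,\vec{w}^2}(\vec{A})\bigr)^{-1} + \alpha_1 \vec{w}^1 (\vec{v}^1)^\top \\
&= \vec{A}^{-1} + \alpha_2 \vec{w}^2 (\vec{v}^2)^\top + \alpha_1 \vec{w}^1 (\vec{v}^1)^\top,
\end{align*}
and the inverse of the right-hand side of (i) symmetrically equals the same sum, since matrix addition commutes. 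For (ii), the same computation yields $\vec{A}^{-1} + (\alpha_1 + \alpha_2)\vec{w}^1 (\vec{v}^1)^\top$ on both sides. Since each step lands in the domain by Lemma~\ref{lem:shermanOpRegular}, this settles the regular case completely.

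To extend to singular $\vec{A}$, I would use that $\vec{A} \mapsto \shermanOp^{\alpha}_{\vec{v},\vec{w}}(\vec{A})$ is a rational function of the entries of $\vec{A}$ whose only denominator $1 + \alpha \vec{v}^\top \vec{A}\vec{w}$ is nonzero precisely on the domain; hence the operator is continuous where defined. Consider the perturbed family $\vec{A}_\epsilon := \vec{A} + \epsilon \identityM$. For all but finitely many $\epsilon$ (namely those avoiding the negative eigenvalues of $\vec{A}$), $\vec{A}_\epsilon$ is regular, so the identity holds for $\vec{A}_\epsilon$ by the regular case. Letting $\epsilon \to 0$, both sides of (i) and (ii) converge to their values at $\vec{A}$ by continuity, yielding the identity at $\vec{A}$.

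The main technical obstacle is bookkeeping the domain conditions along the perturbation: I need to know that, for $\epsilon$ in a punctured neighborhood of zero, $\vec{A}_\epsilon$ is regular and the inner as well as outer Sherman-Morrison operators in the identities are all well-defined, so that the regular-case identity is genuinely applicable. Each such condition is the non-vanishing of a scalar that depends continuously on $\epsilon$ and is nonzero at $\epsilon = 0$ by the hypotheses of the lemma. A finite intersection of open neighborhoods of $0$ therefore suffices, and the limit argument goes through. Once this is spelled out carefully, both assertions follow.
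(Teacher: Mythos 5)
Your proposal is correct and follows essentially the same route as the paper: the regular case is handled via Lemma~\ref{lem:shermanOpRegular} by reducing both sides to the same rank-one-updated inverse, and the singular case by perturbing $\vec{A}$ with a multiple of the identity (the paper uses $\vec{A} - \frac{1}{k}\identityM$), invoking openness of the domains and componentwise continuity of the operator to pass to the limit. Your explicit bookkeeping of the domain conditions along the perturbation is slightly more careful than the paper's, but the argument is the same.
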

\begin{proof}
If $\vec{A}$ is regular then both statements follow immediately from Lemma~\ref{lem:shermanOpRegular}.
If $\vec{A}$ is singular, then define $\vec{A}^k := \big( \vec{A} - \frac{1}{k} \, \identityM \big)$ for $k \in \N$. Since $\vec{A}$ has at most $n$ different eigenvalues $\det ( \vec{A}^k ) \neq 0$ for almost all $k \in \N$ and thus almost all matrices $\vec{A}^k$ are regular. Furthermore, $\lim_{k \to \infty} \vec{A}^{k} = \vec{A}$ componentwise. Since the domain $\mathcal{D}_{\shermanOp^{\alpha}_{\vec{v}, \vec{w}}}$ is an open set, almost all $\vec{A}^k$ are also in the domain $\mathcal{D}_{\shermanOp^{\alpha}_{\vec{v}, \vec{w}}}$ if $\vec{A} \in \mathcal{D}_{\shermanOp^{\alpha}_{\vec{v}, \vec{w}}}$. By the definition of the Sherman-Morrison operator, we get that
\[
\lim_{k \to \infty} \shermanOp^{\alpha}_{\vec{v}, \vec{w}} \big( \vec{A}^k \big) = \shermanOp^{\alpha}_{\vec{v}, \vec{w}} \Big( \lim_{k \to \infty} \vec{A}^k \Big)
\]
and hence both statements transfer to singular matrices as well.
\end{proof}

If the slope of the cost function of some edge tends to zero, i.e. $a_{e,t_e} \to 0$, then we have for the conductivity that $c_{e, t_e} \to \infty$. We are, thus, interested in the behavior of the operator $\shermanOp^{\Delta c_{e^*}}_{\incidenceVs_{\edge^*}, \incidenceVs_{\edge^*}}$ as $\Delta c_{e^*} \to \infty$.

\begin{definition}
For any two vectors $\vec{v}, \vec{w} \in \R^n$ we define a new operator $\shermanOp_{\vec{v}, \vec{w}}$ by
\[
\shermanOp_{\vec{v}, \vec{w}} := \bigg( \identityM - \frac{1}{\vec{v}^\top \vec{A} \vec{w}} \, \vec{A} \, \vec{w} \, \vec{v}^\top \bigg) \vec{A}
\]
with the domain
$
\mathcal{D}_{\shermanOp_{\vec{v}, \vec{w}}} := \big\{ \vec{A} \in \R^{n \times n} \; | \;  \vec{v}^\top \vec{A} \vec{w} \neq 0 \big\}
.
$
\end{definition}
By definition of the new operator $\shermanOp_{\vec{v}, \vec{w}}$ we get
\[
\lim_{\alpha \to \infty} \shermanOp^{\alpha}_{\vec{v}, \vec{w}}  = \shermanOp_{\vec{v}, \vec{w}}
\]
pointwise. Since the operators $\shermanOp_{\vec{v}, \vec{w}}^{\alpha}$ and $\shermanOp_{\vec{v}, \vec{w}}$ are continuous in every component, all properties from Lemma~\ref{lem:shermanOpCommutating} hold  true for the operator $\shermanOp_{\vec{v}, \vec{w}}$ as well. Furthermore, the operator $\shermanOp_{\vec{v}, \vec{w}}$ satisfies the following property.
\begin{lemma} \label{lem:shermanOpLimitProperties}
For some vectors $\vec{v}, \vec{w} \in \R^n$ let $\vec{A} \in \mathcal{D}_{\shermanOp_{\vec{v}, \vec{w}}}$. Then
\begin{align*}
\vec{v}^{\top}  \shermanOp_{\vec{v}, \vec{w}} ( \vec{A} ) = \vec{0} =  \shermanOp_{\vec{v}, \vec{w}} ( \vec{A} ) \, \vec{w}
.
\end{align*}
\end{lemma}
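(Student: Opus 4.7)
The plan is to verify both equalities by direct substitution of the definition of $\shermanOp_{\vec{v}, \vec{w}}$, exploiting the fact that $\vec{v}^\top \vec{A} \vec{w}$ is a scalar that can be freely moved through products. No deeper structure is needed; the only thing to watch is that the assumption $\vec{A} \in \mathcal{D}_{\shermanOp_{\vec{v}, \vec{w}}}$ guarantees $\vec{v}^\top \vec{A} \vec{w} \neq 0$, so dividing by it is legitimate.

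For the first equality, I would left-multiply the defining expression by $\vec{v}^\top$:
\begin{equation*}
\vec{v}^\top \shermanOp_{\vec{v}, \vec{w}}(\vec{A}) = \vec{v}^\top \vec{A} - \frac{1}{\vec{v}^\top \vec{A} \vec{w}}\, \vec{v}^\top \vec{A} \vec{w}\, \vec{v}^\top \vec{A}.
\end{equation*}
The scalar $\vec{v}^\top \vec{A} \vec{w}$ in the second term cancels with its reciprocal, leaving $\vec{v}^\top \vec{A} - \vec{v}^\top \vec{A} = \vec{0}$.

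For the second equality, I would right-multiply by $\vec{w}$ analogously:
\begin{equation*}
\shermanOp_{\vec{v}, \vec{w}}(\vec{A})\, \vec{w} = \vec{A}\vec{w} - \frac{1}{\vec{v}^\top \vec{A} \vec{w}}\, \vec{A}\vec{w}\,\vec{v}^\top \vec{A}\vec{w}.
\end{equation*}
Again the scalar $\vec{v}^\top \vec{A}\vec{w}$ cancels, giving $\vec{A}\vec{w} - \vec{A}\vec{w} = \vec{0}$.

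There is no real obstacle here: the lemma is essentially a one-line verification that the rank-one correction in $\shermanOp_{\vec{v}, \vec{w}}$ is precisely the projection that annihilates the range direction $\vec{A}\vec{w}$ from the left and the co-range direction $\vec{v}^\top \vec{A}$ from the right. The only subtlety worth flagging is that well-definedness of the expression on $\mathcal{D}_{\shermanOp_{\vec{v}, \vec{w}}}$ is exactly what makes the cancellation valid.
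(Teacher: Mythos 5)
Your proof is correct and is exactly what the paper does: its proof of this lemma is the one-line remark that the claim "follows directly from the definition of $\shermanOp_{\vec{v}, \vec{w}}$," and your two displayed computations are precisely that direct verification, with the scalar $\vec{v}^{\top}\vec{A}\vec{w}$ cancelling against its reciprocal. Nothing further is needed.
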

\begin{proof}
This follows directly from the definition of $\shermanOp_{\vec{v}, \vec{w}}$.
\end{proof}

\subparagraph{Entering a degenerate region}
We want to deal with the transition from a normal region to a $e$-degenerate region. Assume $R_{\vec{t}^i}$ is a region where all cost functions are strictly increasing and $R_{\vec{t}^{i+1}}$ a neighboring region where the cost function of some edge $e$ is constant. We can compute the inverse Laplacian matrix by replacing the zero slope by some small $\frac{1}{\alpha} > 0 $ and then compute the limit for $\alpha \to \infty$. Consider Figure~\ref{fig:degenerateRegion} for an example. For $\alpha \in \R$ the region $R_2$ is a normal region, the inverse of the Laplacian can be computed with the Sherman-Morrison operator (e.g. for $\alpha = 2$ in Figure~\ref{fig:degenerateRegion}(c)). The inverse in the region $R_{\vec{t}^{i+1, \alpha}}$ in dependence of the parameter $\alpha$ is thus
\begin{align*}
\laplaceMsI_{\vec{t}^{i+1, \alpha}} = \shermanOp^{\alpha - c_{e, t^i_e}}_{\hat{\vec{\gamma}}_e, \hat{\vec{\gamma}}_e} \big( \laplaceMsI_{\vec{t}^i} \big)
\end{align*}
and hence the inverse of the Laplacian in the region $R_{\vec{t}^{i+1}}$ with constant cost on edge $e$ can be computed as
\begin{align*}
\laplaceMsI_{\vec{t}^{i+1}} = \lim_{\alpha \to \infty} \shermanOp^{\alpha - c_{e, t^i_e}}_{\hat{\vec{\gamma}}_e, \hat{\vec{\gamma}}_e} \big( \laplaceMsI_{\vec{t}^i} \big)
= \shermanOp_{\hat{\vec{\gamma}}_e, \hat{\vec{\gamma}}_e} \big( \laplaceMsI_{\vec{t}^i} \big)
.
\end{align*}
By definition, the operator $\shermanOp_{\incidenceVs_\edge, \incidenceVs_\edge}$ (an thus the limit) is well-defined if $\incidenceVs_\edge^{\top} \, \laplaceMsI_{\vec{t}^i} \, \incidenceVs_\edge \neq 0$ which is always the case since $\smash{\laplaceMsI_{\vec{t}^i}}$ is positive definite by Lemma~\ref{lem:LaplacianMatrix}(\ref{lem:LaplacianMatrix:subMatrix}).
Lemma~\ref{lem:shermanOpLimitProperties} implies $\Delta \pi^{i+1}_w - \Delta \pi^{i+1}_v = \incidenceVs_e^{\top} \, \Delta \vec{\pi}^{i+1} = \incidenceVs_e^{\top} \, \shermanOp_{\hat{\vec{\gamma}}_e, \hat{\vec{\gamma}}_e} \big( \vec{L}_{\vec{t}^i}^{-1} \big) \, \Delta \vec{y} = 0$ for the edge $e =(v,w)$ with constant cost and thus the potential difference is indeed constant within this region.

With the slope $a_{e, t_e^{i+1}}$ being zero, the capacity $c_{e, t_e^{i+1}}$ is not well defined and thus the matrix $\vec{C}_{\vec{t}^{i+1}}$ is not well defined either. Hence, we can not compute the flows directly as before. The next Lemma gives a direct formula to compute the flow on all edges (in particular on the edge with constant cost) and proves that the potential direction  $\Delta \vec{\pi}^{i+1} = \vec{L}_{\vec{t}^{i+1}}^{-1} \Delta \vec{y} = \shermanOp_{\hat{\vec{\gamma}}_e, \hat{\vec{\gamma}}_e} \big( \vec{L}_{\vec{t}^i}^{-1} \big) \Delta \vec{y}$ induced by the Laplacian Matrix is indeed a valid potential driection.

\begin{lemma} \label{lem:degenerateRegionFlows}
Let $R_{\vec{t}^i}$ be some region with the inverse Laplacian matrix $\laplaceMsI_{\vec{t}^i}$ and the potential direction $\Delta \vec{\pi}^i$.
If a neighboring region $R_{\vec{t}^{i+1}}$ is $e$-degenerate then the vector
\begin{equation*}
\Delta x^{i+1}_{\altEdge} = 
\begin{cases}
c_{\altEdge, \regionI^{i+1}_{\altEdge}} \, \incidenceV_\altEdge^{\top} \, \Delta \potentialV^{i+1}
&\text{if } \altEdge \neq \edge, \\
\frac{\incidenceV_\edge^{\top} \Delta \potentialV^{i}}{\incidenceVs^{\top}_\edge \, \laplaceMsI_{\regionV^{i}} \, \incidenceVs_\edge}
&\text{if } \altEdge = \edge
\end{cases}
\end{equation*}
is a $s$-$t$-flow for the demand $r=1$ that satisfies
\begin{equation} \label{eq:flowProperty}
\aM_{\regionV^{i+1}} \, \Delta \flowV^{i+1} = \incidenceM \, \Delta \potentialV^{i+1}
\end{equation}
where $\Delta \potentialV^{i+1} = \shermanOp_{\incidenceVs_\edge, \incidenceVs_\edge} \big( \laplaceMsI_{\regionV^i} \big) \, \Delta \excessV$.
\end{lemma}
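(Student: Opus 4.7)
The plan is to realize the degenerate region $R_{\vec{t}^{i+1}}$ as the limit of non-degenerate perturbations and transfer both claimed identities through continuity. Concretely, I would fix a parameter $\alpha > c_{e, t^i_e}$ and consider a perturbed region vector in which the slope $a_{e,t^{i+1}_e}=0$ is replaced by $1/\alpha$, so that the conductance on edge $e$ becomes $\alpha$ while all other conductances stay fixed. In this perturbed region the reduced Laplacian is invertible by Lemma~\ref{lem:LaplacianMatrix}(\ref{lem:LaplacianMatrix:subMatrix}) and, by Theorem~\ref{thm:boundaryCrossing}, its inverse equals $\hat{\vec{L}}_{\vec{t}^{i+1},\alpha}^{-1} = \shermanOp^{\alpha - c_{e,t^i_e}}_{\hat{\vec{\gamma}}_e,\hat{\vec{\gamma}}_e}(\hat{\vec{L}}_{\vec{t}^i}^{-1})$. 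Defining $\Delta\vec{\pi}^{i+1,\alpha}$ from this inverse and setting $\Delta x^{i+1,\alpha}_{e'} = c_{e', t^{i+1}_{e'}}\vec{\gamma}_{e'}^{\top}\Delta\vec{\pi}^{i+1,\alpha}$ for all edges (with $c_{e,t^{i+1}_e}=\alpha$) then yields flow conservation $\vec{\Gamma}\Delta\vec{x}^{i+1,\alpha} = \Delta\vec{y}$ and $\aM_{\vec{t}^{i+1,\alpha}}\Delta\vec{x}^{i+1,\alpha} = \vec{\Gamma}^{\top}\Delta\vec{\pi}^{i+1,\alpha}$ by construction.

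Next I would let $\alpha\to\infty$. Pointwise convergence $\shermanOp^{\alpha - c_{e,t^i_e}}_{\hat{\vec{\gamma}}_e,\hat{\vec{\gamma}}_e} \to \shermanOp_{\hat{\vec{\gamma}}_e,\hat{\vec{\gamma}}_e}$ gives $\Delta\vec{\pi}^{i+1,\alpha} \to \Delta\vec{\pi}^{i+1}$, so for every non-degenerate edge $e'\neq e$ the perturbed flow $\Delta x^{i+1,\alpha}_{e'}$ converges to $c_{e',t^{i+1}_{e'}}\vec{\gamma}_{e'}^{\top}\Delta\vec{\pi}^{i+1}$, exactly matching the value declared in the lemma. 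The main obstacle is the indeterminate limit $\lim_{\alpha\to\infty}\alpha\,\vec{\gamma}_e^{\top}\Delta\vec{\pi}^{i+1,\alpha}$ at the degenerate edge, where $\alpha\to\infty$ while $\vec{\gamma}_e^{\top}\Delta\vec{\pi}^{i+1}=0$ by Lemma~\ref{lem:shermanOpLimitProperties}.

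To resolve this indeterminacy I would substitute the explicit Sherman--Morrison formula and cancel the common factor $\hat{\vec{\gamma}}_e^{\top}\hat{\vec{L}}_{\vec{t}^i}^{-1}$, then divide numerator and denominator by $\alpha$ to get
\[
\alpha\,\hat{\vec{\gamma}}_e^{\top}\hat{\vec{L}}_{\vec{t}^{i+1},\alpha}^{-1}\Delta\hat{\vec{y}}
= \frac{\hat{\vec{\gamma}}_e^{\top}\hat{\vec{L}}_{\vec{t}^i}^{-1}\Delta\hat{\vec{y}}}{\tfrac{1}{\alpha} + \bigl(1-\tfrac{c_{e,t^i_e}}{\alpha}\bigr)\hat{\vec{\gamma}}_e^{\top}\hat{\vec{L}}_{\vec{t}^i}^{-1}\hat{\vec{\gamma}}_e}.
\]
The denominator is strictly positive for $\alpha$ large enough because $\hat{\vec{L}}_{\vec{t}^i}^{-1}$ is positive definite (Lemma~\ref{lem:LaplacianMatrix}(\ref{lem:LaplacianMatrix:subMatrix})), and its limit as $\alpha\to\infty$ equals $\hat{\vec{\gamma}}_e^{\top}\hat{\vec{L}}_{\vec{t}^i}^{-1}\hat{\vec{\gamma}}_e$, so the expression converges to $\vec{\gamma}_e^{\top}\Delta\vec{\pi}^i / (\hat{\vec{\gamma}}_e^{\top}\hat{\vec{L}}_{\vec{t}^i}^{-1}\hat{\vec{\gamma}}_e)$, precisely the value of $\Delta x^{i+1}_e$ asserted in the lemma. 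Componentwise convergence $\Delta\vec{x}^{i+1,\alpha}\to\Delta\vec{x}^{i+1}$ then transfers flow conservation through the limit, giving $\vec{\Gamma}\Delta\vec{x}^{i+1} = \Delta\vec{y}$, so $\Delta\vec{x}^{i+1}$ is an $s$-$t$-flow of demand $r=1$.

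For the remaining identity $\aM_{\vec{t}^{i+1}}\Delta\vec{x}^{i+1} = \vec{\Gamma}^{\top}\Delta\vec{\pi}^{i+1}$ I would argue directly, edge by edge, rather than passing to the limit. For every $e'\neq e$ we have $a_{e',t^{i+1}_{e'}}\Delta x^{i+1}_{e'} = a_{e',t^{i+1}_{e'}}c_{e',t^{i+1}_{e'}}\vec{\gamma}_{e'}^{\top}\Delta\vec{\pi}^{i+1} = \vec{\gamma}_{e'}^{\top}\Delta\vec{\pi}^{i+1}$ since $ac=1$ on a non-degenerate edge. For $e'=e$ both sides vanish: $a_{e,t^{i+1}_e}=0$ by definition of an $e$-degenerate region, while $\vec{\gamma}_e^{\top}\Delta\vec{\pi}^{i+1} = \hat{\vec{\gamma}}_e^{\top}\shermanOp_{\hat{\vec{\gamma}}_e,\hat{\vec{\gamma}}_e}(\hat{\vec{L}}_{\vec{t}^i}^{-1})\Delta\hat{\vec{y}} = 0$ by Lemma~\ref{lem:shermanOpLimitProperties}. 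This covers all cases and completes the plan.
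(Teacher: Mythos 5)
Your proposal is correct and follows essentially the same route as the paper: perturb the zero slope on the degenerate edge to $1/\alpha$, identify the perturbed inverse via the Sherman--Morrison update, and let $\alpha\to\infty$, resolving the indeterminate limit on edge $e$ by expanding the Sherman--Morrison formula to obtain $\vec{\gamma}_e^{\top}\Delta\vec{\pi}^i/(\hat{\vec{\gamma}}_e^{\top}\hat{\vec{L}}_{\vec{t}^i}^{-1}\hat{\vec{\gamma}}_e)$. Your direct edge-by-edge verification of \eqref{eq:flowProperty} at the end is a slightly more explicit treatment of a step the paper handles by asserting the identity holds for each perturbed flow and passing to the limit, but it is the same argument in substance.
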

\begin{proof}
Assume the slope of the cost of edge $\edge$ in region $R_{\regionV^{i+1}}$ is $\tilde{a}_{\edge, t^{i+1}_\edge} = \frac{1}{\alpha}$. Then 
\[
\Delta \tilde{\flowV}^{i+1} = \tilde{\cM}_{\regionV^{i+1}} \, \incidenceM \, \Delta \tilde{\potentialV}^{i+1}
\]
is a $s$-$t$-flow that satisfies \eqref{eq:flowProperty} by definition. Taking the limit $\alpha \to \infty$ we get for $\altEdge \neq \edge$ that
\begin{align*}
\lim_{\alpha \to \infty} \Delta \tilde{\flow}^{i+1}_{\altEdge} &= \lim_{\alpha \to \infty} c_{\altEdge, t_{\altEdge}^{i+1}} \, \incidenceV_{\altEdge}^{\top} \,  \Delta \tilde{\potentialV}^{i+1} \\
&= \lim_{\alpha \to \infty} c_{\altEdge, t_{\altEdge}^{i+1}} \, \incidenceV_{\altEdge}^{\top} \, \shermanOp^{\alpha - c_{\altEdge, t_{\altEdge}^{i}}}_{\incidenceVs_\altEdge, \incidenceVs_\altEdge} \Big( \laplaceMsI_{\regionV^i} \Big) \, \Delta \excessV \\
&= c_{\altEdge, t_{\altEdge}^{i+1}} \, \incidenceV_{\altEdge}^{\top} \, \shermanOp_{\incidenceVs_\altEdge, \incidenceVs_\altEdge} \Big( \laplaceMsI_{\regionV^i} \Big) \, \Delta \excessV \\
&= c_{\altEdge, t_{\altEdge}^{i+1}} \, \incidenceV_{\altEdge}^{\top} \, \Delta \potentialV^{i+1}
= \Delta \flow_{\altEdge}^{i+1}
\end{align*}
and for $\altEdge = \edge$ that
\begin{align*}
\lim_{\alpha \to \infty} \Delta \tilde{\flow}^{i+1}_{\altEdge} &= \lim_{\xi \to \infty} \alpha \,  \incidenceV_{\edge}^{\top} \,  \Delta \tilde{\potentialV}^{i+1} \\
&=\lim_{\alpha \to \infty} \alpha \,  \incidenceV_{\edge}^{\top} \,  \shermanOp^{\alpha - c_{\edge, t_{\edge}^{i}}}_{\incidenceVs_\edge, \incidenceVs_\edge} \Big( \laplaceMsI_{\regionV^i} \Big) \, \Delta \excessV \\
&=\lim_{\alpha \to \infty} \alpha \, \bigg(  1 - \frac{(\alpha - c_{\edge, t_{\edge}^{i}}) \, \incidenceV_\edge^\top \, \laplaceMsI_{\regionV^i} \incidenceV_\edge}{1 + (\alpha - c_{\edge, t_{\edge}^{i}}) \, \incidenceVs_\edge^\top \, \laplaceMsI_{\regionV^i} \, \incidenceVs_\edge} \bigg) \, \incidenceVs_\edge^\top \, \laplaceMsI_{\regionV^i} \, \Delta \excessVs \\
&=\lim_{\alpha \to \infty} \frac{\alpha}{1 + (\alpha - c_{\edge, t_{\edge}^{i}}) \, \incidenceVs_\edge^\top \, \laplaceMsI_{\regionV^i} \, \incidenceVs_\edge} \, \incidenceVs_\edge^\top \, \laplaceMsI_{\regionV^i} \, \Delta \excessVs \\
&= \frac{\incidenceV_\edge^\top \, \Delta \potentialV^{i}}{\incidenceVs_\edge^\top \, \laplaceMsI_{\regionV^i} \, \incidenceVs_\edge} = \Delta \flow_{\edge}^{i+1}
.
\end{align*}
\end{proof}
Note that the flow $\Delta x_{e}^{i+1} = \frac{\incidenceV_\edge^{\top} \Delta \potentialV^{i}}{\incidenceVs^{\top}_\edge \, \laplaceMsI_{\regionV^{i}} \, \incidenceVs_\edge}$ on the edge $e$ with constant cost is well-defined since $\smash{\incidenceVs^{\top}_\edge \, \laplaceMsI_{\regionV^{i}} \, \incidenceVs_\edge} > 0$. Further, this implies that the boundary crossing is well-defined in the following sense: if the potential difference $\incidenceV_\edge^{\top} \Delta \potentialV^{i}$ was positive before crossing the boundary, the flow was increasing on the edge $e$. Since $\Delta x_{e}^{i+1}$ is positive in this case, too, the flow does not hit the same breakpoint again. The same holds if the potential difference $\incidenceV_\edge^{\top} \Delta \potentialV^{i}$ was negative (we will formalize this in Lemma~\ref{lem:flowsBoundaryCrossing}).

Lemma~\ref{lem:degenerateRegionFlows} shows that the flow $\Delta \vec{x}^{i+1}$ is a $s$-$t$-flow. By Lemma~\ref{lem:characterisationWEPotentials} and with $\eqref{eq:flowProperty}$ we know that this flow is a Wardrop equilibrium for the cost functions $l_e(x) = a_e \, x$ with the potentials $\Delta \vec{\pi}^{i+1}$. By linearity, adding the flow $\Delta \vec{x}^{i+1}$ to the  Wardrop equilibrium $\vec{x}^i$ and the potentials $\Delta \vec{\pi}^{i+1}$ to the potentials $\vec{\pi}^i$ yields again a Wardrop flow and potential by Lemma~\ref{lem:characterisationWEPotentials}. Thus, using the potential direction $\Delta \vec{\pi}$ and the flow direction $\Delta \vec{x}$ is the justified.

\subparagraph{Finding the closest boundary}
Now assume, the region $R_{\regionV^i}$ is a degenerate region. The distance to the next boundary can be computed by
\[
\epsilon (\altEdge) = 
\begin{cases}
		\frac{\sigma_{\altEdge, t^i_e+1} - \vec \gamma_\altEdge^{\top} \vec \pi^i}{\vec \gamma_\altEdge^{\top} \Delta \vec \pi^i}  & \text { if } \vec \gamma_\altEdge^{\top} \Delta \vec \pi^i > 0,\\
		\frac{\sigma_{\altEdge, t^i_e\phantom{+1}} - \vec \gamma_\altEdge^{\top} \vec \pi^i}{\vec \gamma_\altEdge^{\top} \Delta \vec \pi^i} & \text{ if } \vec \gamma_\altEdge^{\top} \Delta \vec \pi^i < 0,\\
		\infty & \text{ else.}
	\end{cases}
\]
for all edges $\altEdge \neq \edge$ as before. For the edge $\edge$ we can use the flow formulation and compute
\[
\epsilon( \edge) = 
\begin{cases}
		\frac{\tau_{\edge, \regionI^{i}_\edge + 1} - \flow_\edge^i}{\Delta \flow^{i}_\edge}  & \text { if } \Delta \flow_\edge^{i} > 0,\\
		\frac{\tau_{e, t^i_e\phantom{+1}} - \flow_\edge^i}{\Delta \flow^{i}_\edge} & \text{ if }  \Delta \flow_\edge^{i} < 0,\\
		\infty & \text{ else}
	\end{cases} 
\]
where $\flow_\edge^i$ is the flow on edge $e$ when entering the region $R_{\vec{t}^i}$. 
Depending on whether the minimal epsilon is induced by some $\altEdge \neq \edge$ or by $\edge$ the solution curve either crosses a boundary within the degenerate region or leaves the degenerate region.

\subparagraph{Crossing a boundary within the degenerate region}

Assume that the solution curve hits a boundary induced by some edge $\altEdge$ within a $e$-degenerate region. The next lemma proves that the inverse Laplacian matrix can be computed as in the non-degenerate case with the Sherman-Morrison operator.

To compute the flow on the edge $e$ with constant cost, we first have to introduce some new notation. As we have seen in Lemma~\ref{lem:degenerateRegionFlows}, when entering the $e$-degenerate region, the flow on this edge can be computed depending on the Laplacian matrix and the direction $\Delta \vec{\pi}$ in a neighboring, non-$e$-degenerate region.
This implies that we need a non-$e$-degenerate, neighboring region to compute the flow.
Recall, that two regions $R_{\vec{t}^1}, R_{\vec{t}^2}$ are called neighboring, if there is an edge $e$ such that the vectors $\vec{t}^1, \vec{t}^2$ satisfy $t^1_e = t^2_e \pm 1$ and $t^1_\altEdge = t^2_\altEdge$ for all edges $\altEdge \neq e$. We denote by
\[
\neighboringR_{\vec{t}}^e := \{ \vec{t}' \in \bar{T} : t'_e = t_e \pm 1 \text{ and } t'_\altEdge = t_\altEdge \text{ for all } \altEdge \neq e \}
\]
the set of region vectors of regions neighboring $R_{\vec{t}}$ with the common boundary induced by edge $e$. We call these regions $e$-neighboring. The set $\neighboringR_{\vec{t}} := \bigcup_{e \in E} \neighboringR_{\vec{t}}^e$ contains the region vectors of all neighboring regions.
Note that the set $\neighboringR_{\vec{t}}^e$ contains  either no element, one element or two elements. Further, if $R_{\vec{t}}$ is $e$-degenerate, the regions $R_{\vec{t}'}$ with $\vec{t}' \in \neighboringR_{\vec{t}}^e$ are non-$e$-degenerate.

\begin{lemma} \label{lem:degenerateBoundaryCrossing}
Let $R_{\vec{t}^i}$ be a $e$-degenerate region and $R_{\vec{t}^{i+1}}$ a neighboring $e$-degenerate region where the boundary is induced by some edge $\altEdge \neq e$. Then
\begin{align*}
\laplaceMsI_{\vec{t}^{i+1}} = \shermanOp^{\Delta c_{\altEdge}}_{\incidenceVs_\altEdge, \incidenceVs_\altEdge} \big( \laplaceMsI_{\vec{t}^i} \big)
.
\end{align*}
The flow direction on the edge $e$ with constant cost can be computed as
\[
\Delta x^{i+1}_e = \frac{\vec{\gamma}_e^{\top} \Delta \vec{\pi}^{i+1}_{-e}}{\incidenceVs_\edge^{\top} \laplaceMsI_{\vec{t}^{i+1}_{-e}} \incidenceVs_\edge}
\]
where $\vec{t}^{i+1}_{-e} \in \neighboringR_{\vec{t}^{i+1}}^e$ is some $e$-neighboring region with the Laplace inverse $\laplaceMsI_{\vec{t}^{i+1}_{-e}}$ and the potential direction $\Delta \vec{\pi}^{i+1}_{-e}$ defined by $\Delta \hat{\vec{\pi}}^{i+1}_{-e} := \laplaceMsI_{\vec{t}^{i+1}_{-e}} \, \Delta \hat{\vec{y}}$.
\end{lemma}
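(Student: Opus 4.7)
The plan is to reduce the degenerate case to the non-degenerate boundary crossing of Theorem~\ref{thm:boundaryCrossing} by momentarily ``thawing'' the constant cost on edge $e$. Since $R_{\vec{t}^i}$ is $e$-degenerate, the set $\neighboringR_{\vec{t}^i}^{e}$ is non-empty and any $\vec{t}^i_{-e} \in \neighboringR_{\vec{t}^i}^{e}$ corresponds to a non-$e$-degenerate region. Because $R_{\vec{t}^i}$ and $R_{\vec{t}^{i+1}}$ are $\altEdge$-neighboring with $\altEdge \neq e$, the coordinate $t_e$ agrees in $\vec{t}^i$ and $\vec{t}^{i+1}$, so the same shift of $t_e$ that produces $\vec{t}^i_{-e}$ from $\vec{t}^i$ produces a $\vec{t}^{i+1}_{-e} \in \neighboringR_{\vec{t}^{i+1}}^{e}$ from $\vec{t}^{i+1}$. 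The four region vectors $\vec{t}^i, \vec{t}^{i+1}, \vec{t}^i_{-e}, \vec{t}^{i+1}_{-e}$ therefore form a ``commutative square'': crossing the $\altEdge$-boundary commutes with thawing the $e$-coordinate.

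First I would establish the Laplacian identity. By Theorem~\ref{thm:boundaryCrossing}\eqref{thm:boundaryCrossing:Laplacians} applied to the non-degenerate pair $(\vec{t}^i_{-e}, \vec{t}^{i+1}_{-e})$, we have $\laplaceMsI_{\vec{t}^{i+1}_{-e}} = \shermanOp^{\Delta c_{\altEdge}}_{\incidenceVs_\altEdge, \incidenceVs_\altEdge}\bigl(\laplaceMsI_{\vec{t}^i_{-e}}\bigr)$. From the limit argument preceding Lemma~\ref{lem:degenerateRegionFlows}, the step from a non-$e$-degenerate region to its $e$-degenerate $e$-neighbor amounts to applying the operator $\shermanOp_{\incidenceVs_e, \incidenceVs_e}$ to the inverse reduced Laplacian, so $\laplaceMsI_{\vec{t}^i} = \shermanOp_{\incidenceVs_e, \incidenceVs_e}\bigl(\laplaceMsI_{\vec{t}^i_{-e}}\bigr)$ and analogously for the $(i+1)$-index. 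Composing these three identities and using part~$(i)$ of Lemma~\ref{lem:shermanOpCommutating} to swap $\shermanOp_{\incidenceVs_e, \incidenceVs_e}$ past $\shermanOp^{\Delta c_{\altEdge}}_{\incidenceVs_\altEdge, \incidenceVs_\altEdge}$ (the underlying vectors $\incidenceVs_e$ and $\incidenceVs_\altEdge$ are distinct since $e \neq \altEdge$) yields
\begin{equation*}
\laplaceMsI_{\vec{t}^{i+1}} = \shermanOp_{\incidenceVs_e, \incidenceVs_e}\bigl(\laplaceMsI_{\vec{t}^{i+1}_{-e}}\bigr) = \shermanOp_{\incidenceVs_e, \incidenceVs_e} \shermanOp^{\Delta c_{\altEdge}}_{\incidenceVs_\altEdge, \incidenceVs_\altEdge}\bigl(\laplaceMsI_{\vec{t}^i_{-e}}\bigr) = \shermanOp^{\Delta c_{\altEdge}}_{\incidenceVs_\altEdge, \incidenceVs_\altEdge}\bigl(\laplaceMsI_{\vec{t}^i}\bigr),
\end{equation*}
which is the desired formula.

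For the flow identity on the constant-cost edge $e$, I would apply Lemma~\ref{lem:degenerateRegionFlows} to the transition from the non-$e$-degenerate region $R_{\vec{t}^{i+1}_{-e}}$ into the $e$-degenerate region $R_{\vec{t}^{i+1}}$. That lemma gives exactly the flow on the constant-cost edge as $\Delta x^{i+1}_e = \vec{\gamma}_e^{\top}\Delta \vec{\pi}^{i+1}_{-e}\,/\,(\incidenceVs_\edge^{\top}\laplaceMsI_{\vec{t}^{i+1}_{-e}}\incidenceVs_\edge)$, where $\Delta \vec{\pi}^{i+1}_{-e}$ is the direction induced by $\laplaceMsI_{\vec{t}^{i+1}_{-e}}$ and $\Delta\hat{\vec{y}}$, matching the statement. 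Since Lemma~\ref{lem:degenerateRegionFlows} also guarantees that the resulting $\Delta\vec{x}^{i+1}$ is a legitimate $s$-$t$-flow satisfying $\aM_{\vec{t}^{i+1}}\Delta\vec{x}^{i+1} = \incidenceM\Delta\vec{\pi}^{i+1}$, this provides a well-defined continuation for the algorithm.

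The main obstacle is ensuring that the commutation step is legal: I have to verify that all four Laplacian inverses in the square lie in the respective operator domains, i.e.\ that $\incidenceVs_\altEdge^{\top}\laplaceMsI_{\vec{t}^i_{-e}}\incidenceVs_\altEdge \neq -1/\Delta c_\altEdge$ and that $\incidenceVs_e^{\top}\laplaceMsI_{\vec{t}^{i+1}_{-e}}\incidenceVs_e \neq 0$. The first holds because the corresponding reduced Laplacians are positive definite by Lemma~\ref{lem:LaplacianMatrix}\eqref{lem:LaplacianMatrix:subMatrix} (so the rank-one Sherman-Morrison update remains regular, matching the fact that $R_{\vec{t}^{i+1}_{-e}}$ is a genuine non-degenerate region); the second holds for the same positive-definiteness reason applied to $\laplaceMsI_{\vec{t}^{i+1}_{-e}}$, which in turn guarantees the flow formula is well defined. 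With these domain conditions checked, Lemma~\ref{lem:shermanOpCommutating}\,$(i)$ applies verbatim and the identity closes.
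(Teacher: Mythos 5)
Your proposal is correct and follows essentially the same route as the paper's proof: the same commutative square of the four region vectors $\vec{t}^i, \vec{t}^{i+1}, \vec{t}^i_{-e}, \vec{t}^{i+1}_{-e}$, the same application of Theorem~\ref{thm:boundaryCrossing} to the non-degenerate pair followed by commuting $\shermanOp_{\incidenceVs_e,\incidenceVs_e}$ past $\shermanOp^{\Delta c_{\altEdge}}_{\incidenceVs_\altEdge,\incidenceVs_\altEdge}$ via Lemma~\ref{lem:shermanOpCommutating}, and the same invocation of Lemma~\ref{lem:degenerateRegionFlows} for the flow formula. Your explicit verification of the operator-domain conditions is slightly more careful than the paper, which leaves those checks implicit.
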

\begin{proof}
Let $\vec{t}^{i}_{-e} \in \neighboringR_{\vec{t}^i}^e$ be a region vector of a $e$-neighboring region of $R_{\vec{t}^{i}}$. Then the region $R_{\vec{t}^i_{-e}}$ is non-$e$-degnerate. Then 
\[
\laplaceMsI_{\vec{t}^{i}} = \shermanOp_{\incidenceVs_\edge, \incidenceVs_\edge} \big( \laplaceMsI_{\vec{t}^{i}_{-e}} \big) = \lim_{\alpha \to \infty} \shermanOp^{\alpha - c_{e, t^{i-1}_e}}_{\incidenceVs_\edge, \incidenceVs_\edge} \big( \laplaceMsI_{\vec{t}^{i}_{-e}} \big)
.
\]
Let $\vec{t}^{i+1}_{-e} \in \neighboringR_{\vec{t}^i_{-e}} \cap \neighboringR_{\vec{t}^{i+1}}$. Since $R_{\vec{t}^i}$ and $R_{\vec{t}^i}$ are $\altEdge$-neighboring, the regions $R_{\vec{t}^{i}_{-e}}$ and $R_{\vec{t}^{i+1}_{-e}}$ are $\altEdge$-neighboring as well. Thus,
$\laplaceMsI_{\vec{t}^{i+1}_{-e}} = \shermanOp^{\Delta c_{\altEdge}}_{\incidenceVs_{\altEdge}, \incidenceVs_{\altEdge}} \big( \laplaceMsI_{\vec{t}^i_{-e}} \big)$ and with Lemma~\ref{lem:shermanOpCommutating} we get
\[
\laplaceMsI_{\vec{t}^{i+1}} = \shermanOp_{\incidenceVs_e, \incidenceVs_e} \big( \laplaceMsI_{\vec{t}^{i+1}_{-e}}  \big)
= \shermanOp_{\incidenceVs_e, \incidenceVs_e} \bigg( \shermanOp^{\Delta c_{\altEdge}}_{\incidenceVs_{\altEdge}, \incidenceVs_{\altEdge}} \big( \laplaceMsI_{\vec{t}^i_{-e}} \big) \bigg) = \shermanOp^{\Delta c_{\altEdge}}_{\incidenceVs_{\altEdge}, \incidenceVs_{\altEdge}} \bigg( \shermanOp_{\incidenceVs_e, \incidenceVs_e} \big( \laplaceMsI_{\vec{t}^i_{-e}} \big) \bigg)
=\shermanOp^{\Delta c_{\altEdge}}_{\incidenceVs_\altEdge, \incidenceVs_\altEdge} \big( \laplaceMsI_{\vec{t}^i} \big)
.
\]
The formula for the flow direction $\Delta x^{i+1}_e$ follows directly from Lemma~\ref{lem:degenerateRegionFlows}.
\end{proof}
\begin{remark}
Lemma~\ref{lem:degenerateBoundaryCrossing} is also true if the region $R_{\vec{t}^{i+1}}$ is $\altEdge$-degenerate. The new inverse can be computed as $\laplaceMsI_{\vec{t}^{i+1}} = \shermanOp_{\incidenceVs_\altEdge, \incidenceVs_\altEdge} \big( \laplaceMsI_{\vec{t}^i} \big)$ in this case. Further, Lemma~\ref{lem:degenerateBoundaryCrossing} relies on the fact that there is some $e$-neighboring region $R_{\vec{t}^{i+1}_{-e}}$. It may be the case that the neighborhood $\neighboringR_{\vec{t}^{i+1}}^e$ is empty. In this case, we can introduce a new, artificial region where the edge $e$ has some stricly positive slope (e.g. $a_e = 1$), compute the inverse Laplacian matrix for this region and use this region as $e$-neighboring region.
\end{remark}

The next Lemma proves that all boundary crossings are well-defined, i.e. that $\Delta x^i_e$ and $\Delta x^{i+1}_e$ have the same sign when crossing a boundary that is induced by some edge $e$. 

\begin{lemma} \label{lem:flowsBoundaryCrossing}
When the solution curve crosses a boundary induced by some edge $e$ between two regions $R_{\vec{t}^1}$ and $R_{\vec{t}^2}$, we have
\[
\sgn( \Delta x^1_e ) = \sgn( \Delta x^2_e )
.
\]
\end{lemma}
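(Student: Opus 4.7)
The statement asserts that the flow sign on the boundary edge $e$ is preserved across the boundary crossing, which is the flow-space analogue of Theorem~\ref{thm:boundaryCrossing}(\ref{thm:boundaryCrossing:Directions}). I would proceed by a case analysis on whether either of the two regions $R_{\vec{t}^1}, R_{\vec{t}^2}$ is $e$-degenerate. Note first that these regions cannot both be $e$-degenerate: the boundary being induced by $e$ means the segments $t^1_e$ and $t^2_e$ of $l_e$ differ, so their slopes $a_{e,t^1_e}, a_{e,t^2_e}$ are the slopes of distinct adjacent linear pieces and, by the convention that the piecewise linear representation collects maximal segments of equal slope, cannot both equal zero.

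If neither region is $e$-degenerate, the flow on $e$ is given by the standard formula $\Delta x^j_e = c_{e,t^j_e}\,\vec\gamma_e^\top\Delta\vec\pi^j$ with $c_{e,t^j_e} > 0$, and Theorem~\ref{thm:boundaryCrossing}(\ref{thm:boundaryCrossing:Directions}) directly yields $\sgn(\vec\gamma_e^\top\Delta\vec\pi^1) = \sgn(\vec\gamma_e^\top\Delta\vec\pi^2)$, from which the desired equality of signs follows.

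The interesting case is when exactly one of the two regions, say $R_{\vec{t}^1}$, is $e$-degenerate and $R_{\vec{t}^2}$ is not. Then $R_{\vec{t}^2}$ is itself a non-$e$-degenerate $e$-neighbor of $R_{\vec{t}^1}$ and thus a valid choice for the auxiliary region in Lemma~\ref{lem:degenerateRegionFlows} (equivalently, in Lemma~\ref{lem:degenerateBoundaryCrossing} taking $\vec{t}^1_{-e} := \vec{t}^2$), giving
\[
\Delta x^1_e \;=\; \frac{\vec\gamma_e^\top \Delta\vec\pi^2}{\hat{\vec\gamma}_e^\top\, \hat{\vec L}_{\vec{t}^2}^{-1}\,\hat{\vec\gamma}_e}.
\]
On the other hand, $\Delta x^2_e = c_{e,t^2_e}\,\vec\gamma_e^\top\Delta\vec\pi^2$ with $c_{e,t^2_e} > 0$, so the two signs coincide provided the denominator above is strictly positive. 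The main obstacle in the argument is precisely this positivity: it follows because $\hat{\vec L}_{\vec{t}^2}^{-1}$, being the inverse of a positive definite matrix by Lemma~\ref{lem:LaplacianMatrix}(\ref{lem:LaplacianMatrix:subMatrix}), is itself positive definite, and hence $\hat{\vec\gamma}_e^\top \hat{\vec L}_{\vec{t}^2}^{-1}\hat{\vec\gamma}_e > 0$ whenever $\hat{\vec\gamma}_e \neq \vec 0$, which is the case since $e$ is not a loop incident to the source (the only vertex removed from $\hat\vec\gamma_e$). Putting the two cases together, $\sgn(\Delta x^1_e) = \sgn(\vec\gamma_e^\top\Delta\vec\pi^2) = \sgn(\Delta x^2_e)$, completing the proof.
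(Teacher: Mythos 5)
Your treatment of the mixed case (exactly one region $e$-degenerate) matches the paper's, and your observation that both regions cannot be $e$-degenerate is correct and is also used implicitly in the paper. However, your first case contains a genuine gap. You reduce ``neither region is $e$-degenerate'' to Theorem~\ref{thm:boundaryCrossing}(\ref{thm:boundaryCrossing:Directions}), but that theorem's proof rests on the determinant identity together with the \emph{positive definiteness} of both reduced Laplacians (Lemma~\ref{lem:LaplacianMatrix}(\ref{lem:LaplacianMatrix:subMatrix})). In the constant-cost setting where this lemma lives, the two regions may both be $\tilde e$-degenerate for some \emph{other} edge $\tilde e \neq e$ (crossing a boundary induced by $e$ does not change the segment of $\tilde e$, so such degeneracy persists on both sides). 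In that situation $\hat{\vec L}_{\vec t^1}^{-1}$ and $\hat{\vec L}_{\vec t^2}^{-1}$ are the singular limit matrices obtained via $\shermanOp_{\hat{\vec\gamma}_{\tilde e},\hat{\vec\gamma}_{\tilde e}}$; they are only positive semi-definite, so the sign-preservation argument of Theorem~\ref{thm:boundaryCrossing}(\ref{thm:boundaryCrossing:Directions}) does not apply as stated.

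The paper handles exactly this case by a separate computation: using Lemma~\ref{lem:degenerateBoundaryCrossing} it writes
\[
\Delta x^2_e \;=\; \frac{c_{e,t^2_e}}{c_{e,t^1_e}\bigl(1 + \Delta c_e\, \hat{\vec\gamma}_e^{\top}\hat{\vec L}_{\vec t^1}^{-1}\hat{\vec\gamma}_e\bigr)}\,\Delta x^1_e ,
\]
and the crux is to show that the denominator is positive. This does not follow from semi-definiteness alone when $\Delta c_e < 0$; the paper invokes the effective-resistance bound $\hat{\vec\gamma}_e^{\top}\hat{\vec L}_{\vec t^1}^{-1}\hat{\vec\gamma}_e \leq 1/c_{e,t^1_e}$ (Theorem~\ref{thm:effectiveTravelTime}(\ref{thm:effectiveTravelTime:distance})) together with $\Delta c_e > -c_{e,t^1_e}$ to conclude $1 + \Delta c_e\,\hat{\vec\gamma}_e^{\top}\hat{\vec L}_{\vec t^1}^{-1}\hat{\vec\gamma}_e > 0$. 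Your proposal is missing both the case split and this positivity argument, so it does not establish the lemma in the presence of constant-cost edges other than $e$ -- which is the main situation the lemma is needed for.
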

\begin{proof}
Note, that all capacity values $c_{e, t_e} = \frac{1}{a_{e, t_e}}$ are strictly positive.  If the the regions $R_{\vec{t}^1}, R_{\vec{t}^2}$ are non-degenerate, the claim follows directly from Theorem~\ref{thm:boundaryCrossing}(\ref{thm:boundaryCrossing:Directions}) since $\Delta x_e = c_{e, t_e} \, \incidenceV_e^{\top} \Delta \vec{\pi}$. \\
If $R_{\vec{t}^1}$ is non-$e$-degenerate and $R_{\vec{t}^2}$ is $e$-degenerate, then Lemma~\ref{lem:degenerateRegionFlows} implies that 
\[
\Delta x^2_e = \frac{\incidenceV_e^\top \, \Delta \vec{\pi}^1}{\incidenceVs_e^\top \laplaceMsI_{\vec{t}^1} \incidenceVs_e^\top} = \frac{1}{c_{e, t^1_e} \, \incidenceVs_e^\top \laplaceMsI_{\vec{t}^1} \incidenceVs_e^\top} \Delta x^1_e
.
\]
Since $\incidenceVs_e^\top \laplaceMsI_{\vec{t}^1} \incidenceVs_e^\top > 0$ by Lemma~\ref{lem:LaplacianMatrix}(\ref{lem:LaplacianMatrix:subMatrix}) the claim follows. By symmetry, this also holds if $R_{\vec{t}^1}$ is $e$-degenerate and $R_{\vec{t}^1}$ is non-$e$-degenerate.

If the regions $R_{\vec{t}^1}, R_{\vec{t}^2}$ are degenerate they both are $\altEdge$-degenerate for some other $\altEdge \neq e$. (If, say, $R_{\vec{t}^1}$ was $e$-degenerate, then after crossing a boundary induced by $e$, the region $R_{\vec{t}^2}$ must be non-$e$-degenerate.) Then, by Lemma~\ref{lem:degenerateBoundaryCrossing} we get
\begin{align*}
\Delta x^2_e 
&= c_{e,t^2_e} \, \incidenceVs_e^\top \, \Delta \hat{\vec{\pi}}^2 \\
&= c_{e,t^2_e} \, \incidenceVs_e^\top \, \laplaceMsI_{\vec{t}^2} \, \Delta \hat{\vec{y}}  \\
&= c_{e,t^2_e} \, \incidenceVs_e^\top \, \shermanOp^{\Delta c_e}_{\incidenceVs_e, \incidenceVs_e} \big( \laplaceMsI_{\vec{t}^2} \big) \, \Delta  \hat{\vec{y}} \\
&= c_{e,t^2_e} \, \incidenceVs_e^\top \, \bigg( \vec{I}_{n-1} - \frac{\Delta c_e}{1 + \Delta c_e \incidenceVs_e^{\top} \laplaceMsI_{\vec{t}^1} \incidenceVs_e} \laplaceMsI_{\vec{t}^1} \incidenceVs_e \incidenceVs_e^{\top} \bigg) \, \laplaceMsI_{\vec{t}^1} \, \Delta \hat{\vec{y}}  \\
&= c_{e,t^2_e}  \, \bigg( \frac{1}{1 + \Delta c_e \incidenceVs_e^{\top} \laplaceMsI_{\vec{t}^1} \incidenceVs_e} \bigg) \, \incidenceVs_e^\top \, \laplaceMsI_{\vec{t}^1} \, \Delta \hat{\vec{y}}
= \frac{c_{e,t^2_e}}{c_{e,t^1_e} (1 + \Delta c_e \incidenceVs_e^{\top} \laplaceMsI_{\vec{t}^1} \incidenceVs_e )} \, \Delta x^1_e
\end{align*}
Since $\frac{1}{c_{e, t^1_e}} \geq \incidenceVs_e^{\top} \laplaceMsI_{\vec{t}^1} \incidenceVs_e$ 
(by Theorem~\ref{thm:effectiveTravelTime}(\ref{thm:effectiveTravelTime:distance}) in Appendix~\ref{app:laplacian}), $\Delta c_e > - c_{e,t^1_e}$, and $\incidenceVs_e^{\top} \laplaceMsI_{\vec{t}^1} \incidenceVs_e \geq 0$ by Lemma~\ref{lem:LaplacianMatrix}(\ref{lem:LaplacianMatrix:semiDefinite}) we get
\[
1 + \Delta c_e \, \incidenceVs_e^{\top} \laplaceMsI_{\vec{t}^1} \incidenceVs_e > 1 - c_{e, t^1_e} \incidenceVs_e^{\top} \laplaceMsI_{\vec{t}^1} \incidenceVs_e \geq 0
\]
and hence the claim follows.
\end{proof}

As we have seen, to obtain the flow in a $e$-degenerate region, we always need a neighboring, non-$e$-degenerate region. So it is convenient to store the Laplacian matrix (or the inverse) when entering a degenerate region. Performing all further boundary crossings that are not related to $e$ yields always a Laplace matrix (respectively its inverse) of a neighboring, non-$e$-degenerate region. \\
If the solution curve starts in a $e$-degenerate region and/or there is no non-$e$-degenerate, one may compute a Laplacian matrix of a artificial region, where the edge $e$ has slope $a_{e,t_e} = 1$ and track all the boundaries with this Laplace matrix. In the worst case, there are up to $m$ additional Laplacian matricies of neighboring, non-degenerate regions that need to be tracked. 

\subparagraph{Leaving a degenerate region}
If, in a $e$-degenerate region, the flow on the edge $e$ reaches a breakpoint, the solution curve needs to leave the $e$-degenerate region. Since the (inverse) Laplacian matrix has lower rank in the $e$-degenerate region, the new inverse outside this region can not be obtained by using the Sherman-Morrison operator. Either, the inverse has to be recomputed by inverting the Laplace matrix, or the new inverse can be obtained by using the inverse Laplacian matrix of a neighboring, non-$e$-degenerate region.

\subparagraph{Degeneracy and degenerate regions}
To handle degenerate points in the presence of degenerate regions we can again use perturbation as before. However, it is not convenient to perturb the potential since even a small perturbation in the potential difference of some edge can lead to an enormous change in flows if the solution curve is in a degenerate region. Consider, for example, the extreme case where the solution curve is in a point such that all edges have constant cost---there is no perturbation of the potential without changing the region of at least one edge. Thus, perturbing the potentials is not valid in this case.

We develop a new perturbation method that again will lead to a lexicographic rule. As before, we assume that the potential $\vec{\pi}^{i+1}$ is a degenerate point, i.e. there is again more than one edge in the set $E^* := \argmin_{e \in E} \epsilon(e)$ in the point $\vec{\pi}^i$. Note that if the region $R_{\vec{t}^i}$ is degenerate, the points $\vec{\pi}^i$ and $\vec{\pi}^{i+1}$ lie on some lower-dimensional hyperplane. It might even occur that $\vec{\pi}^i$ and $\vec{\pi}^{i+1}$ describe the same point if, for example, all edges have constant cost. Therefore, we also consider the corresponding flows $\vec{x}^i$ and $\vec{x}^{i+1}$ and apply the perturbation to the flows. \\
We are going to track a \textit{perturbed solution curve} that traverses the potentials $\potentialVp[l]$ with the flows $\flowVp[l]$ in the regions $\regionp[l]$. For $\delta > 0$, we define the perturbation vector $\vec{\delta} := (\delta^m, \delta^{m-1}, \dotsc, \delta^2, \delta) \in \R^m$. The perturbed solution curve starts in the potential $\potentialVp[0] := \potentialV^i$ in the region $\regionp[0] := R_{\vec{t}^i}$ with the perturbed flows $\flowVp[0] := \vec{x}^i + \vec{\delta}$. As before, when tracking the perturbed solution curve, we are only interested in the boundaries and regions intersecting in the degenerate point. This means, we only consider boundaries induced by edges $e \in E^*$ and some specific flow-breakpoints $\flowBPd_e$.

Perturbing the flows while leaving the potential unchanged disrupts the connection between potentials and flows. The potential $\vec{\pi}^i$ may not be a potential for the flow $\flowVp[0] = \flowV^i + \vec{\delta}$ in the sense of Lemma~\ref{lem:characterisationWEPotentials}. So, it is not clear how to interpret the perturbed solution curve and the regions it traverses. To have a meaningful interpretation of the perturbed solution curve, we consider the cost on some edge induced by the perturbed flow. For every edge, there is some exponent $k_e \in \N$ such that we can express the perturbed flow on edge $e$ as $\flowp[0] = x^i_e + \delta^{k_e}$. Then the cost on the edge is
\[
l_e (\flowp[0]) = a_{e, t^i_e} ( x^i_e + \delta^{k_e} ) + b_{e, t^i_e}
.
\]
Thus, the perturbation acts a right-shift on the cost function $l_e$ by $\delta^{k_e}$. So we can interpret the perturbation of the flows as perturbation of the offsets $b_{e, t^i_e}$ and breakpoints $\tau_e$ of the cost functions. In the potential space, this can be visualized as a slight translation of the boundaries which dissolves the degeneracy in $\vec{\pi}^{i+1}$ (see Figure~\ref{fig:flowPerturbation}). Since the boundaries are only translated in the potential space but not altered in any other way, we still know if the perturbed solution curve moves away from all the boundaries in some region $\regionp[l]$ then the actual solution curve can proceed in this region as well. As before, the step where the solution curve moves away from all boundaries is indexed by the value $\alpha \in \N$.
\begin{figure}[t]
\centering
\begin{subfigure}{.4\linewidth}
\begin{center}
\begin{tikzpicture}
\newcommand{\xMin}{-3} \newcommand{\xMax}{3}
\newcommand{\yMin}{-1.5} \newcommand{\yMax}{2.5}
\path[fill=color1,fill opacity=0.25] (\xMin,0) -- (0,0) -- ({max(\xMin,\yMin)}, {max(\xMin,\yMin)}) -- (\xMin,\yMin);
\path[fill=color2,fill opacity=0.25] ({max(\xMin,\yMin)},{max(\xMin,\yMin)}) -- (0,0) -- (0,\yMin);
\path[fill=color3,fill opacity=0.25] (0,\yMin) -- (0,0) -- (\xMax,0) -- (\xMax,\yMin);
\path[fill=color4,fill opacity=0.25] (\xMax,0) -- (0,0) -- ({min(\xMax,\yMax)},{min(\xMax,\yMax)}) -- (\xMax,\yMax);
\path[fill=color5,fill opacity=0.25] ({min(\xMax,\yMax)},{min(\xMax,\yMax)}) -- (0,0) -- (0,\yMax);
\path[fill=color6,fill opacity=0.25] (0,\yMax) -- (0,0) -- (\xMin,0) -- (\xMin,\yMax);
\draw[thin, dashed] (0,\yMin) -- (0,\yMax);
\draw[thin, dashed] ({max(\xMin,\yMin)}, {max(\xMin,\yMin)}) -- ({min(\xMax,\yMax)},{min(\xMax,\yMax)});
\draw[thin, dashed] (\xMin,0) -- (\xMax,0);


\draw (-1.5,-0.75) node[solid] {} node[above] {$\vec \pi^i$};
\draw (0,0) node[solid] {} node[above left] {$\vec \pi^{i+1}$};
\draw[thick] (-1.5,-0.75)--(0,0)--( {min(\xMax, \yMax/4)} , {min(4*\xMax, \yMax)} );

\end{tikzpicture}
\end{center}
\caption{The potential space without perturbation of the flows.}
\end{subfigure}
\hspace{0.075\linewidth}
\begin{subfigure}{.4\linewidth}
\begin{center}
\begin{tikzpicture}
\newcommand{\xMin}{-3} \newcommand{\xMax}{3}
\newcommand{\yMin}{-1.5} \newcommand{\yMax}{2.5}
\path[fill=color1,fill opacity=0.25] (\xMin,-.25) -- (-.75,-.25) -- ({max(\xMin,\yMin - 0.5)}, {max(\xMin + 0.5,\yMin)}) -- (\xMin,\yMin);
\path[fill=color2,fill opacity=0.25] ({max(\xMin,\yMin - 0.5)}, {max(\xMin + 0.5,\yMin)}) -- (-.75,-.25) -- (0.5,-.25) -- (0.5,\yMin);
\path[fill=color3,fill opacity=0.25] (0.5,\yMin) -- (0.5,-.25) -- (\xMax,-.25) -- (\xMax,\yMin);
\path[fill=color4,fill opacity=0.25] (\xMax,-.25) -- (0.5,-.25) -- (0.5,1) -- ({min(\xMax,\yMax - 0.5)},{min(\xMax + 0.5,\yMax)}) -- (\xMax,\yMax);
\path[fill=color5,fill opacity=0.25] ({min(\xMax,\yMax - 0.5)},{min(\xMax + 0.5,\yMax)}) -- (0.5,1) -- (0.5,\yMax);
\path[fill=color6,fill opacity=0.25] (0.5,\yMax) -- (0.5,1) -- (-0.75,-0.25) -- (\xMin,-0.25) -- (\xMin,\yMax);
\path[fill=color7,fill opacity=0.25] (0.5,1) -- (-0.75,-0.25) -- (0.5,-0.25);

\draw[thin, dashed] (0+.5,\yMin) -- (0+.5,\yMax);
\draw[thin, dashed] ({max(\xMin,\yMin - 0.5)}, {max(\xMin + 0.5,\yMin)}) -- ({min(\xMax,\yMax - 0.5)},{min(\xMax + 0.5,\yMax)});
\draw[thin, dashed] (\xMin,0-.25) -- (\xMax,0-.25);

\draw[thin, dashed, black!40] (0,\yMin) -- (0,\yMax);
\draw[thin, dashed, black!40] ({max(\xMin,\yMin)}, {max(\xMin,\yMin)}) -- ({min(\xMax,\yMax)},{min(\xMax,\yMax)});
\draw[thin, dashed, black!40] (\xMin,0) -- (\xMax,0);


\draw[thick, black!40] (-1.5,-0.75)--(0,0)--( {min(\xMax, \yMax/4)} , {min(4*\xMax, \yMax)} );;
\draw (-1.5,-0.75)  node[above] {$\vec \pi^i$};
\fill[black] (-1.5,-0.75) -- (-1.5,-0.75) + (45:0.075) arc (45:225:0.075);
\draw[black!40] (0,0) node[solid] {} node[above left] {$\vec \pi^{i+1}$};

\draw[thick, darkgreen] (-1.5,-0.75)--(-1,-0.5)--(0.25,-0.25)--(0.5,0.25)--(0.75,5/4) -- ( {min(\xMax, (\yMax + 7/4)/4)} , {min(4*\xMax - 7/4, \yMax)} );;
\fill[darkgreen] (-1.5,-0.75) -- (-1.5,-0.75) + (-135:0.075) arc (-135:45:0.075);
\draw (-1,-0.5) node[solid, darkgreen] {} node[below right=-0.1, darkgreen ] {\footnotesize $\potentialVp[1]$};
\draw (0.25,-0.25) node[solid, darkgreen] {} node[below, darkgreen ] {\footnotesize $\potentialVp[2]$};
\draw (0.5,0.25) node[solid, darkgreen] {} node[right, darkgreen ] {\footnotesize $\potentialVp[3]$};
\draw (0.75,5/4) node[solid, darkgreen] {} node[right, darkgreen ] {\footnotesize $\potentialVp[4]$};

\end{tikzpicture}
\end{center}
\caption{The potential space with perturbed boundaries.}
\end{subfigure}
\caption{The perturbation of the flows can be interpreted as the perturbation of the boundaries. (a) shows a degenerate point, (b) shows the same point with perturbed boundaries. The unperturbed boundaries and the degenerate point are depicted in light gray, the green line represents the perturbed solution curve.}
\label{fig:flowPerturbation}
\end{figure}
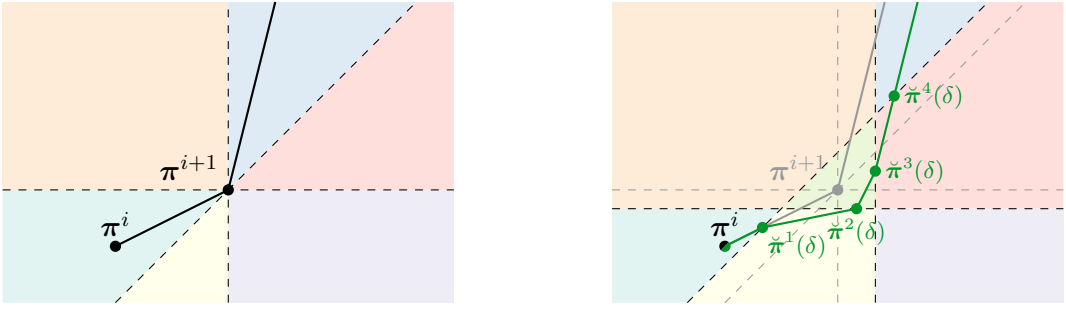

To derive a implicit procedure to track the perturbed solution curve, we observe that we can express the flows $\flowVp[l]$ and the distances to the boundaries $\epsilonp[l]$ as affine linear functions depending on the perturbation vector $\vec{\delta}$. Similar to Lemma~\ref{lem:perturbedValues}, we state
\begin{lemma} \label{lem:flowPertubationValues}
There are matrices $\vec{M}^l \in \R^{m \times m}$ such that $\flowVp[0] = \flowV^i + \vec{M}^0 \, \vec{\delta}$ and $\flowVp[l] = \flowV^{i+1} + \vec{M}^l \, \vec{\delta}$ for all $1 \leq l \leq \alpha$. Further, there are vectors $\vec{m}_{l,e} \in \bar{\R}^m$ such that $\epsilonp[0] = \epsilon(e) + \vec{m}_{l,e}^{\top} \, \vec{\delta}$ and $\epsilonp[l] = \vec{m}_{l,e}^{\top} \, \vec{\delta}$ for all $1 \leq l \leq \alpha$ and $e \in E^*$.
\end{lemma}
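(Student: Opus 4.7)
The plan is to follow the template of the proof of Lemma~\ref{lem:perturbedValues}, observing that the perturbation now lives in the flow space $\R^m$ rather than in the potential space, and that boundaries in the flow picture are described by equations of the form $\vec{u}_e^\top \flowV = \flowBPd_e$. Under this dictionary, the incidence vector $\vec{\gamma}_e \in \R^n$ from the earlier proof is replaced by the standard basis vector $\vec{u}_e \in \R^m$ and the potential directions $\potentialDp[l]$ are replaced by the flow directions $\Delta \flowV^{l}$ that the algorithm produces in each traversed region $\regionp[l]$ (obtained via Lemma~\ref{lem:degenerateRegionFlows} wherever the region is degenerate).

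Concretely, I would first set $\vec{M}^0 := \vec{I}_m$, so that $\flowVp[0] = \flowV^i + \vec{\delta} = \flowV^i + \vec{M}^0 \vec{\delta}$ by construction, and then inductively define
\[
\tilde{\vec{M}}^l := \vec{I}_m - \frac{1}{\vec{u}_{e^*_{l-1}}^\top \Delta \flowV^{l-1}} \, \Delta \flowV^{l-1} \, \vec{u}_{e^*_{l-1}}^\top, \qquad \vec{M}^l := \tilde{\vec{M}}^l \, \vec{M}^{l-1}
\]
for $1 \le l \le \alpha$. The vectors $\vec{m}_{l,e} \in \bar{\R}^m$ would then be defined by $\vec{m}_{l,e}^\top := -\frac{1}{\vec{u}_e^\top \Delta \flowV^l} \vec{u}_e^\top \vec{M}^l$ whenever $\vec{u}_e^\top \Delta \flowV^l \neq 0$, and by the all-$\infty$ vector otherwise.

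Next, I would verify both claims by induction on $l$. The base case $l=0$ is immediate from the definitions together with $\epsilonp[0] = (\flowBPd_e - x_e^i - (\vec{M}^0 \vec{\delta})_e)/\Delta x_e^0 = \epsilon(e) + \vec{m}_{0,e}^\top \vec{\delta}$. For the inductive step, I would substitute the inductive description of $\flowVp[l-1]$ and $\epsilonp^{l-1}(e^*_{l-1}, \delta)$ into the recursion $\flowVp[l] = \flowVp[l-1] + \epsilonp^{l-1}(e^*_{l-1}, \delta) \, \Delta \flowV^{l-1}$, using $\flowV^{i+1} = \flowV^i + \epsilon \, \Delta \flowV^0$ at $l=1$ and $x_e^{i+1} = \flowBPd_e$ for all $e \in E^*$ at $l \ge 1$ to cancel the constant terms. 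The computation then collapses to $\flowVp[l] = \flowV^{i+1} + (\vec{M}^{l-1} + \Delta \flowV^{l-1} \vec{m}_{l-1, e^*_{l-1}}^\top) \vec{\delta} = \flowV^{i+1} + \tilde{\vec{M}}^l \vec{M}^{l-1} \vec{\delta}$, and an analogous direct calculation delivers $\epsilonp[l] = \vec{m}_{l,e}^\top \vec{\delta}$.

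The main obstacle, as in Lemma~\ref{lem:perturbedValues}, is to justify that the recursive formulas are well-defined: the flow directions $\Delta \flowV^{l}$ must exist and be intrinsically determined in each region $\regionp[l]$ the perturbed solution curve visits. For non-degenerate regions this follows from the invertibility of $\hat{\vec{L}}_{\vec{t}^l}$, while for degenerate regions one has to invoke Lemma~\ref{lem:degenerateRegionFlows} together with the construction via an auxiliary $e$-neighboring region. The only other subtlety is the bookkeeping around $\vec{u}_e^\top \Delta \flowV^l = 0$, which corresponds to flow directions that leave the flow on $e$ constant; this is handled by the $\bar{\R}$-valued convention, exactly as in the potential-space version.
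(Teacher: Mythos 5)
Your proposal is correct and follows essentially the same route as the paper's proof: the paper also sets $\vec{M}^0 = \vec{I}_m$, defines $\tilde{\vec{M}}^l = \vec{I}_m - \frac{1}{\Delta\perturbed{x}^{l-1}_{e^*_{l-1}}}\,\Delta\perturbed{\vec{x}}^{l-1}\,\vec{u}_{e^*_{l-1}}^{\top}$ with $\vec{M}^l = \tilde{\vec{M}}^l\vec{M}^{l-1}$ and $\vec{m}_{l,e}^{\top} = -\frac{1}{\Delta\perturbed{x}^l_e}\vec{u}_e^{\top}\vec{M}^l$ (with the $\infty$-convention), and carries out the same induction using $\perturbed{\tau}_e = x_e^{i+1}$ to cancel the constant terms. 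Your additional remarks on the well-definedness of the flow directions in degenerate regions are a sensible supplement but are handled by Lemma~\ref{lem:degenerateRegionFlows} rather than inside this proof.
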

\begin{proof}
For $0 \leq l \leq \alpha$, we define the matrices
\[
\tilde{\vec{M}}^l := \vec{I}_m - \frac{1}{\flowdp^l_{e^*_{l-1}}} \flowDp[l-1] \, \vec{u}_{e^*_{l-1}}^{\top}
\quad \text{for } 1 \leq l \leq \alpha
\]
and $\vec{M}^l = \prod_{j=0}^{l-1} \tilde{\vec{M}}^{l-j}$ where $\vec{u}^T_{e}$ is the unit vector with a $1$ in the component corresponding to the edge $e$. (For $l=0$ the empty product is defined as $\vec{M}^0 = \vec{I}_m$.) Furthermore, we define the vectors
\[
\vec{m}_{l,e}^{\top} := - \frac{1}{\flowdp[l]} \vec{u}_e^{\top} \vec{M}^l 
\quad \text{for } 1 \leq l \leq \alpha
\]
where the vector $\vec{m}_{l,e}$ is set to infinity in every component if $\flowdp[l] = 0$.\\
For $l=0$, we get $\flowVp[0] = \flowV^i + \vec{M}^0 \vec{\delta}$ by definition and obtain
\[
\epsilonp[0]
= \frac{\flowBPd_e - \flowp[0]}{\flowdp[0]} = \frac{\flowBPd_e - x^i_e}{\Delta x^i_e} - \frac{1}{\flowdp[0]} \vec{u}_{e}^{\top} \, \vec{M}^0 \, \vec{\delta}
= \epsilon(e) + \vec{m}_{0,e}^{\top} \, \vec{\delta}
\]
where we used that $\flowDp[0] = \Delta \flowV^i$. \\
For $l=1$, we obtain
\begin{align*}
\flowVp[1] &= \flowVp[0] + \epsilonp^0 (e^*_0, \delta) \, \flowDp[0] \\
&= \vec{x}^{i} + \epsilon(e) \, \Delta \flowV^i + \vec{M}^0 \, \vec{\delta} + \vec{m}_{0, e_0^*}^{\top} \, \vec{\delta} \, \flowDp[0] \\
&= \vec{x}^{i+1} + \vec{M}^0 \, \vec{\delta}  + \flowDp[0] \, \vec{m}_{0, e_0^*}^{\top} \, \vec{\delta} \\
&= \vec{x}^{i+1} + \tilde{\vec{M}}^1 \, \vec{M}^0 \, \vec{\delta} = \vec{x}^{i+1} + \vec{M}^1 \, \vec{\delta}
\end{align*}
and with $\flowBPd_e = x^{i+1}_e$ also 
$
\epsilonp[1] = \frac{\flowBPd_e - \flowp[1]}{\flowdp[1]} = \frac{\flowBPd_e - x^{i+1}_e - \vec{u}_e^{\top} \vec{M}^1 \vec{\delta}}{\flowdp[1]}
= \vec{m}_{1,e}^{\top} \, \vec{\delta}
$. \\
By induction, we finally get for $l \geq 2$ that
\begin{align*}
\flowVp[l] &= \flowVp[l-1] + \epsilonp^{l-1} (e^*_{l-1}, \delta) \, \flowDp[l-1] \\
&= \vec{x}^{i+1} + \vec{M}^{l-1} \, \vec{\delta} + \vec{m}_{l-1, e_{l-1}^*}^{\top} \, \vec{\delta} \, \flowDp[l-1] \\
&= \vec{x}^{i+1} + \vec{M}^{l-1} \, \vec{\delta} + \flowDp[l-1] \, \vec{m}_{l-1, e_{l-1}^*}^{\top} \, \vec{\delta}  \\
&= \vec{x}^{i+1} + \tilde{\vec{M}}^l \, \vec{M}^{l-1} \, \vec{\delta} = \vec{x}^{i+1} + \vec{M}^l \, \vec{\delta}
\end{align*}
and also 
$
\epsilonp[l] = \frac{\flowBPd_e - \flowp[l]}{\flowdp[l]} = \frac{\flowBPd_e - x^{i+1}_e - \vec{u}_e^{\top} \vec{M}^l \vec{\delta}}{\flowdp[l]}
= \vec{m}_{l,e}^{\top} \, \vec{\delta}
$.
\end{proof}
As before, in every step $l$, the boundaries lying in front of the perturbed solution curve are the ones induced by edges $e \in E^*_l := \{ e \in E : 0 < \epsilonp[l] < \infty\}$. For $\delta > 0$ small enough, we have $E^*_0 = E^*$ (since $\epsilon(e) > 0$ for all $e \in E^*$) and we can use Lemma~\ref{lem:flowPertubationValues} to obtain $E^*_l = \{ e \in E^* : \vec{0} \lexle \vec{m}_{l,e} \lexle \vec{\infty} \}$. Finding the next boundary to cross thus means finding the lexicographically smallest vector in $\{ \vec{m}_{l,e} : e \in E^*_l \}$. As the next theorem shows, there is always a unique, lexicographic minimum in this set.

\begin{theorem} \label{thm:flowPerturbationUniqueMinimum}
For every $0 \leq l \leq \alpha$ there is a unique lexicographic minimum in $\{ \vec{m}_{l,e} : e \in E^*_l \}$.
\end{theorem}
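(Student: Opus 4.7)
The plan is to imitate the proof of Theorem~\ref{thm:perturbationUniqueMinimum}, replacing the incidence vectors $\vec{\gamma}_e$ by the standard basis vectors $\vec{u}_e$, the potential directions $\potentialDp[l]$ by the flow directions $\flowDp[l]$, and the sign-preservation statement Theorem~\ref{thm:boundaryCrossing}(\ref{thm:boundaryCrossing:Directions}) by its flow analogue Lemma~\ref{lem:flowsBoundaryCrossing}. Assume $|E^*_l| \geq 2$ (otherwise there is nothing to show). The base case $l=0$ is immediate: from $\vec{M}^0 = \vec{I}_m$ one obtains $\vec{m}_{0,e} = -(1/\Delta x^i_e)\,\vec{u}_e$, so the vectors in question are nonzero scalar multiples of distinct standard basis vectors and hence pairwise linearly independent, which in particular forces a unique lexicographic minimum.

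For $l \geq 1$, the heart of the argument is the structural claim that the left nullspace of $\vec{M}^l$ equals $\vspan\{\vec{u}_{e^*_{l-1}}\}$. A direct computation gives $\vec{u}_{e^*_{l-1}}^\top \tilde{\vec{M}}^l = \vec{0}$, and since $\tilde{\vec{M}}^l$ differs from the identity by a rank-one term its left nullspace is exactly one-dimensional; this settles $l=1$. For $l \geq 2$, I use the recursion $\vec{M}^l = \tilde{\vec{M}}^l\,\vec{M}^{l-1}$: if $\vec{v}^\top \vec{M}^l = \vec{0}$, then $\vec{v}^\top \tilde{\vec{M}}^l$ lies in the left nullspace of $\vec{M}^{l-1}$, so by the induction hypothesis it is either $\vec{0}$ (yielding $\vec{v} \in \vspan\{\vec{u}_{e^*_{l-1}}\}$ directly) or a nonzero multiple of $\vec{u}_{e^*_{l-2}}^\top$. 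To exclude the latter, I right-multiply by $\flowDp[l-1]$ and invoke the identity $\tilde{\vec{M}}^l\,\flowDp[l-1] = \vec{0}$ (immediate from the definition of $\tilde{\vec{M}}^l$) to obtain $\flowdp^{l-1}_{e^*_{l-2}} = \vec{u}_{e^*_{l-2}}^\top \flowDp[l-1] = 0$. Lemma~\ref{lem:flowsBoundaryCrossing}, applied to the boundary crossing between steps $l-2$ and $l-1$ with boundary edge $e^*_{l-2}$, then forces $\flowdp^{l-2}_{e^*_{l-2}} = 0$, so $\epsilonp^{l-2}(e^*_{l-2}, \delta) = \infty$ and $e^*_{l-2}$ cannot have been chosen as the minimizer over $E^*_{l-2}$, a contradiction.

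Given the structural claim, uniqueness follows the same pattern as in the original proof. If $\vec{m}_{l,e_1} = \vec{m}_{l,e_2}$ for distinct $e_1,e_2 \in E^*_l$, then $\vec{v}^\top := \frac{1}{\flowdp^l_{e_1}}\vec{u}_{e_1}^\top - \frac{1}{\flowdp^l_{e_2}}\vec{u}_{e_2}^\top$ satisfies $\vec{v}^\top \vec{M}^l = \vec{0}$, hence $\vec{v} = \beta\,\vec{u}_{e^*_{l-1}}$ for some scalar $\beta$. Right-multiplying by $\flowDp[l]$ gives $\beta\,\flowdp^l_{e^*_{l-1}} = \vec{u}_{e_1}^\top\flowDp[l]/\flowdp^l_{e_1} - \vec{u}_{e_2}^\top\flowDp[l]/\flowdp^l_{e_2} = 1 - 1 = 0$. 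If $\beta = 0$ then $\vec{v} = \vec{0}$ with both coefficients nonzero (as $e_1,e_2 \in E^*_l$) forces $e_1 = e_2$, a contradiction; otherwise $\flowdp^l_{e^*_{l-1}} = 0$, and Lemma~\ref{lem:flowsBoundaryCrossing} yields $\flowdp^{l-1}_{e^*_{l-1}} = 0$, contradicting $e^*_{l-1} \in E^*_{l-1}$. The main obstacle is propagating a vanishing flow-direction component across a boundary crossing, and this is precisely where Lemma~\ref{lem:flowsBoundaryCrossing} substitutes for Theorem~\ref{thm:boundaryCrossing}(\ref{thm:boundaryCrossing:Directions}) of the potential version; the rest of the argument is essentially a bookkeeping translation into the flow picture.
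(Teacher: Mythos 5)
Your proof is correct and follows exactly the route the paper takes: the paper's own proof of Theorem~\ref{thm:flowPerturbationUniqueMinimum} simply says to repeat the argument of Theorem~\ref{thm:perturbationUniqueMinimum} with $\vec{u}_e$ in place of $\vec{\gamma}_e$ and Lemma~\ref{lem:flowsBoundaryCrossing} in place of Theorem~\ref{thm:boundaryCrossing}(\ref{thm:boundaryCrossing:Directions}), which is precisely what you carry out (in somewhat more detail, e.g.\ the explicit treatment of the $\beta=0$ case).
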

\begin{proof}
If there are not at least two edges in $E^*_l$ then there is nothing to show. Otherwise, we need to show that for any two edges $e_1, e_2 \in E^*_l$ we have $\vec{m}_{l, e_1} \neq \vec{m}_{l, e_2}$. For $l=0$, the vectors $\vec{m}_{0,e}$ are multiples of unit vectors so the claim follows immediately.\\
For $l \geq 1$, we claim that 
\[
\vec{v}^{\top} \vec{M}^l = \vec{0} 
\text{ if and only if }
\vec{v} = \beta \vec{u}_{e^*_{l-1}}^{\top}
\]
for some $\beta \in \R$. This can be shown as in the proof of Theorem~\ref{thm:perturbationUniqueMinimum} by using Lemma~\ref{lem:flowsBoundaryCrossing} instead of Theorem~\ref{thm:boundaryCrossing}(\ref{thm:boundaryCrossing:Directions}). \\
Again, as in the proof of Theorem~\ref{thm:perturbationUniqueMinimum}, we can deduce with the help of Lemma~\ref{lem:flowsBoundaryCrossing} from the assumption $\vec{m}_{l, e_1} = \vec{m}_{l, e_2}$ that $\flowdp^{l-1}_{e^*_{l-1}} = 0$ which is a contradiction to the fact, that $e^*_{l-1}$ was the minimizer of $\epsilonp[l-1]$.
\end{proof}

}
\section{Multiple commodities}

In this section, we discuss the directed case with multiple commodities. While our approach admits multiple sources and sinks in general it is not suitable to model different commodities with different flows. We can, however, use a convex program to compute a direction vector $\Delta \vec X$ in a region.
Assume we have some multi-commodity Wardrop equilibrium flow $\vec X$ that is feasible for some demand encoded in the matrix $\vec Y$. This flow induces some total flow $\vec z = \vec X \, \vec{1}$. Furthermore, we have by Lemma~\ref{lem:characterisationWEPotentials} that there are potentials $\vec \pi_1, \vec \pi_2, \dotsc, \vec \pi_\numCommodities$ that satisfy \eqref{eq:lem:characterisationWEPotentials1} and \eqref{eq:lem:characterisationWEPotentials2}.

We then distinguish between three types of edges for every commodity $\commodity \in \setCommodities$. We call an edge $e = (v,w) \in E$ with $\mcFlow > 0$ \emph{used} by commodity $\commodity$, an edge $e = (v,w) \in E$ with $\vec \gamma_e^{\top} \vec \pi^\sCurveI_\commodity = l_e(\totalFlow_e)$ \emph{active} for commodity $\commodity$, and an edge $e = (v,w) \in E$ with $\smash{\vec \gamma_e^{\top} \vec \pi^\sCurveI_\commodity < l_e(\totalFlow_e)}$ \emph{inactive} for commodity $\commodity$. Every used edge is also active by Lemma~\ref{lem:characterisationWEPotentials}.

For notational convenience, let
$\indicUnused := \diag \big( \delta_{\commodity, e_1}^{\idUnused}, \dotsc, \delta_{\commodity, e_m}^{\idUnused} \big)$ 
for every commodity $\commodity \in \setCommodities$ where
$\delta_{\commodity, e}^{\idUnused} = 1$	 if $e$ is active but unused for commodity, and $\delta_{\commodity, e}^{\idUnused} = 0$ else. Further, let $\indicInactive := \diag \big( \delta_{\commodity, e_1}^{\idInactive}, \dotsc, \delta_{\commodity, e_m}^{\idInactive} \big)$ where $\delta_{\commodity, e}^{\idInactive} = 1$	if  $e$ is inactive for commodity~$  \commodity$ and $\delta_{\commodity, e}^{\idInactive} = 0$ else. 
We define the matrices $\vec A^+, \vec A^-$ as the diagonal matrices with the entries
$
\smash{a_e^+ := \lim_{\xi \downarrow \totalFlow_e^\sCurveI} \frac{l_e(\xi) - l_e( \totalFlow_e^\sCurveI )}{\xi - \totalFlow_e^\sCurveI}}
$
and
$
\smash{a_e^- := \lim_{\xi \uparrow \totalFlow_e^\sCurveI} \frac{l_e(\xi) - l_e( z_e^\sCurveI )}{\xi - \totalFlow_e^\sCurveI}}
$.
These matrices contain the slopes of the piecewise-linear cost function $l_e$ at $\totalFlow_e^i$. If $l_e$ is differentiable at $\totalFlow_e^\sCurveI$ (which is the case if $\totalFlow_e^\sCurveI$ is not a breakpoint of $l_e$) the values $a_e^+$ and $a_e^-$ coincide.
Consider the convex program 
\longversion{
\begin{align}
\min \quad &\frac{1}{2} \, \totalFlowVec^{\top} \vec A^+ \, \totalFlowVec^+ 
+
\frac{1}{2} \, \totalFlowVec^{\top} \vec A^- \, \totalFlowVec^- \notag
& 
\text{s.t. }\quad\quad \totalFlowVec &= \mcFlowVec \, \vec{1} \notag \\
&&\incidenceMatrix \cFlowVec &= \Delta \vec Y_\commodity \qquad \; \forall \commodity \in \setCommodities \label{eq:convexProgramMC} \\
&&\indicUnused \cFlowVec &\geq 0 \qquad \qquad \forall \commodity \in \setCommodities \notag \\
&&\indicInactive \cFlowVec &= 0 \qquad \qquad \forall \commodity \in \setCommodities \notag
\end{align}
}{
\begin{align}
\label{eq:convexProgramMC}
\begin{split}
\min \quad &\frac{1}{2} \, \totalFlowVec^{\top} \vec A^+ \, \totalFlowVec^+ 
+
\frac{1}{2} \, \totalFlowVec^{\top} \vec A^- \, \totalFlowVec^- \\
\text{s.t. } &\totalFlowVec = \mcFlowVec \, \vec{1}, \quad
\incidenceMatrix \cFlowVec = \Delta \vec Y_\commodity, \quad
\indicUnused \cFlowVec \geq 0, \quad
\indicInactive \cFlowVec = 0 \qquad \qquad \forall \commodity \in \setCommodities.
\end{split}
\end{align}
}
where $\totalFlowVec^+$ is the vector with the entries $\totalFlow_e^+ := \max \{ 0, z_e \}$ and $\totalFlowVec^-$ is the vector with the entries $\totalFlow_e^- := \min \{ 0, z_e \}$. Note that if no latency function is at a breakpoint in $\totalFlowVec^\sCurveI$ the matrices $\vec A^+$ and $\vec A^-$ coincide and the objective simplifies to $\frac{1}{2} \, \totalFlowVec^{\top} \vec A \, \totalFlowVec$. Furthermore, the objective function is convex and almost everywhere differentiable but in the points where $\totalFlow_e = 0$ for some $e \in E$.

Intuitively, this program computes a Wardrop equilibrium flow in the subgraph induced by the active edges since the last constraint ensures that there is no flow on the inactive edges. This equilibrium describes how the Wardrop equilibrium $\mcFlowVec$ changes if the demands change by $\Delta \vec Y$ and is therefore referred to as an \emph{augmenting equilibrium}. Furthermore, the program admits positive and negative flow on all used edges. A negative flow on some used edge can then be interpreted as a reduction of the flow on this edge.

\begin{theorem} \label{thm:convexProgramMC}
Let $\vec{X}$ be a Wardrop equilibrium for some demand $\vec Y$ and $(\hat{\vec X}, \hat{\vec z})$ an optimal solution of the convex program \eqref{eq:convexProgramMC}. Then there is $\epsilon > 0$ such that $\vec X (\xi) := \vec X + \xi \, \hat{\vec X}$
is a Wardrop equilibrium flow for every demand $\vec Y + \xi \, \Delta \vec Y$ with $0 \leq \xi \leq \epsilon$.
\end{theorem}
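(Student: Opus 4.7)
The plan is to invoke the potential characterization of Wardrop equilibria (Lemma~\ref{lem:characterisationWEPotentials}) and construct, from the optimal solution of the convex program~\eqref{eq:convexProgramMC}, an updated potential vector $\vec{\pi}_j(\xi) := \vec{\pi}_j + \xi\,\Delta\vec{\pi}_j$ for every commodity $j$, where $\Delta\vec{\pi}_j$ will be extracted as the Lagrange multiplier of the flow conservation constraint $\incidenceMatrix\cFlowVec = \Delta\vec Y_\commodity$.

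First I would verify feasibility of $\vec X(\xi)$. Flow conservation follows immediately from the linearity of $\incidenceMatrix$ and the feasibility of both $\vec X$ and $\hat{\vec X}$. For non-negativity, I argue separately on the three edge types: on used edges ($\mcFlow > 0$) the value $x_{e,j}$ is strictly positive so $x_{e,j}+\xi\hat{x}_{e,j}\geq 0$ holds for all sufficiently small $\xi$; on active-but-unused edges the constraint $\indicUnused\cFlowVec\geq 0$ in \eqref{eq:convexProgramMC} gives $\hat{x}_{e,j}\geq 0$; and on inactive edges the constraint $\indicInactive\cFlowVec = 0$ forces $\hat{x}_{e,j}=0$, so the flow remains at zero.

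Next I would write out the KKT conditions for the convex program. Even though the objective is only piecewise quadratic at coordinates where $z_e = 0$, the subdifferential can be handled by choosing a supporting linear functional; one then obtains stationarity of the form $\vec A(\vec z)\,\hat{\vec z} = \incidenceMatrix^{\top}\Delta\vec{\pi}_\commodity + \vec\mu_\commodity + \vec\nu_\commodity$ on each commodity block, where $\Delta\vec{\pi}_\commodity$ is the multiplier of the flow conservation constraint, $\vec\mu_\commodity\leq 0$ is supported on active-but-unused edges with complementary slackness, and $\vec\nu_\commodity$ is free and supported on inactive edges. On used edges, $\mu$ and $\nu$ vanish, so $\vec\gamma_e^\top\Delta\vec{\pi}_\commodity$ equals the local slope of $l_e$ at $z_e$ times $\hat z_e$; combining this with $\vec\gamma_e^\top\vec{\pi}_\commodity = l_e(z_e)$ from the original equilibrium and the fact that we stay in the same linear segment of $l_e$ for small $\xi$ yields exactly \eqref{eq:lem:characterisationWEPotentials2} for $\vec X(\xi)$. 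On active-but-unused edges, combining the slackness condition on $\mu$ with the fact that $\hat{z}_e$ moves in a direction compatible with the non-negativity of $x_{e,j}+\xi\hat{x}_{e,j}$ gives \eqref{eq:lem:characterisationWEPotentials1}.

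The main obstacle I expect is the treatment of inactive edges: here the potential difference $\vec\gamma_e^\top\vec{\pi}_\commodity < l_e(z_e)$ holds strictly before the perturbation, and I must show that the strict inequality is preserved by $\vec{\pi}_\commodity(\xi)$, i.e. that $\vec\gamma_e^\top\vec{\pi}_\commodity(\xi) \leq l_e(z_e(\xi))$ still holds for all $\xi\in[0,\epsilon]$. Since the original strict gap is bounded away from zero over finitely many edges and both sides are affine in $\xi$ on the current linear segment, a sufficiently small $\epsilon>0$ exists by compactness. A secondary subtlety is that when some $z_e = 0$ lies at a kink of the inverse cost, one must pick the choice of linear segment (governed by the sign of $\hat z_e$, i.e. by $\vec A^+$ vs.\ $\vec A^-$) consistently with the direction of the perturbation; the split objective in \eqref{eq:convexProgramMC} is precisely what guarantees that this choice agrees with the stationarity condition above. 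Finally, choosing $\epsilon$ as the minimum over all edges of the largest step that preserves the current linear segment of $l_e$ and the non-negativity of used edges completes the proof.
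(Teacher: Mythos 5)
Your proposal follows essentially the same route as the paper's proof: KKT conditions of \eqref{eq:convexProgramMC} with subdifferentials at $z_e=0$, extraction of the potential direction from the multipliers of the flow-conservation constraints, case analysis over used / active-but-unused / inactive edges to verify the conditions of Lemma~\ref{lem:characterisationWEPotentials} for $\vec X(\xi)$, and a choice of $\epsilon$ small enough to preserve the linear segments, the strict gap on inactive edges, and non-negativity on used edges. The argument is correct and complete in all essentials.
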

\begin{appendixproof}{Theorem~\ref{thm:convexProgramMC}}
Let $(\hat{\vec X}, \hat{\vec z})$ be an optimal solution. We denote by $\partial f$ the subdifferential of some function $f$. Then the KKT-conditions yield
\begin{align*}
\vec 0 &\in \partial_{\mcFlowVec, \totalFlowVec}  \frac{1}{2} \, \totalFlowVec^{\top} \vec A^+ \, \totalFlowVec^+ 
+
\partial_{\mcFlowVec, \totalFlowVec}  \frac{1}{2} \, \totalFlowVec^{\top} \vec A^- \, \totalFlowVec^-
+ \vec \lambda^{\top} \partial_{\mcFlowVec, \totalFlowVec} \vec (\mcFlowVec \vec 1 - \totalFlowVec) \\
&+ \sum_{\commodity \in \setCommodities} \vec \mu^{\top}_{\commodity} \partial_{\mcFlowVec, \totalFlowVec}  (\incidenceMatrix \cFlowVec - \Delta \vec Y_\commodity ) 
-\sum_{\commodity \in \setCommodities} \vec \nu_\commodity^{\top} \partial_{\mcFlowVec, \totalFlowVec} \indicUnused \cFlowVec
+ \sum_{\commodity \in \setCommodities} \vec \eta_\commodity^{\top} \partial_{\mcFlowVec, \totalFlowVec} \indicInactive \cFlowVec
\end{align*}
where $\vec \lambda, \vec \mu_{\commodity}, \vec \nu_{\commodity}, \vec \eta_{\commodity}$ are vectors of dual variables with $\vec \eta_\commodity \geq 0$ for all $\commodity \in \setCommodities$.
Note that all subdifferentials with respect to $\mcFlowVec$ only contain the gradient since the objective and all constraints are differentiable with respect to $\mcFlowVec$. The same holds for every subdifferential with respect to $z_e$ as long as $\totalFlow_e \neq 0$. Thus we get for all edges $e$ with $\hat{z}_e \neq 0$ that
\begin{align*}
\lambda_e &= a_e^+ \hat{z}_e \quad \text{ if } \hat{z}_e > 0\\ 
\text{ and } \lambda_e &= a_e^- \hat{z}_e \quad \text{ if } \hat{z}_e < 0
\end{align*}
and for all edges with $\hat{z}_e = 0$ that
\begin{align*}
\lambda_e &= \hat{a} \hat{z}_e = 0
\end{align*}
for some $\hat{a}$ between $a_e^-$ and $a_e^+$.
The subdifferential with respect to $\cFlowVec$ yields
\begin{align*}
\vec \lambda^{\top} + \vec \mu^{\top}_\commodity \incidenceMatrix + \nu^{\top}_\commodity \indicUnused + \eta_\commodity^{\top} \indicInactive = 0
\end{align*}
which implies for every edge $e =(v,w) \in E$ that
\begin{align*}
\lambda_e &= \mu_{\commodity, w} - \mu_{\commodity, v} \text{ if } e \text{ is used,}\\
\lambda_e &= \mu_{\commodity, w} - \mu_{\commodity, v} + \nu_{\commodity, e} \text{ if } e \text{ is active but unused,} \\
\lambda_e &= \mu_{\commodity, w} - \mu_{\commodity, v} - \eta_{\commodity, e} \text{ if } e \text{ is inactive.}
\end{align*}
Since $\nu_{\commodity, e} = 0$ if $\mcFlow \neq 0$ by the KKT-conditions, we have that
\begin{align*}
\lambda_e = \mu_{\commodity, w} - \mu_{\commodity, v} = \vec \gamma_e \vec \mu_\commodity
\end{align*}
for all active edges $e = (v,w) \in E$ with $\hat{x}_{e,j} \neq 0$ and
\begin{align*}
\lambda_e  \geq \mu_{\commodity, w} - \mu_{\commodity, v} = \vec \gamma_e \vec \mu_\commodity
\end{align*}
for all active edges $e = (v,w) \in E$ with $\hat{x}_{e,j} = 0$. So the flow $\hat{x}_{e,j}$ is a Wardrop equilibrium on the active edges with the potential $\vec \mu_\commodity$ for every commodity $\commodity$.

For $\xi \geq 0$ we define the multi commodity flow 
$
\vec X (\xi) := \vec X + \xi \, \hat{\vec X}
$
with the total flow
$
\vec z (\xi) = \vec z + \xi \, \hat{\vec z}
$
that is feasible for the demand $\vec Y (\xi) = \vec Y + \xi \, \Delta \vec Y$. Further, we define for every $\commodity \in \setCommodities$ the potential 
$
\vec \pi_\commodity (\xi) := \vec \pi_\commodity + \xi \, \vec \mu_\commodity.
$
Then we have for every $\commodity \in \setCommodities$ and every active edge $e \in E$ that
\begin{align*}
\vec \gamma_e^{\top} \vec \pi_\commodity (\xi) &= \vec \gamma_e^{\top} \big( \vec \pi_\commodity + \xi \vec \mu_\commodity \big) \\
&\leq a_e z_e + b_e + \xi \lambda_e \\
&= l_e (z_e + \xi \hat{z}_e) = l_e (z_e (\xi) )
.
\end{align*}
where $a_e = a_e^+$ if $\hat{z}_e \geq 0$ and $a_e = a_e^-$ if $\hat{z}_e < 0$.
For every active edges with $\hat{x}_{e,j} \neq 0$ we obtain
\begin{align*}
\vec \gamma_e^{\top} \vec \pi_\commodity (\xi) &= \vec \gamma_e^{\top} \big( \vec \pi_\commodity + \xi \vec \mu_\commodity \big) \\
&= a_e z_e + b_e + \xi \lambda_e \\
&= l_e (z_e + \xi \hat{z}_e) = l_e (z_e (\xi) )
.
\end{align*}

For all inactive edges and every commodity we have $\vec \gamma_e \vec \pi_\commodity < l_e (\totalFlow_e^0)$. For every commodity $j$ and every inactive edge $e$ we have $\vec \gamma_e^{\top} \vec \pi_j < l_e (z_e)$. Thus, by continuity of the cost functions $l_e$ we know that there is some $\epsilon_1 > 0$ such that
\begin{align*}
\vec \gamma_e^{\top} \vec \pi (\xi)_j &\leq l_e(z_e (\xi)) \quad \text{ for } \xi \leq \epsilon_1
\end{align*}
for all inactive edges and all commodities. Since $\hat{x}_{e,j}$ can only be negative if $x_{e,j}$ is positive there is an $\epsilon_2$ such that $x_{e,j} (\xi) \geq 0$ for all $0 \leq \xi \leq \epsilon_2$ Thus, by Lemma~\ref{lem:characterisationWEPotentials}, for every $0 \leq \xi \leq \epsilon := \min\{\epsilon_1, \epsilon_2\}$ the flow $\mcFlowVec^\xi$ is a Wardrop equilibrium for the demand $\vec Y^{\xi}$.
\end{appendixproof}

By Theorem~\ref{thm:convexProgramMC}, we can still use the same methodology for multi-commodity networks as for single-commodity networks except that the direction vectors $\Delta \vec X$ are computed in each region by a convex program.


\appendix

\includeappendixcollection{proof}{Missing Proofs}
\includeappendixcollection{figure}{Missing Figures}
\includeappendixcollection{example}{Missing Examples}

\longversiononly{
	\section{The Laplacian Matrix} \label{app:laplacian}

Given some directed graph, we consider the incidence matrix
\[
\vec{\Gamma} \in \R^{m,n}
\qquad \text{with} \qquad
\gamma_{e,v} =
\begin{cases}
1 & \text{if edge $e$ enters vertex $v$,} \\
-1 & \text{if edge $e$ leaves vertex $v$,} \\
0 & \text{otherwise.}
\end{cases}
\]
The matrix $\vec{L}' := \vec{\Gamma} \vec{\Gamma}^{\top}$ is called the \emph{(unweighted) Laplacian Matrix} of the Graph $G$. Given some edge weights $c_{e_1}, \dotsc, c_{e_m}$ we define the matrix $\vec{C} := \diag\big( c_{e_1}, \dotsc, c_{e_m} \big)$. Then the matrix $\vec{L} := \vec{\Gamma} \vec{C} \vec{\Gamma}^{\top}$ is called the  \emph{(weighted) Laplacian Matrix} of $G$.  Both the weighted and the unweighted Laplacians contain all essential information on the vertex-edge-relations (e.g. $\vec{L}' = \vec{\Delta} - \vec{A}$ where $\vec{\Delta}$ is the diagonal matrix containing the vertex degrees and $\vec{A}$ is the adjacency matrix of $G$). The matrices appear in many in many appliciation such as electrical networks (see \cite{seshu1961linear} for a reference) and random walks in graphs (as presented in \cite{doyle1984random}) are thus studied extensively (see, e.g., \cite{anderson1985eigenvalues, grone1991geometry, merris1994laplacian, mohar1991laplacian}).
The following theorem states some well-known, basic statements about the Laplacian matrix.

\begin{lemma} \label{lem:appendixLaplacianMatrix}
Given some non-negative edge weights $c_{e_1}, \dotsc, c_{e_m}$, the weighted Lapalacian matrix $\vec{L}$ has the following properties
\begin{compactenum}[(i)]
\item 
	The rank of $\vec L$ is $n- n_C$ where $n_C$ is the number of connected components when two vertices are considered connected if they are connected by an edge with $c_{e,t_e} \neq 0$.
\item 
	The matrix $\vec L$ is positive semi-definite.
\item 
	The row sum and column sum of $\vec L$ is zero for every row or column.
\item 
	If $G$ is connected then the reduced Laplacian matrix $\hat{\vec L}$ obtained from the matrix $\vec L$ by deleting the first row and column is positive definite and inverse-positive, i.e. $\hat{\vec{L}}_{\vec t}^{-1} \geq 0$.
\end{compactenum}
\end{lemma}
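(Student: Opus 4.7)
The plan is to establish all four properties via the standard calculation
\begin{equation*}
\vec{x}^\top \vec{L} \vec{x} \;=\; \vec{x}^\top \vec{\Gamma} \vec{C} \vec{\Gamma}^\top \vec{x} \;=\; \sum_{e=(v,w)\in E} c_e\,(x_w - x_v)^2,
\end{equation*}
which drives (i), (ii) and the positive-definiteness half of (iv). For (ii), since all $c_e\geq 0$, this expression is a sum of non-negative terms, hence $\vec{L}\succeq 0$. For (i), the kernel of $\vec{L}$ coincides with the set of vectors $\vec{x}$ for which $c_e(x_w-x_v)^2=0$ for every edge, i.e., $\vec{x}$ is constant on each connected component of the subgraph of \emph{active} edges (those with $c_e>0$); the indicator vectors of these $n_C$ components form a basis of $\ker \vec{L}$, so $\rank \vec{L}= n - n_C$. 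For (iii), every column of $\vec{\Gamma}$ contains exactly one $+1$ and one $-1$ (an edge enters one endpoint and leaves the other), so $\vec{1}^\top \vec{\Gamma} = \vec{0}$; hence $\vec{1}^\top \vec{L} = \vec{0}$ and, by symmetry, $\vec{L}\vec{1}=\vec{0}$, giving row and column sums zero.

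For (iv), positive-definiteness of $\hat{\vec{L}}$: when $G$ is connected with all weights positive (or, more precisely, when the active subgraph is connected) part (i) gives $\ker \vec{L} = \vspan\{\vec{1}\}$. I would then show that $\hat{\vec{L}}\hat{\vec{x}}=\vec{0}$ forces $\hat{\vec{x}}=\vec{0}$: extending $\hat{\vec{x}}$ by a zero in the first coordinate to a vector $\vec{x}\in\R^n$, the identity $\vec{1}^\top \vec{L} = \vec{0}$ together with $\hat{\vec{L}}\hat{\vec{x}}=\vec{0}$ implies $\vec{L}\vec{x}=\vec{0}$, so $\vec{x}\in\vspan\{\vec{1}\}$; combined with $x_1=0$ this yields $\vec{x}=\vec{0}$. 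Hence $\hat{\vec{L}}$ has trivial kernel and, being a principal submatrix of a positive semi-definite matrix, is positive definite.

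For the inverse-positivity $\hat{\vec{L}}^{-1}\geq 0$, the cleanest route is to observe that $\hat{\vec{L}}$ is a symmetric, weakly diagonally dominant matrix with non-negative diagonal entries and non-positive off-diagonal entries (the off-diagonal entry $(\hat{\vec{L}})_{vw} = -\sum_{e=(v,w)}c_e\leq 0$), and is strictly diagonally dominant in at least one row (namely a row $v$ adjacent to vertex $1$ via an active edge, which exists by connectedness). These are precisely the defining properties of a symmetric M-matrix (a Stieltjes matrix), and a standard result from M-matrix theory then yields $\hat{\vec{L}}^{-1}\geq 0$ componentwise. An alternative that avoids citing M-matrix theory is an electrical-network / matrix-tree argument: $(\hat{\vec{L}}^{-1})_{vw}$ equals the vertex potential at $v$ when a unit current is injected at $w$ and extracted at $1$, which by the probabilistic interpretation (or by Cramer's rule applied to the all-minors matrix-tree theorem) is a ratio of non-negative spanning-forest weights and hence non-negative.

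The only delicate step is inverse-positivity in (iv); the other three parts are direct from the quadratic-form identity and the structure of $\vec{\Gamma}$. I would carry out the proof in the order (iii), (ii), (i), (iv), since (iii) and (ii) are immediate, (i) sets up the kernel description needed to deduce positive-definiteness in (iv), and inverse-positivity is the final and technically heaviest ingredient.
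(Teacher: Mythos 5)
Your proposal is correct and follows essentially the same route as the paper: the quadratic-form identity $\vec{x}^\top\vec{L}\vec{x}=\sum_e c_e(\vec\gamma_e^\top\vec x)^2$ for (ii), the zero row/column sums of $\vec\Gamma$ for (iii), and the M-matrix (Stieltjes) argument for inverse-positivity in (iv). The only differences are that you prove (i) directly via the kernel characterization where the paper simply cites a reference, and you spell out the positive-definiteness of $\hat{\vec L}$ in more detail; both are sound.
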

\begin{proof}
For statement $(i)$ see e.g. \cite[Theorem~1]{anderson1985eigenvalues}. Let $\vec{C}^*$ be the diagonal matrix containing the values $\sqrt{c_e}$. Then, for any vector $\vec{x} \in \R^n$, $\vec{x}^\top \vec{L} \, \vec{x} = (\vec{C}^* \vec{\Gamma}^\top \vec{x})^{\top} \, (\vec{C}^* \vec{\Gamma}^\top \vec{x}) = \Vert \vec{C}^* \vec{\Gamma} ^\top \vec{x} \Vert^2 \geq 0$. Thus $(ii)$ follows. Since $\vec{\Gamma}$ has zero row sum by definition, $(iii)$ follows directly. Finally, the first three statements imply that $\hat{\vec{L}}$ has full rank and is positive definite. The matrix $\vec{L}$ (and thus also $\hat{\vec{L}}$)  has only non-positive off-diagonal entries and all eigenvalues are non-negative. This means, that $\hat{\vec{L}}$ is a (non-singular) $M$-matrix and thus inverse-positive (see \cite{plemmons1977m} for a survey on properties of $M$-matrices). 
\end{proof}

\subsection{The (Pseudo-)Inverse of the Laplacian Matrix} \label{app:laplacian:inverse}

As explained in Section~\ref{sec:basic}, in every region $\vec{t}$, the Laplacian matrix $\vec{L} = \vec{\Gamma} \vec{C}_{\vec{t}} \vec{\Gamma}^{\top}$ maps any potential vector $\vec{\pi}$ to the excess vector $\vec{y} = \vec{L} \, \vec{\pi}$. Since we are interested in the opposite direction---we want a potential $\vec{\pi}$ for every demand vector $\lambda \cdot \Delta \vec{y}$---we need the inverse of $\vec{L}$. By Lemma~\ref{lem:LaplacianMatrix} we know that the rank is $n-1$ (if we assume that the Graph $G$ is connected) and the kernel of $\vec{L}$ is $\ker (\vec{L}) =  \vspan ( \vec{1} )$, where $\vec{1}$ is the all one vector. So it is not possible to find an inverse on $\R^n$. If, however, we restrict the matrix $\vec{L}$ to the subspace $\vec{R}^{n-1}$ by deleting the first row and column, we obtain the non-singular matrix $\hat{\vec{L}}^{-1}$.

For a more formal approach, we introduce the notion of a pseudo inverse.
\begin{definition}
For any matrix $\vec{A} \in \R^{n \times m}$ we say $\vec{X}$ is a \emph{pseudo inverse} of $\vec{A}$ if
\begin{equation} \label{eq:pseudoInverse}
\vec{A} \vec{X} \vec{A} = \vec{A}
\quad
\text{ and }
\quad
\vec{X} \vec{A} \vec{X} = \vec{X}
.
\end{equation}
If, further, it holds that 
\begin{equation} \label{eq:moorePenroseInverse}
(\vec{A} \vec{X} )^{\top} = \vec{A} \vec{X}
\qquad
\text{ and }
(\vec{X} \vec{A} )^{\top} = \vec{X} \vec{A}
\end{equation}
we call $\vec{X}$ the \emph{Moore-Penrose-Inverse} of $\vec{A}$ and write $\vec{A}^{\dagger} := \vec{X}$.
\end{definition}

As shown by Penrose in  \cite{penrose1955generalized}, the Moore-Penrose-Inverse exists and is unique. For  more general analysis on pseudo inverses, see e.g. \cite{ben2003generalized}. For the Moore-Penrose-Inverse of the Laplacian matrix, we in particular obtain
\[
\vec{L} \vec{L}^{\dagger} = \vec{L}^{\dagger} \vec{L} = \vec{Q}
\]
where $\vec{Q} := \vec{I}_n - \frac{1}{n} \, \vec{1} \, \vec{1}^{\top}$ is the projection on the subspace orthogonal to the kernel of $\vec{L}$. Further, the Moore-Penrose-Inverse can be obtained explicitly as
\begin{equation} \label{eq:laplaceMoorePenroseInverse}
\vec{L}^{\dagger} = \big(\vec{L} + \vec{1} \, \vec{1}^{\top} \big)^{-1} - \frac{1}{n^2} \, \vec{1} \, \vec{1}^{\top}
\end{equation}
in this case (see \cite[p. 79]{ben2003generalized}).
So $\vec{L}^{\dagger}$ maps excess vectors $\vec{y}$ to potential vectors $\vec{\pi}$ that are orthogonal to $\vec{1}$, i.e. to potential vectors with zero-sum. Although the Moore-Penrose-Inverse has nice properties (as uniqueness) and is commonly used for the inverse of the Lapalacian (e.g. in \cite{klein1993resistance}), we will use another pseudo inverse for two reasons: First, equation \eqref{eq:laplaceMoorePenroseInverse} makes it difficult to analyze properties of $\vec{L}^{\dagger}$. Second, the potentials obtained by using $\vec{L}^{\dagger}$ are zero sum but do not satisfy $\pi_s = 0$. This makes it harder to interpet these values as travel times from the sink to any vertex $v$.

This is why we use another, natural approach. We define the matrix
\[
\vec{L}^* := 
\begin{bmatrix}
0 & \vec{0}^{\top} \\
\vec{0} & \hat{\vec{L}}^{-1}
\end{bmatrix}
\] 
where $\vec{0}$ is the zero vector in $\R^{n-1}$. It is easy to verify that $\vec{L}^*$ is a pseudo inverse in the sense  of the conditions \eqref{eq:pseudoInverse}. In particular, $\vec{L}^* \vec{L} = ( \vec{L} \vec{L}^* )^{\top} = \tilde{\vec{Q}}$ where $\tilde{\vec{Q}} := \vec{I}_n - \vec{1} \vec{u}_1^{\top}$ is the projection on the subspace $\{ \vec{v} \in \R^n : v_1 = 0 \}$ and $\vec{u}_1$ is the first unit vector. So, $\vec{L}^*$ maps excess vectors $\vec{y}$ to potentials $\vec{\pi}$ with $\pi_1 = 0$. 

Overall, $\vec{L}^*$ is a easy to obtain pseudo inverse of $\vec{L}$ that further maps to a space of potentials that are fixed to $\pi_1 = 0$. All relevant information is kept in the matrix $\hat{\vec{L}}$, thus it is sufficient to store and govern this matrix.

\subsection{Effective Resistances/Effective Travel Times}

In this subsection, we state some well-known properties of the effective resistance.
\begin{definition}
For every pair of vertices $v,w \in V$, define the vector $\vec{\gamma}_{v,w} \in \R^n$ as the vector with $\gamma_{v,w,u} = 1$ if $u = w$, $\gamma_{v,w,u} = -1$ if $u=v$, and $\gamma_{v,w,u} = 0$ otherwise. Thus, $\vec{\gamma}_{v,w}$ is the incidence vector of a (possibly non existent) edge between the vertices $v$ and $w$. We then define by
\[
\Omega_{v,w} := \vec{\gamma}_{v,w}^{\top} \, \vec{L}^* \, \vec{\gamma}_{v,w}
\]
the \emph{effective resistance} or \emph{effective travel time} between the vertices $v$ and $w$.
\end{definition}
Note that the effective travel time can also be computed as $\Omega_{v,w} = \hat{\vec{\gamma}}_{v,w}^{\top} \hat{\vec{L}}^{-1} \hat{\vec{\gamma}}_{v,w} = \vec{\gamma}_{v,w}^{\top} \, \vec{L}^{\dagger} \, \vec{\gamma}_{v,w}$. The quantity $\Omega_{v,w}$ measures the potential difference $\pi_w - \pi_v$ when we send a unit flow from $v$ to $w$ and can thus be interpreted as the travel time between $v$ and $w$ if a unit flow is routed between these two vertices. In electrical networks, this quantity is known as the \emph{effective resistance}.

We can state the following properties of the effective travel times.
\begin{theorem} \label{thm:effectiveTravelTime}
The effective travel time satisfies the following properties.
\begin{compactenum}[(i)]
\item \label{thm:effectiveTravelTime:metric}
	The effective travel time is a metric on $G$, i.e. for all $u,v,w \in V$ we have
	\begin{align*}
	\Omega_{v,w} &\geq 0 \text{ and } \Omega_{v,w} = 0 \Leftrightarrow v=w, \\
	\Omega_{v,w} &= \Omega_{w,v}, \\
	\Omega_{v,w} &\leq \Omega_{v,u} + \Omega_{u,w}.
	\end{align*}
\item \label{thm:effectiveTravelTime:distance}
	For every $v$-$w$ path $P$ in $G$, we have $\Omega_{v,w} \leq \sum_{e \in P} a_e$ with equality if and only if there is a unique path between $v$ and $w$. In particular, for any edge $e = (v,w) \in E$, we have $\Omega_{v,w} \leq a_e$.
\item \label{thm:effectiveTravelTime:monotonicity}
	The effective travel times are monotone in the (slopes of the) cost functions $l_e$. More precisely
	\begin{align*}
	\frac{\partial}{\partial a_e} \Omega_{v,w} &\geq 0 \quad \text{for all }v,w \in V \text{ and } e \in E.
	\end{align*}
	In particular, if $e$ is not contained in any $v$-$w$ path, then $\frac{\partial}{\partial a_e} \Omega_{v,w} = 
0$.
\end{compactenum}
\end{theorem}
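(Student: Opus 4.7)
My approach rests on two pillars: (a) the positive-definiteness and kernel structure of the Laplacian from Lemma~\ref{lem:LaplacianMatrix}, and (b) the Thomson variational characterization
\[
\Omega_{v,w} \;=\; \min_{\vec{f}\in \R^m}\Big\{\textstyle\sum_{e\in E} a_e\, f_e^2 \ :\ \vec{\Gamma}\vec{f} = \vec{\gamma}_{v,w}\Big\},
\]
which I will derive in a short preamble by dualizing the defining quadratic program: stationarity yields $\vec{f} = \tfrac{1}{2}\vec{C}\vec{\Gamma}^\top\vec{\pi}$, the constraint becomes $\vec{L}\vec{\pi} = 2\vec{\gamma}_{v,w}$ (so $\vec{\pi} = 2\vec{L}^{*}\vec{\gamma}_{v,w}$), and substitution gives optimal value $\vec{\gamma}_{v,w}^\top \vec{L}^{*}\vec{\gamma}_{v,w} = \Omega_{v,w}$.

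For (i), nonnegativity, symmetry, and ``$\Omega_{v,w}=0\iff v=w$'' are immediate from the quadratic-form definition: $\vec{L}^{*}$ is symmetric and agrees with the positive-definite $\hat{\vec{L}}^{-1}$ on the hyperplane containing $\vec{\gamma}_{v,w}$, $\vec{\gamma}_{w,v} = -\vec{\gamma}_{v,w}$ yields $\Omega_{v,w}=\Omega_{w,v}$, and for $v\neq w$ the reduced vector $\hat{\vec{\gamma}}_{v,w}$ is nonzero. The triangle inequality requires more work: using $\vec{\gamma}_{v,w} = \vec{\gamma}_{v,u}+\vec{\gamma}_{u,w}$ and expanding the quadratic form with the symmetry of $\vec{L}^{*}$,
\[
\Omega_{v,w} \;=\; \Omega_{v,u} + \Omega_{u,w} + 2\,\vec{\gamma}_{v,u}^\top\vec{L}^{*}\vec{\gamma}_{u,w}.
\]
Setting $\vec{\pi}^{uw} := \vec{L}^{*}\vec{\gamma}_{u,w}$, the cross term equals $\pi^{uw}_u - \pi^{uw}_v$. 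Since $\vec{L}\vec{\pi}^{uw} = \vec{\gamma}_{u,w}$, the potential is harmonic at every vertex outside $\{u,w\}$; a direct check of $(\vec{L}\vec{\pi}^{uw})_u = -1 < 0$ rules out $u$ being a weighted local maximum, and the analogous calculation at $w$ rules out $w$ being a local minimum. The discrete maximum principle for the weighted Laplacian then forces $u$ to attain the global minimum of $\vec{\pi}^{uw}$, so $\pi^{uw}_u - \pi^{uw}_v \leq 0$ for every $v$, making the cross term nonpositive.

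For (ii), the indicator $\vec{\chi}_P$ of a simple $v$-$w$ path $P$ is a feasible unit flow with energy $\sum_{e\in P} a_e$, so Thomson gives $\Omega_{v,w} \leq \sum_{e\in P} a_e$. Strict convexity of the energy on the affine feasible set implies a unique minimizer, so equality holds iff $\vec{\chi}_P$ is that minimizer, which in turn happens iff no second feasible flow exists, i.e., iff $P$ is the unique $v$-$w$ path. For (iii), the envelope theorem applied to the Thomson program yields $\partial\Omega_{v,w}/\partial a_e = (f^{*}_e)^2 \geq 0$ where $\vec{f}^{*}$ is the unique minimizer. If $e$ lies on no simple $v$-$w$ path, then any feasible flow decomposes into simple $v$-$w$ paths plus cycles, and strict convexity rules out routing on redundant cycles, so $f^{*}_e = 0$ and the partial derivative vanishes.

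\textbf{Main obstacle.} The delicate step is the triangle inequality for $\Omega$ itself: the naive Cauchy--Schwarz bound on concatenated flows only yields the weaker statement that $\sqrt{\Omega_{v,w}}$ satisfies the triangle inequality. The maximum-principle route above is specifically tailored to produce the stronger inequality, and correctness depends on the positivity of the conductances $c_{e,t_e}$ and on a careful check that the sign of the cross term is invariant under the chosen normalization (which it is, since $\pi^{uw}_u - \pi^{uw}_v$ is invariant under adding a constant to $\vec{\pi}^{uw}$).
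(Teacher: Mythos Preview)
The paper does not actually prove this theorem: its entire proof reads ``See \cite{klein1993resistance}.'' So there is no approach to compare against; you are supplying what the paper outsources.

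Your argument is essentially correct and self-contained. The derivation of Thomson's principle, the maximum-principle proof of the triangle inequality for $\Omega$ (not just $\sqrt{\Omega}$), and the envelope-theorem argument for Rayleigh monotonicity are all standard and sound. Two places deserve tightening. In part~(ii), the phrase ``equality holds iff $\vec{\chi}_P$ is the minimizer, which in turn happens iff no second feasible flow exists'' is false as written: if the graph contains any cycle (even one disjoint from $P$), there are infinitely many feasible unit flows, yet $\vec{\chi}_P$ can still be optimal. The correct characterization is that $\vec{\chi}_P$ is optimal iff it satisfies Kirchhoff's voltage law, i.e., iff no cycle contains an edge of $P$, which is equivalent to $P$ being the unique $v$--$w$ path. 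In part~(iii), ``strict convexity rules out routing on redundant cycles'' is a little glib; the clean way is to note that the KKT conditions give $a_e f^*_e = \pi^*_w - \pi^*_v$, so for any cycle the weighted flow telescopes to zero, whence the optimal flow has no circulation component and $f^*_e = 0$ on any edge not lying on a simple $v$--$w$ path. With those two clarifications your proof is complete.
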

\begin{proof}
See \cite{klein1993resistance}.
\end{proof}
The first two statements of Theorem~\ref{thm:effectiveTravelTime} show that the effective travel times are a distance measure on the graph $G$, where the distance $\Omega_{v,w}$ is always less or equal then the length of a shortest path with respect to the edge lengths $a_e$.

The third part of Theorem~\ref{thm:effectiveTravelTime} is also known as \emph{Rayleighs Monotonicity Law} (see for example \cite{doyle1984random}). This implies that increasing the slope of some cost function increases the overall travel time. In particular, removing edges (by setting $a_e = \infty$) increases the travel time for all users. This might seem as a contradiction to the Braess paradox where removing an edge actually improves the travel times for all users. To resolve this inconsistency, note that the effective travel times are travel times with respect to the strictly linear cost functions $l_e(x) = a_e x$ while for the Braess paradox we need edges with cost functions $l_e(x) = a_e x + b_e$ with $b_e > 0$. These offsets (or in general non-linearity) and the discontinuity of the cost functions at $x = 0$ (to model the direction of the edges) are necessary to observe the effects of the Braess paradox.

However, the results of Theorem~\ref{thm:effectiveTravelTime} apply to the potential $\Delta \hat{\vec{\pi}} = \hat{\vec{L}}_{\vec{t}}^{-1} \Delta \hat{\vec{y}}$ and the induced flow $\Delta \vec{x} = \vec{C}_{\vec{t}} \vec{\Gamma}^{\top} \Delta \vec{\pi}$. So the change of the Wardrop equilibrium in every region $R_{\vec{t}}$ is not affected by the Braess paradox---increasing a slope of some cost function (or removing edges) will always result in an increase in $\Delta \pi_t - \Delta \pi_s$ and thus increase the rate at which the travel time increases. 
}

\section{Nested Braess Networks}
\label{app:braess}

Let $\vec x$ be a directed Wardrop equilibrium for some demand rate $r \geq 0$. Then we call the set $S(\vec x) := \{ e \in E : x_e > 0 \}$ the support set of the flow. Flows with two distinct support sets must be induced by potentials in different regions $R_{\vec t^1}, R_{\vec t^2}$ in the potential space. So, the number of different support sets is a lower bound to the number of function parts of the Wardrop equilibrium functions and thus a lower bound for the number of iterations that are needed for their computation.
 
We construct a family of directed single-commodity networks $(\mathcal{B}^{(j)})_{j \in \N}$ such that every network has $2j+2$ vertices and $4j+1$ edges but admits $2^{j+1}$ different support sets.

For $j \in \N$ we define a set of nodes $V_j := \{s = v_0, v_1, \dotsc, v_{2j}, t = v_{2j+1}\}$ and edge sets
\begin{align*}
E^{(1)}_j &:= \left\{ (v_i, v_{i+1}) : i \in \{0,\dotsc,2j\}, i \neq j  \right\} \\
E^{(2)}_j &:= \left\{ (v_{i}, v_{2j-i}) : i \in \{0,\dotsc,j-1\} \right\}  \\
E^{(3)}_j &:= \left\{ (v_{i+1}, v_{2j+1-i}) : i \in \{0,\dotsc,j-1\} \right\}  \\
E_j &:= E^{(1)}_j \cup E^{(2)}_j \cup E^{(3)}_j \cup \{ (v_j, v_{j+1}) \} 
\end{align*}
and call the graph $G^{B,j}$ the $j$-th nested Braess graph. Further, we define coefficients
\begin{align*}
a^j_e &=
\begin{cases}
1 &\text{if } e \in E_j^{(1)} \\
0 &\text{else}
\end{cases} \quad
b^j_e =
\begin{cases}
10^{j-1-i} &\text{if } e = (v_{i}, v_{2j-i}) \in E^{(2)}_j, i \in \{0, ..., j-1\} \\
10^{i} &\text{if } e = (v_{i+1}, v_{2j+1-i}) \in E^{(3)}_j, i \in \{0, ..., j-1\} \\
0 &\text{else}
\end{cases}
\end{align*}
of the cost functions $l^j_e(x) = a^j_e x_e + b^j_e$ and one commodity with $s=v_0$ and $t=v_{2l+1}$ and a demand rate of $q \geq 0$. Then the network $\mathcal{B}^{(j)}$ consisting of the graph $G^{B,j}$ with the cost functions
$l^j$ is called the $j$-th nested Braess network.

\begin{figure}[t]
\begin{center}
\begin{subfigure}[b]{.35\textwidth}
\begin{center}
\begin{tikzpicture}
\useasboundingbox (0,-2.25) rectangle (4,2.25);

\draw (0,0) node[solid] (1) {} node[left] {$s$};
\draw (2,1) node[solid] (2) {} node[above] {$v_2$};
\draw (2,-1) node[solid] (3) {} node[below] {$v_1$};
\draw (4,0) node[solid] (4) {} node[right] {$t$};

\draw[->] (1) -- (2) node[midway, above] {$1$};
\draw[->] (1) -- (3) node[midway, below] {$x$};
\draw[->] (3) -- (2) node[midway, left] {$0$};
\draw[->] (2) -- (4) node[midway, above] {$x$};
\draw[->] (3) -- (4) node[midway, below] {$1$};
\end{tikzpicture}
\caption{The Braess network $\mathcal{B}^{(1)}$.}
\end{center}
\end{subfigure}
\begin{subfigure}[b]{.63\textwidth}
\begin{center}
\begin{tikzpicture}
\useasboundingbox (0,-2.25) rectangle (8,2.25);

\draw (0,0) node[solid] (1)  {} node[left] {$s$};
\draw (4,2) node[solid] (2) {} node[above] {$v_6$};
\draw (4,-2) node[solid] (3) {} node[below] {$v_1$};
\draw (8,0) node[solid] (4) {} node[right] {$t$};

\draw (2,0) node[solid] (5) {} node[left] {$v_2$};
\draw (6,0) node[solid] (6) {} node[right] {$v_5$};

\draw (4,-.75) node[solid] (7) {} node[below] {$v_3$};
\draw (4,.75) node[solid] (8) {} node[above] {$v_4$};

\draw[->] (1) -- (2) node[midway, above left] {$100$};
\draw[->] (1) -- (3) node[midway, below left] {$x$};

\draw[->] (2) -- (4) node[midway, above right] {$x$};
\draw[->] (3) -- (4) node[midway, below right] {$100$};

\draw[->] (3) -- (5) node[midway, left] {$x$};
\draw[->] (3) -- (6) node[midway, right] {$10$};

\draw[->] (5) -- (2) node[midway, left] {$10$};
\draw[->] (6) -- (2) node[midway, right] {$x$};

\draw[->] (5) -- (7) node[midway, below] {$x$};
\draw[->] (5) -- (8) node[midway, above] {$1$};
\draw[->] (7) -- (8) node[midway, left] {$0$};
\draw[->] (7) -- (6) node[midway, below] {$1$};
\draw[->] (8) -- (6) node[midway, above] {$x$};
\end{tikzpicture}
\caption{The third nested Braess network $\mathcal{B}^{(3)}$.}
\end{center}
\end{subfigure}
\end{center}
\caption{Two nested Braess networks for $j=1$ and $j=3$ with cost functions on the edges.}
\label{fig:Braess}
\end{figure}
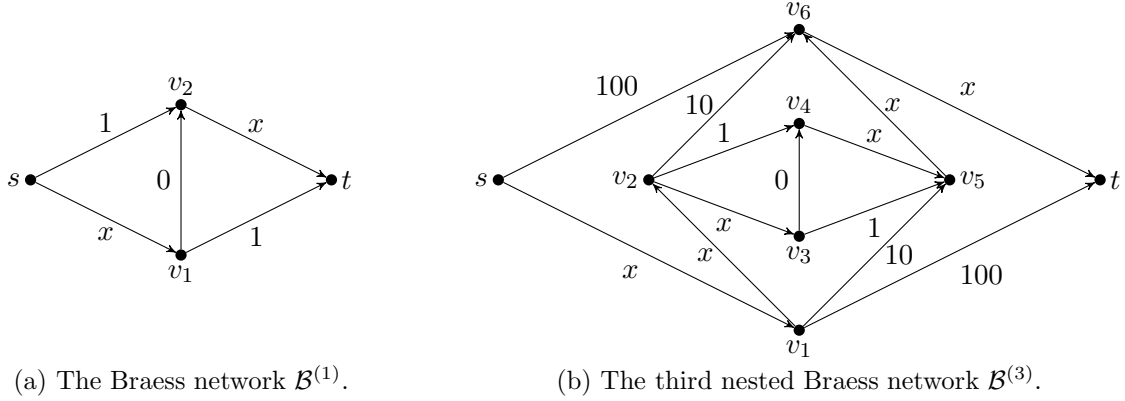

For $j=1$ the network $\mathcal{B}^{(1)}$ is just the classical Braess network as depicted in figure \ref{fig:Braess}(a). Figure \ref{fig:Braess}(b) shows the third nested Braess network. 
It is easy to see that for every $j \geq 2$ and every $p \in \{1, \dotsc, j-1\}$ the subnetwork of $\mathcal{B}^{(j)}$ induced by the node set 
\[
V^{(j)}_{p} := \left\{ v_i : i \in \{p, \dotsc, 2j+1 - p\} \right\}
\]
is again a nested Braess network, namely $\mathcal{B}^{(j-p)}$.

\begin{lemma} \label{lem:nestedBraessHighDemand}
Let $j \in \N$ be fixed. Then for every demand rate $q \geq 2 \cdot 10^{j-1}$ we have that 
\[
x_e (q):=
\begin{cases}
\nicefrac{q}{2} &\text{if } e \in \{ (v_0, v_1), (v_0, v_{2l}), (v_1, v_{2l+1}), (v_{2l}, v_{2l+1}) \} \\
0 &\text{else}
\end{cases}
\]
is a Wardrop equilibrium flow in $\mathcal{B}^{(j)}$ and the travel time from $s$ to $t$ is
\begin{equation} \label{eq:nestedBraessJointLatency}
\pi_t (q) =  10^{j-1} + \frac{q}{2}
.
\end{equation}
This means for sufficiently high demand, all flow is only routed over the outer edges and none of the inner components is used.
\end{lemma}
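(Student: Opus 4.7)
My plan is to verify the Wardrop conditions directly via the path-cost characterisation.

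First I will check that $x(q)$ is a feasible $s$-$t$ flow of value $q$: flow conservation at $s$, $v_1$, $v_{2j}$ and $t$ reduces to $q/2+q/2=q$, and holds trivially at every other vertex since no incident edge carries flow. The support of $x(q)$ then consists of exactly the two outer paths $P_1 = s\to v_1\to t$ and $P_2 = s\to v_{2j}\to t$, and evaluating the piecewise linear cost functions on them gives $l_{P_1}(x(q)) = l_{P_2}(x(q)) = 10^{j-1} + q/2$. This simultaneously proves \eqref{eq:nestedBraessJointLatency} and equalises the costs of the two flow-carrying paths.

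Next I will enumerate every alternative simple $s$-$t$ path. The structural claim is that any such path has the form $s\to v_1\to Q\to v_{2j}\to t$ where $Q$ is a simple $v_1$-$v_{2j}$ path in the subgraph induced by $\{v_1,\dots,v_{2j}\}$. The argument rests on two bottleneck observations obtained from the definition of $E_j$: vertex $v_1$ has $(v_0,v_1)$ as its only incoming edge and vertex $v_{2j}$ has $(v_{2j},v_{2j+1})$ as its only outgoing edge. Both follow from inspecting the index ranges of $E^{(2)}_j$ and $E^{(3)}_j$ and checking that the singleton edge $(v_j,v_{j+1})$ does not contribute additional incidence at $v_1$ or $v_{2j}$. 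Combined with the fact that $s$ has exactly the two out-neighbours $v_1,v_{2j}$ and $t$ has exactly the two in-neighbours $v_1,v_{2j}$, this forces every path different from $P_1,P_2$ to enter the inner part through $v_1$ and leave it through $v_{2j}$.

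For such an alternative path the only flow-dependent contribution to the cost comes from the two outer $E^{(1)}_j$-edges $(v_0,v_1)$ and $(v_{2j},v_{2j+1})$, each of which carries $q/2$ under $x(q)$. All other edges on the path are unused, so their cost equals their offset $b_e^j \geq 0$. Writing $c(Q) := \sum_{e\in Q} b_e^j$, the total cost of the alternative path is $q + c(Q)$ and the Wardrop inequality reduces to $q + c(Q) \geq q/2 + 10^{j-1}$, i.e.\ $c(Q) \geq 10^{j-1} - q/2$. Under the hypothesis $q \geq 2\cdot 10^{j-1}$ the right-hand side is non-positive while $c(Q) \geq 0$, so the inequality holds for every admissible $Q$.

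The main obstacle I anticipate is not the inequality itself but the structural enumeration: one must carefully verify from the definitions of $E^{(1)}_j,E^{(2)}_j,E^{(3)}_j$ that no additional edge of the graph provides an in-edge at $v_1$ or an out-edge at $v_{2j}$, which is what legitimises the factorisation $s\to v_1\to Q\to v_{2j}\to t$. Tightness of the bound $q \geq 2\cdot 10^{j-1}$ is then witnessed by the central alternative with $c(Q)=0$, namely $Q = v_1\to v_2\to\dots\to v_j\to v_{j+1}\to\dots\to v_{2j}$ that combines only $E^{(1)}_j$-edges with the zero-cost edge $(v_j,v_{j+1})$.
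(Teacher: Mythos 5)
Your proof is correct and takes essentially the same route as the paper's: both equalise the costs of the two outer paths at $10^{j-1}+q/2$ and then lower-bound every other $s$-$t$ path by the combined cost $q/2+q/2=q$ of the two flow-carrying outer edges $(v_0,v_1)$ and $(v_{2j},v_{2j+1})$ that it must contain, which is at least $10^{j-1}+q/2$ exactly because $q\geq 2\cdot 10^{j-1}$. The only difference is that you make explicit the in-/out-degree argument at $v_1$ and $v_{2j}$ that the paper covers with ``by the definition of the network''; this is a welcome but not substantive addition.
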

\begin{proof}[Proof of Lemma~\ref{lem:nestedBraessHighDemand}]
We see that the outer path $P_1 =\{ (v_0, v_1), (v_1, v_{2l+1}) \}$ and the outer path $P_2 =\{ (v_0, v_{2l}), (v_{2l}, v_{2l+1}) \}$ both have the same cost $l_{P_i} (x(q)) = 10^{j-1} + \nicefrac{q}{2}$. By the definition of the network, any other path $Q \in \mathcal{P}$ must contain the edges $(v_0, v_1)$ and $(v_{2l}, v_{2l+1})$. This we have for these paths $Q$ that
\begin{align*}
l_Q (x(q)) &\geq l_{(v_0, v_1)} \left( x_{(v_0, v_1)} (r) \right) + l_{(v_{2l}, v_{2l+1})} \left( x_{(v_{2l}, v_{2l+1})} (q) \right) \\
	&= \frac{q}{2} + \frac{q}{2} \geq 10^{j-1} + \frac{r}{2} = l_{P_i} (x(r))
\end{align*}
and thus $x_e (q)$ is a Wardrop equilibrium with $\pi_t (q) =  10^{j-1} + \nicefrac{q}{2}$.
\end{proof}

\begin{lemma} \label{lem:nestedBraessUsedSets}
Let $j \in \N$ be fixed. Then there are $2^{j+1}$ different demand rates $q_1, \dotsc, q_{2^{j+1}} \in [0, 3 \cdot 10^{j-1})$ such that all support sets $S^{q_1}, \dotsc, S^{q_{2^{j+1}}}$ of the Wardrop equilibrium flows are different.
\end{lemma}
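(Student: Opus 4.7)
The plan is to proceed by induction on $j$, exploiting the self-similar structure of $\mathcal{B}^{(j)}$: removing the four outer edges $(v_0,v_1), (v_{2j},v_{2j+1}), (v_0,v_{2j}), (v_1,v_{2j+1})$ leaves the subgraph induced on $\{v_1,\dots,v_{2j}\}$, which, after the index shift $v_i \mapsto v_{i-1}$, is exactly $\mathcal{B}^{(j-1)}$ with source $v_1$ and sink $v_{2j}$. The two outer \emph{offset} edges $(v_0,v_{2j})$ and $(v_1,v_{2j+1})$ carry the large constant cost $10^{j-1}$, whereas every constant in the inner copy is at most $10^{j-2}$.

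For the base case $j=1$, a direct computation in the classical Braess network yields four demand regimes over $[0,3)$ with pairwise distinct supports: the empty support at $q=0$; the single zig-zag path $s\to v_1\to v_2\to t$ for $q\in(0,1)$; all three simple $s$-$t$ paths for $q\in(1,2)$; and only the two outer paths for $q\in(2,3)$. Picking any representative from each interval completes the base case.

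For the inductive step, assume the claim for $\mathcal{B}^{(j-1)}$ and let $r_1,\dots,r_{2^j}\in [0, 3\cdot 10^{j-2})$ be the rates realising the pairwise distinct inner supports $T_1,\dots,T_{2^j}$. The first $2^j$ rates for $\mathcal{B}^{(j)}$ will just be $r_1,\dots,r_{2^j}$. The technical claim is that at each such $q$ the two outer offset edges of $\mathcal{B}^{(j)}$ are unused in the Wardrop equilibrium: routing all demand along an outer path of the inner network (cost $q+10^{j-2}$) upper-bounds the inner equilibrium cost $\pi_{v_{2j}}-\pi_{v_1}$, so adding the feeder contributions gives $\pi_{v_{2j}}-\pi_{v_0}\leq 2q+10^{j-2}<10^{j-1}$ whenever $q<3\cdot 10^{j-2}$, with the symmetric bound for $\pi_{v_{2j+1}}-\pi_{v_1}$; by Lemma~\ref{lem:characterisationWEPotentials} the two outer offset edges can therefore stay inactive. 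The $\mathcal{B}^{(j)}$-equilibrium at $r_i$ then decomposes into the two feeder edges carrying $r_i$ together with the inner $\mathcal{B}^{(j-1)}$-equilibrium, so its support is $\emptyset$ (for $r_i=0$) or $\{(v_0,v_1),(v_{2j},v_{2j+1})\}\cup T_i$, and these $2^j$ supports are pairwise distinct.

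For the remaining $2^j$ rates, Lemma~\ref{lem:nestedBraessHighDemand} immediately provides one rate in $[2\cdot 10^{j-1},3\cdot 10^{j-1})$ whose support is exactly the four outer edges, which contains $(v_0,v_{2j})$ and is therefore distinct from every first-half support. The other $2^j-1$ rates will come from a reverse-unwinding argument on the intermediate interval $[3\cdot 10^{j-2},2\cdot 10^{j-1})$: once the outer offset edges activate, their active-edge Wardrop condition pins the gap $\pi_{v_{2j}}-\pi_{v_1}$ as a monotone function of the outer-edge flow, which itself increases monotonically in $q$, driving the inner $\mathcal{B}^{(j-1)}$-equilibrium through its $2^j$ distinct regimes in the reverse order $T_{2^j},\dots,T_1$; each resulting $\mathcal{B}^{(j)}$-support contains an outer offset edge (so is distinct from the first half) and realises a different inner trace $T_i$ (so they are pairwise distinct). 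The main obstacle is making the reverse-unwinding rigorous in the general inductive setting --- one must verify the asserted monotonicity and check that all $2^j$ inner transitions really do fall inside $[3\cdot 10^{j-2},2\cdot 10^{j-1})$ --- though the explicit $j=2$ calculation, which produces eight supports at representatives from $\{0\}\cup(0,1)\cup(1,2)\cup(2,6)\cup(6,14)\cup(14,15)\cup(15,20)\cup(20,30)$, serves as both a sanity check and a template.
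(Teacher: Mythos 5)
Your overall architecture --- induction on $j$, the self-similarity of $\mathcal{B}^{(j)}$, and the cost bound showing that for $q<3\cdot 10^{j-2}$ the two offset edges of cost $10^{j-1}$ stay inactive so that the first $2^j$ supports are $\emptyset$ and $\tilde S_i\cup\{(v_0,v_1),(v_{2j},v_{2j+1})\}$ --- matches the paper's proof. The base case and the first half of the inductive step are fine.

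The gap is in your second half, and you have flagged it yourself: the ``reverse-unwinding'' rests on an asserted monotonicity of the inner flow in $q$ and on the claim that all $2^j$ inner transitions occur inside $[3\cdot 10^{j-2},2\cdot 10^{j-1})$ in reverse order. Neither is proved, and monotonicity of edge flows in the demand is exactly the property the paper warns can fail (Fisk's paradox); you should not lean on it. The paper avoids this entirely: by Lemma~\ref{lem:nestedBraessHighDemand} the flow $x_{e_{\text{in}}}$ through the inner component equals $3\cdot 10^{j-2}$ at $q=3\cdot 10^{j-2}$ and equals $0$ at $q=2\cdot 10^{j-1}$, so by \emph{continuity} of the equilibrium flow functions and the intermediate value theorem there exist rates $\hat q_1,\dots,\hat q_{2^j}\in[3\cdot 10^{j-2},2\cdot 10^{j-1}]$ with $x_{e_{\text{in}}}(\hat q_i)=\tilde q_i$; each such rate induces the inner support $\tilde S_i$ together with all four outer edges, which is distinct from every first-half support (it contains an offset edge) and from the other second-half supports (the inner traces differ). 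No ordering or monotonicity is needed, and since $2\cdot 10^{j-1}<3\cdot 10^{j-1}$ all rates lie in the required interval. Replacing your unwinding argument with this IVT step closes the gap.
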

\begin{proof}[Proof of Lemma~\ref{lem:nestedBraessUsedSets}]
We prove this by induction on $j$.

For $j=1$ this is easy to verify. For $q_1 = 0$ no edges are used, i.e. $S^{q_1} = \emptyset$. For $q_2 \in (0, 1]$ only the inner path is used, i.e. $S^{q_2} = \{ (v_0, v_1), (v_1, v_2), (v_2, v_3) \}$. For $q_3 \in (1, 2]$ all edges are used, i.e. $S^{q_2} = E$ and finally for $q_4 > 2$ all edges but the inner edge is used, i.e. $S^{q_4} = E \setminus \{(v_1, v_2)\}$.

Now let $j \geq 2$. Then there are essentially three paths in the network: The outer paths
$P_1 =\{ (v_0, v_1), (v_1, v_{2j+1}) \}$
and
$P_2 =\{ (v_0, v_{2j}), (v_{2j}, v_{2j+1}) \}$,
and the path
$P_3 = \{ (v_0, v_1), e_{\text{in}}, (v_{2j}, v_{2j+1}) \}$ where $e_{\text{in}}$ is a placeholder for the $j-1$-th nested Braess network.

The latency function of the inner network is increasing, thus for $q \leq 3 \cdot 10^{j-2}$ we have
\[
l_{e_{\text{in}}} (x_{e_{\text{in}}}) \leq l_{e_{\text{in}}} (q) \leq l_{e_{\text{in}}} (3 \cdot 10^{j-2}) \stackrel{\eqref{eq:nestedBraessJointLatency}}{=} 10^{j-2} + \frac{3 \cdot 10^{j-2}}{2} = \frac{5 \cdot 10^{j-2}}{2}
\]
and $l_{v_0,v_1} ( x_{v_0,v_1} ) \leq q < \frac{6 \cdot 10^{j-2}}{2}$ and $l_{v_{2l}, v_{2l+1}} ( x_{v_{2l}, v_{2l+1}} ) \leq q \leq \frac{6 \cdot 10^{j-2}}{2}$ and thus every path using the inner component has latency of at most
\[
2 \cdot \frac{6 \cdot 10^{j-2}}{2} +  \frac{5 \cdot 10^{j-2}}{2} = \frac{17 \cdot 10^{j-2}}{2} < 10^{j-1} = l_{v_0, v_{2j}} ( x_{v_0, v_{2j}} ) = l_{v_, v_{2j+1}} (x_{v_1, v_{2j+1}})
.
\]
This means for every demand rate $q \leq 3 \cdot 10^{j-2}$ all demand is routed through the inner network over the path $\{ (v_0, v_1), e_{\text{in}}, (v_{2j}, v_{2j+1}) \}$.

By induction hypothesis, there are $2^{j}$ different sets of used edges $\emptyset = \tilde{S}_1, \tilde{S}_2, \dotsc, \tilde{S}_{2^j}$ for different demand rates $\tilde{q}_1, \dotsc, \tilde{q}_{2^j} \in [0, 3 \cdot 10^{j-2})$ in the $j-1$-th nested Braess network which is embedded in $e_{\text{in}}$. 
Thus in the whole $j$-th nested Braess network we have the set of used edges $S_1 = \emptyset$ for $q=0$ and $S_i := \tilde{S}_i \cup \{ (v_0, v_1), (v_{2l}, v_{2l+1}) \}$ for the demand rates $\tilde{q}_i$ for $i=2,3,\dotsc,2^{q}$.

Lemma~\ref{lem:nestedBraessHighDemand} yields that for any demand rate $q \geq 2 \cdot 10^{j-1}$, only the outer edges are used. In particular there is no flow through the inner components, i.e. $x_{e_{\text{in}}} = 0$, at demand rate $q = 2 \cdot 10^{j-1}$. We have seen before that for $q = 3 \cdot 10^{j-2}$ all flow is routed through the inner component, i.e. $x_{e_{\text{in}}} = 3 \cdot 10^{j-2}$. Since we know that all flow functions are continuous, there are demand rates $\hat{q}_1, \dotsc, \hat{q}_{2^j} \geq 3 \cdot 10^{j-2}$ such that $x_{e_{\text{in}}} ( \hat{r}_i ) = \tilde{q}_i$ and thus inducing the sets of used edges $\tilde{S}_i$ in the inner component and $S_{2^j+i} := \tilde{S}_i \cup \{ (v_0, v_1), (v_0, v_{2l}), (v_1, v_{2l+1}), (v_{2l}, v_{2l+1}) \}$ in the whole network.

So overall we have $2^{j}$ different sets of used edges for demand rates $q < 3 \cdot 10^{j-2}$ and $2^j$ different sets of used edges for demand rates $q \geq 3 \cdot 10^{j-2}$.
\end{proof}

Note, that the previous arguments also hold true if we only admit strictly increasing cost functions. Then the same construction can be made, but all constant cost functions can be replaced by cost functions with a slope of $\epsilon$.

\section{Computing direction vectors via convex programming}
\label{app:convex-program}

We here briefly discuss another approach of computing the desired directions $\Delta \vec \pi$ and $\Delta \vec x$ with convex programming.

Assume $\vec x^{\tilde{q}}$ is a Wardrop equilibrium for some demand rate $\tilde{q} \geq 0$ with some potential $\vec \pi^{\tilde{q}}$. Then define the values
$a_{e}^+ := \lim_{x \downarrow x_e} \frac{l_e(x) - l_e( x^{\tilde{q}}_e )}{x - x^{\tilde{q}}_e}$ and $a_{e}^- := \lim_{x \uparrow x_e} \frac{l_e(x) - l_e( x^{\tilde{q}}_e )}{x - x^{\tilde{q}}_e}$. So $a_e^+$ is the slope of the piece-wise linear part of $l_e$ above of $x^{\tilde{q}}_e$ a d $a_e^-$ is the slope of the piece-wise linear part of $l_e$ below of $x^{\tilde{q}}_e$. If $x^{\tilde{q}}_e$ is no breakpoint, then $a_e^+$ and $a_e^-$ coincide. Let $\vec A^+ := \diag(a_{e_1}^+, \dotsc, a_{e_m}^+)$ and $\vec A^- := \diag(a_{e_1}^-, \dotsc, a_{e_m}^-)$. Then we consider the following convex program.
\begin{align} 
\min \; \frac{1}{2} (\vec x^+)^{\top} \vec A^+ \vec x^+ &+ \frac{1}{2} (\vec x^-)^{\top} \vec A^- \vec x^- \notag \\
\text{s.t.} \qquad \qquad \quad
\vec \Gamma \, \vec x &= \Delta \vec y \notag \\
\vec x &= \vec x^+ + \vec x^- \label{eq:convexProgramSC} \\
\vec x^+ &\geq 0 \notag \\
\vec x^- &\leq 0 \notag
\end{align}

\begin{theorem} \label{thm:convexProgramSC}
Let $\vec x^{\tilde{q}}$ be a Wardrop equilibrium for some demand rate $\tilde{q} \geq 0$ and let $(\hat{\vec x}, \hat{\vec x}^+, \hat{\vec x}^-)$ be the optimal solution of the convex program~\eqref{eq:convexProgramSC}. Then there is $\epsilon > 0$ such that
$\vec{x} (q) := \vec x^{\tilde{q}} + \hat{\vec x} \, (q - \tilde{q})$
is a Wardrop equilibrium flow for every $\tilde{q} \leq q \leq \tilde{q} + \epsilon$.
\end{theorem}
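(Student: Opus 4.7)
The plan is to mimic the proof of Theorem~\ref{thm:convexProgramMC} in the single-commodity setting. I would first write down the KKT conditions for the convex program~\eqref{eq:convexProgramSC} with dual variables $\vec{\lambda} \in \R^m$ for the coupling $\vec{x} = \vec{x}^+ + \vec{x}^-$, $\vec{\mu} \in \R^n$ for flow conservation $\vec{\Gamma}\vec{x} = \Delta\vec{y}$, and nonnegative multipliers $\vec{\nu},\vec{\eta}$ for the sign constraints. Stationarity with respect to $\vec{x}$ forces $\vec{\lambda} = -\vec{\Gamma}^\top\vec{\mu}$, i.e.\ $\lambda_e = \mu_w - \mu_v$ for every $e=(v,w)$. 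Stationarity with respect to $\vec{x}^\pm$ combined with complementary slackness pins down $\lambda_e = a_e^+ \hat{x}_e$ whenever $\hat{x}_e > 0$, $\lambda_e = a_e^- \hat{x}_e$ whenever $\hat{x}_e < 0$, and $\lambda_e = \hat{a}_e\,\hat{x}_e = 0$ for some $\hat{a}_e \in [a_e^-, a_e^+]$ when $\hat{x}_e = 0$.

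Next I would define the candidate potential $\vec{\pi}(q) := \vec{\pi}^{\tilde q} + (q-\tilde q)\,\vec{\mu}$ and the candidate flow $\vec{x}(q) := \vec{x}^{\tilde q} + (q-\tilde q)\,\hat{\vec{x}}$, which is feasible by linearity. To check the Wardrop equilibrium conditions of Lemma~\ref{lem:characterisationWEPotentials} (equality version, since the graph is undirected), I would classify the edges according to the sign of $\hat{x}_e$ and the slacks at $\vec{x}^{\tilde q}$. For every edge the linearization $l_e(x_e(q)) = l_e(x_e^{\tilde q}) + (q-\tilde q)\,a_e\,\hat{x}_e$, with $a_e = a_e^+$ if $\hat{x}_e \geq 0$ and $a_e = a_e^-$ if $\hat{x}_e < 0$, is exact provided $(q-\tilde q)\hat{x}_e$ does not cross a breakpoint of $l_e$. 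Combining this with the KKT identities gives
\begin{equation*}
\vec{\gamma}_e^\top \vec{\pi}(q) = \vec{\gamma}_e^\top \vec{\pi}^{\tilde q} + (q-\tilde q)\,\lambda_e = l_e(x_e^{\tilde q}) + (q-\tilde q)\,a_e\,\hat{x}_e = l_e(x_e(q))
\end{equation*}
on every edge where the tight Wardrop equation already held at $\vec{x}^{\tilde q}$. For edges where $\vec{\gamma}_e^\top \vec{\pi}^{\tilde q} < l_e(x_e^{\tilde q})$ (strict inequality at $\tilde q$), continuity of $l_e$ together with smallness of $q-\tilde q$ preserves the inequality $\vec{\gamma}_e^\top \vec{\pi}(q) \leq l_e(x_e(q))$.

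Finally, I would choose $\epsilon > 0$ small enough so that, simultaneously, (i) no coordinate of $\vec{x}(q)$ crosses a breakpoint of any $l_e$ on $[\tilde q,\tilde q+\epsilon]$, so the local linearization used in $\vec{A}^\pm$ stays exact; (ii) every strict inequality $\vec{\gamma}_e^\top \vec{\pi}^{\tilde q} < l_e(x_e^{\tilde q})$ on inactive edges is preserved; (iii) $x_e(q) \geq 0$ on every edge that should remain non-negative (the only way this can be violated is $\hat{x}_e < 0 < x_e^{\tilde q}$, which gives a strictly positive bound on the allowed $q-\tilde q$). Since $E$ and the set of breakpoints are finite, such an $\epsilon$ exists, and Lemma~\ref{lem:characterisationWEPotentials} certifies that $\vec{x}(q)$ with potential $\vec{\pi}(q)$ is a Wardrop equilibrium for demand $\tilde q + (q-\tilde q)$.

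The main obstacle is handling the non-smoothness of the objective cleanly: if $x_e^{\tilde q}$ lies on a breakpoint of $l_e$ (so $a_e^+ \neq a_e^-$) or if $\hat{x}_e = 0$, the KKT stationarity must be read as a subdifferential inclusion, and the argument matching $\lambda_e$ to the correct slope of $l_e$ to the right of $x_e^{\tilde q}$ has to pick the correct one-sided branch consistent with $\sgn(\hat{x}_e)$. This is precisely the bookkeeping carried out in the proof of Theorem~\ref{thm:convexProgramMC} for the multi-commodity case, which I would adapt componentwise to the single-commodity program here.
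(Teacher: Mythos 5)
Your proposal is correct and follows essentially the same route as the paper's own proof: extract the KKT conditions of \eqref{eq:convexProgramSC}, read the flow-conservation multiplier as the potential direction, use complementary slackness to match $\lambda_e$ to the one-sided slope $a_e^{\pm}$ consistent with $\sgn(\hat{x}_e)$, and choose $\epsilon$ so that no coordinate crosses a breakpoint (the paper merely swaps the names of the duals $\vec\lambda$ and $\vec\mu$). The only cosmetic discrepancy is that your remarks about inactive edges and preserving $x_e(q)\geq 0$ pertain to the multi-commodity program \eqref{eq:convexProgramMC}; program \eqref{eq:convexProgramSC} has no such constraints, and in the undirected setting Lemma~\ref{lem:characterisationWEPotentials} requires equality on \emph{every} edge, which your argument (and the paper's) indeed establishes.
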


\begin{proof}[Proof of Theorem~\ref{thm:convexProgramSC}]
Let $(\hat{\vec x}, \hat{\vec x}^+, \hat{\vec x}^-)$ be an optimal solution of the convex program. Then the Karush-Kuhn-Tucker conditions imply the existence of vectors $\vec \lambda \in \R^n$ and $\vec \mu, \vec \nu, \vec \eta \in \R^m$ with $\vec \nu, \vec \eta \geq 0$ such that
\begin{align*}
\nabla \frac{1}{2} (\hat{\vec x}^+)^{\top} \vec A^+ \hat{\vec x}^+ &+ \frac{1}{2} (\hat{\vec x}^-)^{\top} \vec A^- \hat{\vec x}^- 
+ \vec \lambda^{\top} \, \nabla (\Delta \vec y - \vec \Gamma \, \hat{\vec x}) \\
&+ \vec \mu^{\top} \, \nabla (\hat{\vec x} - \hat{\vec x}^+ - \hat{\vec x}^-)
- \vec \nu^{\top} \, \hat{\vec x}^+
+ \vec \eta^{\top} \, \hat{\vec x}^-
 = 0
.
\end{align*}
This implies 
\begin{align*}
\mu_e - \vec \lambda^{\top} \, \vec \gamma_e  &= 0 \\
a_e^+ \hat{x}^+_e - \mu_e - \nu_e &= 0 \\
a_e^- \hat{x}^-_e - \mu_e + \eta_e &= 0
\end{align*}
for every edge $e \in E$. In particular, if $\hat{x}_e^+ > 0$, we have $\nu_e = 0$ and thus $a_e^+ \hat{x}^+_e = \vec \lambda^{\top} \, \vec \gamma_e$ and, if $\hat{x}_e^- < 0$, we have $\eta_e = 0$ and thus $a_e^- \hat{x}^-_e = \vec \lambda^{\top} \, \vec \gamma_e$.
This also implies, since $a_e^+, a_e^- > 0$, that $\hat{x}^+_e$ and $\hat{x}^-_e$ can not both be non-zero. This means that either $a_e^+ \hat{x}_e = \vec \lambda^{\top} \, \vec \gamma_e$ (if $\hat{x}_e > 0$) or $a_e^- \hat{x}_e = \vec \lambda^{\top} \, \vec \gamma_e$ (if $\hat{x}_e < 0$) holds. 
If $\hat{x}_e^{+} = \hat{x}_e^{-} = 0$, then $\mu_e = 0$ and thus $a_e^+ \hat{x}_e = a_e^- \hat{x}_e = 0 = \vec \lambda^{\top} \gamma_e$.

For $q \geq \tilde{q}$ we define a flow $\vec x (q)$ and a potential $\vec \pi (q)$ by
\begin{align*}
\vec x (q) &:= \vec x^{\tilde{q}} + \hat{\vec x} \, (q - \tilde{q}) \\
\vec \pi (q) &:= \vec \pi^{\tilde{q}} + \vec \lambda \, (q - \tilde{q}) 
\end{align*}
where $\vec \pi^{\tilde{q}}$ is the (shortest path) potential of the flow $\vec x^{\tilde{q}}$.
By definition, $\vec x (q)$ is a flow that satisfies the demand rate $q$. Further, there is $\epsilon > 0$ such that for every edge $e \in E$ we have $l_e(x) = a_e^+ x + b_e^+$ for all $x_e^{\tilde{q}} \leq x \leq x_e^{\tilde{q}} + \epsilon$ and $l_e(x) = a_e^- x + b_e^-$ for all $x_e^{\tilde{q}} - \epsilon \leq x \leq x_e^{\tilde{q}}$.

Let $\tilde{q} \leq q \leq \tilde{q} + \epsilon$ and assume that $\hat{x}_e \geq 0$. Then $x_e (q) \geq x^{\tilde{q}}_e$ and thus
\begin{align*}
l_e ( x_e (q) ) &= a^+_e \, x_e (q) + b^+_e \\
&= a^+_e x^{\tilde{q}}_e + b^+_e + a^+_e \hat{x}_e \, (q-\tilde{q}) \\
&= l_e (x^{\tilde{q}}_e) + \vec \gamma_e^{\top} \, \vec \lambda \, (q-\tilde{q}) \\ 
&= \vec \gamma_e^{\top} \vec \pi^{\tilde{q}} + \vec \gamma_e^{\top} \, \vec \lambda \, (q-\tilde{q})
= \vec \pi (q) .
\end{align*}
The same can be shown in the case $\hat{x}_e < 0$ with $a_e^{-}$. Hence, $\vec x (q)$ is a Wardrop equilibrium by Lemma~\ref{lem:characterisationWEPotentials} with the potential $\vec{\pi} (q)$. The solution $\hat{\vec x}$ of the convex program is thus the direction vector $\Delta \vec x$ in the flow space while the dual variable $\vec{\lambda}$ corresponds to the direction vector $\Delta \vec \pi$ in the potential space.
\end{proof}

This shows that we can compute the direction $\Delta \vec x = \hat{\vec x}$ by solving the quadratic program above.

\clearpage

\section{Further Examples}
\label{app:examples}

This section contains some concrete examples illustrating the basic algorithm, the handling of degeneracy with the lexicographic rule, and ambiguous regions.

\subsection{An Example for the basic algorithm with continuous costs}
\begin{example} \label{ex:simpleUndirected}
We want to compute the undirected Wardrop equilibrium flow functions in the graph depicted in Figure~\ref{fig:simpleUndirectedExample} with the following cost functions:
\begin{align*}
l_{e_1} (x) =
\begin{cases}
x & \text{if } x < 1, \\
2x - 1 & \text{if } x \geq 1,
\end{cases}
\;\;
l_{e_2} (x) =
\begin{cases}
x & \text{if } x < 2, \\
2x - 2 & \text{if } x \geq 2,
\end{cases}
\;\;
l_{e_3} (x) =
\begin{cases}
2x & \text{if } x < 2, \\
x + 2 & \text{if } x \geq 2.
\end{cases}
\end{align*}
Since $\pi_s = 0$ we can represent the whole potential space in a two-dimensional space (see figure~\ref{fig:simpleUndirectedExample}(b)).
We start with the potential $\vec \pi^0 = (0,0,0)^{\top}$ which lies in the region $R_{(1,1,1)^{\top}}$. We then get
\begin{align*}
\vec C_{(1,1,1)^{\top}} =
\begin{bmatrix}
1 & 0 & 0 \\
0 & 1 & 0 \\
0 & 0 & \nicefrac{1}{2}
\end{bmatrix}
\end{align*}
and can thus compute
\begin{align*}
\vec L_{(1,1,1)^{\top}} =
\begin{bmatrix}
\nicefrac{3}{2} & -1 & -\nicefrac{1}{2} \\
-1 & 2 & -1 \\
-\nicefrac{1}{2} & -1 & \nicefrac{3}{2}
\end{bmatrix}
\quad
\hat{\vec L}_{(1,1,1)^{\top}}^{-1} =
\begin{bmatrix}
2 & -1 \\
-1 & \nicefrac{3}{2}
\end{bmatrix}^{-1}
=
\begin{bmatrix}
\nicefrac{3}{4} & \nicefrac{1}{2} \\
\nicefrac{1}{2} & 1
\end{bmatrix}
.
\end{align*}
From the last column of $\hat{\vec L}_{(1,1,1)^{\top}}^{-1}$ we obtain $\Delta \vec  \pi^0 = (0, \nicefrac{1}{2}, 1)^{\top}$ and can compute
\begin{align*}
\epsilon(e) =
\begin{cases}
\frac{1 - 0}{\nicefrac{1}{2} - 0} = 2 & \text{for } e = e_1, \\
\frac{2 - 0}{1-\nicefrac{1}{2}} = 4 & \text{for } e = e_2, \\
\frac{4 - 0}{1 - 0} = 4  & \text{for } e = e_3.
\end{cases}
\end{align*}
Thus $\epsilon = 2$ and for $\sCurveB^1 = 0 + 2$ the solution curve hits a unique boundary induced by the edge $e_1$ and enters the region $R_{(2,1,1)^{\top}}$ in the potential point $\vec{\pi}^1 = \vec{\pi}^0 + \epsilon \Delta \vec{\pi}^0 = (0,1,2)^{\top}$.
The matrix $\vec C$ changes to
\begin{align*}
\vec C_{(2,1,1)^{\top}} =
\begin{bmatrix}
\nicefrac{1}{2} & 0 & 0 \\
0 & 1 & 0 \\
0 & 0 & \nicefrac{1}{2}
\end{bmatrix}
\end{align*}
and we get $\Delta \vec \pi^1 = (0, \nicefrac{4}{5}, \nicefrac{6}{5})^{\top}$. The minimal epsilon $\epsilon = \nicefrac{5}{3} = \epsilon(e_3)$ is attained for edge $e_3$. So at $\sCurveB^2 = 2 + \nicefrac{5}{3} = \nicefrac{11}{3}$ the solution curve enters the region $R_{(2,1,2)^{\top}}$ and we get $\Delta \vec \pi^2 = (0, \nicefrac{1}{2}, \nicefrac{3}{4})^{\top}$ and $\epsilon=\nicefrac{4}{3}$. Finally, at $\sCurveB^3 = \nicefrac{15}{3}$ the solution curve enters the region $R_{(2,2,2)^{\top}}$ with $\Delta \vec \pi^3 = (0, \nicefrac{2}{5}, \nicefrac{4}{5})^{\top}$. In this region the values $\epsilon (e)$ are infinite for all edges $e$ and the algorithm stops.
\begin{figure}[t]
\centering
\begin{subfigure}[b]{.4\textwidth}
\begin{center}
\begin{tikzpicture}
\draw (0,0) node[solid] (s) {} node [below left] {$s$};
\draw (2,1) node[solid] (v) {} node [above] {$v$};
\draw (4,0) node[solid] (t) {} node [below right] {$t$};

\draw[->]	(s) edge node[midway, above left] {$e_1$} (v) 
			(v) edge node[midway, above right] {$e_2$} (t) 
			(s) edge node[midway, below] {$e_3$} (t);
\end{tikzpicture}
\caption{Graph.}
\end{center}
\end{subfigure}
\hspace{.02\textwidth}
\begin{subfigure}[b]{.55\textwidth}
\begin{center}
\begin{tikzpicture}[scale=.65]
\draw[thin, dashed] (1,-.5) -- (1,6.5)
			(-1,4) -- (6,4)
			(-1,1) -- (4.5,6.5);
\path[fill=color1,fill opacity=0.25] (-1,-0.5) -- (1,-0.5) -- (1,3) -- (-1,1) -- (-1,0);
\node[color=color1!30!black] at (-1,.5) {$R_{(1,1,1)^{\top}}$};
\path[fill=color3,fill opacity=0.25] (-1,1) -- (1,3) -- (1,4) -- (-1,4) -- (-1,1);
\node[color=color3!30!black] at (-1,3) {$R_{(1,2,1)^{\top}}$};
\path[fill=color4,fill opacity=0.25] (-1,4) -- (1,4) -- (1,6.5) -- (-1,6.5) -- (-1,4);
\node[color=color4!30!black] at (-1,5.5) {$R_{(1,2,2)^{\top}}$};
\path[fill=color2,fill opacity=0.5] (1,-0.5) -- (6,-0.5) -- (6,4) -- (2,4) -- (1,3) -- (1,-0.5);
\node[color=color2!30!black] at (4.5,2) {$R_{(2,1,1)^{\top}}$};
\path[fill=color6,fill opacity=0.25] (2,4) -- (6,4) -- (6,6.5) -- (4.5,6.5) -- (2,4);
\node[color=color6!30!black] at (5,5.5) {$R_{(2,1,2)^{\top}}$};
\path[fill=color5,fill opacity=0.25] (1,4) -- (2,4) -- (4.5,6.5) -- (1,6.5) -- (1,4);
\node[color=color5!30!black] at (2.25,5.5) {$R_{(2,2,2)^{\top}}$};
\path[fill=color8,fill opacity=0.25] (1,3) -- (2,4) -- (1,4) -- (1,3);
\node[color=color8!30!black] at (1.75,7.35) {$R_{(2,2,1)^{\top}}$};
\draw[ultra thin] (1.15,7) -- (1.15,3.5);
\draw (1,2) node[solid] (cross1) {};
\draw (7/3,4) node[solid] (cross2) {};
\draw (3,5) node[solid] (cross3) {};
\draw[thick]	(0,0)--(1,2)--(7/3,4)--(3,5)--(3.75,6.5);
\draw[thick, ->]
	(-1,0) -- (6.4,0) node[anchor=west] {$\pi_v$};
\draw[thick, ->]
	(0,-0.5) -- (0,6.9) node[anchor=south] {$\pi_t$};
\end{tikzpicture}
\end{center}
\caption{The Solution curve in the potential space.}
\end{subfigure}
\caption{(a) Graph and (b) solution curve in the potential space of Example~\ref{ex:simpleUndirected}. The graph in (a) is undirected, arrows indicate edge orientation. The solid line in (b) shows the solution curve in the potential space which is piece-wise linear with breakpoints at the boundary crossings shown as dots; dashed lines correspond to boundaries.}
\label{fig:simpleUndirectedExample}
\end{figure}
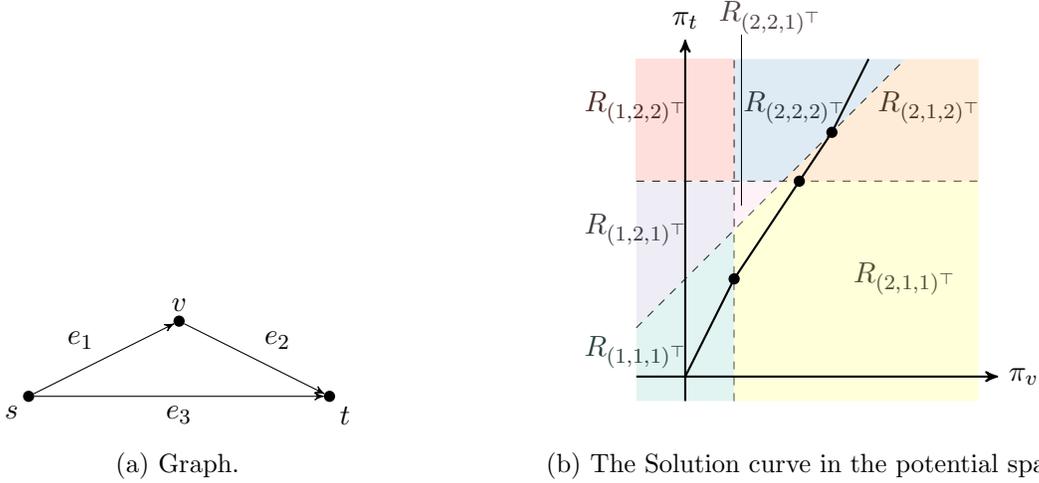
\begin{table}[t]
\setlength\tabcolsep{8pt}
\begin{center}
\begin{tabular}{c|c|c|c|c|c|c|c|c}
$i$ & $\lambda^i$ & $\vec{t}$ & $\vec{\pi}^i$ & $\vec{x}^i$ & $\vec{c}$ & $\vec{L}_{\vec{t}}$ & $\hat{\vec{L}}_{\vec{t}}^{-1}$ & $\epsilon$  \\
\hline
\rule{0pt}{2.5em}%
	$0$ & 	$0$		&	$\begin{bmatrix} 1 \\ 1 \\ 1 \end{bmatrix}$ &
$\begin{bmatrix} 0 \\ 0 \\ 0 \end{bmatrix}$ & $\begin{bmatrix} 0 \\ 0 \\ 0 \end{bmatrix}$ &
$\begin{bmatrix} 1 \\ 1 \\ \nicefrac{1}{2} \end{bmatrix}$ &
$
\begin{bmatrix}
\nicefrac{3}{2} & -1 & -\nicefrac{1}{2} \\
-1 & 2 & -1 \\
-\nicefrac{1}{2} & -1 & \nicefrac{3}{2}
\end{bmatrix}
$ &
$\begin{bmatrix} \nicefrac{3}{4} & \textcolor{blue}{\nicefrac{1}{2}} \\ \nicefrac{1}{2} & \textcolor{blue}{1} \end{bmatrix}$ &
$\begin{bmatrix} \textcolor{blue}{2} \\ 4 \\ 4 \end{bmatrix}$ \\[1.5em]
\hline
\rule{0pt}{2.5em}%
	$1$ & 	$2$		&	$\begin{bmatrix} 2 \\ 1 \\ 1 \end{bmatrix}$ &
$\begin{bmatrix} 0 \\ 1 \\ 2 \end{bmatrix}$ & $\begin{bmatrix} 1 \\ 1 \\ 1 \end{bmatrix}$ &
$\begin{bmatrix} \nicefrac{1}{2} \\ 1 \\ \nicefrac{1}{2} \end{bmatrix}$ &
$
\begin{bmatrix}
1 & -\nicefrac{1}{2} & -\nicefrac{1}{2} \\
-\nicefrac{1}{2} & \nicefrac{3}{2} & -1 \\
-\nicefrac{1}{2} & -1 & \nicefrac{3}{2}
\end{bmatrix}
$ &
$\begin{bmatrix} \nicefrac{6}{5} & \textcolor{blue}{\nicefrac{4}{5}} \\ \nicefrac{4}{5} & \textcolor{blue}{\nicefrac{6}{5}} \end{bmatrix}$ &
$\begin{bmatrix} \infty \\ \nicefrac{5}{2} \\  \textcolor{blue}{\nicefrac{5}{3}} \end{bmatrix}$ \\[1.5em]
\hline
\rule{0pt}{2.5em}%
	$2$ & 	$\nicefrac{11}{3}$		&	$\begin{bmatrix} 2 \\ 1 \\ 2 \end{bmatrix}$ &
$\begin{bmatrix} 0 \\ \nicefrac{7}{3} \\ 4 \end{bmatrix}$ & $\begin{bmatrix} \nicefrac{5}{3} \\ \nicefrac{5}{3} \\ 2 \end{bmatrix}$ &
$\begin{bmatrix} \nicefrac{1}{2} \\ 1 \\ 1 \end{bmatrix}$ &
$
\begin{bmatrix}
\nicefrac{3}{2} & -\nicefrac{1}{2} & -1 \\
-\nicefrac{1}{2}  & \nicefrac{3}{2} & -1 \\
-1 & -1 & 2
\end{bmatrix}
$ &
$\begin{bmatrix} 1 & \textcolor{blue}{\nicefrac{1}{2}} \\ \nicefrac{1}{2} & \textcolor{blue}{\nicefrac{3}{4}} \end{bmatrix}$ &
$\begin{bmatrix} \infty \\ \textcolor{blue}{\nicefrac{4}{3}} \\ \infty \end{bmatrix}$ \\[1.5em]
\hline
\rule{0pt}{2.5em}%
	$3$ & 	$5$		&	$\begin{bmatrix} 2 \\ 2 \\ 2 \end{bmatrix}$ &
$\begin{bmatrix} 0 \\ 3 \\ 5 \end{bmatrix}$ & $\begin{bmatrix} 2 \\ 2 \\ 3 \end{bmatrix}$ &
$\begin{bmatrix} \nicefrac{1}{2}  \\ \nicefrac{1}{2}  \\ 1 \end{bmatrix}$ &
$
\begin{bmatrix}
 \nicefrac{3}{2} & -\nicefrac{1}{2} & -1 \\
 -\nicefrac{1}{2} & 1 & -\nicefrac{1}{2} \\
 -1 & -\nicefrac{1}{2} & \nicefrac{3}{2} 
\end{bmatrix}
$ &
$\begin{bmatrix} \nicefrac{6}{5} & \textcolor{blue}{\nicefrac{2}{5}} \\ \nicefrac{2}{5} & \textcolor{blue}{\nicefrac{4}{5}} \end{bmatrix}$ &
$\begin{bmatrix} \infty \\ \infty \\ \infty \end{bmatrix}$
\end{tabular} 
\caption{The iterations  of the algorithm in Example~\ref{ex:simpleUndirected}. The highlighted values in the $\hat{\vec{L}}^{-1}$-matrix correspond to the potentials $\Delta \vec{\pi}$. 
}
\label{tab:ex:simpleUndirected}
\end{center}
\end{table}
Thus, we get the the potential function
\begin{align*}
\vec \pi(q) =
\begin{cases}
(0, \nicefrac{1}{2}, 1)^{\top} \, (q - 0) + (0,0,0)^{\top}  & \text{if } 0 \leq q < 2, \\
(0, \nicefrac{4}{5}, \nicefrac{6}{5})^{\top} \, (q - 2) + (0,1,2)^{\top} & \text{if } 2 \leq q < \nicefrac{11}{3}, \\
(0, \nicefrac{1}{2}, \nicefrac{3}{4})^{\top} \, (q - \nicefrac{11}{3}) + (0,\nicefrac{7}{3},4)^{\top} & \text{if } \nicefrac{11}{3}  \leq q < 5, \\
(0, \nicefrac{2}{5}, \nicefrac{4}{5})^{\top} \, (q - 5) + (0,3,5)^{\top} & \text{if } 5 \leq q .
\end{cases}
\end{align*}
and with this we can compute the Wardrop equilibrium flow functions
\begin{align*}
\vec{x} (q) =
\begin{cases}
(\nicefrac{1}{2}, \nicefrac{1}{2}, \nicefrac{1}{2})^{\top} \, (q - 0) + (0,0,0)^{\top} & \text{if } 0 \leq q < 2, \\
(\nicefrac{2}{5}, \nicefrac{2}{5}, \nicefrac{3}{5})^{\top} \, (q - 2) + (1,1,1)^{\top} & \text{if } 2 \leq q < \nicefrac{11}{3}, \\
(\nicefrac{1}{4}, \nicefrac{1}{4}, \nicefrac{3}{4})^{\top} \, (q - \nicefrac{11}{3}) + (\nicefrac{5}{3}, \nicefrac{5}{3}, 2)^{\top} & \text{if } \nicefrac{11}{3}  \leq q < 5, \\
(\nicefrac{1}{5}, \nicefrac{1}{5}, \nicefrac{4}{5})^{\top} \, (q - 5) + (2,2,3)^{\top} & \text{if } 5 \leq q .
\end{cases}
\end{align*}
All iterations of the algorithm are summarized in Table~\ref{tab:ex:simpleUndirected}.
\end{example}

\subsection{An Example for the lexicographic rule}

\begin{example} \label{ex:lexicographicRule}
Consider again the graph from Figure~\ref{fig:simpleUndirectedExample}. We use the costs
\[
l_{e_1} (x) =
\begin{cases}
x			&\text{if } x < 1, \\
5x - 4 		&\text{if } x \geq 1,
\end{cases}
\qquad
l_{e_2} (x) =
\begin{cases}
x			&\text{if } x < 1, \\
7x - 6 		&\text{if } x \geq 1,
\end{cases}
\qquad
l_{e_3} (x) =
\begin{cases}
x			&\text{if } x < 2, \\
12x - 22	&\text{if } x \geq 2.
\end{cases}
\]
\begin{figure}[t]
\centering
\begin{center}
\begin{tikzpicture}
\newcommand{\xMin}{-1.5} \newcommand{\xMax}{5}
\newcommand{\yMin}{-0.75} \newcommand{\yMax}{5.2}
\path[fill=color1,fill opacity=0.25] (\xMin,\yMin) -- ({max(\xMin,\yMin-1)}, {max(\xMin+1,\yMin)}) -- (1,2) -- (1,\yMin);
\path[fill=color2,fill opacity=0.25] (1, \yMin) -- (1,2) -- (\xMax, 2) -- (\xMax, \yMin);
\path[fill=color3,fill opacity=0.25] (1,2) -- ({min(\xMax,\yMax-1)},{min(\xMax+1,\yMax)}) -- (\xMax, \yMax) -- (\xMax, 2);
\path[fill=color4,fill opacity=0.25] (1,2) -- ({min(\xMax,\yMax-1)},{min(\xMax+1,\yMax)}) -- (\xMax, \yMax) -- (1, \yMax);
\path[fill=color5,fill opacity=0.25] (1,2) -- (1, \yMax) -- (\xMin, \yMax) -- (\xMin, 2);
\path[fill=color6,fill opacity=0.25] (1,2) -- ({max(\xMin,\yMin-1)}, {max(\xMin+1,\yMin)}) -- (\xMin, \yMin) -- (\xMin, 2);

\foreach \x in {0,...,4} {
    \foreach \y in {0,...,\x} {
    	\pgfmathsetmacro{\xpos}{-1.25 + \x/2} \pgfmathsetmacro{\ypos}{-0.5 + \y/2}
    	\pgfmathsetmacro{\vx}{1/3} \pgfmathsetmacro{\vy}{2/3}
    	\draw[color1, ultra thin, ->, opacity=0.5] (\xpos-\vx/4, \ypos-\vy/4) -- (\xpos+\vx/4, \ypos+\vy/4);
    }
}
\foreach \x in {0,...,7} {
    \foreach \y in {0,...,4} {
    	\pgfmathsetmacro{\xpos}{1.25 + \x/2} \pgfmathsetmacro{\ypos}{-0.5 + \y/2}
    	\pgfmathsetmacro{\vx}{5/7} \pgfmathsetmacro{\vy}{6/7}
    	\draw[color2!95!black, ultra thin, ->, opacity=0.8] (\xpos-\vx/4, \ypos-\vy/4) -- (\xpos+\vx/4, \ypos+\vy/4);
    }
}
\foreach \x in {0,...,6} {
	\pgfmathsetmacro{\yend}{min(\x,5)}
    \foreach \y in {0,...,\yend} {
    	\pgfmathsetmacro{\xpos}{1.75 + \x/2} \pgfmathsetmacro{\ypos}{2.25 + \y/2}
    	\pgfmathsetmacro{\vx}{2/3*5} \pgfmathsetmacro{\vy}{2/3*6}
    	\draw[color3, ultra thin, ->, opacity=0.5] (\xpos-\vx/16, \ypos-\vy/16) -- (\xpos+\vx/16, \ypos+\vy/16);
    }
}
\foreach \x in {0,...,4} {
    \foreach \y in {0,...,\x} {
    	\pgfmathsetmacro{\xpos}{3.25 - \x/2} \pgfmathsetmacro{\ypos}{4.75 - \y/2}
    	\pgfmathsetmacro{\vx}{5/2} \pgfmathsetmacro{\vy}{12/2}
    	\draw[color4, ultra thin, ->, opacity=0.5] (\xpos-\vx/16, \ypos-\vy/16) -- (\xpos+\vx/16, \ypos+\vy/16);
    }
}
\foreach \x in {0,...,4} {
    \foreach \y in {0,...,5} {
    	\pgfmathsetmacro{\xpos}{-1.25 + \x/2} \pgfmathsetmacro{\ypos}{2.25 + \y/2}
    	\pgfmathsetmacro{\vx}{3/5} \pgfmathsetmacro{\vy}{24/5}
    	\draw[color5, ultra thin, ->, opacity=0.5] (\xpos-\vx/16, \ypos-\vy/16) -- (\xpos+\vx/16, \ypos+\vy/16);
    }
}
\foreach \x in {0,...,3} {
    \foreach \y in {0,...,\x} {
    	\pgfmathsetmacro{\xpos}{0.25 - \x/2} \pgfmathsetmacro{\ypos}{1.75 - \y/2}
    	\pgfmathsetmacro{\vx}{1/9} \pgfmathsetmacro{\vy}{8/9}
    	\draw[color6, ultra thin, ->, opacity=0.5] (\xpos-\vx/4, \ypos-\vy/4) -- (\xpos+\vx/4, \ypos+\vy/4);
    }
}

\draw[thin, dashed] (1,\yMin) -- (1,\yMax);
\draw[thin, dashed] ({max(\xMin,\yMin-1)}, {max(\xMin+1,\yMin)}) -- ({min(\xMax,\yMax-1)},{min(\xMax+1,\yMax)});
\draw[thin, dashed] (\xMin,2) -- (\xMax,2);

\draw[thick, ->] (\xMin, 0) -- (\xMax, 0) node[right] {$\pi_v$};
\foreach \x in {1,...,4} {
	\draw[thick] (\x, 0.1) -- (\x, -0.1) node[below] {$\x$};
}
\draw[thick, ->] (0, \yMin) -- (0, \yMax) node[above] {$\pi_t$};
\foreach \x in {1,...,4} {
	\draw[thick] (0.1,\x) -- (-0.1, \x) node[left] {$\x$};
}

\node[anchor=south west, color1!80!black] at (\xMin+0.2, \yMin) {$R_{(1,1,1)}$};
\node[anchor=north east, color2!80!black] at (\xMax, 2) {$R_{(2,1,1)}$};
\node[anchor=south east, color3!80!black] at (\xMax,2) {$R_{(2,1,2)}$};
\node[anchor=east, color4!80!black] at ({min(\xMax,\yMax-1)-.25},{min(\xMax+1,\yMax)-.25}) {$R_{(2,2,2)}$};
\node[anchor=north west, color5!80!black] at (\xMin, \yMax) {$R_{(1,2,2)}$};
\node[anchor=north west, color6!80!black] at (\xMin, 2) {$R_{(1,2,1)}$};

\draw (0,0) node[solid] {};
\draw (1,2) node[solid] {};
\draw[thick] (0,0) -- (1,2) -- ({min(\xMax, 5/12*(\yMax+2/5))}, {min(12/5*\xMax+2/5, \yMax)});

\draw[thick, darkgreen] (9/16,3/4) -- (1,13/8) -- (21/16,2) -- (23/8, 31/8) -- ({min(\xMax, 5/12*(\yMax+121/40)}, {min(12/5*\xMax-121/40, \yMax)});
\draw[thick, red] (1/9,1/3) -- (8/9,17/9) -- (65/72,2) -- (1, 25/9) -- ({min(\xMax, 5/12*(\yMax-17/45)}, {min(12/5*\xMax+17/45, \yMax)});
\draw[darkgreen]
	node[solid] at (9/16,3/4) {}
	node[solid] at (1,13/8) {}
	node[solid] at (21/16,2) {}
	node[solid] at (23/8, 31/8) {};
\draw[red]
	node[solid] at (1/9,1/3) {}
	node[solid] at (8/9,17/9) {}
	node[solid] at (65/72,2) {}
	node[solid] at (1, 25/9) {};

\end{tikzpicture}
\end{center}
\caption{The potential space from Example~\ref{ex:lexicographicRule}. The black line is the solution curve, the green and the red line indicate two possible courses of perturbed solution curves (red for $\delta = \nicefrac{1}{3}$, green for $\delta = \nicefrac{3}{4}$). The light arrows in the regions indicate the potential directions (not to scale). The red region $R_{(2,2,2)}$ is the unique region where the potential direction is directed away from all intersecting boundaries.}
\label{fig:ex:lexicographicRule:potentialSpace}
\end{figure}
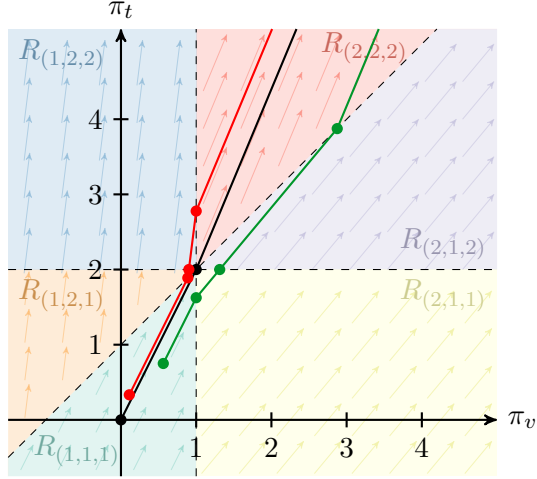
These cost functions induce three boundaries and six regions in the potential space as depicted in Figure~\ref{fig:ex:lexicographicRule:potentialSpace}. The solution curve (black line in Figure~\ref{fig:ex:lexicographicRule:potentialSpace}) starts in $\vec{\pi}^0 = (0,0,0)^\top \in R_{(1,1,1)}$. The direction vector in this region is $\Delta \vec{\pi}^0 = (0, \nicefrac{1}{3}, \nicefrac{2}{3})$ and, thus, at $\lambda = 3$ the solution curve hits the degenerate point $\vec{\pi}^1 = (0,1,2)^\top$. The only region where the direction $\Delta \vec{\pi}$ is directed away from all boundaries is region $R_{(2,2,2)}$---this is the region where the solution curve can proceed.

If, for some $\delta > 0$, the point $\vec{\pi}^0$ is perturbed by the vector $\vec{\delta} = (0, \delta^2, \delta)$ this leads to new, perturbed solution curves. Figure~\ref{fig:ex:lexicographicRule:potentialSpace} shows two possible perturbed solution curves, one for $\delta = \nicefrac{3}{4}$ (green) and one for $\delta = \nicefrac{1}{3}$ (red). Note, that for $\delta = 1/2$, the perturbed solution curve would start in $(0, \nicefrac{1}{4}, \nicefrac{1}{2})$ and thus follow the normal solution curve into the degenerate point. For all $\delta > \nicefrac{1}{2}$, the perturbed solution behaves essentially as the green solution curve in Figure~\ref{fig:ex:lexicographicRule:potentialSpace} -- for all $\delta < \nicefrac{1}{2}$ the perturbed solution curve follows essentially the course of the red solution curve in Figure~\ref{fig:ex:lexicographicRule:potentialSpace}.

Hence, the lexicographic rule should retrace the regions traversed by the red perturbed solution curve. Table~\ref{tab:ex:lexicographicRule:mVectors} lists all matrices $\vec{M}, \tilde{\vec{M}}$ and vectors $\vec{m}$ used for the lexicographic rule.

\begin{table}[t]
\setlength\tabcolsep{8pt}
\begin{center}
\begin{tabular}{m{1cm}|c|c|c|c|c|c}
Step $l$ & $\potentialVp[l]$ & $\tilde{\vec{M}}^l$ & $\vec{M}^l$ & $\vec{m}_{l, e_1}$ & $\vec{m}_{l, e_2}$ & $\vec{m}_{l, e_3}$ \\
\hline 
\centering 0 {\footnotesize $R_{(1,1,1)}$}
& 
\rule{0pt}{2.5em}%
$ \displaystyle
\begin{bmatrix}
0 \\ \nicefrac{1}{3} \\ \nicefrac{2}{3}
\end{bmatrix}
$
&
$ \displaystyle
\begin{bmatrix}
1 & 0 & 0 \\
0 & 1 & 0 \\
0 & 0 & 1
\end{bmatrix}
$
&
$ \displaystyle
\begin{bmatrix}
1 & 0 & 0 \\
0 & 1 & 0 \\
0 & 0 & 1
\end{bmatrix}
$
&
\textcolor{blue}{
$ \displaystyle
\begin{bmatrix}
3 \\ -3 \\ 0
\end{bmatrix}
$
}
&
\textcolor{blue}{
$ \displaystyle
\begin{bmatrix}
0 \\ 3 \\ -3
\end{bmatrix}
$
}
&
\textcolor{blue}{
$ \displaystyle
\begin{bmatrix}
\nicefrac{3}{2} \\ 0 \\ - \nicefrac{3}{2}
\end{bmatrix}
$
}
\\[1.5em]
\hline  
\centering 1 {\footnotesize $R_{(1,2,1)}$}
& 
\rule{0pt}{2.5em}%
$ \displaystyle
\begin{bmatrix}
0 \\ \nicefrac{1}{9} \\ \nicefrac{8}{9}
\end{bmatrix}
$
&
$ \displaystyle
\begin{bmatrix}
1 & 0 & 0 \\
0 & 2 & -1 \\
0 & 2 & -1
\end{bmatrix}
$
&
$ \displaystyle
\begin{bmatrix}
1 & 0 & 0 \\
0 & 2 & -1 \\
0 & 2 & -1
\end{bmatrix}
$
&
\textcolor{blue}{
$ \displaystyle
\begin{bmatrix}
9 \\ -18 \\ 9
\end{bmatrix}
$
}
&
$ \displaystyle
\begin{bmatrix}
0 \\ 0 \\ 0
\end{bmatrix}
$
&
\textcolor{blue}{
$ \displaystyle
\begin{bmatrix}
\nicefrac{9}{8} \\ -\nicefrac{18}{8} \\ \nicefrac{9}{8}
\end{bmatrix}
$
}
\\[1.5em]
\hline  
\centering 2 {\footnotesize $R_{(1,2,2)}$}
& 
\rule{0pt}{2.5em}%
$ \displaystyle
\begin{bmatrix}
0 \\ \nicefrac{1}{9} \\ \nicefrac{8}{9}
\end{bmatrix}
$
&
$ \displaystyle
\begin{bmatrix}
1 & 0 & 0 \\
\nicefrac{1}{8} & 1 & -\nicefrac{1}{8} \\
1 & 0 & 0
\end{bmatrix}
$
&
$ \displaystyle
\begin{bmatrix}
1 & 0 & 0 \\
\nicefrac{1}{8} & \nicefrac{7}{4} & -\nicefrac{7}{8} \\
1 & 0 & 0
\end{bmatrix}
$
&
\textcolor{blue}{
$ \displaystyle
\begin{bmatrix}
\nicefrac{63}{9} \\ - \nicefrac{63}{4} \\ \nicefrac{63}{8}
\end{bmatrix}
$
}
&
$ \displaystyle
\begin{bmatrix}
- \nicefrac{9}{8} \\ \nicefrac{9}{4} \\ -\nicefrac{9}{8}
\end{bmatrix}
$
&
$ \displaystyle
\begin{bmatrix}
0 \\ 0 \\ 0
\end{bmatrix}
$
\\[1.5em]
\hline  
\centering 3 {\footnotesize $R_{(2,2,2)}$}
& 
\rule{0pt}{2.5em}%
$ \displaystyle
\begin{bmatrix}
0 \\ \nicefrac{5}{2} \\ 6
\end{bmatrix}
$
&
$ \displaystyle
\begin{bmatrix}
1 & 0 & 0 \\
1 & 0 & 0 \\
8 & -8 & 1 
\end{bmatrix}
$
&
$ \displaystyle
\begin{bmatrix}
1 & 0 & 0 \\
1 & 0 & 0 \\
8 & -14 & 7 
\end{bmatrix}
$
&
$ \displaystyle
\begin{bmatrix}
0 \\ 0 \\ 0
\end{bmatrix}
$

&
$ \displaystyle
\begin{bmatrix}
- 2 \\ 4 \\ -2
\end{bmatrix}
$
&
$ \displaystyle
\begin{bmatrix}
- \nicefrac{7}{6} \\ \nicefrac{14}{6} \\ -\nicefrac{7}{6}
\end{bmatrix}
$
\end{tabular} 
\caption{The Matrices $\vec{M}, \tilde{\vec{M}}$ and the vectors $\vec{m}$ used in the lexicographic rule when bypassing the degenerate point in Example~\ref{ex:lexicographicRule}.}
\label{tab:ex:lexicographicRule:mVectors}
\end{center}
\end{table}

In the $0$-th step, the matrices $\vec{M}^0, \tilde{\vec{M}^0}$ are initialized as the identity matrix. Recall that the $\vec{m}$-vectors can be computed by
\[
\vec{m}_{l,e}^{\top} := - \frac{1}{\incidenceV_{ e }^{\top} \potentialDp[l]} \, \incidenceV_{ e }^{\top} \, \vec{M}^l
\quad \text{for } 0 \leq l \leq \alpha
.
\]
This means, the vector $\vec{m}_{l,e}^{\top}$ is obtained by taking the differences of the columns in $\vec{M}^l$ that correspond to the terminals of the edge $e$ (in the $0$-th step this is just the incidence vector $\gamma_e$) and scale this vector by the negative reciprocal of the potential-direction-difference $ \frac{1}{\incidenceV_{ e }^{\top} \potentialDp[l]} $ on that edge. For example, consider the $0$-th step and the edge $e_2 = (v,t)$. The vector $\vec{m}_{0, e_2}$ is the incidence vector $\gamma_{e_2}$ scaled by $-3$ which is the negative reciprocal of $\Delta \perturbed{\pi}_t -  \Delta \perturbed{\pi}_v = \nicefrac{2}{3} - \nicefrac{1}{3} = \nicefrac{1}{3}$. \\
In the $0$-th step, all vectors $\vec{m}_{0, e}$ are considered for finding the lexicographic minimum. In this example, the minimum is attained for $e^*_0 = e_2$. (Recall that we consider lexicographic order \emph{from bottom component to top component}!) \\
This means, for $\delta$ small enough, the perturbed solution curve crosses the boundary induced by $e_2$ first and moves into the region $R_{(1,2,1)}$.

In the next step, we compute the new matrix
\[
\tilde{\vec{M}}^1 := \vec{I}_n - \frac{1}{\incidenceV_{ e^*_{0} }^{\top} \potentialDp[0]} \potentialDp[0] \, \incidenceV_{ e^*_{0} }^{\top}
=
\begin{bmatrix}
1 & 0 & 0 \\
0 & 1 & 0 \\
0 & 0 & 1
\end{bmatrix}
-
\frac{1}{\nicefrac{2}{3} - \nicefrac{1}{3}}
\begin{bmatrix}
0 & 0 & 0 \\
0 & -\nicefrac{1}{3} & \nicefrac{1}{3} \\
0 & -\nicefrac{2}{3} & \nicefrac{2}{3}
\end{bmatrix}
=
\begin{bmatrix}
1 & 0 & 0 \\
0 & 2 & -1 \\
0 & 2 & -1
\end{bmatrix}
.
\]
Using this matrix and the direction vector $\potentialDp[1] = (0, \nicefrac{1}{9}, \nicefrac{8}{9})^\top$ we obtain the vectors $\vec{m}_{1, e}$ as before. In this (and all further iterations), we only consider lexicographic positive vectors. In particular, the vector $\vec{m}_{1, e_2}$ is not lexicographic positive since the perturbed solution curve has just crossed the boundary induced by this edge. In step $1$, the edge $e_3$ is the minimizer, we thus move on to the region $R_{(1,2,2)}$ in the next step.

In the second step, the only lexicographic positive vector is the vector $\vec{m}_{e_1}$, thus we proceed with the region $R_{(2,2,2)}$ in the next step.

Finally, in the third step, we obtain no lexicographic positive $\vec{m}$-vector and stop. The solution curve can proceed in this region since the potential direction in this region is directed away from all intersecting boundaries.
\end{example}

\subsection{An Example for an ambiguous region}

\begin{example} \label{ex:ambiguousRegion}
Consider again the graph from Figure~\ref{fig:simpleUndirectedExample}. We this time use the costs $l_{e_1}(x) = x$,
\begin{align*}
l_{e_2}(x) =
\begin{cases}
- \infty 	& \text{if } x < 0, \\
x			& \text{if } 0 \leq x < 1, \\
x + 2		& \text{if } 1 \leq x < 2, \\
\infty 		& \text{if } x > 2, 
\end{cases}
\qquad
l_{e_3}(x) =
\begin{cases}
2x			& \text{if } x \leq \nicefrac{3}{2}, \\
2x + 2		& \text{if } x > \nicefrac{3}{2}.
\end{cases}
\end{align*}
Note that the first function part of $l_{e_2}$ models a direction of edge $e_2$ and the last function part a maximal capacity of $2$. The inverse function are $l_{e_1}^{-1} (v) = v$ and 
\begin{align*}
l_{e_2}^{-1}(v) =
\begin{cases}
0		 	& \text{if } v < 0, \\
v			& \text{if } 0 \leq v < 1, \\
1			& \text{if } 1 \leq v < 3, \\
v-2 		& \text{if } 3 \leq v < 4, \\
2			& \text{if } v \geq 4,
\end{cases}
\qquad
l_{e_3}^{-1}(v) =
\begin{cases}
\nicefrac{1}{2} \, v 		& \text{if } v < 3, \\
\nicefrac{3}{2}				& \text{if } 3 \leq v < 5, \\
\nicefrac{1}{2} \, (v - 2)	& \text{if } v \geq 5.
\end{cases}
\end{align*}
inducing the regions depicted in figure \ref{fig:ambiguousRegionExample}. 
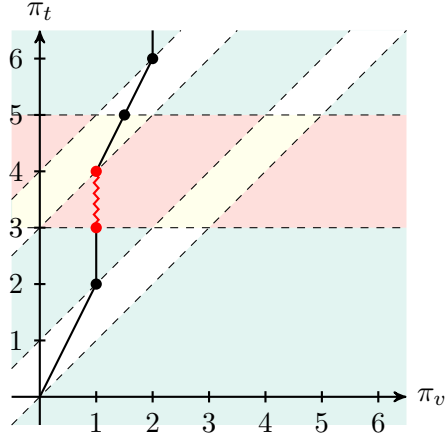
\begin{figure}[t]
\centering
\begin{tikzpicture}[scale=.75]
\newcommand{\xMin}{-.5} \newcommand{\xMax}{6.5}
\newcommand{\yMin}{-.5} \newcommand{\yMax}{6.5}

\fill[fill=color1,fill opacity=0.25] ({max(\xMin,\yMin-1)},{max(\xMin+1,\yMin}) -- (2,3)--(0,3)--({max(\xMin,\yMin-3)},{max(\xMin+3,\yMin}) -- (\xMin, \yMin);
\fill[fill=color1,fill opacity=0.25] (2,5) -- (4,5) -- ({min(\xMax,\yMax-1)},{min(\xMax+1,\yMax}) -- (\xMax, \yMax) -- ({min(\xMax,\yMax-3)},{min(\xMax+3,\yMax});
\fill[fill=color1,fill opacity=0.25] ({max(\xMin,\yMin-4)},{max(\xMin+4,\yMin}) -- ({max(-1,\xMin)},3) -- (\xMin, 3);
\fill[fill=color1,fill opacity=0.25]  (\xMin, \yMax) -- (\xMin,5) -- (1,5) -- ({min(\xMax,\yMax-4)},{min(\xMax+4,\yMax});
\fill[fill=color1,fill opacity=0.25] (\xMax, \yMin) -- ({max(\xMin,\yMin)},{max(\xMin,\yMin)}) -- (3,3) -- (\xMax, 3);
\fill[fill=color1,fill opacity=0.25] (5,5) -- ({min(\xMax, \yMax)}, {min(\xMax, \yMax)}) -- (\xMax, \yMax)--(\xMax, 5);
\fill[fill=color2,fill opacity=0.25] (1,5) -- (2,5) -- (0,3) -- (\xMin, 3) -- (\xMin, {max(3, \xMin+4)});
\fill[fill=color2,fill opacity=0.25] (2,3) -- (3,3) -- ({min(\xMax, 5)},{min(\xMax, 5}) -- ({min(\xMax,5)},5) -- (4,5);
\fill[fill=color4,fill opacity=0.25] (0,3) -- (2,5)--(4,5)--(2,3);
\fill[fill=color4,fill opacity=0.25] (\xMin,4 + \xMin) -- (1,5)--(\xMin,5);
\fill[fill=color4,fill opacity=0.25] (3,3) -- (5,5)--(\xMax,5)--(\xMax,3);

\draw[thin, dashed] 
			({max(\xMin,\yMin)},{max(\xMin,\yMin)}) -- ({min(\xMax,\yMax)}, {min(\xMax,\yMax)})
			({max(\xMin,\yMin-1)},{max(\xMin+1,\yMin}) -- ({min(\xMax,\yMax-1)},{min(\xMax+1,\yMax})
			({max(\xMin,\yMin-3)},{max(\xMin+3,\yMin}) -- ({min(\xMax,\yMax-3)},{min(\xMax+3,\yMax})
			({max(\xMin,\yMin-4)},{max(\xMin+4,\yMin}) -- ({min(\xMax,\yMax-4)},{min(\xMax+4,\yMax});
\draw[thin, dashed]
			(\xMin,3) -- (\xMax, 3)
			(\xMin,5) -- (\xMax, 5);

\draw[thick]	(0,0)--(1,2)--(1,3);
\draw[thick, decoration={zigzag,segment length=4,amplitude=.9}, decorate, red]	(1,3) -- (1,4);
\draw[thick]	(1,4)--(3/2,5)--(2,6)--(2,\yMax);

\draw (1,2) node[solid] (cross1) {};
\draw[red] (1,3) node[solid] (cross2) {};
\draw[red] (1,4) node[solid] (cross3) {};
\draw (3/2,5) node[solid] (cross4) {};
\draw (2,6) node[solid] (cross5) {};

\draw[thick, ->]
	(\xMin,0) -- (\xMax,0) node[anchor=west] {$\pi_v$};
\foreach \x in {1,2,...,6} {
	\draw[thick] (\x,0.1)--(\x,-0.1) node[below] {$\x$};
}	
\draw[thick, ->]
	(0,\yMin) -- (0,\yMax) node[anchor=south] {$\pi_t$};
\foreach \y in {1,2,...,6} {
	\draw[thick] (0.1,\y)--(-0.1, \y) node[left] {$\y$};
}	
\end{tikzpicture}
\caption{The potential space of example \ref{ex:ambiguousRegion}. The green regions are regions with $c_{e_2, t} = 0$, the yellow regions are regions with $c_{e_3, t} = 0$, and the red regions are ambiguous regions. Moving the potential along the zigzag line in the region $R_{(1,3,2)^{\top}}$ does not change the induced flow.}
\label{fig:ambiguousRegionExample}
\end{figure}
The ambiguous regions are highlighted red.

\begin{table}[t]
\setlength\tabcolsep{8pt}
\begin{center}
\begin{tabular}{c|c|c|c|c|c|c}
$i$ & $\lambda^i$ & $\vec{\pi}^i$ & $\vec{x}^i$ & Conductivity values $c_e$ & $\hat{\vec{L}}$ & $ \Delta \vec{\pi}$ \\
\hline
$0$ & $0$ & $\begin{bmatrix} 0 \\ 0 \\ 0 \end{bmatrix}$ & $\begin{bmatrix} 0 \\ 0 \\ 0 \end{bmatrix}$ &
\begin{tikzpicture}
\useasboundingbox (0,0) rectangle (1.5,1);
\node[solid] (s) at (0,0) {};
\node[solid] (v) at (0.75,0.5) {};
\node[solid] (t) at (1.5,0) {};
\draw[->]
	(s) edge node[midway, above left] {$1$} (v) 
	(v) edge node[midway, above right] {$1$} (t) 
	(s) edge node[midway, below] {$\nicefrac{1}{2}$} (t);
\end{tikzpicture}
&
$\begin{bmatrix} 2 & -1 \\ -1 & \nicefrac{3}{2} \end{bmatrix}$
&
$\begin{bmatrix} 0 \\ \nicefrac{1}{2} \\ 1 \end{bmatrix}$ \\[1.5em]
\hline
$1$ & $2$ & $\begin{bmatrix} 0 \\ 1 \\ 2 \end{bmatrix}$ & $\begin{bmatrix} 1 \\ 1 \\ 1 \end{bmatrix}$ &
\begin{tikzpicture}
\useasboundingbox (0,0) rectangle (1.5,1);
\node[solid] (s) at (0,0) {};
\node[solid] (v) at (0.75,0.5) {};
\node[solid] (t) at (1.5,0) {};
\draw[->]
	(s) edge node[midway, above left] {$1$} (v)
	(s) edge node[midway, below] {$\nicefrac{1}{2}$} (t);
\draw[gray, ->]
	(v) edge node[midway, above right] {$0$} (t) ;
\end{tikzpicture}
&
$\begin{bmatrix} 1 & 0 \\ 0 & \nicefrac{1}{2} \end{bmatrix}$
&
$\begin{bmatrix} 0 \\ 0 \\ 2 \end{bmatrix}$ \\[1.5em]
\hline
$2$ & $\nicefrac{5}{2}$ & $\begin{bmatrix} 0 \\ 1 \\ 3 \end{bmatrix}$ & $\begin{bmatrix} 1 \\ 1 \\ \nicefrac{3}{2} \end{bmatrix}$ &
\begin{tikzpicture}
\useasboundingbox (0,0) rectangle (1.5,1);
\node[solid] (s) at (0,0) {};
\node[solid] (v) at (0.75,0.5) {};
\node[solid] (t) at (1.5,0) {};
\draw[->]
	(s) -- (v) node[midway, above left] {$1$};
\draw[color4, ->]
	(s) edge node[midway, below] {$0$} (t)
	(v) edge node[midway, above right] {$0$} (t);
\end{tikzpicture}
&
$\begin{bmatrix} 1 & 0 \\ 0 & 0 \end{bmatrix}$
&
$\begin{bmatrix} 0 \\ 0 \\ 1 \end{bmatrix}$ \\[1.5em]
\hline
$3$ & $\nicefrac{5}{2}$ & $\begin{bmatrix} 0 \\ 1 \\ 4 \end{bmatrix}$ & $\begin{bmatrix} 1 \\ 1 \\ \nicefrac{3}{2} \end{bmatrix}$ &
\begin{tikzpicture}
\useasboundingbox (0,0) rectangle (1.5,1);
\node[solid] (s) at (0,0) {};
\node[solid] (v) at (0.75,0.5) {};
\node[solid] (t) at (1.5,0) {};
\draw[->]
	(s) edge node[midway, above left] {$1$} (v)
	(v) edge node[midway, above right] {$1$} (t);
\draw[gray, ->]
	(s) edge node[midway, below] {$0$} (t);
\end{tikzpicture}
&
$\begin{bmatrix} 2 & -1 \\ -1 & 1 \end{bmatrix}$
&
$\begin{bmatrix} 0 \\ 1 \\ 2 \end{bmatrix}$ \\[1.5em]
\hline
$4$ & $3$ & $\begin{bmatrix} 0 \\ \nicefrac{3}{2} \\ 5 \end{bmatrix}$ & $\begin{bmatrix} \nicefrac{3}{2} \\ \nicefrac{3}{2} \\ \nicefrac{3}{2} \end{bmatrix}$ &
\begin{tikzpicture}
\useasboundingbox (0,0) rectangle (1.5,1);
\node[solid] (s) at (0,0) {};
\node[solid] (v) at (0.75,0.5) {};
\node[solid] (t) at (1.5,0) {};
\draw[->]
	(s) edge node[midway, above left] {$1$} (v)
	(s) edge node[midway, below] {$\nicefrac{1}{2}$} (t)
	(v) edge node[midway, above right] {$1$} (t);
\end{tikzpicture}
&
$\begin{bmatrix} 2 & -1 \\ -1 & \nicefrac{3}{2} \end{bmatrix}$
&
$\begin{bmatrix} 0 \\ \nicefrac{1}{2} \\ 1 \end{bmatrix}$ \\[1.5em]
\hline
$6$ & $4$ & $\begin{bmatrix} 0 \\ 2 \\ 6 \end{bmatrix}$ & $\begin{bmatrix} 2 \\ 2 \\ 2 \end{bmatrix}$ &
\begin{tikzpicture}
\useasboundingbox (0,0) rectangle (1.5,1);
\node[solid] (s) at (0,0) {};
\node[solid] (v) at (0.75,0.5) {};
\node[solid] (t) at (1.5,0) {};
\draw[->]
	(s) edge node[midway, above left] {$1$} (v)
	(s) edge node[midway, below] {$\nicefrac{1}{2}$} (t);
\draw[gray, ->]
	(v) -- (t) node[midway, above right] {$0$};
\end{tikzpicture}
&
$\begin{bmatrix} 1 & 0 \\ 0 & \nicefrac{1}{2} \end{bmatrix}$
&
$\begin{bmatrix} 0 \\ 0 \\ 2 \end{bmatrix}$ 
\end{tabular}
\caption{The iteration of the algorithm in Example~\ref{ex:ambiguousRegion}. In iteration $i=2$ the reduced Laplacian matrix $\hat{L}$ is not invertible since the solution curve is in the ambiguous region $R_{(1,3,2)^{\top}}$---the edges with conductivity value $c_e = 0$ form an $s$-$t$-cut (marked red).}
\label{tab:ex:ambiguousRegion}
\end{center}
\end{table}

Table~\ref{tab:ex:ambiguousRegion} shows all iterations of the algorithm. In iteration $i=2$ the algorithm enters the region $R_{(1,3,2)^{\top}}$. Since the conductivity values $c_{e_2} = c_{e_3} = 0$ are zero in this region, the graph decomposes into the the two actively connected components $\mathcal{C}_1 = \{s,v\}$ and $\mathcal{C}_2 = \{t\}$. The matrix $\hat{\vec{L}}_{(1,3,2)^{\top}}$ is not invertible.
We use the direction $\Delta \vec{\pi}$ from Lemma~\ref{lem:enteringAmbiguousRegions} with $\Delta \pi_v = 1$ if $v \in \mathcal{C}_2$ and $\Delta \pi_v = 0$ if $v \in \mathcal{C}_1$, i.e. $\Delta \vec{\pi} = (0,0,1)^{\top}$.
Following this direction $\Delta \vec{\pi}$ the flow does not change until the solution curve reaches the potential $\vec \pi^2 = (0,1,4)^{\top}$ where the solution curve leaves the region. So the solution curve may jump along the red zigzag line in Figure~\ref{fig:ambiguousRegionExample} without changing the flow.
\end{example}

\clearpage

\section{Pseudocodes}
\label{app:pseudo}

\subsection{Algorithm without Degeneracy Handling}

\begin{algorithm}
\label{Basic algorithm}
\SetKwInOut{Input}{input}
\SetKwInOut{Output}{output}
\SetKw{True}{true}
\SetKw{Choose}{choose}
\SetKw{Return}{return}
\SetKw{And}{and}

\Input{$G = (V,E)$, $s_1 = 1$, $t_1 = n$}
\Output{solution curve}
\tikzmk{A}
$i := 1$ \;
$\hat{\vec \pi}^1 := \vec 0$ \;
$\vec t^i := \vec t(\pi^0)$ \;
$\Delta \hat{\vec y}  :=  \unitVec_n \in \R^{n-1}$ \;
\tikzmk{B}
\nameboxit{color1}{initialization}
\While{\True}{
	\tikzmk{A}$\Delta \hat{\vec \pi} := \hat{\vec L}_{\vec t^i}^{-1} \Delta \hat{\vec y}$\;
	$\Delta \vec \pi := (\pi_v)_{v \in V} \text{ with }
	\Delta \pi_v = \begin{cases}
	0 &\text{ if $v = s$,}\\
	\Delta \hat{\pi}_v &\text{ if $v \neq s$.}\\
	\end{cases}$ \;
	\tikzmk{B}
	\nameboxit{color2}{direction computation}
  	\tikzmk{A}
  	\ForEach{$e = (v,w) \in E$}{
		$\epsilon(e) :=
		\begin{cases}
		\frac{\sigma_{t^i_e+1} - (\pi_w^i - \pi_v^i)}{\Delta \pi_w - \Delta \pi_v}  & \text { if } \Delta \pi_w - \Delta \pi_v > 0,\\
		\frac{\sigma_{t^i_e\phantom{+1}} - (\pi_w^i - \pi_v^i)}{\Delta \pi_w - \Delta \pi_v} & \text{ if } \Delta \pi_w - \Delta \pi_v < 0,\\
		\infty & \text{ else.}
		\end{cases}$\;
	}
	$\epsilon := \min_{e \in E} \epsilon(e)$\;
	\eIf{$\epsilon = \infty$}{
		\Return \;
	}{
		\Choose $e^* \in E$ with $\epsilon(e^*) = \epsilon$\;
	}
 	\tikzmk{B}
 	\nameboxit{color3}{boundary selection}
 	\tikzmk{A}
 	$\vec{\pi}^{i+1} := \vec{\pi}^i + \epsilon \Delta \vec{\pi}$\;
	$\vec{t}^{i+1}_e := \vec{t}^i + \sgn( \vec{\gamma}_{e^*} \Delta\vec{\pi}^i ) \, \vec{u}_{e^*}$\;
	\textbf{compute $\hat{\vec L}_{t^{i+1}}^{-1}$}\;
	$i := i+1$\;
	\tikzmk{B}
	\nameboxit{color4}{boundary crossing}
}
\end{algorithm}

\clearpage

\subsection{Lexicographic Rule}

\begin{algorithm}[H]
\label{alg:degenerateAlgorithm}
\SetKwInOut{Input}{input}
\SetKwInOut{Output}{output}
\SetKw{True}{true}
\SetKw{Choose}{choose}
\SetKw{Compute}{compute}
\SetKw{Update}{update}
\SetKw{Return}{return}
\SetKw{And}{and}
\SetKw{Break}{break}
\SetKw{Not}{not}

  	\tikzmk{A}
  	\ForEach{$e = (v,w) \in E$}{
		$\epsilon(e) :=
		\begin{cases}
		\frac{\sigma_{t^i_e+1} - (\pi_w^i - \pi_v^i)}{\Delta \pi_w - \Delta \pi_v}  & \text { if } \Delta \pi_w - \Delta \pi_v > 0,\\
		\frac{\sigma_{t^i_e\phantom{+1}} - (\pi_w^i - \pi_v^i)}{\Delta \pi_w - \Delta \pi_v} & \text{ if } \Delta \pi_w - \Delta \pi_v < 0,\\
		\infty & \text{ else.}
		\end{cases}$\;
	}
	$\epsilon := \min_{e \in E} \epsilon(e)$\;
	$E^* := \argmax_{e \in E} \epsilon(e)$\;
	\tikzmk{B}
	\uIf{$\epsilon = \infty$}{
		\Return \;
	}
	\nameboxit{color3}{boundary selection}
	\uElseIf{$|E^*|=1$}{
		\tikzmk{A}
		\Choose $e^* \in E^*$ \;
		$\vec{\pi}^{i+1} := \vec{\pi}^i + \epsilon \Delta \vec{\pi}$\;
		$\vec{t}^{i+1}_e := \vec{t}^i + \sgn( \vec{\gamma}_{e^*} \Delta\vec{\pi}^i ) \, \vec{u}_{e^*}$\;
		\Compute $\hat{\vec L}_{t^{i+1}}^{-1}$\;
		\tikzmk{B}		
		\nameboxit{color4}{unique boundary crossing}
	}
	\Else{
		\tikzmk{A}
		$\vec{M} := \vec{I}_n$\;
		$l := 0$\;
		\Compute $\vec{m}^{\top}_{e} := - \frac{1}{\vec \gamma_{e^*}^{\top} \Delta \vec \pi} \vec \gamma_{e^*}^{\top} \vec{M}$ for all $e \in E^*$\;
		\tikzmk{B}
 		\nameboxit{color3}{boundary selection}
		\While{\Not $\vec{m}_e \lexleq \vec 0$ for all $e \in E^*$}{
			\tikzmk{A}
			\uIf{$l = 0$}{
				\Choose $e^* \in E^*$ with $\vec{m}_e$ lex. minimal\;
			}
			\Else{
				\Choose $e^* \in \{e \in E^* : \vec{0} \lexle \vec{m}_e \lexle \vec{\infty} \}$ with $\vec{m}_e$ lex. minimal\;
			}
			\tikzmk{B}
 			\nameboxit{color3}{boundary selection}
 			\tikzmk{A}
			$\vec{t}^{i}_e := \vec{t}^i + \sgn( \vec{\gamma}_{e^*} \Delta\vec{\pi}^i ) \, \vec{u}_{e^*}$\;
			\Update $\hat{\vec L}_{t^{i}}^{-1}$\;
			\tikzmk{B}
			\nameboxit{color4}{boundary crossing}
			\tikzmk{A}
			\vspace{-0.3cm}
			
			$\Delta \hat{\vec \pi} := \hat{\vec L}_{\vec t^i}^{-1} \Delta \hat{\vec y}$\;
			$\Delta \vec \pi := (\pi_v)_{v \in V} \text{ with }
			\Delta \pi_v = \begin{cases}
			0 &\text{ if $v = s$,}\\
			\Delta \hat{\pi}_v &\text{ if $v \neq s$.}\\
			\end{cases}$ \;
			\tikzmk{B}
			\nameboxit{color2}{direction computation}
			\tikzmk{A}
			$\vec M := ( \vec I_{n} - \frac{1}{\vec \gamma_{e^*}^{\top} \, \Delta \vec \pi} \, \Delta \vec \pi \, \vec \gamma_{e^*}^{\top} ) \, \vec M$\;
			\Compute $\vec{m}^{\top}_{e} := - \frac{1}{\vec \gamma_{e^*}^{\top} \Delta \vec \pi} \vec \gamma_{e^*}^{\top} \vec{M}$ for all $e \in E^*$\;
			$l := l + 1$\;
			\tikzmk{B}
 			\nameboxit{color3}{boundary selection}
		}
		\tikzmk{A}
		$\vec{\pi}^{i+1} := \vec{\pi}^i + \epsilon \Delta \vec{\pi}$\;
		$\vec{t}^{i+1} = \vec{t}$\;
		$\hat{\vec L}_{t^{i+1}}^{-1} := \hat{\vec L}_{t^{i}}^{-1}$\;
		\tikzmk{B}
		\nameboxit{color4}{boundary crossing}
	}
 	$i := i+1$\;
\end{algorithm}


\bibliography{bibliography} 


\end{document}